%% file: main.tex
\documentclass[scheme=plain,11pt]{ctexart}
\usepackage[a4paper,left=2.3cm,right=2.3cm,top=2.5cm,bottom=2.5cm]{geometry}
\usepackage{amsmath,amssymb,amsthm,mathtools,mathrsfs}
\usepackage{booktabs,array,tabularx,enumitem,multirow,float}
\usepackage{graphicx}
\usepackage{flafter}
\usepackage{tikz,pgfplots}
\usetikzlibrary{arrows.meta}
\input{figure/two_edge_path_symbol.tex}
\pgfplotsset{compat=1.18}
\usepackage[hidelinks]{hyperref}
\usepackage[abbrev,nobysame]{amsrefs}
\newtheorem{theorem}{Theorem}[section]
\newtheorem{lemma}[theorem]{Lemma}
\newtheorem{proposition}[theorem]{Proposition}
\newtheorem{corollary}[theorem]{Corollary}
\theoremstyle{definition}
\newtheorem{definition}[theorem]{Definition}

\newtheorem{conjecture}[theorem]{Conjecture}
\newtheorem{example}[theorem]{Example}
\theoremstyle{remark}

\newenvironment{restated}[3][]{\par\medskip\noindent\textbf{#2~\ref{#3}}\if\relax\detokenize{#1}\relax\else\ \textup{(#1)}\fi.\ \itshape}{\par\medskip}
\newcommand{\dist}{d}
\newcommand{\dimhaus}{\dim_{\mathrm H}}
\newcommand{\dimmass}{\dim_{\mathrm M}}
\title{Iterated Graph Systems (II):\\
Bernoulli percolation and critical-exponent universality classes \\
on hierarchical lattices}
\author{Ziyu Neroli\thanks{Email: \href{mailto:z5222549@zmail.unsw.edu.au}{\texttt{z5222549@zmail.unsw.edu.au}}; \href{mailto:ziyu.li21@imperial.ac.uk}{\texttt{ziyu.li21@imperial.ac.uk}}.}\\[3pt]
\small Department of Mathematics, Imperial College London, United Kingdom}
\date{}
\begin{document}
\maketitle
\begin{abstract}
We study Bernoulli percolation and its universality classes on hierarchical lattices generated by edge iterated graph systems (EIGS). 
By examining eight critical exponents, we prove that no finite family of substitution-multiplicative dimensions completely determines universality, providing a rigorous proof of the failure of dimension-based universality discussed by Hauser and Saxena [Phys. Rev. B 34 (1986), 8193–8195]. 
Although the six exponentials theorem provides number-theoretic criteria for scale commensurability, we further disprove the conjecture that critical-exponent universality requires commensurate discrete renormalisation scales.
\end{abstract}
\tableofcontents
\clearpage

\section{Introduction}

The study of statistical-physics problems on graphs with a hierarchical structure rather than Euclidean geometry began around the 1970s~\cites{berker1979renormalisation,kaufman1981exactly,griffiths1982spin}.
Such a model is called a \emph{hierarchical lattice}: a graph with self-similar, fractal structure on which an exact renormalisation can usually be carried out.
The classical example is the diamond hierarchical lattice (DHL), shown in Figure~\ref{fig:dhl-construction}.

\begin{figure}[tbp]
\centering
\begin{tikzpicture}[line width=0.45pt,line cap=round]
\begin{scope}[xshift=-3.8cm]
\node[anchor=east] at (4.4,2.65) {Rule:};
\draw[-{Stealth[length=4pt]}] (4.7,2.65) -- (5.7,2.65);
\fill (4.7,2.65) circle (1.4pt);
\fill (5.7,2.65) circle (1.4pt);
\draw[-{Stealth[length=5pt]}] (6.0,2.65) -- (6.7,2.65);
\draw[-{Stealth[length=3pt]}] (7.1,2.65) -- (8.3,3.25);
\draw[-{Stealth[length=3pt]}] (8.3,3.25) -- (9.5,2.65);
\draw[-{Stealth[length=3pt]}] (7.1,2.65) -- (8.3,2.05);
\draw[-{Stealth[length=3pt]}] (8.3,2.05) -- (9.5,2.65);
\foreach \x/\y in {7.1/2.65,8.3/3.25,9.5/2.65,8.3/2.05}{\fill (\x,\y) circle (1.4pt);}
\node[below left,font=\scriptsize] at (7.1,2.65) {$\beta^+$};
\node[below right,font=\scriptsize] at (9.5,2.65) {$\beta^-$};
\end{scope}
\draw[dashed,gray] (0,1.6) -- (14.7,1.6);

\begin{scope}[xshift=0cm]
\coordinate (v0) at (0.000000,0.000000);
\coordinate (v1) at (1.400000,0.000000);
\draw[-{Stealth[length=2pt,width=1.4pt]}] (v0) -- (v1);
\fill (v0) circle (0.85pt);
\fill (v1) circle (0.85pt);
\node at (0.7,-1.5) {$\Xi^{0}$};
\end{scope}
\draw[-{Stealth[length=4pt]}] (1.5999999999999999,0) -- (2.0999999999999996,0);
\begin{scope}[xshift=2.3cm]
\coordinate (v0) at (0.000000,0.000000);
\coordinate (v1) at (2.200000,0.000000);
\coordinate (v2) at (1.100000,0.550000);
\coordinate (v3) at (1.100000,-0.550000);
\draw[-{Stealth[length=2pt,width=1.4pt]}] (v0) -- (v2);
\draw[-{Stealth[length=2pt,width=1.4pt]}] (v2) -- (v1);
\draw[-{Stealth[length=2pt,width=1.4pt]}] (v0) -- (v3);
\draw[-{Stealth[length=2pt,width=1.4pt]}] (v3) -- (v1);
\fill (v0) circle (0.85pt);
\fill (v1) circle (0.85pt);
\fill (v2) circle (0.85pt);
\fill (v3) circle (0.85pt);
\node at (1.1,-1.5) {$\Xi^{1}$};
\end{scope}
\draw[-{Stealth[length=4pt]}] (4.7,0) -- (5.2,0);
\begin{scope}[xshift=5.4cm]
\coordinate (v0) at (0.000000,0.000000);
\coordinate (v1) at (3.100000,0.000000);
\coordinate (v2) at (1.550000,0.775000);
\coordinate (v3) at (1.550000,-0.775000);
\coordinate (v4) at (0.664286,0.608929);
\coordinate (v5) at (0.885714,0.166071);
\coordinate (v6) at (2.435714,0.608929);
\coordinate (v7) at (2.214286,0.166071);
\coordinate (v8) at (0.885714,-0.166071);
\coordinate (v9) at (0.664286,-0.608929);
\coordinate (v10) at (2.214286,-0.166071);
\coordinate (v11) at (2.435714,-0.608929);
\draw[-{Stealth[length=2pt,width=1.4pt]}] (v0) -- (v4);
\draw[-{Stealth[length=2pt,width=1.4pt]}] (v4) -- (v2);
\draw[-{Stealth[length=2pt,width=1.4pt]}] (v0) -- (v5);
\draw[-{Stealth[length=2pt,width=1.4pt]}] (v5) -- (v2);
\draw[-{Stealth[length=2pt,width=1.4pt]}] (v2) -- (v6);
\draw[-{Stealth[length=2pt,width=1.4pt]}] (v6) -- (v1);
\draw[-{Stealth[length=2pt,width=1.4pt]}] (v2) -- (v7);
\draw[-{Stealth[length=2pt,width=1.4pt]}] (v7) -- (v1);
\draw[-{Stealth[length=2pt,width=1.4pt]}] (v0) -- (v8);
\draw[-{Stealth[length=2pt,width=1.4pt]}] (v8) -- (v3);
\draw[-{Stealth[length=2pt,width=1.4pt]}] (v0) -- (v9);
\draw[-{Stealth[length=2pt,width=1.4pt]}] (v9) -- (v3);
\draw[-{Stealth[length=2pt,width=1.4pt]}] (v3) -- (v10);
\draw[-{Stealth[length=2pt,width=1.4pt]}] (v10) -- (v1);
\draw[-{Stealth[length=2pt,width=1.4pt]}] (v3) -- (v11);
\draw[-{Stealth[length=2pt,width=1.4pt]}] (v11) -- (v1);
\fill (v0) circle (0.85pt);
\fill (v1) circle (0.85pt);
\fill (v2) circle (0.85pt);
\fill (v3) circle (0.85pt);
\fill (v4) circle (0.85pt);
\fill (v5) circle (0.85pt);
\fill (v6) circle (0.85pt);
\fill (v7) circle (0.85pt);
\fill (v8) circle (0.85pt);
\fill (v9) circle (0.85pt);
\fill (v10) circle (0.85pt);
\fill (v11) circle (0.85pt);
\node at (1.55,-1.5) {$\Xi^{2}$};
\end{scope}
\draw[-{Stealth[length=4pt]}] (8.7,0) -- (9.2,0);
\begin{scope}[xshift=9.4cm]
\coordinate (v0) at (0.000000,0.000000);
\coordinate (v1) at (4.600000,0.000000);
\coordinate (v2) at (2.300000,1.150000);
\coordinate (v3) at (2.300000,-1.150000);
\coordinate (v4) at (0.985714,0.903571);
\coordinate (v5) at (1.314286,0.246429);
\coordinate (v6) at (3.614286,0.903571);
\coordinate (v7) at (3.285714,0.246429);
\coordinate (v8) at (1.314286,-0.246429);
\coordinate (v9) at (0.985714,-0.903571);
\coordinate (v10) at (3.285714,-0.246429);
\coordinate (v11) at (3.614286,-0.903571);
\coordinate (v12) at (0.436384,0.513393);
\coordinate (v13) at (0.549330,0.390178);
\coordinate (v14) at (1.627455,1.108928);
\coordinate (v15) at (1.658259,0.944643);
\coordinate (v16) at (0.641741,0.205357);
\coordinate (v17) at (0.672545,0.041072);
\coordinate (v18) at (1.750670,0.759822);
\coordinate (v19) at (1.863616,0.636607);
\coordinate (v20) at (2.972545,1.108928);
\coordinate (v21) at (2.941741,0.944643);
\coordinate (v22) at (4.163616,0.513393);
\coordinate (v23) at (4.050670,0.390178);
\coordinate (v24) at (2.849330,0.759822);
\coordinate (v25) at (2.736384,0.636607);
\coordinate (v26) at (3.958259,0.205357);
\coordinate (v27) at (3.927455,0.041072);
\coordinate (v28) at (0.672545,-0.041072);
\coordinate (v29) at (0.641741,-0.205357);
\coordinate (v30) at (1.863616,-0.636607);
\coordinate (v31) at (1.750670,-0.759822);
\coordinate (v32) at (0.549330,-0.390178);
\coordinate (v33) at (0.436384,-0.513393);
\coordinate (v34) at (1.658259,-0.944643);
\coordinate (v35) at (1.627455,-1.108928);
\coordinate (v36) at (2.736384,-0.636607);
\coordinate (v37) at (2.849330,-0.759822);
\coordinate (v38) at (3.927455,-0.041072);
\coordinate (v39) at (3.958259,-0.205357);
\coordinate (v40) at (2.941741,-0.944643);
\coordinate (v41) at (2.972545,-1.108928);
\coordinate (v42) at (4.050670,-0.390178);
\coordinate (v43) at (4.163616,-0.513393);
\draw[-{Stealth[length=2pt,width=1.4pt]}] (v0) -- (v12);
\draw[-{Stealth[length=2pt,width=1.4pt]}] (v12) -- (v4);
\draw[-{Stealth[length=2pt,width=1.4pt]}] (v0) -- (v13);
\draw[-{Stealth[length=2pt,width=1.4pt]}] (v13) -- (v4);
\draw[-{Stealth[length=2pt,width=1.4pt]}] (v4) -- (v14);
\draw[-{Stealth[length=2pt,width=1.4pt]}] (v14) -- (v2);
\draw[-{Stealth[length=2pt,width=1.4pt]}] (v4) -- (v15);
\draw[-{Stealth[length=2pt,width=1.4pt]}] (v15) -- (v2);
\draw[-{Stealth[length=2pt,width=1.4pt]}] (v0) -- (v16);
\draw[-{Stealth[length=2pt,width=1.4pt]}] (v16) -- (v5);
\draw[-{Stealth[length=2pt,width=1.4pt]}] (v0) -- (v17);
\draw[-{Stealth[length=2pt,width=1.4pt]}] (v17) -- (v5);
\draw[-{Stealth[length=2pt,width=1.4pt]}] (v5) -- (v18);
\draw[-{Stealth[length=2pt,width=1.4pt]}] (v18) -- (v2);
\draw[-{Stealth[length=2pt,width=1.4pt]}] (v5) -- (v19);
\draw[-{Stealth[length=2pt,width=1.4pt]}] (v19) -- (v2);
\draw[-{Stealth[length=2pt,width=1.4pt]}] (v2) -- (v20);
\draw[-{Stealth[length=2pt,width=1.4pt]}] (v20) -- (v6);
\draw[-{Stealth[length=2pt,width=1.4pt]}] (v2) -- (v21);
\draw[-{Stealth[length=2pt,width=1.4pt]}] (v21) -- (v6);
\draw[-{Stealth[length=2pt,width=1.4pt]}] (v6) -- (v22);
\draw[-{Stealth[length=2pt,width=1.4pt]}] (v22) -- (v1);
\draw[-{Stealth[length=2pt,width=1.4pt]}] (v6) -- (v23);
\draw[-{Stealth[length=2pt,width=1.4pt]}] (v23) -- (v1);
\draw[-{Stealth[length=2pt,width=1.4pt]}] (v2) -- (v24);
\draw[-{Stealth[length=2pt,width=1.4pt]}] (v24) -- (v7);
\draw[-{Stealth[length=2pt,width=1.4pt]}] (v2) -- (v25);
\draw[-{Stealth[length=2pt,width=1.4pt]}] (v25) -- (v7);
\draw[-{Stealth[length=2pt,width=1.4pt]}] (v7) -- (v26);
\draw[-{Stealth[length=2pt,width=1.4pt]}] (v26) -- (v1);
\draw[-{Stealth[length=2pt,width=1.4pt]}] (v7) -- (v27);
\draw[-{Stealth[length=2pt,width=1.4pt]}] (v27) -- (v1);
\draw[-{Stealth[length=2pt,width=1.4pt]}] (v0) -- (v28);
\draw[-{Stealth[length=2pt,width=1.4pt]}] (v28) -- (v8);
\draw[-{Stealth[length=2pt,width=1.4pt]}] (v0) -- (v29);
\draw[-{Stealth[length=2pt,width=1.4pt]}] (v29) -- (v8);
\draw[-{Stealth[length=2pt,width=1.4pt]}] (v8) -- (v30);
\draw[-{Stealth[length=2pt,width=1.4pt]}] (v30) -- (v3);
\draw[-{Stealth[length=2pt,width=1.4pt]}] (v8) -- (v31);
\draw[-{Stealth[length=2pt,width=1.4pt]}] (v31) -- (v3);
\draw[-{Stealth[length=2pt,width=1.4pt]}] (v0) -- (v32);
\draw[-{Stealth[length=2pt,width=1.4pt]}] (v32) -- (v9);
\draw[-{Stealth[length=2pt,width=1.4pt]}] (v0) -- (v33);
\draw[-{Stealth[length=2pt,width=1.4pt]}] (v33) -- (v9);
\draw[-{Stealth[length=2pt,width=1.4pt]}] (v9) -- (v34);
\draw[-{Stealth[length=2pt,width=1.4pt]}] (v34) -- (v3);
\draw[-{Stealth[length=2pt,width=1.4pt]}] (v9) -- (v35);
\draw[-{Stealth[length=2pt,width=1.4pt]}] (v35) -- (v3);
\draw[-{Stealth[length=2pt,width=1.4pt]}] (v3) -- (v36);
\draw[-{Stealth[length=2pt,width=1.4pt]}] (v36) -- (v10);
\draw[-{Stealth[length=2pt,width=1.4pt]}] (v3) -- (v37);
\draw[-{Stealth[length=2pt,width=1.4pt]}] (v37) -- (v10);
\draw[-{Stealth[length=2pt,width=1.4pt]}] (v10) -- (v38);
\draw[-{Stealth[length=2pt,width=1.4pt]}] (v38) -- (v1);
\draw[-{Stealth[length=2pt,width=1.4pt]}] (v10) -- (v39);
\draw[-{Stealth[length=2pt,width=1.4pt]}] (v39) -- (v1);
\draw[-{Stealth[length=2pt,width=1.4pt]}] (v3) -- (v40);
\draw[-{Stealth[length=2pt,width=1.4pt]}] (v40) -- (v11);
\draw[-{Stealth[length=2pt,width=1.4pt]}] (v3) -- (v41);
\draw[-{Stealth[length=2pt,width=1.4pt]}] (v41) -- (v11);
\draw[-{Stealth[length=2pt,width=1.4pt]}] (v11) -- (v42);
\draw[-{Stealth[length=2pt,width=1.4pt]}] (v42) -- (v1);
\draw[-{Stealth[length=2pt,width=1.4pt]}] (v11) -- (v43);
\draw[-{Stealth[length=2pt,width=1.4pt]}] (v43) -- (v1);
\fill (v0) circle (0.85pt);
\fill (v1) circle (0.85pt);
\fill (v2) circle (0.85pt);
\fill (v3) circle (0.85pt);
\fill (v4) circle (0.85pt);
\fill (v5) circle (0.85pt);
\fill (v6) circle (0.85pt);
\fill (v7) circle (0.85pt);
\fill (v8) circle (0.85pt);
\fill (v9) circle (0.85pt);
\fill (v10) circle (0.85pt);
\fill (v11) circle (0.85pt);
\fill (v12) circle (0.85pt);
\fill (v13) circle (0.85pt);
\fill (v14) circle (0.85pt);
\fill (v15) circle (0.85pt);
\fill (v16) circle (0.85pt);
\fill (v17) circle (0.85pt);
\fill (v18) circle (0.85pt);
\fill (v19) circle (0.85pt);
\fill (v20) circle (0.85pt);
\fill (v21) circle (0.85pt);
\fill (v22) circle (0.85pt);
\fill (v23) circle (0.85pt);
\fill (v24) circle (0.85pt);
\fill (v25) circle (0.85pt);
\fill (v26) circle (0.85pt);
\fill (v27) circle (0.85pt);
\fill (v28) circle (0.85pt);
\fill (v29) circle (0.85pt);
\fill (v30) circle (0.85pt);
\fill (v31) circle (0.85pt);
\fill (v32) circle (0.85pt);
\fill (v33) circle (0.85pt);
\fill (v34) circle (0.85pt);
\fill (v35) circle (0.85pt);
\fill (v36) circle (0.85pt);
\fill (v37) circle (0.85pt);
\fill (v38) circle (0.85pt);
\fill (v39) circle (0.85pt);
\fill (v40) circle (0.85pt);
\fill (v41) circle (0.85pt);
\fill (v42) circle (0.85pt);
\fill (v43) circle (0.85pt);
\node at (2.3,-1.5) {$\Xi^{3}$};
\end{scope}
\node at (14.55,0) {$\cdots$};
\end{tikzpicture}
\caption{The diamond hierarchical lattice (DHL)}
\label{fig:dhl-construction}
\end{figure}

In 1986, Hauser and Saxena~\cite{hauser1986geometrical} computed critical exponents on such lattices.
More specifically, they studied percolation on different hierarchical lattices using exact position-space renormalisa\-tion-group schemes and found that the percolation exponent $\nu_p$ depended strongly on lattice topology even when the intrinsic dimensions and connectivities were the same.
They therefore questioned whether dimension-based universality holds on hierarchical lattices.
Halberstam and Hutchcroft~\cite{halberstam2022what} also investigated the limits of universality on self-similar fractals.
They constructed two fractals sharing the Hausdorff, spectral, topological and topological Hausdorff dimensions, but obtained numerical evidence that their percolation Fisher exponents $\tau$ differ.
Their findings suggest that even agreement of several standard fractal dimensions need not determine the percolation universality class.
See also~\cites{gefen1980critical,havlin1983percolation,havlin1984percolation,hu1985problem} for related studies of critical behaviour and universality on hierarchical and fractal lattices.

In 2010, Hambly and Kumagai~\cite{hambly2010diffusion} gave a rigorous treatment of critical percolation on the DHL.
They proved that the scaling limit of the critical percolation cluster is a graph-directed random recursive fractal; in the language of this paper, its recursive construction is a random EIGS, which we introduce later.
This observation is decisive for our approach, since it provides a recursive framework for analysing critical percolation.

Hambly and Kumagai's work enables us to gain insight into universality classes by studying this random EIGS.
We examine eight critical exponents and use four that exist for every classical EIGS hierarchical lattice, with the observation conventions of Section~\ref{sec:static-class}, to define a critical-exponent universality class. We then prove that equality of these four exponents is equivalent to equality of $\dimhaus(\Xi)$, $\dimmass(\mathfrak C)$ and $\dim_{\mathrm P}(\mathfrak C)$:
\begin{itemize}
    \item the Hausdorff dimension $\dimhaus(\Xi)$ of the hierarchical lattice itself, which we call the ambient dimension;
    \item the pivotal dimension $\dim_{\mathrm P}(\mathfrak C)$ (sometimes called the thermal dimension in the physics literature), which describes how the expected number of pivotal edges grows with scale and depends on the derivative of the reliability function at the critical probability;
    \item and the mass dimension $\dimmass(\mathfrak C)$ of the critical percolation cluster, which depends on the spectral radius of its mass matrix.
\end{itemize}

Two hierarchical lattices can readily have the same ambient dimension $\dimhaus(\Xi)$, since it depends only on the number of edges and the distance between the terminals of the rule graph, the latter giving the substitution scale.
Equality of their pivotal and critical mass dimensions is more restrictive.
At a common scale, the former requires their reliability functions to have the same derivative at their respective critical probabilities, even when the functions differ;
the latter requires their different $3\times3$ mass matrices to have the same spectral radius.

These requirements show why numerical agreement alone is insufficient to establish that two hierarchical lattices belong to the same universality class.
The following pair of examples gives an exact instance.

\subsection{A pair of key examples}

\begin{figure}[tbp]
\centering
\begin{tikzpicture}[
    x=1cm,y=1cm,
    vertex/.style={circle,fill=black,inner sep=1.55pt},
    line/.style={line width=0.7pt,line cap=round},
    repl/.style={-{Stealth[length=2.2mm,width=1.8mm]},line width=0.7pt},
    ilabel/.style={font=\scriptsize,inner sep=1pt},
    rulelabel/.style={font=\normalsize}
]

\begin{scope}[yshift=2.6cm]
\begin{scope}[xshift=1.1cm,yshift=0cm]
  \node[rulelabel,anchor=east] at (-0.10,0) {Tie:};
  \draw[line,-{Stealth[length=1.8mm,width=1.4mm]}] (0.20,0) -- (0.92,0);
  \node[vertex] at (0.20,0) {};
  \node[vertex] at (0.92,0) {};
  \draw[repl] (1.24,0) -- (1.88,0);

  \coordinate (bp) at (2.45,0);
  \coordinate (mid) at (3.70,0);
  \coordinate (bm) at (4.95,0);
  \coordinate (up) at (3.08,0.72);
  \coordinate (down) at (4.32,-0.72);

  \draw[line,-{Stealth[length=3pt]}] (bp)--(up);\draw[line,-{Stealth[length=3pt]}] (up)--(mid);\draw[line,-{Stealth[length=3pt]}] (bp)--(mid);
  \draw[line,-{Stealth[length=3pt]}] (mid)--(down);\draw[line,-{Stealth[length=3pt]}] (down)--(bm);\draw[line,-{Stealth[length=3pt]}] (mid)--(bm);

  \foreach \p in {bp,mid,bm,up,down}
    \node[vertex] at (\p) {};

  \node[ilabel,anchor=north east] at (bp) {$\beta^+$};
  \node[ilabel,anchor=north west] at (bm) {$\beta^-$};
\end{scope}

\begin{scope}[xshift=9.1cm,yshift=0cm]
  \node[rulelabel,anchor=east] at (-0.10,0) {Gem:};
  \draw[line,-{Stealth[length=1.8mm,width=1.4mm]}] (0.20,0) -- (0.92,0);
  \node[vertex] at (0.20,0) {};
  \node[vertex] at (0.92,0) {};
  \draw[repl] (1.24,0) -- (1.88,0);

  \coordinate (bp) at (2.35,0);
  \coordinate (bm) at (5.05,0);
  \coordinate (top) at (3.70,0.82);
  \coordinate (b1) at (3.02,-0.72);
  \coordinate (b2) at (3.70,-0.72);
  \coordinate (b3) at (4.38,-0.72);

  \draw[line,-{Stealth[length=3pt]}] (bp)--(top);\draw[line,-{Stealth[length=3pt]}] (top)--(bm);
  \draw[line,-{Stealth[length=3pt]}] (bp)--(b1);\draw[line,-{Stealth[length=3pt]}] (b1)--(b2);\draw[line,-{Stealth[length=3pt]}] (b2)--(b3);\draw[line,-{Stealth[length=3pt]}] (b3)--(bm);

  \foreach \p in {bp,bm,top,b1,b2,b3}
    \node[vertex] at (\p) {};

  \node[ilabel,anchor=north east] at (bp) {$\beta^+$};
  \node[ilabel,anchor=north west] at (bm) {$\beta^-$};
\end{scope}
\end{scope}
\draw[dashed,gray] (0,1.2) -- (6.5,1.2);
\draw[dashed,gray] (8,1.2) -- (14.5,1.2);
\begin{scope}[xshift=0cm]
\coordinate (q0) at (0.000000,0.000000);
\coordinate (q1) at (0.700000,0.000000);
\draw[line,-{Stealth[length=2pt,width=1.4pt]}] (q0) -- (q1);
\fill (q0) circle (1pt);
\fill (q1) circle (1pt);
\node[font=\small] at (0.35,-1.05) {$\Xi^{0}$};
\end{scope}
\draw[repl] (0.8799999999999999,0) -- (1.2999999999999998,0);
\begin{scope}[xshift=1.5cm]
\coordinate (q0) at (0.000000,0.000000);
\coordinate (q1) at (1.700000,0.000000);
\coordinate (q2) at (0.425000,0.425000);
\coordinate (q3) at (0.850000,0.000000);
\coordinate (q4) at (1.275000,-0.425000);
\draw[line,-{Stealth[length=2pt,width=1.4pt]}] (q0) -- (q2);
\draw[line,-{Stealth[length=2pt,width=1.4pt]}] (q2) -- (q3);
\draw[line,-{Stealth[length=2pt,width=1.4pt]}] (q0) -- (q3);
\draw[line,-{Stealth[length=2pt,width=1.4pt]}] (q3) -- (q4);
\draw[line,-{Stealth[length=2pt,width=1.4pt]}] (q4) -- (q1);
\draw[line,-{Stealth[length=2pt,width=1.4pt]}] (q3) -- (q1);
\fill (q0) circle (1pt);
\fill (q1) circle (1pt);
\fill (q2) circle (1pt);
\fill (q3) circle (1pt);
\fill (q4) circle (1pt);
\node[font=\small] at (0.85,-1.05) {$\Xi^{1}$};
\end{scope}
\draw[repl] (3.3800000000000003,0) -- (3.8000000000000003,0);
\begin{scope}[xshift=4.0cm]
\coordinate (q0) at (0.000000,0.000000);
\coordinate (q1) at (2.500000,0.000000);
\coordinate (q2) at (0.625000,0.625000);
\coordinate (q3) at (1.250000,0.000000);
\coordinate (q4) at (1.875000,-0.625000);
\coordinate (q5) at (0.121875,0.190625);
\coordinate (q6) at (0.312500,0.312500);
\coordinate (q7) at (0.503125,0.434375);
\coordinate (q8) at (0.815625,0.503125);
\coordinate (q9) at (0.937500,0.312500);
\coordinate (q10) at (1.059375,0.121875);
\coordinate (q11) at (0.312500,0.068750);
\coordinate (q12) at (0.625000,0.000000);
\coordinate (q13) at (0.937500,-0.068750);
\coordinate (q14) at (1.440625,-0.121875);
\coordinate (q15) at (1.562500,-0.312500);
\coordinate (q16) at (1.684375,-0.503125);
\coordinate (q17) at (1.996875,-0.434375);
\coordinate (q18) at (2.187500,-0.312500);
\coordinate (q19) at (2.378125,-0.190625);
\coordinate (q20) at (1.562500,0.068750);
\coordinate (q21) at (1.875000,0.000000);
\coordinate (q22) at (2.187500,-0.068750);
\draw[line,-{Stealth[length=2pt,width=1.4pt]}] (q0) -- (q5);
\draw[line,-{Stealth[length=2pt,width=1.4pt]}] (q5) -- (q6);
\draw[line,-{Stealth[length=2pt,width=1.4pt]}] (q0) -- (q6);
\draw[line,-{Stealth[length=2pt,width=1.4pt]}] (q6) -- (q7);
\draw[line,-{Stealth[length=2pt,width=1.4pt]}] (q7) -- (q2);
\draw[line,-{Stealth[length=2pt,width=1.4pt]}] (q6) -- (q2);
\draw[line,-{Stealth[length=2pt,width=1.4pt]}] (q2) -- (q8);
\draw[line,-{Stealth[length=2pt,width=1.4pt]}] (q8) -- (q9);
\draw[line,-{Stealth[length=2pt,width=1.4pt]}] (q2) -- (q9);
\draw[line,-{Stealth[length=2pt,width=1.4pt]}] (q9) -- (q10);
\draw[line,-{Stealth[length=2pt,width=1.4pt]}] (q10) -- (q3);
\draw[line,-{Stealth[length=2pt,width=1.4pt]}] (q9) -- (q3);
\draw[line,-{Stealth[length=2pt,width=1.4pt]}] (q0) -- (q11);
\draw[line,-{Stealth[length=2pt,width=1.4pt]}] (q11) -- (q12);
\draw[line,-{Stealth[length=2pt,width=1.4pt]}] (q0) -- (q12);
\draw[line,-{Stealth[length=2pt,width=1.4pt]}] (q12) -- (q13);
\draw[line,-{Stealth[length=2pt,width=1.4pt]}] (q13) -- (q3);
\draw[line,-{Stealth[length=2pt,width=1.4pt]}] (q12) -- (q3);
\draw[line,-{Stealth[length=2pt,width=1.4pt]}] (q3) -- (q14);
\draw[line,-{Stealth[length=2pt,width=1.4pt]}] (q14) -- (q15);
\draw[line,-{Stealth[length=2pt,width=1.4pt]}] (q3) -- (q15);
\draw[line,-{Stealth[length=2pt,width=1.4pt]}] (q15) -- (q16);
\draw[line,-{Stealth[length=2pt,width=1.4pt]}] (q16) -- (q4);
\draw[line,-{Stealth[length=2pt,width=1.4pt]}] (q15) -- (q4);
\draw[line,-{Stealth[length=2pt,width=1.4pt]}] (q4) -- (q17);
\draw[line,-{Stealth[length=2pt,width=1.4pt]}] (q17) -- (q18);
\draw[line,-{Stealth[length=2pt,width=1.4pt]}] (q4) -- (q18);
\draw[line,-{Stealth[length=2pt,width=1.4pt]}] (q18) -- (q19);
\draw[line,-{Stealth[length=2pt,width=1.4pt]}] (q19) -- (q1);
\draw[line,-{Stealth[length=2pt,width=1.4pt]}] (q18) -- (q1);
\draw[line,-{Stealth[length=2pt,width=1.4pt]}] (q3) -- (q20);
\draw[line,-{Stealth[length=2pt,width=1.4pt]}] (q20) -- (q21);
\draw[line,-{Stealth[length=2pt,width=1.4pt]}] (q3) -- (q21);
\draw[line,-{Stealth[length=2pt,width=1.4pt]}] (q21) -- (q22);
\draw[line,-{Stealth[length=2pt,width=1.4pt]}] (q22) -- (q1);
\draw[line,-{Stealth[length=2pt,width=1.4pt]}] (q21) -- (q1);
\fill (q0) circle (1pt);
\fill (q1) circle (1pt);
\fill (q2) circle (1pt);
\fill (q3) circle (1pt);
\fill (q4) circle (1pt);
\fill (q5) circle (1pt);
\fill (q6) circle (1pt);
\fill (q7) circle (1pt);
\fill (q8) circle (1pt);
\fill (q9) circle (1pt);
\fill (q10) circle (1pt);
\fill (q11) circle (1pt);
\fill (q12) circle (1pt);
\fill (q13) circle (1pt);
\fill (q14) circle (1pt);
\fill (q15) circle (1pt);
\fill (q16) circle (1pt);
\fill (q17) circle (1pt);
\fill (q18) circle (1pt);
\fill (q19) circle (1pt);
\fill (q20) circle (1pt);
\fill (q21) circle (1pt);
\fill (q22) circle (1pt);
\node[font=\small] at (1.25,-1.05) {$\Xi^{2}$};
\end{scope}
\begin{scope}[xshift=8cm]
\coordinate (q0) at (0.000000,0.000000);
\coordinate (q1) at (0.700000,0.000000);
\draw[line,-{Stealth[length=2pt,width=1.4pt]}] (q0) -- (q1);
\fill (q0) circle (1pt);
\fill (q1) circle (1pt);
\node[font=\small] at (0.35,-1.05) {$\Xi^{0}$};
\end{scope}
\draw[repl] (8.88,0) -- (9.30,0);
\begin{scope}[xshift=9.5cm]
\coordinate (q0) at (0.000000,0.000000);
\coordinate (q1) at (1.700000,0.000000);
\coordinate (q2) at (0.850000,0.510000);
\coordinate (q3) at (0.425000,-0.425000);
\coordinate (q4) at (0.850000,-0.425000);
\coordinate (q5) at (1.275000,-0.425000);
\draw[line,-{Stealth[length=2pt,width=1.4pt]}] (q0) -- (q2);
\draw[line,-{Stealth[length=2pt,width=1.4pt]}] (q2) -- (q1);
\draw[line,-{Stealth[length=2pt,width=1.4pt]}] (q0) -- (q3);
\draw[line,-{Stealth[length=2pt,width=1.4pt]}] (q3) -- (q4);
\draw[line,-{Stealth[length=2pt,width=1.4pt]}] (q4) -- (q5);
\draw[line,-{Stealth[length=2pt,width=1.4pt]}] (q5) -- (q1);
\fill (q0) circle (1pt);
\fill (q1) circle (1pt);
\fill (q2) circle (1pt);
\fill (q3) circle (1pt);
\fill (q4) circle (1pt);
\fill (q5) circle (1pt);
\node[font=\small] at (0.85,-1.05) {$\Xi^{1}$};
\end{scope}
\draw[repl] (11.38,0) -- (11.80,0);
\begin{scope}[xshift=12.0cm]
\coordinate (q0) at (0.000000,0.000000);
\coordinate (q1) at (2.500000,0.000000);
\coordinate (q2) at (1.250000,0.750000);
\coordinate (q3) at (0.625000,-0.625000);
\coordinate (q4) at (1.250000,-0.625000);
\coordinate (q5) at (1.875000,-0.625000);
\coordinate (q6) at (0.575500,0.457500);
\coordinate (q7) at (0.353750,0.118750);
\coordinate (q8) at (0.666250,0.306250);
\coordinate (q9) at (0.978750,0.493750);
\coordinate (q10) at (1.924500,0.457500);
\coordinate (q11) at (1.521250,0.493750);
\coordinate (q12) at (1.833750,0.306250);
\coordinate (q13) at (2.146250,0.118750);
\coordinate (q14) at (0.353750,-0.271250);
\coordinate (q15) at (0.121875,-0.190625);
\coordinate (q16) at (0.278125,-0.346875);
\coordinate (q17) at (0.434375,-0.503125);
\coordinate (q18) at (0.937500,-0.583750);
\coordinate (q19) at (0.781250,-0.659375);
\coordinate (q20) at (0.937500,-0.659375);
\coordinate (q21) at (1.093750,-0.659375);
\coordinate (q22) at (1.562500,-0.583750);
\coordinate (q23) at (1.406250,-0.659375);
\coordinate (q24) at (1.562500,-0.659375);
\coordinate (q25) at (1.718750,-0.659375);
\coordinate (q26) at (2.146250,-0.271250);
\coordinate (q27) at (2.065625,-0.503125);
\coordinate (q28) at (2.221875,-0.346875);
\coordinate (q29) at (2.378125,-0.190625);
\draw[line,-{Stealth[length=2pt,width=1.4pt]}] (q0) -- (q6);
\draw[line,-{Stealth[length=2pt,width=1.4pt]}] (q6) -- (q2);
\draw[line,-{Stealth[length=2pt,width=1.4pt]}] (q0) -- (q7);
\draw[line,-{Stealth[length=2pt,width=1.4pt]}] (q7) -- (q8);
\draw[line,-{Stealth[length=2pt,width=1.4pt]}] (q8) -- (q9);
\draw[line,-{Stealth[length=2pt,width=1.4pt]}] (q9) -- (q2);
\draw[line,-{Stealth[length=2pt,width=1.4pt]}] (q2) -- (q10);
\draw[line,-{Stealth[length=2pt,width=1.4pt]}] (q10) -- (q1);
\draw[line,-{Stealth[length=2pt,width=1.4pt]}] (q2) -- (q11);
\draw[line,-{Stealth[length=2pt,width=1.4pt]}] (q11) -- (q12);
\draw[line,-{Stealth[length=2pt,width=1.4pt]}] (q12) -- (q13);
\draw[line,-{Stealth[length=2pt,width=1.4pt]}] (q13) -- (q1);
\draw[line,-{Stealth[length=2pt,width=1.4pt]}] (q0) -- (q14);
\draw[line,-{Stealth[length=2pt,width=1.4pt]}] (q14) -- (q3);
\draw[line,-{Stealth[length=2pt,width=1.4pt]}] (q0) -- (q15);
\draw[line,-{Stealth[length=2pt,width=1.4pt]}] (q15) -- (q16);
\draw[line,-{Stealth[length=2pt,width=1.4pt]}] (q16) -- (q17);
\draw[line,-{Stealth[length=2pt,width=1.4pt]}] (q17) -- (q3);
\draw[line,-{Stealth[length=2pt,width=1.4pt]}] (q3) -- (q18);
\draw[line,-{Stealth[length=2pt,width=1.4pt]}] (q18) -- (q4);
\draw[line,-{Stealth[length=2pt,width=1.4pt]}] (q3) -- (q19);
\draw[line,-{Stealth[length=2pt,width=1.4pt]}] (q19) -- (q20);
\draw[line,-{Stealth[length=2pt,width=1.4pt]}] (q20) -- (q21);
\draw[line,-{Stealth[length=2pt,width=1.4pt]}] (q21) -- (q4);
\draw[line,-{Stealth[length=2pt,width=1.4pt]}] (q4) -- (q22);
\draw[line,-{Stealth[length=2pt,width=1.4pt]}] (q22) -- (q5);
\draw[line,-{Stealth[length=2pt,width=1.4pt]}] (q4) -- (q23);
\draw[line,-{Stealth[length=2pt,width=1.4pt]}] (q23) -- (q24);
\draw[line,-{Stealth[length=2pt,width=1.4pt]}] (q24) -- (q25);
\draw[line,-{Stealth[length=2pt,width=1.4pt]}] (q25) -- (q5);
\draw[line,-{Stealth[length=2pt,width=1.4pt]}] (q5) -- (q26);
\draw[line,-{Stealth[length=2pt,width=1.4pt]}] (q26) -- (q1);
\draw[line,-{Stealth[length=2pt,width=1.4pt]}] (q5) -- (q27);
\draw[line,-{Stealth[length=2pt,width=1.4pt]}] (q27) -- (q28);
\draw[line,-{Stealth[length=2pt,width=1.4pt]}] (q28) -- (q29);
\draw[line,-{Stealth[length=2pt,width=1.4pt]}] (q29) -- (q1);
\fill (q0) circle (1pt);
\fill (q1) circle (1pt);
\fill (q2) circle (1pt);
\fill (q3) circle (1pt);
\fill (q4) circle (1pt);
\fill (q5) circle (1pt);
\fill (q6) circle (1pt);
\fill (q7) circle (1pt);
\fill (q8) circle (1pt);
\fill (q9) circle (1pt);
\fill (q10) circle (1pt);
\fill (q11) circle (1pt);
\fill (q12) circle (1pt);
\fill (q13) circle (1pt);
\fill (q14) circle (1pt);
\fill (q15) circle (1pt);
\fill (q16) circle (1pt);
\fill (q17) circle (1pt);
\fill (q18) circle (1pt);
\fill (q19) circle (1pt);
\fill (q20) circle (1pt);
\fill (q21) circle (1pt);
\fill (q22) circle (1pt);
\fill (q23) circle (1pt);
\fill (q24) circle (1pt);
\fill (q25) circle (1pt);
\fill (q26) circle (1pt);
\fill (q27) circle (1pt);
\fill (q28) circle (1pt);
\fill (q29) circle (1pt);
\node[font=\small] at (1.25,-1.05) {$\Xi^{2}$};
\end{scope}
\end{tikzpicture}
\caption{The tie and gem rule graphs.}
\label{fig:terminal-symmetric-rules}
\end{figure}

Figure~\ref{fig:terminal-symmetric-rules} shows two different hierarchical lattices, called Tie and Gem.
Both have Hausdorff dimension $\dimhaus(\Xi)=\log6/\log2$.
Using our results in Sections~\ref{sec:bernoulli} and~\ref{sec:universality}, we can further analyse percolation on them.

\begin{figure}[tbp]
\centering
\begin{minipage}[c]{0.48\textwidth}
\centering
\begin{tikzpicture}
\begin{axis}[
  width=7.5cm,height=7.5cm,axis equal image,
  xmin=0,xmax=1,ymin=0,ymax=1,
  xlabel={$p$},ylabel={$\varphi(p)$},
  xtick={0,0.2,0.4,0.6,0.8,1},ytick={0,0.2,0.4,0.6,0.8,1},
  tick label style={font=\scriptsize},label style={font=\small},
  legend style={at={(0.03,0.97)},anchor=north west,font=\scriptsize,draw=none,fill=white},
  samples=151,domain=0:1,clip=true]
\addplot[blue!75!black,thick] {(x+x^2-x^3)^2};
\addlegendentry{Tie: $(p+p^2-p^3)^2$}
\addplot[red!75!black,thick] {x^2+x^4-x^6};
\addlegendentry{Gem: $p^2+p^4-p^6$}
\addplot[black!45,thin,dotted] {x};
\addlegendentry{$\varphi(p)=p$}
\addplot[blue!75!black,dashed,thick,domain=0.43:0.86] {0.6710436067+1.6239089906*(x-0.6710436067)};
\addplot[red!75!black,dashed,thick,domain=0.58:0.97] {0.8191725134+1.6239089906*(x-0.8191725134)};
\addplot[only marks,mark=*,mark size=1.8pt,blue!75!black] coordinates {(0.6710436067,0.6710436067)};
\addplot[only marks,mark=square*,mark size=1.8pt,red!75!black] coordinates {(0.8191725134,0.8191725134)};
\node[font=\scriptsize,blue!75!black,anchor=east] at (axis cs:0.65,0.66) {$p_{c,\mathrm{tie}}$};
\node[font=\scriptsize,red!75!black,anchor=east] at (axis cs:0.795,0.85) {$p_{c,\mathrm{gem}}$};
\end{axis}
\end{tikzpicture}
\caption{The reliability functions.}
\label{fig:intro-tie-gem-response}
\end{minipage}\hfill
\begin{minipage}[c]{0.50\textwidth}
\centering\small
$\mathbf M_{\mathfrak C_{\mathrm{tie}}}(p_{c,\mathrm{tie}})\approx$
\[
\begin{pmatrix}
4.5605 & 1.0850 & 0.3546\\[2pt]
2.8567 & 2.1125 & 0.9118\\[2pt]
1.4284 & 0.2443 & 2.0798
\end{pmatrix}
\]
\vspace{0.5em}
$\mathbf M_{\mathfrak C_{\mathrm{gem}}}(p_{c,\mathrm{gem}})\approx$
\[
\begin{pmatrix}
5.1288 & 0.4886 & 0.2851\\[2pt]
3.4075 & 1.6239 & 0.7522\\[2pt]
1.7037 & 0.0000 & 2.0000
\end{pmatrix}
\]
\end{minipage}
\end{figure}

We display their reliability functions (see Definition~\ref{def:bernoulli}) alongside their critical percolation mass matrices (see Definition~\ref{def:live-matrix}).
We see that these two hierarchical lattices look different and have different reliability functions, critical probabilities $p_c$, and mass matrices.
However, the results in Table~\ref{tab:intro-tie-gem} below tell a different story.

\begin{table}[H]
\centering\scriptsize
\setlength{\tabcolsep}{2pt}
\renewcommand{\arraystretch}{1.25}
\resizebox{\textwidth}{!}{%
\begin{tabular}{l|cc|cccc|ccc|c}
\toprule
 & $\varphi(p)$ & $p_c$ & $\ell$ & $m$
 & $\rho(\mathbf M_{\mathfrak C})$ & $\varphi'(p_c)$
 & $\dimhaus(\Xi)$ & $\dimmass(\mathfrak C)$ & $\dim_{\mathrm P}(\mathfrak C)$
 & $\{\beta,\nu,\delta,\eta_{\mathrm{av}}\}$\\
\midrule
Tie & $(p+p^2-p^3)^2$ & $0.6710$ & $2$ & $6$
 & $5.7087$ & $1.6239$
 & $\dfrac{\log6}{\log2}$ & $\approx\dfrac{\log5.7087}{\log2}$ & $\approx\dfrac{\log1.6239}{\log2}$
 & \multirow{2}{*}{equal}\\
Gem & $p^2+p^4-p^6$ & $0.8192$ & $2$ & $6$
 & $5.7087$ & $1.6239$
 & $\dfrac{\log6}{\log2}$ & $\approx\dfrac{\log5.7087}{\log2}$ & $\approx\dfrac{\log1.6239}{\log2}$ & \\
\bottomrule
\end{tabular}}
\caption{Tie versus gem: the reliability maps and critical points differ, but the lattices have the same $\dimhaus(\Xi)$, $\dimmass(\mathfrak C)$ and $\dim_{\mathrm P}(\mathfrak C)$.}
\label{tab:intro-tie-gem}
\end{table}

The equality of their critical dimensions is exact and follows from the cyclic interchange of the two edge substitutions; see Lemma~\ref{lem:transposition-invariance}.
The examples in Figure~\ref{fig:terminal-symmetric-rules} show that distinct hierarchical lattices can belong to the same critical-exponent universality class.
The question is therefore how to identify such classes, and whether geometric data such as ambient dimension and connectivity suffice.

This pair therefore gives us a reason to formulate the problem carefully.
For instance, although hierarchical lattices appear frequently in the literature, what precise definition should we use? Which systems admit exact renormalisation, and what properties do these systems possess?
To address these questions, we introduce iterated graph systems.

\subsection{Edge iterated graph systems}
Griffiths and Kaufman~\cite{griffiths1982spin} gave a general definition of hierarchical lattices, using a hierarchy of units with uniformly bounded boundaries and a uniformly bounded number of constituent units.
We follow the distinction in~\cite{neroli2026iterated} between the generating system, its finite substitution networks, and the resulting hierarchical lattice.
Also see \cites{li2024on,xi2017fractality,li2025reducible}.

\begin{samepage}
\begin{definition}[\cite{neroli2024fractal}]\label{definition:EIGS}
A \emph{random edge iterated graph system} (random EIGS) is specified by the following data and iteration.
{\small
\begin{enumerate}[label=\arabic*.,leftmargin=*,topsep=3pt,itemsep=1pt,parsep=0pt]
\item \textbf{Initial graph.} A finite directed graph $\Xi^0$ with edge colours in $[K]:=\{1,\ldots,K\}$; parallel edges are allowed. For a single initial edge, write its endpoints as $\mathfrak v_+,\mathfrak v_-$.
\item \textbf{Rule laws.} For each colour, a fixed probability law on finitely many finite directed rule graphs, with the same colour set and two ordered distinct \emph{planting vertices}.
\item \textbf{Substitution.} Replace each edge $(a,b)$ by a fresh rule sampled according to its colour, identifying the planting vertices with $a,b$ and making no further identifications. Samples are independent across edges, with fresh randomness at each step.
\item \textbf{Iteration.} Simultaneous substitution produces the \emph{substitution networks} $\Xi^0,\Xi^1,\ldots$. The EIGS is \emph{deterministic} if every rule law is concentrated on one graph.
\end{enumerate}}
\end{definition}
\end{samepage}

\begin{figure}[tbp]
\centering
\resizebox{\linewidth}{!}{\input{figure/random_DHL_rules.tex}}
\caption{A particular example of a random EIGS; it is induced by Bernoulli bond percolation on the DHL.}
\label{fig:random-eigs-dhl}
\end{figure}

For example, Figure~\ref{fig:random-eigs-dhl} gives the four-colour rules for critical percolation on the diamond hierarchical lattice, with $p=p_c=(\sqrt5-1)/2$.
Each displayed probability applies to each rule immediately below it; the bottom row shows one realisation of the first three substitutions.
Starting from one red edge, the red subgraph of $\Xi^n$ has the law of the level-$n$ percolation cluster conditioned on the two terminals being connected; the other colours record auxiliary terminal states.

Following~\cite{neroli2026iterated}, when the substitution networks admit a limit, we call it an \emph{EIGS hierarchical lattice}, distinguishing the combinatorial limit from the metric scaling limit when needed.
The EIGS is the generating system, and $(\Xi^n)_{n\in\mathbb N}$ is its sequence of finite approximating networks.
We retain the chosen EIGS construction when discussing a hierarchical lattice, so that its substitution steps and renormalisation scale are specified.

In this paper the underlying lattices are generated from one edge by deterministic single-colour systems.
Each edge is therefore replaced by the same two-terminal rule graph at every step.
Bernoulli percolation on such a deterministic system already produces a random coloured EIGS, whose colours record the two-terminal activity states of a cell.
This setting lets us study that mechanism while keeping the deterministic construction simple.
We use the term \emph{classical EIGS hierarchical lattice} for the following class.

\begin{definition}\label{def:standing}

A \textbf{classical EIGS hierarchical lattice} is an EIGS hierarchical lattice generated from a single directed edge by a deterministic single-colour EIGS whose two-terminal rule graph $R=(V(R),E(R);\beta^+,\beta^-)$ is finite, connected and simple, and satisfies the following conditions.
In particular, $R$ has no loops or parallel edges.
Connectivity, distances and percolation events refer to the underlying undirected graph of $R$.

\begin{enumerate}
\item \textbf{Canonicality.}
Every edge $e\in E(R)$ belongs to at least one simple path in the underlying undirected graph of $R$ connecting $\beta^+$ and $\beta^-$.
This excludes electrically irrelevant decorations which never participate in two-terminal connections.

\item \textbf{Non-degenerate two-terminal geometry.}
The terminal distance and the terminal edge-connectivity satisfy
\[
   \ell:=\dist_R(\beta^+,\beta^-)\ge 2,
   \qquad
   \lambda_R(\beta^+,\beta^-)\ge 2,
\]
where $\lambda_R(\beta^+,\beta^-)$ denotes the minimum size of an edge cut separating the two terminals.

\item \textbf{Terminal symmetry.}

The underlying undirected graph of $R$ admits an involutive automorphism $\vartheta:V(R)\longrightarrow V(R)$ that exchanges the terminals:
\[
   \vartheta^2=\mathrm{id},\qquad
   \vartheta(\beta^+)=\beta^-,\qquad
   \vartheta(\beta^-)=\beta^+.
\]
We call such a rule graph \emph{terminal-symmetric}.
This assumption identifies the left and right one-arm laws used below, so the state $u$ can be defined without any averaging convention.
\end{enumerate}
\end{definition}

The two geometric bounds in Definition~\ref{def:standing} characterise the existence of a nontrivial two-terminal crossing transition for a finite connected rule graph; see~\cite{alves2026percolation}*{Theorem~3.1 and Remark~5}.
In particular, once $\ell\ge2$, the condition $\lambda_R(\beta^+,\beta^-)\ge2$ is necessary and sufficient for a critical crossing parameter in $(0,1)$ separating vanishing from asymptotically certain terminal connection.
This criterion does not require canonicality or terminal symmetry.

Unless stated otherwise, the hierarchical lattices studied below are classical EIGS hierarchical lattices in the sense of Definition~\ref{def:standing}.
For a single-colour rule graph, we use the following notation throughout: $\ell:=\dist_R(\beta^+,\beta^-)$, and $m:=|E(R)|$.

\subsection{Main results}

Section~\ref{sec:bernoulli} develops the basic tools used throughout the paper.
In particular, Subsection~\ref{sec:percolation-preliminaries} extends the diamond construction of Hambly and Kumagai~\cite{hambly2010diffusion} to classical EIGS hierarchical lattices; some of these results are also contained in the recent work of Alves et al.~\cite{alves2026percolation}.
We regard the following theorem as a starting point for studying percolation throughout this class of hierarchical lattices.
\begin{restated}[Generalisation of the Hambly--Kumagai theorem~\cite{hambly2010diffusion}]{Theorem}{thm:percolation-random-eigs}
On every classical EIGS hierarchical lattice, critical Bernoulli bond percolation conditioned on terminal connection induces a random EIGS representation of its cluster.
\end{restated}
Using the established framework of random EIGS, we can study percolation more systematically.
For example, we analyse the mass matrix $\mathbf M_{\mathfrak C}$, which records the expected numbers of live child cells, and establish several important properties of this matrix.

Section~\ref{sec:universality} studies universality classes.
We examine the eight percolation exponents listed by Grimmett~\cite{grimmett2014criticality} and find that, under our observation conventions, four exist for every classical rule: $\beta,\nu,\delta,\eta_{\mathrm{av}}$.
\begin{restated}{Definition}{def:general-class}
Two classical EIGS hierarchical lattices belong to the same \emph{critical-exponent universality class} if their four exponents $(\beta,\nu,\delta,\eta_{\mathrm{av}})$ agree.
\end{restated}
We further prove the following characterisation.
\begin{restated}{Theorem}{thm:exponent-class-dimensions}
Two classical EIGS hierarchical lattices belong to the same critical-exponent universality class if and only if their ambient dimension $\dimhaus(\Xi)$, mass dimension $\dimmass(\mathfrak C)$ and pivotal dimension $\dim_{\mathrm P}(\mathfrak C)$ agree.
\end{restated}
More explicitly, the dimensions and exponents satisfy
\begin{align*}
 \nu&=\frac{1}{\dim_{\mathrm P}(\mathfrak C)},&
 \beta&=\frac{\dimhaus(\Xi)-\dimmass(\mathfrak C)}{\dim_{\mathrm P}(\mathfrak C)},\\
 \delta&=\frac{\dimmass(\mathfrak C)}{\dimhaus(\Xi)-\dimmass(\mathfrak C)},&
 \eta_{\mathrm{av}}&=2+\dimhaus(\Xi)-2\dimmass(\mathfrak C).
\end{align*}

With these results, we can naturally ask and answer a further question: can these critical dimensions themselves be determined by geometric measurements?
For a positive functional $\Lambda$ multiplicative under uniform edge substitution, its associated dimension is $\log\Lambda(R)/\log\ell_R$.
Reordering substitutions gives the second main result of Section~\ref{sec:universality}.
\begin{restated}{Theorem}{thm:no-multiplicative-classification}
No finite family of substitution-multiplicative dimensions determines the critical-exponent universality class.
In particular, the ambient Hausdorff dimension $\dimhaus(\Xi)$, as a substitution-multiplicative dimension, cannot determine this class.
\end{restated}
This gives a structural proof of Hauser and Saxena's observation~\cite{hauser1986geometrical} that ambient dimension alone cannot determine percolation universality.
The critical mass dimension can retain information about substitution order that substitution-multiplicative dimensions cannot detect.

Discrete scale invariance is an important concept in physics: it describes self-similarity under scale transformations by integer powers of a fixed ratio and can manifest itself through log-periodic corrections to power-law scaling; see Sornette~\cite{sornette1998discrete} and Derrida and Giacomin~\cite{derrida2014logperiodic}.
For a hierarchical lattice, each substitution has a fixed length multiplier $\ell$, and iteration naturally produces the discrete scales $\ell,\ell^2,\ell^3,\ldots$.
In the context of critical-exponent universality classes, this raises a natural question: do identical critical exponents require compatible discrete renormalisation scales?
Specifically, if two hierarchical lattices belong to the same critical-exponent universality class and have length scales $\ell_1$ and $\ell_2$, must there be positive integers $a,b$ such that
\[
\ell_1^a=\ell_2^b,
\qquad\text{equivalently,}\qquad
\frac{\log\ell_1}{\log\ell_2}\in\mathbb Q?
\]
Physically, this means that grouping finitely many hierarchical levels on each side gives a common discrete coarse-graining scale.
Section~\ref{sec:scale} addresses this question.

The six exponentials theorem~\cite{dasgupta2021transcendence} connects this physical question to a problem in number theory.
Finite rule graphs produce algebraic volume, mass and pivotal growth multipliers, whose logarithms, divided by the logarithm of the length scale, give the three critical dimensions.
For two lattices $\Xi_1,\Xi_2$ in the same class, with critical clusters $\mathfrak C_1,\mathfrak C_2$, exponentiating the products of their scale logarithms and their common critical dimensions gives the six algebraic growth multipliers in Table~\ref{tab:six-exponentials-dimensions}.
If the scales were incommensurate and the three dimensions linearly independent over $\mathbb Q$, this would contradict the theorem, which requires at least one of the six numbers to be transcendental.

\begin{table}[htbp]
\centering\small
\renewcommand{\arraystretch}{1.35}
\begin{minipage}{0.34\textwidth}
\centering
\[
\begin{array}{c|ccc}
 & y_1&y_2&y_3\\\hline
 x_1&e^{x_1y_1}&e^{x_1y_2}&e^{x_1y_3}\\
 x_2&e^{x_2y_1}&e^{x_2y_2}&e^{x_2y_3}
\end{array}
\]
\end{minipage}\hfill
\begin{minipage}{0.63\textwidth}
\centering
\[
\begin{array}{c|ccc}
 &\dimhaus(\Xi_1)&\dimmass(\mathfrak C_1)&\dim_{\mathrm P}(\mathfrak C_1)\\\hline
 \log\ell_1&m_1&\rho(\mathbf M_{\mathfrak C_1})&\varphi_{R_1}'(p_{c,1})\\
 \log\ell_2&m_2&\rho(\mathbf M_{\mathfrak C_2})&\varphi_{R_2}'(p_{c,2})
\end{array}
\]
\end{minipage}
\caption{Six exponentials and percolation exponents.}
\label{tab:six-exponentials-dimensions}
\end{table}

We consequently obtain the following partial commensurability result.
\begin{restated}{Theorem}{thm:irreducible-commensurability}
Suppose a classical EIGS hierarchical lattice satisfies
\[
 \frac{\varphi_R(p)-p}{p(1-p)}\text{ is irreducible over }\mathbb Q,
 \qquad \rho(\mathbf M_{\mathfrak C})\notin\mathbb Q(p_c).
\]
Then the scales of all classical EIGS hierarchical lattices in its critical-exponent universality class are pairwise commensurate.
\end{restated}
For example, every hierarchical lattice in the DHL class has scale commensurate with $2$.
These results led us to expect commensurability within every universality class.

However, Section~\ref{sec:counterexample} constructs infinitely many models in one class whose scales remain pairwise incompatible after any finite grouping of substitution steps.
\begin{restated}{Theorem}{thm:infinite-incommensurate-scales}
For every integer $k\ge0$, there is a classical EIGS hierarchical lattice $\Xi$ with scale $(19\cdot661^k)^{100}$ and critical dimensions
\[
 (\dim_{\mathrm H}(\Xi),\dim_{\mathrm M}(\mathfrak C),\dim_{\mathrm P}(\mathfrak C))
 =\left(\frac{58}{25},\frac{219}{100},\frac7{10}\right).
\]
These scales are pairwise incommensurate.
\end{restated}

We further construct an example in which all three critical dimensions of a hierarchical lattice are transcendental but even rationally proportional.

Agreement of all substitution-multiplicative dimensions is insufficient for critical-exponent universality, while commensurability of the underlying substitution scales is not necessary. 
A single critical-exponent universality class can cross infinitely many pairwise incommensurate scale classes. 
Thus, critical-exponent universality does not require a common discrete coarse-graining scale, even after finite regrouping of hierarchical levels.

\subsection{Lean formalisation}

The accompanying Lean~4 development, built on \texttt{mathlib}, brings the results of Sections~2--5 into a single formal framework. It includes the Bernoulli substitution and spectral foundations, the classification by four critical exponents and three dimensions, the noncommutative obstruction, the arithmetic criteria for scale commensurability, and the actual graph constructions underlying the counterexamples. The conditional results for the additional exponents in the appendix are also included, with their hypotheses retained.

Apart from Lean's standard logical axioms, the only external mathematical inputs are the Gelfond--Schneider theorem and the six exponentials theorem. The formal proofs of these conclusions are checked by Lean's kernel, without proof placeholders, for the specified observables and under the stated hypotheses. The four exponentials conjecture appears only as an explicit hypothesis of conditional results. The formal statements, proofs, exact certificates, coverage description and verification instructions are provided in the accompanying \href{https://github.com/Nero-17/LEAN-Formalisation-IGS-Bernoulli-percolation-universality-class}{Lean repository}.

\noindent{\footnotesize\url{https://github.com/Nero-17/LEAN-Formalisation-IGS-Bernoulli-percolation-universality-class}}

\section{Preliminaries: Bernoulli percolation on hierarchical lattices}\label{sec:bernoulli}

In this section, we establish the basic properties of Bernoulli percolation on classical EIGS hierarchical lattices that underlie the subsequent analysis.

Subsection~\ref{sec:percolation-preliminaries} concerns the existence of a nontrivial crossing transition and the representation of critical terminal clusters by a random EIGS. It includes both a generalisation of the construction of Hambly and Kumagai~\cite{hambly2010diffusion} and results overlapping with the recent work of Alves, Baldasso, Moreira and Teixeira~\cite{alves2026percolation}.
For completeness, we include these results and explicitly indicate their relationship to the existing literature.

Subsection~\ref{sec:live-matrix} examines the mass-transfer matrix of the random EIGS induced by percolation and its spectral structure.
These properties provide a foundation for understanding critical clusters and their critical exponents.

\subsection{From percolation to random EIGS}\label{sec:percolation-preliminaries}

On Euclidean lattices, Bernoulli percolation is generally difficult to solve exactly;
on classical EIGS hierarchical lattices, exact renormalisation provides a recursive approach to crossing probabilities and terminal clusters.
Moreover, at criticality and conditional on the two terminals being connected, the terminal cluster can be represented by a random EIGS, allowing us to study its properties through this system.

\begin{definition}\label{def:bernoulli}
On each substitution graph $\Xi^n$, a configuration is a subset $\omega\subseteq E(\Xi^n)$ of open edges. Under $\mathbb P_p$, these edges are independently open with probability $p\in[0,1]$, so a particular configuration has probability $p^{|\omega|}(1-p)^{|E(\Xi^n)|-|\omega|}$. Its clusters are the connected components of the graph with vertex set $V(\Xi^n)$ and edge set $\omega$, including isolated vertices. The inherited terminals are denoted by $\mathfrak v_+,\mathfrak v_-$.

For a finite two-terminal graph $G$, its reliability is
\[
 \varphi_G(p):=\mathbb P_p(\text{the terminals of }G\text{ are connected}).
\]
\end{definition}
Summing the configuration probabilities over the connecting edge subsets makes $\varphi_G$ an integer polynomial. We study the terminal crossing transition and the conditional clusters seen from the terminals. The finite graphs need not form a nested sequence of edge sets; replacing an edge removes it and inserts a new cell.

\begin{lemma}[Also see \cite{alves2026percolation}]\label{lem:renormalisation}
For every $n\ge 1$ and every $p\in[0,1]$,
\[
    \varphi_{\Xi^{n}:\mathfrak v_+\leftrightarrow \mathfrak v_-}(p)
    =
    \varphi_{R:\beta^+\leftrightarrow \beta^-}^{\circ n}(p),
\]
where $\varphi_R^{\circ n}$ denotes the $n$-fold composition of $\varphi_R$ with itself.
\end{lemma}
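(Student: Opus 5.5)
The plan is induction on $n$, built on a one-step decomposition identity. Let $G$ be any finite two-terminal graph, and let $G[R]$ be the graph obtained by replacing every edge of $G$ with a fresh copy of $R$. The planting vertices of each copy are identified with the endpoints of the replaced edge, and no other identifications are made. I will show
\[
 \varphi_{G[R]}(p)=\varphi_G\bigl(\varphi_R(p)\bigr).
\]
The lemma then follows from this identity. The base case $n=1$ is just $\Xi^1=R$, since $\Xi^0$ is a single edge with $\varphi_{\Xi^0}(p)=p$. For the induction step there are two ways to split, and both work.

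\begin{itemize}
\item Write $\Xi^{n}=\Xi^{n-1}[R]$, which is exactly Definition~\ref{definition:EIGS}(3). This gives $\varphi_{\Xi^n}=\varphi_{\Xi^{n-1}}\circ\varphi_R=\varphi_R^{\circ(n-1)}\circ\varphi_R$.
\item Write $\Xi^n=R[\Xi^{n-1}]$, viewing $\Xi^n$ as $R$ with each edge replaced by a copy of $\Xi^{n-1}$. Either order yields $\varphi_R^{\circ n}$.
\end{itemize}

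To prove the identity, fix a configuration $\omega$ on $G[R]$. For each edge $e=(a,b)$ of $G$, let $R_e$ be the copy of $R$ substituted for $e$, and define the indicator
\[
 \xi_e:=\mathbf 1\{\text{the planting vertices of $R_e$ are connected by open edges of $R_e$}\}.
\]
The copies $R_e$ have pairwise disjoint edge sets. Under $\mathbb P_p$ the variables $(\xi_e)_{e\in E(G)}$ are therefore independent Bernoulli with parameter $\varphi_R(p)$. The heart of the argument is the deterministic claim:
\[
 \mathfrak v_+\leftrightarrow\mathfrak v_-\text{ in }(G[R],\omega)
 \iff
 \text{the terminals are connected in $G$ using the edges with }\xi_e=1.
\]
Once this holds, taking probabilities gives $\varphi_{G[R]}(p)=\varphi_G(\varphi_R(p))$ immediately.

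The direction ($\Leftarrow$) is concatenation. Along a $G$-path of edges with $\xi_e=1$, each such edge is replaced by an open crossing of $R_e$, and these crossings join into an open path in $G[R]$.

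The direction ($\Rightarrow$) is the main obstacle, though it is still elementary. Take an open simple path $\gamma$ from $\mathfrak v_+$ to $\mathfrak v_-$ in $G[R]$. Its vertices are of two kinds:
\begin{itemize}
\item images of vertices of $G$, which are the planting vertices;
\item interior vertices of some $R_e$.
\end{itemize}
Because no further identifications are made, an interior vertex of $R_e$ is incident only to edges of $R_e$. So $\gamma$ decomposes at its visits to $V(G)$ into excursions. Each excursion lies inside a single copy $R_e$ and joins two vertices of $V(G)\cap V(R_e)=\{a,b\}$.

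Those two endpoints cannot both equal $a$, since $\gamma$ is simple and the excursion is nontrivial. Hence each excursion is an open crossing of $R_e$ from $a$ to $b$, which forces $\xi_e=1$. Recording the sequence of such edges $e$ produces a walk in $G$ from $\mathfrak v_+$ to $\mathfrak v_-$ that uses only edges with $\xi_e=1$.

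The proof uses connectivity in the underlying undirected graph throughout, matching the convention in Definition~\ref{def:standing}. It does not use canonicality, the bounds $\ell,\lambda\ge2$, or terminal symmetry; the only structural input is that the two planting vertices are distinct.
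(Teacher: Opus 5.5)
Your proof is correct and is essentially the paper's argument: the crossing indicators of the edge-disjoint substituted copies are independent, and the global crossing is exactly a crossing of the coarse graph by the edges whose indicator is one, which gives the one-step recursion. The paper splits $\Xi^n$ as $R$ with copies of $\Xi^{n-1}$, obtaining $\varphi_R\circ\varphi_{\Xi^{n-1}}$, whereas your first bullet splits it as $\Xi^{n-1}$ with copies of $R$, which matches Definition~\ref{definition:EIGS} literally and avoids the implicit reassociation; your excursion argument for ($\Rightarrow$) also supplies detail the paper only asserts.
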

\begin{proof}
By construction $\Xi^{n}$ is obtained from the rule graph $R$ by substituting a fresh copy of $\Xi^{n-1}$ for each of its edges, the copies being glued only at the planting vertices.
Distinct copies share no edge, so under $\mathbb P_p$ their percolation configurations are independent, and each copy connects the two planting vertices of the edge it replaces with probability $\varphi_{\Xi^{n-1}}(p)$.

The crossing event $\{\mathfrak v_+\leftrightarrow\mathfrak v_-\}$ in $\Xi^{n}$ depends on the percolation configuration only through the family of crossing events of the substituted copies, one per edge of $R$.
These indicators are independent and each has mean $\varphi_{\Xi^{n-1}}(p)$, and the global crossing occurs precisely when the corresponding edges of $R$ form a crossing of $R$.
Hence the crossing of $\Xi^{n}$ has the law of the crossing of $R$ with independent edge-occupation $\varphi_{\Xi^{n-1}}(p)$, which gives
\[
    \varphi_{\Xi^{n}}(p)=\varphi_{R}\bigl(\varphi_{\Xi^{n-1}}(p)\bigr).
\]

For $n=1$ the substitution replaces the single initial edge by one copy of $R$, so $\varphi_{\Xi^{1}}=\varphi_R$.
Iterating the displayed recursion and using $\varphi_{\Xi^{1}}=\varphi_R$ yields $\varphi_{\Xi^{n}}=\varphi_R^{\circ n}$.
\end{proof}

\begin{proposition}[Also see \cite{alves2026percolation}]\label{prop:nontrivial-transition}
For a finite connected two-terminal rule graph $R$, the function $\varphi_R(p)-p$ changes sign in $(0,1)$ if and only if
\[
 \dist_R(\beta^+,\beta^-)\ge2,
 \qquad \lambda_R(\beta^+,\beta^-)\ge2.
\]
\end{proposition}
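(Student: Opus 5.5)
We look at the behaviour of $\varphi_R(p)-p$ near the two endpoints $p=0$ and $p=1$. Both are fixed points: if all edges are closed the terminals are disconnected, and if all are open they are connected because $R$ is connected. So $\varphi_R(0)=0$ and $\varphi_R(1)=1$. Since $\varphi_R(p)-p$ is a polynomial, it changes sign in $(0,1)$ as soon as it is negative near $0$ and positive near $1$. For the converse we show that, if either geometric condition fails, it keeps a constant sign on all of $(0,1)$.

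\textbf{Expansion at $0$.} By inclusion--exclusion over edge sets, $\varphi_R(p)=\sum_{k\ge \ell}c_kp^k$ with $c_\ell$ equal to the number of geodesic paths from $\beta^+$ to $\beta^-$, where $\ell=\dist_R(\beta^+,\beta^-)$.
\begin{itemize}
\item If $\ell\ge2$, then $\varphi_R(p)=O(p^2)$, so $\varphi_R(p)-p<0$ for small $p>0$.
\item If $\ell=1$, there is a direct edge $e=\beta^+\beta^-$. Then $\varphi_R(p)\ge p$, because the open edge $e$ alone connects the terminals. On $(0,1)$ the inequality is strict or an equality; either way there is no sign change.
\end{itemize}
For the strictness claim in the second case: if $R$ consists of the single edge $e$, then $\varphi_R(p)=p$ identically and there is no sign change. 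Otherwise there may be no other crossing path, in which case equality $\varphi_R=p$ holds. Either way, $\varphi_R(p)-p\ge0$ on $(0,1)$, so it never changes sign. This gives the necessity of $\ell\ge2$.

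\textbf{Expansion at $1$.} Pass to $q=1-p$. Then $1-\varphi_R(p)=\mathbb P(\text{terminals separated})$, and separation requires every edge of some terminal-separating cut to be closed. The leading term is $N_\lambda q^{\lambda}$, where $\lambda=\lambda_R(\beta^+,\beta^-)$ and $N_\lambda$ counts the minimum edge cuts; the correction is $O(q^{\lambda+1})$. A union bound gives the upper bound, and the event that exactly one minimum cut is closed gives the matching lower order.
\begin{itemize}
\item If $\lambda\ge2$, then $1-\varphi_R=O(q^2)$, whereas $1-p=q$. Hence $\varphi_R(p)-p=q-O(q^2)>0$ near $p=1$.
\item If $\lambda=1$, there is a bridge $b$ separating the terminals. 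Every crossing uses $b$, so $\varphi_R(p)\le p$ on $[0,1]$. Again there is no strict sign change.
\end{itemize}
This gives the necessity of $\lambda\ge2$.

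Combining the two expansions, when both $\ell\ge2$ and $\lambda\ge2$ the function is negative near $0$ and positive near $1$. The intermediate value theorem then yields a sign change in $(0,1)$.

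\textbf{Main obstacle.} The delicate point is interpreting "changes sign" in the degenerate cases. The bridge and direct-edge cases give only one-sided inequalities $\varphi_R\le p$ or $\varphi_R\ge p$, possibly with equality at interior points. We must rule out a genuine crossing from negative to positive, which the monotone domination $\varphi_R\le p$ (respectively $\ge p$) does: it keeps $\varphi_R(p)-p$ of one sign throughout $(0,1)$.

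The remaining care is the expansion at $1$. The needed fact is $1-\varphi_R(p)\le\sum_{\text{min cuts}}q^{\lambda}+O(q^{\lambda+1})$. This follows because every separating configuration closes some minimal separating edge set, and each such set has size at least $\lambda$, so the union bound over separating sets suffices.
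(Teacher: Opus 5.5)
Your proof is correct and follows essentially the same route as the paper. In one direction, union bounds over terminal paths and terminal cuts give $\varphi_R(p)=O(p^2)$ and $1-\varphi_R(p)=O((1-p)^2)$, which forces a sign change. In the other, the one-sided dominations $\varphi_R\ge p$ (adjacent terminals) and $\varphi_R\le p$ (a separating bridge) rule a sign change out.

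Your identification of the leading coefficients as counts of geodesics and of minimum cuts is true but not needed. The muddled strictness discussion in the case $\ell=1$ can simply be dropped, since $\varphi_R(p)-p\ge0$ on $(0,1)$ already suffices.
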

\begin{proof}
If the terminal distance is at least two, a union bound over the finitely many simple terminal paths gives $\varphi_R(p)=O(p^2)$ as $p\downarrow0$.
If the minimum terminal cut has at least two edges, a union bound over the finitely many terminal cuts gives $1-\varphi_R(p)=O((1-p)^2)$ as $p\uparrow1$.
Thus $\varphi_R(p)<p$ near zero and $\varphi_R(p)>p$ near one.
Continuity gives an interior sign change.

Conversely, if the terminals are adjacent, opening a direct edge suffices for connection, so $\varphi_R(p)\ge p$ throughout $[0,1]$.
If the minimum terminal cut has size one, its unique edge must be open for connection, so $\varphi_R(p)\le p$ throughout $[0,1]$.
Neither case permits a sign change.
\end{proof}

This criterion does not require canonicality or terminal symmetry; see also~\cite{alves2026percolation}*{Remark~5}.
When the terminal distance is at least two and the minimum cut has size one, the latter inequality is strict for $0<p<1$: opening only the separating edge does not connect the terminals.
Consequently $\displaystyle\lim_{n\to\infty}\varphi_R^{\circ n}(p)=0$ for every $p<1$.
The condition $\lambda_R(\beta^+,\beta^-)\ge2$ therefore excludes precisely this trivial crossing regime once $\ell\ge2$.
Under the two geometric bounds, the following theorem gives uniqueness and strict instability of the interior fixed point, as also established in~\cite{alves2026percolation}*{Theorem~3.1}.

\begin{theorem}[Also see \cite{alves2026percolation}]
\label{thm:unique-unstable-critical-point}
For a finite connected two-terminal rule graph with terminal distance at least two and minimum terminal edge cut at least two, the reliability function $\varphi_R$ has a unique fixed point $p_c\in(0,1)$, and this fixed point is unstable:
\[
    \varphi_R'(p_c)>1.
\]
\end{theorem}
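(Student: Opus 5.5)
The plan is to pass to logit coordinates and use the Moore--Shannon inequality. Write $h:=\varphi_R$ and note that $h$ is a nondecreasing polynomial with $h(0)=0$ and $h(1)=1$; since $R$ is connected, $0<h(p)<1$ for $p\in(0,1)$. The key analytic input is
\[
  p(1-p)\,h'(p)\;\ge\; h(p)\bigl(1-h(p)\bigr),\qquad 0<p<1,
\]
with \emph{strict} inequality for every $p\in(0,1)$ unless $h(p)\equiv p$.

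Set $\sigma(x)=1/(1+e^{-x})$ and define
\[
  g(x):=\sigma^{-1}\bigl(h(\sigma(x))\bigr),\qquad x\in\mathbb R .
\]
A direct chain-rule computation gives
\[
  g'(x)=\frac{p(1-p)h'(p)}{h(p)(1-h(p))},\qquad p=\sigma(x).
\]
The Moore--Shannon inequality is therefore equivalent to $g'\ge1$, with strictness wherever the inequality is strict. Fixed points $p\in(0,1)$ of $h$ correspond exactly to zeros of $G(x):=g(x)-x$. For the first step, I would invoke Proposition~\ref{prop:nontrivial-transition}: it says $h(p)-p$ changes sign in $(0,1)$, so $G$ has a zero, and a fixed point exists.

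For the second step, I would show $h\not\equiv p$. This holds because $\ell\ge2$ gives $h(p)=O(p^2)$ near $0$. With the strict Moore--Shannon inequality, $G'>0$ on $\mathbb R$, so $G$ is strictly increasing and has exactly one zero. This gives uniqueness of $p_c$. At that zero we have $h(p_c)=p_c$, so the prefactor in the formula for $g'$ reduces to $h'(p_c)$, and hence $h'(p_c)=g'(x_c)>1$. This is the instability claim.

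The main obstacle is proving the Moore--Shannon inequality with the strictness clause. I would prove it by induction on $|E(R)|$ via pivotal decomposition on one edge $e$. Writing $a:=h_{R/e}$ (edge $e$ contracted) and $b:=h_{R\setminus e}$ (edge $e$ deleted), we have $b\le a$ and
\[
 h=p\,a+(1-p)\,b,\qquad h'=a-b+p\,a'+(1-p)\,b'.
\]
The inductive hypothesis applies to $a$ and $b$ (trivial networks with $a\equiv0$, $a\equiv1$, or $a\equiv p$ are handled directly). It bounds $p(1-p)h'$ below by
\[
 p(1-p)(a-b)+p\,a(1-a)+(1-p)\,b(1-b).
\]
It then remains to check the elementary inequality that this expression is at least $h(1-h)$. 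Expanding, the difference equals $p(1-p)(a-b)(1-a+b)\ge0$, which is an exact identity worth verifying carefully.

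For strictness, equality at some interior $p$ forces $(a-b)(1-a+b)=0$ together with equality in both inductive cases. Tracing these cases, I would show this forces the network to be equivalent to a single edge, i.e.\ $h\equiv p$. If the bookkeeping becomes heavy, a fallback is to cite Moore and Shannon's original theorem for the strict form. The hypotheses on $R$ are used only to exclude $h\equiv p$ and to supply the sign change.
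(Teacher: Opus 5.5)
Your proof is correct and has the same skeleton as the paper's. Both establish the strict inequality $\varphi_R(1-\varphi_R)<p(1-p)\varphi_R'$ on $(0,1)$. Both then read it as strict monotonicity of the log-odds difference, which is your $G$, and evaluate at the fixed point to get $\varphi_R'(p_c)>1$.

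The difference is in how the key inequality and its equality case are proved.
\begin{itemize}
\item The paper combines Russo's formula with the $p$-biased Poincar\'e inequality. It reads off the equality case from the Walsh expansion: all Fourier weight must lie in degree at most one, so a monotone Boolean function attaining equality is constant or depends on a single edge.
\item You prove the same inequality (Moore--Shannon) by induction on edges via $h=pa+(1-p)b$. Your identity $p(1-p)(a-b)(1-a+b)$ for the defect is correct.
\end{itemize}
Your route is fully elementary and applies verbatim to any monotone Boolean function. The price is a case analysis for strictness. The paper's route is shorter once the Poincar\'e and Walsh machinery is granted.

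For existence you cite Proposition~\ref{prop:nontrivial-transition}. The paper instead uses the endpoint limits $\mp\infty$ of the log-odds difference. Both rest on the same union bounds over terminal paths and cuts.

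When you write out the strictness step, state the inductive claim as follows: equality at one interior point forces $h$ to be constant or $h\equiv p$. Your version, ``strict unless $h\equiv p$'', is false for constant $h$, and the minors $R/e$ and $R\setminus e$ can have constant reliability. With the corrected claim the tracing is short, because at an interior $p_0$ every configuration has positive weight:
\begin{itemize}
\item If $a(p_0)=b(p_0)$, the crossing indicator does not depend on $e$. Then $h=b$ with equality at $p_0$, and you induct.
\item If $a(p_0)-b(p_0)=1$, then $a\equiv1$ and $b\equiv0$, so $h\equiv p$.
\end{itemize}
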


\begin{proof}
For $p\in(0,1)$, Russo's formula~\cite{grimmett1989percolation} and the Bernoulli Poincar\'e inequality~\cite{odonnell2014analysis} give $\varphi_R(p)(1-\varphi_R(p))\le p(1-p)\varphi_R'(p)$. Equality in the Poincar\'e inequality forces every term of degree at least two in the $p$-biased Walsh expansion to vanish, so a monotone Boolean function attaining equality is constant or depends on a single coordinate. The crossing indicator is nonconstant and, since the terminal distance is at least two, cannot be determined by a single edge, so the inequality is strict. Consequently, $\log\frac{\varphi_R(p)}{1-\varphi_R(p)}-\log\frac{p}{1-p}$ is strictly increasing on $(0,1)$. Union bounds over terminal paths and separating cuts give $\varphi_R(p)=O(p^{\ell_R})$ as $p\downarrow0$ and $1-\varphi_R(p)=O((1-p)^{\lambda_R(\beta^+,\beta^-)})$ as $p\uparrow1$. Since both geometric exponents are at least two, this log-odds difference tends to $-\infty$ at zero and to $+\infty$ at one. It therefore has exactly one zero $p_c\in(0,1)$, equivalently $\varphi_R(p_c)=p_c$. Evaluating the strict inequality at $p_c$ and dividing by $p_c(1-p_c)$ yields $\varphi_R'(p_c)>1$.
\end{proof}

\begin{corollary}[Also see \cite{alves2026percolation}]
\label{cor:global-renormalisation-dichotomy}
Under the assumptions of Theorem~\ref{thm:unique-unstable-critical-point},
the unique critical fixed point \(p_c\in(0,1)\) separates the renormalisation
flow:
\[
    \varphi_R(p)<p
    \quad\text{for }0<p<p_c,
    \qquad
    \varphi_R(p)>p
    \quad\text{for }p_c<p<1.
\]
Consequently,
\[
    \displaystyle\lim_{n\to\infty}\varphi_R^{\circ n}(p)
    =
    \begin{cases}
        0, & 0\le p<p_c,\\[2mm]
        p_c, & p=p_c,\\[2mm]
        1, & p_c<p\le 1.
    \end{cases}
\]
\end{corollary}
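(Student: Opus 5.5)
The plan is to read the corollary off the strict monotonicity established in the proof of Theorem~\ref{thm:unique-unstable-critical-point}, and then run a standard monotone-iteration argument for the renormalisation map. Write
\[
  g(p):=\log\frac{\varphi_R(p)}{1-\varphi_R(p)}-\log\frac{p}{1-p},\qquad 0<p<1.
\]
That proof shows $g$ is strictly increasing, with limits $-\infty$ and $+\infty$ at the endpoints, and that it vanishes exactly at $p_c$. Hence $g<0$ on $(0,p_c)$ and $g>0$ on $(p_c,1)$. Since $x\mapsto\log\frac{x}{1-x}$ is strictly increasing on $(0,1)$, the sign of $g(p)$ equals the sign of $\varphi_R(p)-p$. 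We need $\varphi_R(p)\in(0,1)$ for $p\in(0,1)$; this holds because $R$ is connected and the terminals are distinct, so some configuration connects them and some configuration does not. This gives the displayed sign dichotomy.

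For the limits, I will use that $\varphi_R$ is a nondecreasing polynomial. It is the reliability of a monotone event, so it is monotone in $p$, by the standard coupling or by Russo's formula. It also fixes $0$, $1$ and $p_c$. The cases are as follows.
\begin{itemize}
  \item \emph{Case $p=p_c$.} The orbit is constant, so the limit is $p_c$.
  \item \emph{Case $p=0$ or $p=1$.} These are fixed points, so the orbit is constant there as well.
  \item \emph{Case $0<p<p_c$.} Monotonicity together with $\varphi_R(p_c)=p_c$ keeps the iterates $p_n:=\varphi_R^{\circ n}(p)$ inside $[0,p_c)$. The dichotomy gives $p_{n+1}<p_n$ whenever $p_n>0$, so the sequence is nonincreasing and bounded below; it therefore converges to some $q\in[0,p_c)$. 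By continuity, $\varphi_R(q)=q$. The only fixed point in $[0,p_c)$ is $0$, since any fixed point in $(0,p_c)$ would contradict the strict inequality. Hence $q=0$.
  \item \emph{Case $p_c<p<1$.} The argument is symmetric: the iterates increase within $(p_c,1]$ and converge to a fixed point in $(p_c,1]$, which must be $1$.
\end{itemize}

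The only point that needs care is the monotonicity of $\varphi_R$ in $p$. It is needed to keep each orbit on its side of $p_c$, and it also gives the non-strict ordering of the iterates. Russo's formula settles it directly: it expresses $\varphi_R'(p)$ as the expected number of pivotal edges, which is $\ge0$. Everything else follows from continuity and the uniqueness of the interior fixed point. Combined with Lemma~\ref{lem:renormalisation}, this identifies the limit of the crossing probabilities of $\Xi^n$.
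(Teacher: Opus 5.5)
Your proof is correct and follows the argument the paper intends; the paper gives no separate proof, only Theorem~\ref{thm:unique-unstable-critical-point}. You read the sign dichotomy off the strictly increasing log-odds difference in that theorem's proof, and you obtain the limits by the standard monotone-iteration argument using uniqueness of the interior fixed point. One correction: monotonicity of $\varphi_R$ is not actually needed to keep orbits on their side of $p_c$, because $0<\varphi_R(p_n)<p_n<p_c$ (respectively $p_c<p_n<\varphi_R(p_n)<1$) already follows from the strict dichotomy together with $\varphi_R\bigl((0,1)\bigr)\subseteq(0,1)$.
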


\begin{example}\label{ex:dhl-reliability}
For DHL,
\[
 \varphi_R(p)=1-(1-p^2)^2=2p^2-p^4,\qquad
 p_c=\frac{\sqrt5-1}{2},\qquad
 \varphi_R'(p_c)=4p_c(1-p_c^2)=6-2\sqrt5.
\]
The fixed point therefore has a strictly expanding pivotal response, although the probability of crossing a cell is unchanged there.
\end{example}


Hambly and Kumagai~\cite{hambly2010diffusion} described the critical two-terminal cluster on the diamond lattice through a finite-type random recursive construction and its geometric limit.
We formulate this construction in the language of random EIGS and extend it to every classical EIGS hierarchical lattice.
The random EIGS specifies the recursive generating mechanism, and its finite approximations are random substitution networks. The critical cluster has a graph-directed random recursive scaling limit in the ambient metric.
The auxiliary types record which terminals of a cell belong to the cluster being followed; they are needed even though the final cluster consists only of open edges.

Let $A^+$ and $A^-$ be the open clusters of the two terminals inside a rule cell.
A parent of state $c$ uses the event $\beta^+\leftrightarrow\beta^-$ and its common terminal cluster as the active set.
A parent of state $b$ uses the complementary event and active set $A^+\cup A^-$.
A parent of state $u$ uses the complementary event and active set $A^+$.
Terminal symmetry makes the corresponding one-terminal expectations independent of which terminal is chosen.
For a union $A$ of active open clusters, the child state of $e=\{x,y\}$ is
\[
 \operatorname{st}_{A,\omega}(e):=
 \begin{cases}
 c,&x,y\in A\text{ and }e\text{ is open},\\
 b,&x,y\in A\text{ and }e\text{ is closed},\\
 u,&\text{exactly one of }x,y\text{ lies in }A,\\
 0,&x,y\notin A.
 \end{cases}
\]
An edge in state $u$ is necessarily closed.
The state $0$ records no active endpoint and produces no live descendant under refinement.

\begin{definition}\label{def:random-eigs}
Fix $p\in(0,1)$ and a live parent state $\sigma\in\{c,b,u\}$. Sample a Bernoulli configuration with parameter $p$ in its rule cell conditioned on the parent event just specified, and label every child edge by its active state. Repeat this operation inside the live children, with independent conditional samples for distinct child cells once their parent states are fixed. A state-$0$ cell has no live descendants. For a state-$u$ child its active terminal is inherited from the parent; terminal symmetry identifies the two possible orientations in distribution.
\end{definition}

For $n\ge0$, let $\mathfrak C_n$ denote the common open cluster of the two terminals in $\Xi^n$, under critical Bernoulli percolation conditioned on their connection.
The cluster includes all open branches attached to the terminals, not only edges lying on an open terminal-to-terminal path.

\begin{theorem}[Generalisation of the Hambly--Kumagai theorem~\cite{hambly2010diffusion}]\label{thm:percolation-random-eigs}
On every classical EIGS hierarchical lattice, critical Bernoulli bond percolation conditioned on terminal connection induces a random EIGS representation of its cluster.
\end{theorem}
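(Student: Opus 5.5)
The plan is to prove that for every $n\ge0$ the law of $\mathfrak C_n$, the terminal cluster of $\Xi^n$ under $\mathbb P_{p_c}$ conditioned on $\{\mathfrak v_+\leftrightarrow\mathfrak v_-\}$, coincides with the law of the state-$c$ subgraph of the $n$-th substitution network of the random EIGS of Definition~\ref{def:random-eigs}, started from one edge of state $c$ at $p=p_c$. We argue by induction on $n$. We view $\Xi^n$ as the rule graph $R$ with a fresh copy of $\Xi^{n-1}$ planted in each edge, as in the proof of Lemma~\ref{lem:renormalisation}, and record for each level-one cell a ``coarse state'' in $\{c,b,u,0\}$. This state is determined by whether the cell's terminals are internally connected and which of its terminals are reached by the active cluster.

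The first step is a Markov (spatial independence) property. Because distinct level-one copies share only planting vertices, their internal configurations are independent. The global event $\{\mathfrak v_+\leftrightarrow\mathfrak v_-\}$ and the identity of the global terminal cluster restricted to $R$ depend only on the crossing indicators $\eta_e=\mathbf 1\{\text{copy }e\text{ crosses}\}$. These indicators are i.i.d.\ Bernoulli$(\varphi_R^{\circ(n-1)}(p))$. At $p=p_c$ that parameter equals $p_c$ by the fixed point property (Theorem~\ref{thm:unique-unstable-critical-point} together with Lemma~\ref{lem:renormalisation}). Hence the vector $(\eta_e)_{e\in E(R)}$, conditioned on the global crossing, has exactly the law of a $p_c$-Bernoulli configuration on $R$ conditioned on $\beta^+\leftrightarrow\beta^-$. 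Reading off the child states via $\operatorname{st}_{A,\omega}$ with $A$ the common terminal cluster reproduces the first substitution step for a parent in state $c$. This is the sole use of criticality.

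The second step is to identify the conditional law inside each child copy given the coarse states. Conditional on $(\eta_e)$, the copies remain independent, and copy $e$ is distributed as $\Xi^{n-1}$ under $\mathbb P_{p_c}$ conditioned on its own crossing status. The part of the global cluster inside copy $e$ depends on the state. For state $c$ it is the copy's terminal cluster given crossing. For state $b$ (both endpoints active, copy not crossing) it is $A^+\cup A^-$ of the copy given non-crossing. For state $u$ it is the one-terminal cluster $A^+$ given non-crossing, with the active terminal inherited, and terminal symmetry via $\vartheta$ makes the orientation irrelevant in law. For state $0$ the copy contributes nothing. These are precisely the parent events used for $c,b,u$ in Definition~\ref{def:random-eigs}.

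We therefore strengthen the induction hypothesis to all three live states simultaneously. For each $\sigma\in\{c,b,u\}$, the active set in $\Xi^{n}$ under the $\sigma$-conditioning has the law of the level-$n$ random EIGS started from $\sigma$. The inductive step for $b$ and $u$ parents repeats the first step with the conditioning ``not crossing'' and active set $A^+\cup A^-$ or $A^+$. The fixed point again gives i.i.d.\ $p_c$ crossing indicators, and the same cell-wise decomposition applies. Gluing the independent children yields the claim at level $n$.

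I expect the main obstacle to be the state-$b$ and state-$u$ inductive step. There one must check that the coarse event ``both terminals active but not connected within the cell'', and the induced active sets of the children, are functions of $(\eta_e)$ together with the active terminals alone. This requires care because the active clusters $A^\pm$ within $R$ are determined by open crossings of copies, while non-crossing copies with one or both active endpoints must be classified using $\operatorname{st}$. I would verify this by writing the global active set as the union over copies of their internal active sets, and showing that a vertex of $R$ is active iff it is joined to an active terminal through $\eta$-open edges of $R$. Finally, the existence of the limiting hierarchical lattice and the scaling limit follow as in \cite{hambly2010diffusion} once the finite-level identification holds. For the theorem as stated, the consistent family of finite-level identifications constitutes the random EIGS representation.
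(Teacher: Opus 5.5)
Your proposal is correct and follows essentially the paper's argument: the fixed point $\varphi_R(p_c)=p_c$ makes the cell crossing indicators i.i.d.\ $p_c$-Bernoulli at every level, and disjointness of cells gives conditional independence given these indicators. The conditional laws inside crossing, doubly active non-crossing and singly active non-crossing cells are then exactly the $c,b,u$ laws, with terminal symmetry handling the orientation of $u$. The only difference is organisational: you decompose $\Xi^n$ at the top level and induct on $n$ with a hypothesis strengthened to all three live states, whereas the paper coarsens bottom-level cells and then inducts on depth from the root, which amounts to the same thing.
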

\begin{proof}
Fix a finite depth and coarsen each bottom-level cell to an open edge precisely when its two terminals are connected inside that cell.
The cells have disjoint edge sets, so their crossing indicators are independent.
Each indicator has probability $\varphi_R(p_c)=p_c$.
Repeating the coarsening therefore gives the critical Bernoulli law at every level.

Conditioning on the full coarse configuration fixes whether each child cell connects its terminals.
The configurations in distinct child cells remain independent under these individual conditions, since the original edge sets are disjoint.
Within a connected cell, the active part is its common terminal cluster.
Within a disconnected cell, the active part is the union of its two terminal clusters, one specified terminal cluster, or neither, according to the active endpoints inherited from the parent.
These are exactly the conditional laws for $c,b,u,0$ in Definition~\ref{def:random-eigs}.
For a one-terminal cell, the active endpoint is retained; terminal symmetry permits its identification with the reference orientation used for state $u$.
Offspring labels within one cell may be dependent, as specified by the sampled configuration.

Starting with the root conditioned to connect, induction on the depth now identifies the joint laws of all labelled cells.
At the last level, an open edge belongs to the root cluster exactly when it has state $c$.
Extracting these edges proves the assertion about $\mathfrak C_n$.
\end{proof}

The fixed-point identity is essential to stationarity of this representation.
For a general parameter $p$, the analogous finite-depth construction uses the successive effective probabilities $\varphi_R^{\circ k}(p)$ and hence level-dependent replacement laws.

Equip $V(\mathfrak C_n)$ with the restriction of the rescaled ambient graph metric $\ell^{-n}d_{\Xi^n}$.
Thus distances are measured in the underlying lattice, rather than along open paths inside the cluster.
The natural recursive coupling is given by Definition~\ref{def:random-eigs} at $p=p_c$, starting from state $c$.
The labels $b,u,0$ are auxiliary states; the state-$c$ edges form the open subgraph extracted from the random EIGS.
We use the ambient metric scaling of~\cite{neroli2024fractal} and extend the graph-directed construction of Hambly and Kumagai~\cite{hambly2010diffusion}*{Section~3.2}.

\begin{theorem}\label{thm:eigs-graph-directed}
On every classical EIGS hierarchical lattice, the critical percolation clusters conditioned on terminal connection, equipped with the rescaled ambient metric, converge almost surely under the natural recursive coupling in the Gromov--Hausdorff topology to a graph-directed random recursive fractal.
\end{theorem}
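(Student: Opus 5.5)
The plan is to realise the limit object through the random EIGS coding of Theorem~\ref{thm:percolation-random-eigs}, and to prove almost-sure Gromov--Hausdorff convergence by showing that successive clusters are uniformly close inside a fixed compact ambient space. First I will fix the natural recursive coupling. Run Definition~\ref{def:random-eigs} at $p=p_c$ from state $c$. This yields a random labelled tree $\mathcal T$ of cells: each cell at depth $n$ carries a state in $\{c,b,u,0\}$, and its children are labelled by one conditional configuration in the rule cell. Under this coupling, $\mathfrak C_n$ is the set of state-$c$ edges at depth $n$, together with their endpoints, which all lie in $V(\Xi^n)$. Since $V(\Xi^n)\subseteq V(\Xi^{n+1})$ with $d_{\Xi^{n+1}}=\ell\, d_{\Xi^n}$ on $V(\Xi^n)$ (distances in $R$ scale by $\ell$ because terminal distance is $\ell$ and substitution preserves geodesics between old vertices), the metric spaces $(V(\Xi^n),\ell^{-n}d_{\Xi^n})$ embed isometrically into one another. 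Their completion is the ambient compact space $\Xi$ of~\cite{neroli2024fractal}. All the $\mathfrak C_n$ therefore live in one compact metric space $K$, and it suffices to prove Hausdorff convergence of $V(\mathfrak C_n)$ inside $K$.

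The key step is a one-step estimate $d_H\bigl(V(\mathfrak C_n),V(\mathfrak C_{n+1})\bigr)\le C\ell^{-n}$ with $C$ deterministic. One direction holds because every state-$c$ cell at depth $n$ has its terminals joined inside the cell, so both endpoints lie in $\mathfrak C_{n+1}$; this gives $V(\mathfrak C_n)\subseteq V(\mathfrak C_{n+1})$. In the other direction, every vertex of $\mathfrak C_{n+1}$ lies in some depth-$n$ cell with an active terminal (state $c$, $b$ or $u$), and any point of a cell is within $\ell^{-n}\operatorname{diam}(R)/\ell$ of that terminal. An active terminal of a depth-$n$ cell is a vertex of $\mathfrak C_n$, since it lies in the active cluster inherited from the parent, which is by induction the root cluster. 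Summing the geometric series shows that $V(\mathfrak C_n)$ is almost surely Cauchy in the Hausdorff metric on compact subsets of $K$, with deterministic rate. It therefore converges almost surely to a compact set $\mathfrak C=\overline{\bigcup_n V(\mathfrak C_n)}$, and hence converges in the Gromov--Hausdorff sense as well.

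Finally I will identify $\mathfrak C$ as graph-directed random recursive. For each state $\sigma\in\{c,b,u\}$, let $\mathfrak C^{\sigma}$ be the analogous limit started from a single cell of state $\sigma$, with its active set. By the construction, $\mathfrak C^{\sigma}$ equals the union, over the children $e$ of the root configuration with live state $\sigma_e$, of images of independent copies of $\mathfrak C^{\sigma_e}$ under the $\ell^{-1}$-similarities of $K$ onto the child cells. For state $u$, the image is oriented according to the inherited active terminal, which terminal symmetry via $\vartheta$ makes distributionally consistent. The independence comes from the conditional independence of distinct child cells in Theorem~\ref{thm:percolation-random-eigs}. This is precisely the defining recursion of a graph-directed random recursive set with finitely many types, extending~\cite{hambly2010diffusion}*{Section~3.2}.

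I expect the main obstacle to be the containment step. It requires the claim that active terminals at depth $n$ are already vertices of $\mathfrak C_n$, and the requirement that the bound on distances to them is uniform. The uniformity needs care for state-$b$ and state-$u$ cells, since their active clusters are attached only through a terminal and canonicality does not prevent long dangling branches inside a cell. The fix is that these branches still lie within the cell, whose ambient diameter is at most $\operatorname{diam}(R)\ell^{-n}$. This is the point where the ambient rather than the intrinsic metric is essential.
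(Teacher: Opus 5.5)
Your proposal is correct and follows essentially the paper's route. It uses the same compact ambient space, obtained by completing the nested vertex sets under $\ell^{-n}d_{\Xi^n}$, and the same recursive coupling from state $c$ at $p_c$, with nested sets $V(\mathfrak C_n)$ converging to the closure of their union. It identifies the limit by the same finite-type recursion, with terminal symmetry fixing the orientation of state-$u$ cells.

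The only real difference is the convergence step: you use the one-step bound $d_H(V(\mathfrak C_n),V(\mathfrak C_{n+1}))\le C\ell^{-n}$, which also gives an explicit rate, whereas the paper uses nesting, compactness and finite $\varepsilon$-nets. You should also record, as the paper does, that each cell is isometrically embedded in the ambient space, so that the child-cell maps really are $\ell^{-1}$-similarities.
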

\begin{proof}
First construct the ambient compact metric space from the rule graph $R$.
Every shortest path between old vertices traverses each visited replacement cell between its terminals, whose distance is $\ell$.
Conversely, replacing each edge of a shortest old path by a shortest terminal path gives
\[
 d_{\Xi^{n+1}}(x,y)=\ell\,d_{\Xi^n}(x,y),
 \qquad x,y\in V(\Xi^n).
\]
Consequently, the metrics $\ell^{-n}d_{\Xi^n}$ agree on the nested vertex sets.
Let $(K,d_K)$ be the completion of their union with this common metric.
Every new vertex at level $j$ lies within $\ell^{-j}\operatorname{diam}(R)$ of a vertex at level $j-1$.
Successively projecting to earlier levels shows that $V(\Xi^n)$ is a finite net with radius at most
\[
 \operatorname{diam}(R)\sum_{j=n+1}^{\infty}\ell^{-j}
 =\frac{\operatorname{diam}(R)}{\ell-1}\ell^{-n}.
\]
Hence $K$ is compact.

For each $e\in E(R)$, the descendant copy of the entire construction defines a map $\psi_e:K\longrightarrow K$.
This map is a similarity of ratio $\ell^{-1}$.
Indeed, a path leaving the copy and returning through its other terminal cannot be shorter than the internal terminal path: it must traverse other cells of the same depth, each with that same terminal distance.
Thus the copy's internal distances agree with the ambient distances, and the assertion follows first on finite vertex sets and then by completion.
Distinct first-level copies meet only at their prescribed common endpoints, since any path between them passes through their terminal boundaries.
In particular, $K=\bigcup_{e\in E(R)}\psi_e(K)$, and every depth-$n$ cell has diameter $\ell^{-n}\operatorname{diam}(K)$.

Use the independent conditional samples of Definition~\ref{def:random-eigs} to couple all levels.
By Theorem~\ref{thm:percolation-random-eigs}, the state-$c$ subgraph at depth $n$ has the law of $\mathfrak C_n$.
Retain the active endpoint of every state-$u$ cell as part of its marking.
Every active vertex persists under refinement: a state-$c$ edge is replaced by a connected terminal cluster, and each active terminal of a state-$b$ or state-$u$ cell remains active.
The coupled sets $V(\mathfrak C_n)$ are consequently nested in $K$.
Define the nonempty compact set
\[
 \mathfrak C_\infty:=
 \overline{\bigcup_{n\ge0}V(\mathfrak C_n)}^{\,K}.
\]
Compactness and nesting give Hausdorff convergence.
Indeed, for every $\varepsilon>0$, density of the union provides an $\varepsilon$-net of $\mathfrak C_\infty$ whose centres belong to this union; compactness allows the net to be chosen finite.
All its centres then belong to $V(\mathfrak C_n)$ for every sufficiently large $n$.
Since $V(\mathfrak C_n)\subseteq\mathfrak C_\infty$, this proves
\[
 \lim_{n\to\infty}
 d_{\mathrm{Haus}}^K\!\left(V(\mathfrak C_n),\mathfrak C_\infty\right)=0.
\]
The passage from nonempty compact subsets of $K$ to their Gromov--Hausdorff isometry classes is continuous.
The inherited metrics therefore give the asserted Gromov--Hausdorff convergence on every coherent history, and such histories have probability one under the recursive coupling.

To identify the limit, apply the same closure construction with root types $c,b,u_+,u_-$, where $u_+$ and $u_-$ specify the active terminal, and write the resulting sets as $\mathfrak C_\infty^\sigma(\omega)$.
Set $\mathfrak C_\infty^0:=\varnothing$.
For a first-level rule sample $\omega$, denote the child type of $e$ by $\operatorname{st}_{\sigma,\omega}(e)$ and its subsequent samples by $\omega_e$.
The finite-level recursion, followed by closure in $K$, gives
\[
 \mathfrak C_\infty^\sigma(\omega)
 =\bigcup_{e\in E(R)}
   \psi_e\!\left(
     \mathfrak C_\infty^{\operatorname{st}_{\sigma,\omega}(e)}(\omega_e)
   \right).
\]
Each active vertex belongs to a live child through one of its active endpoints, so this equation also retains isolated active terminals.
The construction graph has one vertex for each live type and one directed edge for each possible live child occurrence, carrying its contraction and endpoint data.
An entire offspring family is sampled jointly according to the conditional rule law; samples in distinct cells are independent given their types.
This is a finite-type graph-directed random recursive construction, and $\mathfrak C_\infty=\mathfrak C_\infty^c$.
Terminal symmetry identifies $u_+$ and $u_-$ after reversing their endpoint markings, yielding the three-type description $c,b,u$.
\end{proof}


The mass transfer matrix of the random EIGS records the expected numbers of live child edges of each type.
Terminal symmetry combines the two oriented one-terminal types into the single state $u$, leaving the three states $c,b,u$.
The matrix and expected-mass results below use only the finite-level representation.

\begin{definition}\label{def:live-matrix}
For $p\in(0,1)$, $\mathbf M_{\mathfrak C}(p)$ is the matrix whose entry $M_{\sigma,\tau}(p)$ is the expected number of state-$\tau$ child edges under the conditional parent law of state $\sigma$, with rows and columns ordered as $(c,b,u)$.
At criticality, we write $\mathbf M_{\mathfrak C}:=\mathbf M_{\mathfrak C}(p_c)$ and denote its spectral radius by $\rho(\mathbf M_{\mathfrak C})$.
All entries refer to the conditional live-state matrix, including when we later pass to an invariant two-dimensional block.
\end{definition}

\begin{proposition}\label{prop:live-matrix-primitive}
For every admissible rule graph and every $p\in(0,1)$, the three-state live matrix $\mathbf M_{\mathfrak C}(p)$ is primitive.
\end{proposition}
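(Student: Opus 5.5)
The plan is to use the standard criterion for nonnegative matrices: a matrix is primitive if it is irreducible and has at least one positive diagonal entry. Since $p\in(0,1)$, every finite configuration of the rule cell has positive probability. The conditioning events in Definition~\ref{def:random-eigs} are nonempty, so an entry $M_{\sigma,\tau}(p)$ is positive as soon as we exhibit one configuration that satisfies the parent event of state $\sigma$ and contains a child edge of state $\tau$. The argument is therefore purely combinatorial and uniform in $p$. I will establish the positivity pattern
\[
 M_{c,c},\ M_{c,b},\ M_{c,u},\ M_{b,c},\ M_{u,c}>0 .
\]
This makes the state graph strongly connected through $c$ and gives the loop at $c$, hence primitivity.

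\emph{Row $c$.} For $M_{c,c}$, open all edges of $R$. The terminals are connected and every edge is an open edge of the terminal cluster, so every edge is in state $c$.

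For $M_{c,b}$, fix any edge $e=\{x,y\}$ and open all other edges. Since $\lambda_R(\beta^+,\beta^-)\ge2$, the terminals remain connected. By canonicality, $e$ lies on a simple terminal path, and the two segments of that path on either side of $e$ are open. They join $x$ and $y$ to the terminal cluster, so $e$ is closed with both endpoints active, which is state $b$.

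For $M_{c,u}$, open exactly one shortest terminal path $P$. Because $\lambda_R\ge2$, the graph $R$ has edges off $P$. Since $R$ is connected, some off-$P$ edge $f$ meets $V(P)$. The edge $f$ cannot have both endpoints on $P$: $R$ has no parallel edges, so such an $f$ would be a chord joining non-consecutive vertices of $P$ and would shorten $P$, contradicting minimality. Hence $f$ is closed with exactly one endpoint in the terminal cluster $V(P)$, which is state $u$.

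\emph{Rows $b$ and $u$.} Take a shortest terminal path $\beta^+=v_0,v_1,\dots,v_\ell=\beta^-$. Open its first edge $\{v_0,v_1\}$ and close every other edge. Since $\ell\ge2$, the terminals are disconnected, so this configuration belongs to the complementary event used by both states $b$ and $u$. The edge $\{v_0,v_1\}$ is open with both endpoints in $A^+$. This gives a state-$c$ child for the active set $A^+\cup A^-$ (state $b$) and for the active set $A^+$ (state $u$). For state $u$ with the other reference terminal, terminal symmetry via $\vartheta$ gives the same conclusion.

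The step I expect to need the most care is $M_{c,u}>0$. It is the only place where the choice of configuration must avoid accidentally placing both endpoints in the cluster. Using a shortest path together with simplicity of $R$ to rule out chords is what resolves it. Canonicality is used only for $M_{c,b}$, and $\lambda_R\ge2$ is used both to keep the crossing open and to guarantee that edges off $P$ exist.
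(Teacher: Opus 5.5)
Your proof is correct and follows the same strategy as the paper: exhibit positive-probability configurations that certify enough positive entries for irreducibility, and use the all-open configuration for $M_{c,c}>0$ to get aperiodicity. The only difference is which edges of the state graph you certify. You close it up through $c\leftrightarrow b$ and $c\leftrightarrow u$, which requires the induced-shortest-path argument for $M_{c,u}>0$; the paper uses the cycle $c\to b\to u\to c$ and gets $M_{b,u}>0$ more cheaply from the all-closed configuration and the non-adjacency of the terminals.
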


\begin{proof}
The rule graph has no bridge.
Indeed, by canonicality a bridge belongs to a simple terminal-to-terminal path, and its deletion would therefore separate the terminals, contrary to the terminal edge-connectivity assumption.
Opening every edge except one consequently gives a connected parent cell containing a child of state $b$, so $M_{c,b}(p)>0$.
Closing every edge gives a disconnected parent cell whose active set for state $b$ consists of the two terminals.
Since the terminals are not adjacent, an edge incident to either terminal then has state $u$, so $M_{b,u}(p)>0$.
Opening exactly one edge incident to $\beta^+$ leaves the terminals disconnected and gives a child of state $c$ in a parent of state $u$, so $M_{u,c}(p)>0$.
These entries give the directed cycle $c\to b\to u\to c$.
The all-open configuration gives $M_{c,c}(p)>0$, so this irreducible matrix is aperiodic and therefore primitive.
\end{proof}

\begin{definition}\label{def:critical-mass-dimension}
The expected critical mass dimension is, when the limit exists,
\[
 \dimmass(\mathfrak C):=\lim_{n\to\infty}
 \frac{\log\mathbb E[|E(\mathfrak C_n)|]}{\log\ell^n},
\]
where the expectation is under critical percolation conditioned on terminal connection.
\end{definition}
This is an expected growth exponent; its identification with an almost-sure Hausdorff dimension of a random scaling limit is a further geometric statement.

\begin{lemma}[Also see \cite{alves2026percolation}]\label{thm:mass-transfer-power}
At $p=p_c$, for every $n\ge1$ and all live states $\sigma,\tau\in\{c,b,u\}$, the expected number of state-$\tau$ live edges of $\Xi^{n}$ contained in a level-zero block of state $\sigma$ equals the $(\sigma,\tau)$ entry of $\bigl(\mathbf M_{\mathfrak C}(p_c)\bigr)^{n}$.
\end{lemma}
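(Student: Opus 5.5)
The plan is to induct on $n$, using the fact that the recursive coupling of Definition~\ref{def:random-eigs} at $p=p_c$ is stationary: every live cell is refined by the same conditional rule law, whatever its depth. Write $N^{(n)}_{\sigma,\tau}$ for the expected number of state-$\tau$ live edges of $\Xi^{n}$ inside a level-zero block of state $\sigma$. The base case $n=1$ is immediate from Definition~\ref{def:live-matrix}. A level-zero block of state $\sigma$ is one rule cell sampled under the conditional parent law of $\sigma$, and its child edges carry the states $\operatorname{st}_{A,\omega}(e)$. Hence $N^{(1)}_{\sigma,\tau}=M_{\sigma,\tau}(p_c)$ by definition.

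For the inductive step I would condition on the first-level sample $\omega$ in the root block. The proof of Theorem~\ref{thm:percolation-random-eigs} shows that, given the child labels, the configurations inside distinct child cells are independent. It also shows that each child of state $\rho$ is refined by the same conditional law as a fresh level-zero block of state $\rho$. The fixed-point identity $\varphi_R(p_c)=p_c$ is what makes this true: without it the effective edge probability would shift from level to level. Two further facts are needed. A state-$0$ child contributes no live edges at any later level. For a state-$u$ child, terminal symmetry ($\vartheta$ in Definition~\ref{def:standing}) makes the descendant counts independent of which terminal is active, so the oriented types $u_\pm$ may both be counted as $u$. Then the level-$(n+1)$ state-$\tau$ live edges inside the root are exactly the disjoint union, over children $e$, of the level-$n$ state-$\tau$ live edges inside $e$. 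Linearity of expectation and the tower property give
\[
 N^{(n+1)}_{\sigma,\tau}
 =\mathbb E_\sigma\Bigl[\sum_{e\in E(R)}N^{(n)}_{\operatorname{st}(e),\tau}\Bigr]
 =\sum_{\rho\in\{c,b,u\}}M_{\sigma,\rho}(p_c)\,N^{(n)}_{\rho,\tau}.
\]
This is the Chapman--Kolmogorov identity $N^{(n+1)}=\mathbf M_{\mathfrak C}N^{(n)}$, and together with the base case it yields $N^{(n)}=\mathbf M_{\mathfrak C}^n$. Linearity handles the dependence among offspring labels within one cell noted in Theorem~\ref{thm:percolation-random-eigs}, so no independence between siblings is required. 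All expectations are finite because counts are bounded by $m^{n}$.

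The main obstacle is justifying the "fresh block" claim for the auxiliary states: a state-$b$ or state-$u$ child, conditioned only on its terminals being disconnected, must have exactly the law of a root of that state. I would argue it directly from the coarsening in the proof of Theorem~\ref{thm:percolation-random-eigs}. Conditioning the full coarse configuration fixes each child's crossing indicator. The child's internal configuration is then critical Bernoulli conditioned on crossing or non-crossing, and the active set within the child is determined by which of its terminals is active. This matches the $c$, $b$, $u$ (or $0$) parent laws. A second point to check is that the state of an edge in $\Xi^{n+1}$ computed inside its child cell agrees with its state computed globally. This holds because the active set restricted to a child cell is the union of the in-cell clusters of its active terminals, since cells meet only at terminals.
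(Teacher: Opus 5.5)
Your proof is correct and is essentially the paper's argument: induction on depth, using that at the fixed point $\varphi_R(p_c)=p_c$ every live cell of state $\rho$ is refined by the same conditional law, together with the tower property (and conditional independence across cells in the coarsening step). The only difference is the side of the product. You condition on the first-level sample and get $N^{(n+1)}=\mathbf M_{\mathfrak C}(p_c)N^{(n)}$, whereas the paper conditions on the states of the level-$n$ edges, refines the bottom level and gets $N^{(n+1)}=N^{(n)}\mathbf M_{\mathfrak C}(p_c)$; both give $\mathbf M_{\mathfrak C}(p_c)^{n+1}$.
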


\begin{proof}
The case $n=1$ is the definition of the live matrix, since at the critical point the edge occupation entering a single substitution cell is $p_c$.

Replacing every edge of $\Xi^{n}$ by a rule cell at occupation $p_c$ turns each cell into a coarse edge that is open with probability $\varphi_R(p_c)=p_c$, so every level of the substitution sees the same edge occupation $p_c$.
Conditionally on the states of the level-$n$ edges, the configurations of the cells refining distinct edges are independent, and the conditional law inside a cell of coarse state $\tau'$ is the conditioned percolation law defining the row $\tau'$ of the live matrix.
Hence the expected state counts at level $n+1$ inside a state-$\sigma$ block are obtained from those at level $n$ by one application of $\mathbf M_{\mathfrak C}(p_c)$ on the right, and the claim follows by induction.
\end{proof}

\begin{theorem}[Also see \cite{alves2026percolation}]\label{thm:critical-mass-dimension}
For every classical EIGS hierarchical lattice, the expected critical mass dimension exists and
\[
 \dimmass(\mathfrak C)=\frac{\log\rho(\mathbf M_{\mathfrak C})}{\log\ell}.
\]
\end{theorem}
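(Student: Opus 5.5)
The argument combines Lemma~\ref{thm:mass-transfer-power} with Perron--Frobenius theory for the primitive matrix of Proposition~\ref{prop:live-matrix-primitive}.

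First we express the expected cluster size in terms of matrix powers. Under critical percolation conditioned on terminal connection, the root cell of $\Xi^0$ has state $c$. By Theorem~\ref{thm:percolation-random-eigs}, an edge of $\Xi^n$ belongs to $\mathfrak C_n$ exactly when its state is $c$. Lemma~\ref{thm:mass-transfer-power} then gives
\[
 \mathbb E\bigl[|E(\mathfrak C_n)|\bigr]=\bigl(\mathbf M_{\mathfrak C}^{\,n}\bigr)_{c,c}=e_c^{\top}\mathbf M_{\mathfrak C}^{\,n}e_c .
\]
Should the intended count also include the $b$- and $u$-edges, the quantity would be $e_c^{\top}\mathbf M_{\mathfrak C}^{\,n}\mathbf 1$. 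The argument below handles either version.

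Next we apply Perron--Frobenius. By Proposition~\ref{prop:live-matrix-primitive}, $\mathbf M:=\mathbf M_{\mathfrak C}(p_c)$ is a nonnegative primitive $3\times3$ matrix. Hence $\rho:=\rho(\mathbf M)>0$ is a simple eigenvalue, with strictly positive right and left eigenvectors $r$ and $l$ normalised so that $l^{\top}r=1$, and
\[
 \rho^{-n}\mathbf M^{n}\longrightarrow r\,l^{\top}.
\]
All entries of $r\,l^{\top}$ are strictly positive. Consequently $(\mathbf M^n)_{c,c}=\rho^n\bigl(r_c l_c+o(1)\bigr)$ with $r_cl_c>0$, and the same holds for any positive combination of entries. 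This yields
\[
 \frac{\log\mathbb E[|E(\mathfrak C_n)|]}{n\log\ell}=\frac{n\log\rho+\log(r_cl_c)+o(1)}{n\log\ell}\longrightarrow\frac{\log\rho}{\log\ell}.
\]
The limit exists because $\ell\ge2$ makes the denominator positive, which proves the formula.

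The step that needs the most care is the identification in the first step. We must check that the live-state edges of $\Xi^n$ counted in Lemma~\ref{thm:mass-transfer-power} coincide with the edges of $\mathfrak C_n$. In particular, open edges of the root cluster that hang off a terminal, and are not on a crossing path, must receive state $c$. They do: at each level these edges have both endpoints in the active set and are open, so the recursive labelling of Definition~\ref{def:random-eigs} assigns them state $c$. This is exactly the content of the final line of the proof of Theorem~\ref{thm:percolation-random-eigs}. Once this is verified, the remaining argument is the standard Perron--Frobenius asymptotics. Strict positivity of the limiting projection, which comes from primitivity rather than mere irreducibility, guarantees that the $(c,c)$ entry does not grow at a smaller rate.
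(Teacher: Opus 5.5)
Your proposal is correct and follows essentially the same route as the paper. You identify $\mathbb E[|E(\mathfrak C_n)|]$ with $(\mathbf M_{\mathfrak C}^{\,n})_{c,c}$ via Theorem~\ref{thm:percolation-random-eigs} and Lemma~\ref{thm:mass-transfer-power}, then use primitivity from Proposition~\ref{prop:live-matrix-primitive} and Perron--Frobenius to get $C\rho(\mathbf M_{\mathfrak C})^n(1+o(1))$ with $C>0$, and take logarithms; the constant $r_c l_c$ in your limit $\rho^{-n}\mathbf M^n\to r\,l^{\top}$ is exactly the paper's $C$.
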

\begin{proof}
By Theorem~\ref{thm:percolation-random-eigs}, the edges of $\mathfrak C_n$ are exactly the state-$c$ edges in the random representation started from state $c$.
Lemma~\ref{thm:mass-transfer-power} therefore gives
\[
 \mathbb E[|E(\mathfrak C_n)|]
 =\bigl(\mathbf M_{\mathfrak C}(p_c)^n\bigr)_{c,c}.
\]
The matrix is primitive by Proposition~\ref{prop:live-matrix-primitive}.
The Perron--Frobenius theorem implies that the right-hand side is $C\rho(\mathbf M_{\mathfrak C})^{\,n}(1+o(1))$ for a constant $C>0$.
Taking logarithms and dividing by $n\log\ell$ proves the formula.
\end{proof}

The total expected number of live cells has the same exponent, because every entry of the Perron projection is positive.
Counting actual open cluster edges or all live cells thus gives the same logarithmic growth rate, although the finite-stage expectations differ.

\subsection{Spectral structure of the mass matrix}\label{sec:live-matrix}

In the preceding subsection, we showed how the critical terminal cluster on every classical EIGS hierarchical lattice can be represented by a random EIGS, conditional on terminal connection.
We now examine the mass-transfer matrix of this percolation-induced random EIGS, identify its spectral structure and obtain a spectral expansion.
An interesting feature is that the thermal response multiplier $\varphi_R'(p_c)$ is encoded in this matrix: it is one of its eigenvalues at criticality.

\begin{definition}\label{def:counting-polynomials}
Let $\sigma,\tau\in\{c,b,u\}$.
The counting polynomial $U^{\sigma}_{\tau}(p)$ is the expected number, under $\mathbb P_p$, of edges of $R$ whose state is $\tau$, computed with respect to the active set of the parent state $\sigma$ and restricted to the parent event of $\sigma$.
For $\sigma=c$ the parent event is $\{\beta^+\leftrightarrow\beta^-\}$ and the active set is the common open cluster $A$ of the two terminals.
For $\sigma=b$ and $\sigma=u$ the parent event is $\{\beta^+\nleftrightarrow\beta^-\}$, and the active sets are $A^{+}\cup A^{-}$ and $A^{+}$ respectively, where $A^{\pm}$ denotes the open cluster of $\beta^{\pm}$.
\end{definition}

Each $U^{\sigma}_{\tau}$ is a polynomial with nonnegative integer coefficients in the Bernstein basis $p^{k}(1-p)^{m-k}$, and by construction
\[
   M_{c,\tau}=\frac{U^{c}_{\tau}}{\varphi_R},
   \qquad
   M_{b,\tau}=\frac{U^{b}_{\tau}}{1-\varphi_R},
   \qquad
   M_{u,\tau}=\frac{U^{u}_{\tau}}{1-\varphi_R}.
\]

For a finite graph $G$ with two distinguished vertices, a bond is \emph{pivotal} for their connection if flipping its state changes whether they are connected; write $N_{\mathrm{piv}}(G)$ for the number of such bonds.
On the connected event all pivotal bonds are open, whereas on the disconnected event they are precisely the closed edges joining the two terminal clusters.

\begin{lemma}\label{lem:exact-row-identities}\label{lem:conditional-pivotal-counts}\label{prop:universal-eigenvectors}
For every admissible rule graph the following identities hold in $\mathbb Z[p]$:
\[
   (1-p)\,U^{c}_{c}-p\,\bigl(U^{c}_{b}+U^{c}_{u}\bigr)=p(1-p)\,\varphi_R'(p),
\]
\[
   (1-p)\,U^{u}_{c}-p\,\bigl(U^{u}_{b}+U^{u}_{u}\bigr)=-\,p(1-p)\,\varphi_R'(p).
\]
If the rule graph is in addition terminal symmetric, then
\[
   U^{b}_{c}=2\,U^{u}_{c},
   \qquad
   U^{b}_{b}=2\,U^{u}_{b}+(1-p)\,\varphi_R'(p),
   \qquad
   U^{b}_{u}=2\,U^{u}_{u}-2(1-p)\,\varphi_R'(p).
\]
Consequently, for every classical EIGS hierarchical lattice and every $p\in(0,1)$,
\[
 \bigl(0,-\tfrac12,1\bigr)\mathbf M_{\mathfrak C}(p)
 =\frac{(1-p)\,\varphi_R'(p)}{1-\varphi_R(p)}
 \bigl(0,-\tfrac12,1\bigr).
\]
In particular, at $p=p_c$ the corresponding eigenvalue is $\varphi_R'(p_c)$.
\end{lemma}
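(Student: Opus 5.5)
The plan is to express every identity as a sum over configurations $\omega\subseteq E(R)$ with weight $p^{|\omega|}(1-p)^{m-|\omega|}$, and to compare the edge states edge by edge. The basic tool is $\frac{d}{dp}$ of this weight, which equals $\sum_e\bigl(\mathbf 1_{e\in\omega}/p-\mathbf 1_{e\notin\omega}/(1-p)\bigr)$ times the weight. Multiplying by $p(1-p)$ gives
\[
 p(1-p)\varphi_R'(p)=\mathbb E_p\Bigl[\mathbf 1_{\{\beta^+\leftrightarrow\beta^-\}}\bigl((1-p)|\omega|-p(m-|\omega|)\bigr)\Bigr],
\]
and the analogous formula for $1-\varphi_R$ has the opposite sign. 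This is the Margulis--Russo identity in its raw, unsummed form. I would not use the pivotal form directly. Instead I would split the factor $(1-p)\#\text{open}-p\#\text{closed}$ according to the states of the edges with respect to the relevant active set.

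For the first identity, on $\{\beta^+\leftrightarrow\beta^-\}$ with active set $A$, the open edges touching $A$ are exactly the state-$c$ edges, and the closed edges touching $A$ are exactly the edges of states $b$ and $u$. The edges with both endpoints outside $A$ (state $0$) form a configuration on the complement. Conditionally on $A$ and on its closed edge boundary, these edges are i.i.d.\ Bernoulli$(p)$. Hence their contribution $(1-p)\#\text{open}_0-p\#\text{closed}_0$ has conditional mean zero. What remains is $(1-p)U^c_c-p(U^c_b+U^c_u)$. To make the conditioning rigorous, I would sum over the possible values of the cluster $A$. The event $\{A=S\}$ depends only on the edges touching $S$, so the complementary edges are free. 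The second identity follows in the same way, with the event $\{\beta^+\nleftrightarrow\beta^-\}$, active set $A^+$, and the derivative of $1-\varphi_R$, which supplies the minus sign.

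For the terminal-symmetric identities, I would apply the same decomposition to $A^+\cup A^-$ on the disconnected event, giving
\[
 (1-p)U^b_c-p(U^b_b+U^b_u)=-p(1-p)\varphi_R'.
\]
This single relation is not enough, so I would argue edge by edge. Since $A^+$ and $A^-$ are disjoint, an edge that is open with both endpoints in $A^+\cup A^-$ lies in exactly one of the two clusters. By the involution $\vartheta$, the counts for $A^+$ and $A^-$ have the same expectation, which gives $U^b_c=2U^u_c$. A closed edge with both endpoints in $A^+\cup A^-$ either lies inside one cluster, and is then counted as $b$ for exactly one of the two one-terminal sets, or joins $A^+$ to $A^-$. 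In the latter case it is a closed pivotal edge, and it is counted as $u$ for both one-terminal sets. A state-$u$ edge for $A^+\cup A^-$ is $u$ for exactly one of the two sets. Writing $P$ for the expected number of closed pivotal edges on the disconnected event, this yields
\[
 U^b_b=2U^u_b+P,\qquad U^b_u=2U^u_u-2P.
\]
The standard Russo formula gives $\varphi_R'=\mathbb E[N_{\mathrm{piv}}]$. Splitting this expectation into open and closed pivotal edges, each edge is pivotal independently of its own state, so the closed-pivotal part is $(1-p)\varphi_R'$. This identifies $P=(1-p)\varphi_R'$.

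Finally, I would compute the row vector $(0,-\tfrac12,1)\mathbf M$. Its entries are $(-\tfrac12U^b_\tau+U^u_\tau)/(1-\varphi_R)$. Substituting the three symmetric identities gives the entries $0$, $-\tfrac12(1-p)\varphi_R'$ and $(1-p)\varphi_R'$, each over $1-\varphi_R$. This vector is the eigenvalue $(1-p)\varphi_R'/(1-\varphi_R)$ times $(0,-\tfrac12,1)$, and at $p_c$ we have $1-\varphi_R=1-p_c$, so the eigenvalue is $\varphi_R'(p_c)$. I expect the main obstacle to be the careful bookkeeping in the edge-by-edge classification of the $b$ case. In particular, the pivotal edges joining $A^+$ to $A^-$ must be accounted for correctly as state $u$ for both one-terminal sets.
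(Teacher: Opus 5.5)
Your proposal is correct and follows the paper's proof essentially step for step. You condition on the active cluster so that the free outside edges contribute zero to the raw derivative identity. For the $b$-row you split the counts into single-cluster counts plus closed pivotal bridges, each bridge counted as $u$ from both sides, and Russo's formula identifies the bridge count as $(1-p)\varphi_R'(p)$. The extra relation you note for $A^+\cup A^-$ follows from the three symmetric identities, so it is consistent but not needed.
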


\begin{proof}
By Russo's formula, the expected number of pivotal bonds is $\varphi_R'(p)$.
Whether a bond is pivotal is independent of its own state; an open pivotal bond forces connection and a closed one forces disconnection. Hence
\[
 \mathbb E_p[N_{\mathrm{piv}}(R)\mid\beta^+\leftrightarrow\beta^-]
 =\frac{p\varphi_R'(p)}{\varphi_R(p)},\qquad
 \mathbb E_p[N_{\mathrm{piv}}(R)\mid\beta^+\nleftrightarrow\beta^-]
 =\frac{(1-p)\varphi_R'(p)}{1-\varphi_R(p)}.
\]
Both expectations equal $\varphi_R'(p_c)$ at criticality.

For any event $E$ on configurations, differentiation gives
\[
   p(1-p)\,\frac{d}{dp}\,\mathbb P_p(E)
   =\sum_{\omega\in E}
   \bigl((1-p)|\omega|-p(m-|\omega|)\bigr)
   p^{|\omega|}(1-p)^{m-|\omega|}.
\]

Fix the parent state $\sigma=c$ and condition on the active set $A=A_0$.
The conditioning event constrains only the interior edges and closes every edge on the boundary of $A_0$, leaving the edges with both endpoints outside $A_0$ free and independent.
For a free edge the open weight $(1-p)\cdot p$ and the closed weight $p\cdot(1-p)$ agree, so its contribution to the differentiated sum vanishes.
An open edge cannot have exactly one endpoint in $A_0$, so the counted open edges are exactly the state-$c$ edges and the counted closed edges are exactly the state-$b$ and state-$u$ edges.
With $E=\{\beta^+\leftrightarrow\beta^-\}$ this gives the first identity, and with $E$ the complementary event and active set $A^{+}$ it gives the second, since the conditioning on $A^{+}=A_0$ with $\beta^-\notin A_0$ again leaves the outside edges free.

On the disconnected event, an edge with both endpoints in $A^{+}\cup A^{-}$ either has both endpoints in the same terminal cluster or joins the two clusters, in which case it is closed and pivotal for the crossing.
The conditional pivotal identity above gives the expected number of these edges, restricted to the disconnected event, as $(1-p)\varphi_R'(p)$.
The terminal symmetry $\vartheta$ exchanges $A^{+}$ and $A^{-}$ configuration by configuration and preserves the weight, so the counts taken with respect to $A^{+}$ and with respect to $A^{-}$ have equal expectations.
Open edges with both active endpoints lie in a single cluster, giving $U^{b}_{c}=2U^{u}_{c}$; closed edges with both active endpoints split into the two single-cluster counts and the pivotal bridges, giving $U^{b}_{b}=2U^{u}_{b}+(1-p)\varphi_R'(p)$; and edges with exactly one endpoint in $A^{+}$ split into edges with one active endpoint and pivotal bridges, each bridge being seen once from each side, giving $2U^{u}_{u}=U^{b}_{u}+2(1-p)\varphi_R'(p)$.

Finally, divide the last three row identities by $1-\varphi_R(p)$ and subtract half the $b$-row from the $u$-row.
This proves the left-eigenvector identity, whose critical value follows from $\varphi_R(p_c)=p_c$.
\end{proof}

Define
\[
 \mathbf Q(p):=
 \begin{pmatrix}
  1&0&1-p\\
  0&2&-p\\
  0&1&-p
 \end{pmatrix},
 \qquad
 \mathbf B_R(p):=
 \begin{pmatrix}
 M_{c,c}(p)&2M_{c,b}(p)+M_{c,u}(p)\\
 M_{u,c}(p)&2M_{u,b}(p)+M_{u,u}(p)
 \end{pmatrix}.
\]

\begin{theorem}\label{thm:three-state-block}
For every admissible rule graph,
\[
 \mathbf M_{\mathfrak C}(p_c)
 \begin{pmatrix}1-p_c\\-p_c\\-p_c\end{pmatrix}
 =\varphi_R'(p_c)
 \begin{pmatrix}1-p_c\\-p_c\\-p_c\end{pmatrix}.
\]
Moreover, $\det\mathbf Q(p_c)=-p_c\ne0$ and
\[
 \mathbf Q(p_c)^{-1}\mathbf M_{\mathfrak C}(p_c)\mathbf Q(p_c)
 =\begin{pmatrix}
   \mathbf B_R(p_c)&0\\
   0&\varphi_R'(p_c)
  \end{pmatrix}.
\]
The matrix $\mathbf B_R(p_c)$ has positive entries in $\mathbb Q(p_c)$ and two distinct real eigenvalues.
Its larger eigenvalue is $\rho(\mathbf M_{\mathfrak C})$, and
\[
 \rho(\mathbf M_{\mathfrak C})>\varphi_R'(p_c),
 \qquad
 \rho(\mathbf M_{\mathfrak C})>\left|\operatorname{tr}\mathbf B_R(p_c)-\rho(\mathbf M_{\mathfrak C})\right|.
\]
In particular the full three-state matrix is diagonalizable over $\mathbb R$.
\end{theorem}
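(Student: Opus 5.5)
The plan is to derive everything from the row identities of Lemma~\ref{lem:exact-row-identities}, evaluated at $p=p_c$ where $\varphi_R(p_c)=p_c$. The argument has four steps: the right eigenvector, the change of basis, positivity of $\mathbf B_R(p_c)$ and its spectral bounds, and diagonalizability.

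\textbf{Right eigenvector.} Write $v=(1-p_c,-p_c,-p_c)^{\mathsf T}$.

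For the $c$-row, $(\mathbf M v)_c=\bigl[(1-p_c)U^c_c-p_c(U^c_b+U^c_u)\bigr]/\varphi_R(p_c)$. By the first identity this equals $p_c(1-p_c)\varphi_R'(p_c)/p_c=(1-p_c)\varphi_R'(p_c)$, as required.

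For the $u$-row, the second identity divided by $1-\varphi_R(p_c)=1-p_c$ gives $-p_c\varphi_R'(p_c)$, as required.

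For the $b$-row, substitute the three terminal-symmetry identities. This gives $(\mathbf M v)_b=2(\mathbf M v)_u+\bigl[-p_c(1-p_c)\varphi_R'+2p_c(1-p_c)\varphi_R'\bigr]/(1-p_c)$. Using $(\mathbf M v)_u=-p_c\varphi_R'$, this equals $-2p_c\varphi_R'+p_c\varphi_R'=-p_c\varphi_R'$, as required. The bookkeeping of signs here is routine; I will simply check that the correction terms from $U^b_b$ and $U^b_u$ combine to $+p_c\varphi_R'$.

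\textbf{Change of basis.} The determinant of $\mathbf Q$ is computed directly by expansion along the first column: $\det\mathbf Q(p_c)=1\cdot(2\cdot(-p_c)-(-p_c)\cdot1)=-p_c$.

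The third column of $\mathbf Q$ is $v$, so $\mathbf M\mathbf Q e_3=\varphi_R'(p_c)\,\mathbf Q e_3$. This gives the third column of $\mathbf Q^{-1}\mathbf M\mathbf Q$.

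For the bottom row, the left eigenvector $w=(0,-\tfrac12,1)$ of Lemma~\ref{lem:exact-row-identities} has eigenvalue $\varphi_R'(p_c)$ at criticality. Moreover $w\mathbf Q=(0,0,-p_c/2+\dots)$: indeed $w\mathbf Q e_1=0$ and $w\mathbf Q e_2=-1+1=0$. Hence $w$ is proportional to the third row of $\mathbf Q^{-1}$. Therefore the third row of $\mathbf Q^{-1}\mathbf M\mathbf Q$ is $\varphi_R'(p_c)e_3^{\mathsf T}$, and the off-diagonal blocks vanish.

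It remains to identify the upper $2\times2$ block. The span of $\mathbf Q e_1=e_c$ and $\mathbf Q e_2=2e_b+e_u$ is invariant modulo $v$. Instead of inverting $\mathbf Q$, I will use the exact row relations $M_{b,\cdot}=2M_{u,\cdot}+(\text{pivotal corrections})$ to express $\mathbf M e_c$ and $\mathbf M(2e_b+e_u)$ in the basis $\{e_c,\,2e_b+e_u,\,v\}$. I expect the coefficients to be exactly the entries of $\mathbf B_R$, with the coefficient of $v$ forced to zero by the vanishing bottom row. Concretely, reading off the $c$- and $u$-components of $\mathbf M e_c$ and $\mathbf M(2e_b+e_u)$ determines the coordinates in $e_c,\,2e_b+e_u$, because the $b$-coordinate is then consistent. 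This is the main computational step, and I expect it to be the place where bookkeeping errors occur.

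\textbf{Positivity and spectral bounds.} The entries of $\mathbf B_R(p_c)$ are rational functions of $p_c$ with integer coefficients, so they lie in $\mathbb Q(p_c)$. Positivity follows from the arguments of Proposition~\ref{prop:live-matrix-primitive}:
\begin{itemize}
\item $M_{c,c}>0$ and $M_{u,c}>0$ were shown there;
\item $2M_{c,b}+M_{c,u}\ge 2M_{c,b}>0$;
\item $2M_{u,b}+M_{u,u}\ge M_{u,u}>0$, since closing all edges gives a $u$-child at $\beta^+$.
\end{itemize}

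A positive $2\times2$ matrix has discriminant $(a-d)^2+4bc>0$, hence two distinct real eigenvalues. Its Perron root $\lambda_1$ strictly exceeds the modulus of the other eigenvalue $\lambda_2=\operatorname{tr}\mathbf B_R-\lambda_1$.

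Since $\mathbf M_{\mathfrak C}$ is primitive, its Perron root is simple, strictly dominant, and has a positive eigenvector. The vector $v$ has mixed signs and $w$ has mixed signs, so $\varphi_R'(p_c)$ is not the Perron root. Hence $\rho(\mathbf M_{\mathfrak C})$ is an eigenvalue of $\mathbf B_R$, and in fact its larger one, because the spectrum of $\mathbf M_{\mathfrak C}$ is the union of the spectra of the two blocks. Strict dominance then gives both $\rho(\mathbf M_{\mathfrak C})>\varphi_R'(p_c)$ and the bound on $|\lambda_2|$.

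\textbf{Diagonalizability.} Both blocks are diagonalizable over $\mathbb R$: the $1\times1$ block trivially, and $\mathbf B_R$ because it has distinct real eigenvalues. A block-diagonal matrix with diagonalizable blocks is diagonalizable, and similarity by $\mathbf Q(p_c)$ preserves this, so $\mathbf M_{\mathfrak C}$ is diagonalizable over $\mathbb R$.
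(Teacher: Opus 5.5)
Your proposal is correct and follows essentially the same route as the paper. The first two row identities at $p_c$, together with the terminal-symmetry row relation, give the right eigenvector $(1-p_c,-p_c,-p_c)^{\mathsf T}$. The left eigenvector $(0,-\tfrac12,1)$, equivalently invariance of its kernel, which is the span of the first two columns of $\mathbf Q(p_c)$, removes the off-diagonal blocks. Positivity of $\mathbf B_R(p_c)$ and Perron--Frobenius for the primitive matrix then give the spectral claims. The block identification you hedged on is immediate: once the $v$-coefficient vanishes, the $c$- and $u$-components of $\mathbf M_{\mathfrak C}e_c$ and $\mathbf M_{\mathfrak C}(2e_b+e_u)$ are exactly $(M_{c,c},M_{u,c})$ and $(2M_{c,b}+M_{c,u},\,2M_{u,b}+M_{u,u})$.
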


\begin{proof}
At the fixed point, the first two identities of Lemma~\ref{lem:exact-row-identities} give
\[
 (1-p_c)M_{c,c}(p_c)
 -p_c\bigl(M_{c,b}(p_c)+M_{c,u}(p_c)\bigr)
 =(1-p_c)\varphi_R'(p_c),
\]
\[
 (1-p_c)M_{u,c}(p_c)
 -p_c\bigl(M_{u,b}(p_c)+M_{u,u}(p_c)\bigr)
 =-p_c\varphi_R'(p_c).
\]
The remaining identity follows from the row relation in Lemma~\ref{lem:exact-row-identities} at $p=p_c$.
Thus the last column of $\mathbf Q(p_c)$ is a right eigenvector with eigenvalue $\varphi_R'(p_c)$.
The same row relation shows that the plane spanned by the first two columns is invariant and that its restriction matrix is $\mathbf B_R(p_c)$.
This proves the displayed block decomposition.
The entries of $\mathbf B_R(p_c)$ belong to $\mathbb Q(p_c)$ because the counting polynomials have integer coefficients and the conditional denominators are nonzero.

The configurations in Proposition~\ref{prop:live-matrix-primitive} show that $M_{c,c}(p_c)$, $M_{c,b}(p_c)$ and $M_{u,c}(p_c)$ are positive.
The all-closed configuration also gives $M_{u,u}(p_c)>0$.
Hence every entry of $\mathbf B_R(p_c)$ is positive.
Its characteristic discriminant is
\[
 \bigl(M_{c,c}(p_c)-2M_{u,b}(p_c)-M_{u,u}(p_c)\bigr)^2
 +4M_{u,c}(p_c)\bigl(2M_{c,b}(p_c)+M_{c,u}(p_c)\bigr)>0,
\]
so its eigenvalues are distinct and real.
Primitivity of the full matrix makes its Perron root simple and strictly dominant in modulus.
It cannot be the scalar eigenvalue $\varphi_R'(p_c)$: the corresponding right eigenvector displayed above has entries of both signs, whereas a Perron eigenvector of a primitive matrix has one strict sign.
The Perron root is therefore the larger eigenvalue of the positive two-dimensional block, and the stated inequalities follow.
The direct sum of a real diagonalizable two-dimensional block and a scalar is real diagonalizable, even when the scalar coincides with the smaller eigenvalue of the block.
\end{proof}

Let $\eta_R$ denote the smaller eigenvalue of $\mathbf B_R(p_c)$.
The block decomposition gives
\[
 \det(tI-\mathbf M_{\mathfrak C})
 =(t-\varphi_R'(p_c))
 \bigl(t^2-\operatorname{tr}(\mathbf B_R(p_c))t+\det\mathbf B_R(p_c)\bigr),
\]
whose quadratic factor has coefficients in $\mathbb Q(p_c)$.
We now use the two distinct eigenvalues of this factor to obtain an exact expression for the expected critical cluster mass.

\begin{corollary}\label{cor:SP-closed-form}\label{cor:mass-spectral-expansion}
For every $n\ge1$, the expected critical cluster mass has the exact expansion
\begin{align*}
 \mathbb E[|E(\mathfrak C_n)|]
 &=\frac{M_{c,c}(p_c)-\eta_R}{\rho(\mathbf M_{\mathfrak C})-\eta_R}
       \rho(\mathbf M_{\mathfrak C})^n
 +\frac{\rho(\mathbf M_{\mathfrak C})-M_{c,c}(p_c)}
       {\rho(\mathbf M_{\mathfrak C})-\eta_R}\,\eta_R^n.
\end{align*}
Both coefficients are positive.
\end{corollary}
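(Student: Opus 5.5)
We start from $\mathbb E[|E(\mathfrak C_n)|]=(\mathbf M_{\mathfrak C}^n)_{c,c}$, which Theorem~\ref{thm:critical-mass-dimension} derives from Lemma~\ref{thm:mass-transfer-power}. We then compute this entry using the block decomposition of Theorem~\ref{thm:three-state-block}. Write $\mathbf M:=\mathbf M_{\mathfrak C}(p_c)$, $\mathbf Q:=\mathbf Q(p_c)$ and $\rho:=\rho(\mathbf M_{\mathfrak C})$. Then
\[
\mathbf M^n=\mathbf Q\,\mathrm{diag}(\mathbf B_R(p_c)^n,\varphi_R'(p_c)^n)\,\mathbf Q^{-1}.
\]
The first column of $\mathbf Q$ is $e_c=(1,0,0)^T$, so $\mathbf M^n e_c=\mathbf Q\,(\mathbf B^n e_1\oplus 0)$. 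Here the plane spanned by the first two columns of $\mathbf Q$ is invariant, and $e_c$ lies in it with coordinates $(1,0)$. Hence the scalar eigenvalue $\varphi_R'(p_c)$ never enters. Reading off the $c$-coordinate, and noting that the first row of $\mathbf Q$ restricted to the plane is $(1,0)$, we get $(\mathbf M^n)_{c,c}=(\mathbf B_R(p_c)^n)_{1,1}$. I would verify this carefully by writing out the first column of $\mathbf M^n\mathbf Q=\mathbf Q\,\mathrm{diag}(\cdots)$ directly, which avoids inverting $\mathbf Q$.

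Next we use that $\mathbf B:=\mathbf B_R(p_c)$ is $2\times2$ with distinct eigenvalues $\rho>\eta_R$. Lagrange interpolation then gives
\[
\mathbf B^n=\frac{\rho^n(\mathbf B-\eta_R I)-\eta_R^n(\mathbf B-\rho I)}{\rho-\eta_R}.
\]
Its $(1,1)$ entry is
\[
\frac{(B_{11}-\eta_R)\rho^n+(\rho-B_{11})\eta_R^n}{\rho-\eta_R},
\qquad B_{11}=M_{c,c}(p_c),
\]
which is exactly the claimed expansion.

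For positivity of the coefficients it suffices to show $\eta_R<M_{c,c}(p_c)<\rho$. The eigenvalues of a real $2\times2$ matrix are the roots of $(t-B_{11})(t-B_{22})-B_{12}B_{21}$. Evaluating this quadratic at $t=B_{11}$ gives $-B_{12}B_{21}<0$, because Theorem~\ref{thm:three-state-block} states that all entries of $\mathbf B$ are positive. So $B_{11}$ lies strictly between the two roots.

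The only delicate step is the identification $(\mathbf M^n)_{c,c}=(\mathbf B^n)_{1,1}$. It hinges on $e_c$ being the first column of $\mathbf Q$, so that the $\varphi_R'(p_c)$ direction is absent. Everything else is elementary $2\times2$ algebra.
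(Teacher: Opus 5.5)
Your proposal is correct and follows the paper's own argument: the block decomposition of Theorem~\ref{thm:three-state-block}, spectral (Lagrange) interpolation for the $2\times2$ block $\mathbf B_R(p_c)$, Lemma~\ref{thm:mass-transfer-power} to identify the $(c,c)$ entry with the expected mass, and positivity of the coefficients because the characteristic polynomial equals $-B_{12}B_{21}<0$ at $M_{c,c}(p_c)$. You also write out the step the paper leaves implicit: since $e_c$ is the first column of $\mathbf Q(p_c)$ and the first row of $\mathbf Q(p_c)$ restricts to $(1,0)$ on the invariant plane, $(\mathbf M_{\mathfrak C}^n)_{c,c}=(\mathbf B_R(p_c)^n)_{1,1}$, and the scalar eigenvalue $\varphi_R'(p_c)$ drops out.
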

\begin{proof}
Spectral interpolation for $\mathbf B_R(p_c)$, together with the block decomposition in Theorem~\ref{thm:three-state-block}, gives the displayed expression for the $(c,c)$ entry of $\mathbf M_{\mathfrak C}^{\,n}$.
Lemma~\ref{thm:mass-transfer-power} identifies this entry with the expected cluster mass.
Since the off-diagonal entries of $\mathbf B_R(p_c)$ are positive, its characteristic polynomial is negative at $M_{c,c}(p_c)$; this diagonal entry lies strictly between its two eigenvalues, so both coefficients are positive.
\end{proof}

The one-dimensional crossing-response block contributes zero to this particular mass observable.
The smaller eigenvalue $\eta_R$ governs the second term, while $|\eta_R|<\rho(\mathbf M_{\mathfrak C})$ makes it subdominant.
These formulas remain valid when $\eta_R=\varphi_R'(p_c)$: the two eigenvalues of the positive block are still distinct.

The remaining scalar block records sensitivity to the occupation probability.
We interpret its eigenvalue through the growth of pivotal counts.

\begin{definition}\label{def:pivotal-dimension}
For critical percolation on the hierarchical lattice $\Xi$, the pivotal dimension of the two-terminal cluster $\mathfrak C$ is defined, when the limit exists, by
\[
   \dim_{\mathrm P}(\mathfrak C)
   :=
   \displaystyle\lim_{n\to\infty}\frac{\log \mathbb E_{p_c}\bigl[N_{\mathrm{piv}}(\Xi^{n})\bigr]}{\log \ell^{\,n}},
\]
the growth rate of the expected number of pivotal bonds at criticality, measured against the terminal distance scale $\ell^{\,n}$ of $\Xi^{n}$.
This is also the growth rate of the expected number of pivotal bonds in the cluster conditioned to connect the two terminals.
Indeed, whether an edge is pivotal is independent of its own open state, and a pivotal edge belongs to a connecting configuration exactly when it is open; hence conditioning on connection multiplies its pivotal probability by $p_c/\varphi_{\Xi^n}(p_c)=1$.
\end{definition}

\begin{theorem}\label{thm:pivotal-dimension}
For every classical EIGS hierarchical lattice, the pivotal dimension exists and
\[
 \dim_{\mathrm P}(\mathfrak C)=\frac{\log\varphi_R'(p_c)}{\log\ell},
 \qquad 0<\dim_{\mathrm P}(\mathfrak C)<\dimmass(\mathfrak C).
\]
\end{theorem}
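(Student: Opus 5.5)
The plan is to compute $\mathbb E_{p_c}[N_{\mathrm{piv}}(\Xi^n)]$ exactly and then read off both inequalities from results already proved. By Russo's formula the expected number of pivotal bonds of $\Xi^n$ equals the derivative of its reliability, $\varphi_{\Xi^n}'(p)$. Lemma~\ref{lem:renormalisation} gives $\varphi_{\Xi^n}=\varphi_R^{\circ n}$. The chain rule then yields
\[
 \bigl(\varphi_R^{\circ n}\bigr)'(p_c)=\prod_{k=0}^{n-1}\varphi_R'\bigl(\varphi_R^{\circ k}(p_c)\bigr)=\varphi_R'(p_c)^n,
\]
because $p_c$ is a fixed point. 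So the expectation is exactly $\varphi_R'(p_c)^n$. Taking logarithms and dividing by $n\log\ell$ shows that the limit exists (the sequence is constant) and equals $\log\varphi_R'(p_c)/\log\ell$. By the remark in Definition~\ref{def:pivotal-dimension}, the conditioned version gives the same value.

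For the lower bound, $\ell\ge2$ gives $\log\ell>0$, and Theorem~\ref{thm:unique-unstable-critical-point} gives $\varphi_R'(p_c)>1$. Hence $\dim_{\mathrm P}(\mathfrak C)>0$.

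For the upper bound, Theorem~\ref{thm:critical-mass-dimension} gives $\dimmass(\mathfrak C)=\log\rho(\mathbf M_{\mathfrak C})/\log\ell$, with the same denominator. It therefore suffices to show $\rho(\mathbf M_{\mathfrak C})>\varphi_R'(p_c)$, and this is one of the inequalities of Theorem~\ref{thm:three-state-block}. The key point there is that $\varphi_R'(p_c)$ is an eigenvalue of the primitive matrix whose right eigenvector $(1-p_c,-p_c,-p_c)^{\mathsf T}$ has mixed signs. So it cannot be the Perron root, and strict Perron dominance gives $\rho(\mathbf M_{\mathfrak C})>|\varphi_R'(p_c)|=\varphi_R'(p_c)$. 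Since $\log$ is strictly increasing, the strict inequality of dimensions follows.

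The only step that needs care is applying Russo's formula to $\Xi^n$ rather than to $R$. The graph $\Xi^n$ is finite, and its reliability is a polynomial in a single common parameter $p$. The derivative of $\mathbb P_p(\mathfrak v_+\leftrightarrow\mathfrak v_-)$ is therefore the sum over edges of the probability that the edge is pivotal, which is exactly $\mathbb E_p[N_{\mathrm{piv}}(\Xi^n)]$. Everything else is bookkeeping with results already established, so I expect no genuine obstacle beyond citing Theorem~\ref{thm:three-state-block} for the strict spectral gap.
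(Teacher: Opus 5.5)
Your proposal is correct and follows the paper's proof essentially step for step. Russo's formula, Lemma~\ref{lem:renormalisation} and the chain rule at the fixed point give $\mathbb E_{p_c}[N_{\mathrm{piv}}(\Xi^n)]=\varphi_R'(p_c)^n$, and the two inequalities come from $\varphi_R'(p_c)>1$ together with $\rho(\mathbf M_{\mathfrak C})>\varphi_R'(p_c)$ from Theorem~\ref{thm:three-state-block} and the mass-dimension formula of Theorem~\ref{thm:critical-mass-dimension}.
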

\begin{proof}
Russo's formula and Lemma~\ref{lem:renormalisation} give, for every $n\ge1$,
\[
 \mathbb E_{p_c}[N_{\mathrm{piv}}(\Xi^n)]
 =(\varphi_R^{\circ n})'(p_c)=[\varphi_R'(p_c)]^n,
\]
where the last identity follows from the chain rule at the fixed point.
Taking logarithms and dividing by $n\log\ell$ proves the formula.
The inequalities follow from $\varphi_R'(p_c)>1$, Theorem~\ref{thm:three-state-block} and the mass-dimension formula in Theorem~\ref{thm:critical-mass-dimension}.
\end{proof}

The pivotal response also admits an upper bound in terms of the number of rule edges.
\begin{lemma}\label{lem:pivotal-response-bound}
For every classical EIGS hierarchical lattice,
\[
 [\varphi_R'(p_c)]^2<m.
\]
\end{lemma}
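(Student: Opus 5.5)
The plan is to combine Russo's formula with the first-level Fourier (Walsh) bound on the variance of the crossing indicator, followed by Cauchy--Schwarz over the $m$ edges of $R$.

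Write $f$ for the indicator of $\{\beta^+\leftrightarrow\beta^-\}$ on $\{0,1\}^{E(R)}$ under $\mathbb P_{p}$, and let $I_e(p):=\mathbb P_p(e\text{ is pivotal})$ denote the influence of $e$. By Russo's formula, as already used in the proof of Lemma~\ref{lem:exact-row-identities},
\[
 \varphi_R'(p)=\sum_{e\in E(R)}I_e(p).
\]
Cauchy--Schwarz over the $m$ edges then gives
\[
 [\varphi_R'(p)]^2\le m\sum_{e}I_e(p)^2 .
\]
So it suffices to show that $\sum_e I_e(p_c)^2<1$.

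For this I would use the $p$-biased Walsh expansion already invoked in the proof of Theorem~\ref{thm:unique-unstable-critical-point}. The characters are $\chi_e=(\omega_e-p)/\sqrt{p(1-p)}$. For a monotone $f$, the first-level coefficient is
\[
 \hat f(\{e\})=\mathbb E[f\chi_e]=\sqrt{p(1-p)}\,I_e(p).
\]
This follows by conditioning on the other coordinates: $f$ differs between $\omega_e=1$ and $\omega_e=0$ exactly when $e$ is pivotal, and pivotality is independent of $\omega_e$. Parseval then gives
\[
 \varphi_R(p)(1-\varphi_R(p))=\operatorname{Var}f\ \ge\ \sum_e\hat f(\{e\})^2=p(1-p)\sum_eI_e(p)^2 .
\]
At $p=p_c$ the left side equals $p_c(1-p_c)$ because $\varphi_R(p_c)=p_c$. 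Dividing by $p_c(1-p_c)>0$ yields $\sum_eI_e(p_c)^2\le1$.

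The remaining, and main, step is strictness. Equality would force all Walsh weight of degree at least two to vanish, so $f$ would be affine in the coordinates. A monotone Boolean function of degree at most one is constant or depends on a single coordinate. The proof of Theorem~\ref{thm:unique-unstable-critical-point} already excludes both possibilities under the standing assumptions: $f$ is nonconstant, and since $\ell\ge2$ no single edge determines the crossing. Hence $\sum_eI_e(p_c)^2<1$, and therefore $[\varphi_R'(p_c)]^2<m$.
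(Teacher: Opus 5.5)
Your proof is correct, but it is organised differently from the paper's. The paper applies Cauchy--Schwarz only once, to the covariance of the crossing indicator $X$ with the total number $S$ of open edges. It uses $\operatorname{Cov}(X,S)=p_c(1-p_c)\varphi_R'(p_c)$, $\operatorname{Var}X=p_c(1-p_c)$ (by the fixed-point identity) and $\operatorname{Var}S=mp_c(1-p_c)$. Strictness then comes from the elementary observation that a nonconstant $\{0,1\}$-valued variable cannot be an affine function of $S$, since $S$ takes every value $0,\dots,m$ with positive probability.

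The two arguments are closely related. Since $S-\mathbb E S=\sqrt{p_c(1-p_c)}\sum_e\chi_e$, the paper projects $f$ onto the single symmetric direction $\sum_e\chi_e$ in the level-one Walsh space. You instead project onto the whole level-one space (via Parseval) and then apply Cauchy--Schwarz to the $m$ first-level coefficients. Composing your two inequalities gives exactly the paper's one.

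The paper's route is shorter, needs no Fourier analysis, and its equality case is immediate. Yours gives the sharper intermediate statement $\sum_e I_e(p_c)^2<1$, a level-one inequality at the fixed point. In your version, strictness sits in the Parseval step and reuses the degree-one characterisation of monotone Boolean functions already invoked in the proof of Theorem~\ref{thm:unique-unstable-critical-point}.
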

\begin{proof}
Under critical Bernoulli percolation on $R$, let $X$ be the terminal-crossing indicator and $S$ the number of open edges.
Differentiating the Bernoulli expectation gives
\[
 \operatorname{Cov}(X,S)=p_c(1-p_c)\varphi_R'(p_c),\qquad
 \operatorname{Var}X=p_c(1-p_c),\qquad
 \operatorname{Var}S=mp_c(1-p_c).
\]
Cauchy--Schwarz gives $[\varphi_R'(p_c)]^2\le m$.
Equality would make the nonconstant binary variable $X$ an affine function of $S$, although $m\ge4$ and every value $0,\ldots,m$ of $S$ has positive probability.
The inequality is therefore strict.
\end{proof}

\begin{example}\label{ex:dhl-response}\label{ex:dhl-dimensions}
For the diamond cell in Example~\ref{ex:dhl-reliability}, enumeration of the sixteen open-edge configurations gives
\[
\mathbf M_{\mathfrak C}(p)=
\begin{pmatrix}
\dfrac{4(1+p-p^2)}{2-p^2}&\dfrac{4p(1-p)}{2-p^2}&\dfrac{4(1-p)^2}{2-p^2}\\[2ex]
\dfrac{4p}{1+p}&\dfrac{4p}{1+p}&\dfrac{4(1-p)}{1+p}\\[2ex]
\dfrac{2p}{1+p}&0&2
\end{pmatrix}.
\]
At $p_c=(\sqrt5-1)/2$, the two-dimensional block in Theorem~\ref{thm:three-state-block} becomes
\[
\mathbf B_R(p_c)=
\begin{pmatrix}
8p_c^2&4p_c^2\\
2p_c^2&2
\end{pmatrix}.
\]
Its trace is $14-4\sqrt5$ and its determinant is $4(\sqrt5-1)$.
The larger root of its characteristic polynomial is therefore
\[
\rho(\mathbf M_{\mathfrak C})=7-2\sqrt5+\sqrt{73-32\sqrt5}
\approx3.7303.
\]
The other eigenvalue of this block is
\[
 \eta_R=7-2\sqrt5-\sqrt{73-32\sqrt5}.
\]
The scalar eigenvalue is $\varphi_R'(p_c)=6-2\sqrt5$, and $\ell=2$.
Thus the two cluster dimensions are
\[
 \dimmass(\mathfrak C)
 =\frac{\log\bigl(7-2\sqrt5+\sqrt{73-32\sqrt5}\bigr)}{\log2}
 \approx1.8993,\qquad
 \dim_{\mathrm P}(\mathfrak C)
 =\frac{\log(6-2\sqrt5)}{\log2}\approx0.6115.
\]
The ambient edge-volume exponent is $\log4/\log2=2$.
\end{example}

\section{Why multiplicative dimensions cannot classify}\label{sec:universality}\label{sec:converse-rigidity}\label{sec:three-dimensions}

This section addresses the universality question raised by Hauser and Saxena in 1986~\cite{hauser1986geometrical}: can the ambient dimension of a hierarchical lattice determine its percolation critical-exponent universality class? We prove that even an arbitrary finite collection of substitution-multiplicative dimensions is insufficient for this classification. We first specify the critical exponents and universality class used here, and then examine how substitution order affects these exponents. We begin by recalling the three dimensions needed in this section.
First, the Hausdorff dimension of the ambient metric scaling limit, denoted by $\dimhaus(\Xi)$, is given by~\cite{neroli2024fractal}*{Theorem~2.1}:
\[
 \dimhaus(\Xi)=\frac{\log m}{\log\ell}.
\]
We call this the \emph{ambient dimension}.
Second, Theorem~\ref{thm:critical-mass-dimension} gives the expected mass dimension of the critical cluster:
\[
 \dimmass(\mathfrak C)=\frac{\log\rho(\mathbf M_{\mathfrak C})}{\log\ell}.
\]
Third, Theorem~\ref{thm:pivotal-dimension} gives its pivotal dimension:
\[
 \dim_{\mathrm P}(\mathfrak C)=\frac{\log\varphi_R'(p_c)}{\log\ell}.
\]
We refer to these as the \emph{three critical dimensions}.

In Subsection~\ref{sec:static-class}, we examine eight critical exponents and use the four that exist for every classical EIGS hierarchical lattice under our observation conventions to define the critical-exponent universality class. We then prove that two lattices belong to the same class if and only if their three critical dimensions agree.

In Subsection~\ref{sec:noncommuting}, we return to the Tie and Gem example and show that cyclic permutations preserve the three critical dimensions, whereas general rearrangements of substitutions can change the critical mass dimension. Substitution-multiplicative dimensions, however, cannot distinguish these rearrangements. Consequently, no finite family of such dimensions completely determines the critical-exponent universality class; in particular, ambient dimension alone cannot classify it. The obstruction comes from substitution order, rather than the number of multiplicative dimensions considered. This does not contradict classification by the three critical dimensions, since the critical mass growth multiplier is not generally multiplicative under substitution.

\subsection{Eight critical exponents and universality classes}\label{sec:static-class}

The critical exponents and observation conventions used to define universality classes may differ across the literature.
To specify the objects studied here, we start from the eight percolation critical exponents listed by Grimmett~\cite{grimmett2014criticality}*{Table~2.1}.
Table~\ref{tab:eight-exponents} records their status for classical EIGS hierarchical lattices under the observation conventions specified below.
Under these conventions, $\beta,\nu,\delta,\eta_{\mathrm{av}}$ exist for every classical rule and will define our critical-exponent universality class.
The definitions, results and proofs for $\gamma,\Delta,\alpha,\rho$ are collected in Appendix~\ref{app:other-critical-exponents}; they require attention to the side of the transition, the order of the moment, or the distinction between point probabilities and cumulative probabilities.

\begin{table}[H]
\centering\footnotesize
\setlength{\tabcolsep}{3pt}
\renewcommand{\arraystretch}{1.3}
\begin{tabularx}{\textwidth}{@{}>{\raggedright\arraybackslash}p{1.7cm}*{4}{>{\centering\arraybackslash}X}|*{4}{>{\centering\arraybackslash}X}@{}}
\toprule
Exponent & $\beta$ & $\nu$ & $\delta$ & $\eta_{\mathrm{av}}$ & $\gamma$ & $\Delta$ & $\alpha$ & $\rho$\\
\midrule
Observable & Infinite-cluster density & Crossing length & Cluster mass & Averaged connectivity & Finite-cluster size & Moment ratios & Cluster-number response & Cluster radius\\
Scope & All rules & All rules & All rules & All rules & Side dependent & Order dependent & Extra criteria & Point law may fail\\
Result & Power law & Power law & Point law & Cell average & Three regimes & High-order stability & Explicit sufficient criteria & Cumulative law for all rules\\
Class definition & Yes & Yes & Yes & Yes & --- & --- & --- & ---\\
\bottomrule
\end{tabularx}
\caption{Eight critical exponents.}
\label{tab:eight-exponents}
\end{table}

\subsubsection*{Annealed observables}
The lattice is not spatially homogeneous, so a uniformly sampled vertex and a distinguished terminal need not see the same cluster law.
We use volume sampling for bulk observables and specify the crossing and two-point observations separately.

\begin{definition}\label{def:physical-observables}\label{def:annealed-percolation}
Choose a vertex $U_n$ uniformly from $V(\Xi^n)$, independently of the percolation configuration, and let $(G,o)$ have the uniformly rooted local-limit law of $(\Xi^n,U_n)$.
Conditional on $(G,o)$, declare its edges independently open with probability $p$.
The resulting probability and expectation, denoted by $\mathbb P_p^{\mathrm{ann}}$ and $\mathbb E_p^{\mathrm{ann}}$, average over both the rooted graph and the open edges.
Write $C(o)$ for the open cluster of the root and define
\[
 n_s(p):=\frac1s\mathbb P_p^{\mathrm{ann}}(|C(o)|=s),\qquad
 \theta(p):=\mathbb P_p^{\mathrm{ann}}(|C(o)|=\infty)
 =1-\sum_{s\ge1}s n_s(p).
\]
The density $n_s(p)$ is also the limit of the expected number of size-$s$ clusters in $\Xi^n$, divided by $|V(\Xi^n)|$.
All infinite-volume bulk probabilities below refer to this law, with the volume limit taken before the critical or large-cluster limit.
\end{definition}

The uniformly rooted construction and its percolation interpretation are discussed in~\cite{alves2026percolation}.
In particular, $n_s$ counts clusters, whereas $s n_s$ is the mass distribution seen from a uniformly sampled vertex.
The law in Definition~\ref{def:annealed-percolation} is unconditioned; it differs from the terminal-connected cluster law used to define $\dimmass(\mathfrak C)$ in Definition~\ref{def:critical-mass-dimension}.
The relation between their growth rates is a result of the percolation analysis.

For $p<p_c$, the crossing correlation length is
\[
 \xi(p)^{-1}:=-\lim_{n\to\infty}\ell^{-n}
 \log\varphi_R^{\circ n}(p).
\]
For the two-point observable, let
\[
 \mathcal P_n:=\bigl\{(x,y)\in V(\Xi^n)^2:
 \ell^n/4\le\dist_{\Xi^n}(x,y)\le\ell^n/2\bigr\},\qquad
 G_n:=\frac1{|\mathcal P_n|}\sum_{(x,y)\in\mathcal P_n}
 \mathbb P_{p_c}^{\Xi^n}(x\leftrightarrow y).
\]
Distances are measured in the deterministic graph, and the connecting path in this definition lies in $\Xi^n$.
Thus $G_n$ averages over pairs separated on the scale of a cell.
The fixed fractions $1/4$ and $1/2$ may be replaced by any fixed $0<a<b<1$ without changing the exponent.
This averaging convention is part of the definition of $\eta_{\mathrm{av}}$; no equivalence with a fixed-root pointwise or an infinite-volume spherical-shell definition is assumed.

The proofs use the conditional boundary masses and the levels at which internal clusters appear.
Write $d_R:=\deg_R(\beta^+)=\deg_R(\beta^-)$.
For $\sigma\in\{c,b,u\}$, let $X_{\sigma,n}$ count the internal vertices in the active boundary cluster or union of clusters of a depth-$n$ cell, under its state-$\sigma$ conditional law at $p_c$; neither outer terminal is counted.
Let $\mathcal J_n$ be the set of open clusters in $\Xi^n$ which avoid its two terminals but contain at least one internal vertex of its top-level copy of $R$.
Expectations involving $\mathcal J_n$ are unconditioned.

\begin{lemma}\label{lem:conditional-mass-moments}
For every fixed integer $r\ge1$ and every live state $\sigma$,
\[
 2\le d_R<\rho(\mathbf M_{\mathfrak C})<m,\qquad
 \mathbb E X_{\sigma,n}\asymp\rho(\mathbf M_{\mathfrak C})^n,
 \qquad
 \mathbb E X_{\sigma,n}^{r}\le C_r\rho(\mathbf M_{\mathfrak C})^{rn}
 \quad(n\ge1).
\]
Moreover, for every $p\in(0,1)$ and every $s\ge1$,
\begin{equation}\label{eq:birth-cluster-series}
 n_s(p)=\frac{m-1}{|V(R)|-2}
 \sum_{n\ge1}m^{-n}\mathbb E_p
 \#\{C\in\mathcal J_n:|C|=s\}.
\end{equation}
In particular, $\theta(p_c)=0$.
\end{lemma}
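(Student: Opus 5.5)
We split the lemma into three parts: the spectral inequalities, the moment bounds for $X_{\sigma,n}$, and the birth-level series \eqref{eq:birth-cluster-series} together with $\theta(p_c)=0$.

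\textbf{Spectral inequalities.} Terminal symmetry gives $\deg_R(\beta^+)=\deg_R(\beta^-)$. The bound $d_R\ge\lambda_R(\beta^+,\beta^-)\ge2$ holds because the star of a terminal is a separating cut.

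For $\rho(\mathbf M_{\mathfrak C})<m$, use the row sums. Every live child edge is an edge of $R$, so each row sum of $\mathbf M_{\mathfrak C}$ is at most $m$. The row sum is strictly smaller because the state-$u$ row assigns positive probability to configurations with a state-$0$ child. For example, when all edges are closed, any edge avoiding $\beta^+$ has state $0$; such an edge exists because $\ell\ge2$. A primitive nonnegative matrix with row sums at most $m$, at least one strictly less, has $\rho<m$.

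For $d_R<\rho$, compare with the Perron eigenvector. The state-$b$ parent conditioned on disconnection sees at least all $d_R$ edges at each terminal as live. This gives row sums at least $d_R$, with strict excess in some row. A strict subinvariance argument then yields $\rho>d_R$. I would rather verify this through the block $\mathbf B_R(p_c)$ of Theorem~\ref{thm:three-state-block}: its $u$-row sum $M_{u,c}+2M_{u,b}+M_{u,u}$ counts the live edges at the active terminal, weighted so the total is at least $d_R$. Since both entries are positive and the other row is larger, we get $\rho>d_R$.

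\textbf{Moments of $X_{\sigma,n}$.} Internal vertices of a depth-$n$ active cluster are created at levels $k\le n$. They are counted as the internal vertices of live level-$k$ cells, which have at most $|V(R)|-2$ new vertices each. Therefore $X_{\sigma,n}\le C\sum_{k\le n}Z_k$, where $Z_k$ is the vector of live-cell counts. Lemma~\ref{thm:mass-transfer-power} gives $\mathbb E Z_k\asymp\rho^k$. Since $\rho>1$, the geometric sum gives the upper bound of order $\rho^n$. The lower bound follows because each live edge at level $n-1$ contributes at least one active internal vertex with probability bounded below.

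For the $r$-th moments, $Z_n$ is a multitype Galton--Watson process with bounded offspring (at most $m$) and mean matrix $\mathbf M_{\mathfrak C}$. The normalised process $W_n=Z_n/\rho^n$ is bounded in every $L^r$ by the standard recursion on moments of bounded-offspring branching processes. Expanding $\mathbb E Z_{n+1}^{\otimes r}$ conditionally, the leading term scales by $\rho^r$ and the lower-order terms are controlled inductively. Minkowski's inequality over the geometric sum then gives $\mathbb E X_{\sigma,n}^r\le C_r\rho^{rn}$. I expect this induction on $r$ to be the technical core of the proof.

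\textbf{Birth-level series and $\theta(p_c)=0$.} Classify each cluster of $\Xi^N$ that avoids the outer terminals by the smallest cell containing it whose terminals it avoids. This is its birth cell: it contains internal vertices of that cell's top-level copy of $R$. There are $m^{N-n}$ cells of depth $n$, and configurations inside each cell are independent with the same law.

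Hence the expected number of size-$s$ clusters in $\Xi^N$ equals $\sum_{n\le N}m^{N-n}\,\mathbb E_p\#\{C\in\mathcal J_n:|C|=s\}$, up to the boundedly many terminal clusters. Dividing by $|V(\Xi^N)|\sim\frac{|V(R)|-2}{m-1}m^N$, which follows from the vertex recursion, and letting $N\to\infty$ gives \eqref{eq:birth-cluster-series}. Convergence of the series is guaranteed by $\#\mathcal J_n\le|V(R)|$.

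To prove $\theta(p_c)=0$, sum $s\,n_s$ via Tonelli. The total mass born at level $n$ equals $\mathbb E\sum_{C\in\mathcal J_n}|C|$. Each such cluster lies within active clusters of depth-$n$ cells, with at most $m$ children at the top, so this mass is bounded by an $m$-multiple of $\mathbb E X_{\sigma,n-1}$ plus a constant. The resulting mass identity gives $\sum_s s\,n_s(p_c)=1$ exactly, provided boundary-cluster mass vanishes in the limit: $\mathbb E|C_{\text{terminal}}|\asymp\rho^N=o(m^N)$, using $\rho<m$. The conclusion is $\theta(p_c)=0$.
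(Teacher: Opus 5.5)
Your proposal is correct. The spectral inequalities follow the paper: row sums of $\mathbf M_{\mathfrak C}$ are at least $d_R$ (the $c$ and $b$ rows at least $2d_R$), and at most $m$ with the $u$ row strictly smaller because of the all-closed configuration, and irreducibility finishes. The birth-cell decomposition and the vertex-count identity giving $\sum_s s\,n_s(p_c)=1$ from $\rho(\mathbf M_{\mathfrak C})<m$ are also the paper's arguments. Your classification of clusters by birth cell is just the unrolled form of the paper's recursion $\mathbb E_pI_{n,s}=m\,\mathbb E_pI_{n-1,s}+\mathbb E_p\#\{C\in\mathcal J_n:|C|=s\}$, and the detour through $\mathbf B_R(p_c)$ is not needed.

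The genuine difference is in the moment bounds. The paper works top-down with the exact first-generation reward recursion $X_{\sigma,n+1}=J_\sigma+\sum_{e\ \mathrm{live}}X^{(e)}_{\tau(e),n}$. Taking expectations gives both first-moment bounds from one matrix geometric series. In the induction on $r$, the only term carrying $r$th moments is $\mathbf M_{\mathfrak C}$ applied to the vector of $r$th moments, which grows like $\rho(\mathbf M_{\mathfrak C})^n$; iterating it against a forcing of order $\rho(\mathbf M_{\mathfrak C})^{rn}$ sums at once because $\rho^r>\rho$.

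You instead dominate $X_{\sigma,n}$ by $(|V(R)|-2)$ times the total number of live cells over all generations. This is valid, since an active vertex born inside a cell forces an active endpoint of that cell. You then prove the lower bound separately from the last generation, and reduce higher moments to $L^r$-boundedness of $Z_n/\rho^n$ for a bounded-offspring multitype Galton--Watson process, plus Minkowski over the geometric sum.

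This route is transparent and uses only textbook branching-process facts, but it has two costs. First, in your bottom-up expansion the leading term propagates through $\mathbf M_{\mathfrak C}^{\otimes r}$, whose growth is exactly $\rho^r$. Summability therefore comes only from the lower-degree corrections being $O(\rho^{(r-1)n})$, and it needs the entrywise bound $(\mathbf M_{\mathfrak C}^{j})^{\otimes r}\le C\rho^{rj}$. Your sketch gestures at this but does not state it. Second, the domination discards the reward structure. The paper's exact recursion is reused later, for the local limit theorem and, with level-dependent matrix products, for the near-critical estimates in the appendix.
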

\begin{proof}
Condition on the crossing indicators of the child cells of a state-$\sigma$ cell.
If $J_\sigma$ is the number of active internal vertices of the top-level rule, the conditional construction in Theorem~\ref{thm:percolation-random-eigs} gives the exact recursion
\begin{equation}\label{eq:conditional-mass-recursion}
 X_{\sigma,n+1}=J_\sigma+
 \sum_{e\text{ live}}X_{\tau(e),n}^{(e)}.
\end{equation}
Here $0\le J_\sigma\le |V(R)|-2$, and, given this finite configuration, the summands are independent with the indicated state laws.
The reward $J_\sigma$ and the offspring types need not be independent.
Every parent has at least $d_R$ live children, since every edge incident to an active terminal is live.
The states $c,b$ have at least $2d_R$ children because the terminals are not adjacent.
Every parent has at most $m$ children, whereas an all-closed state-$u$ configuration has exactly $d_R<m$.
Primitivity and the strict Perron--Frobenius comparison therefore give $d_R<\rho(\mathbf M_{\mathfrak C})<m$.

Taking expectations in~\eqref{eq:conditional-mass-recursion} and summing the resulting matrix geometric series proves the first-moment estimate: the reward vector is nonnegative and nonzero, and both Perron vectors are strictly positive.
For the $r$th moment, expand the $r$th power conditionally on the finite configuration.
The terms with all $r$ factors in one child give $\mathbf M_{\mathfrak C}$ applied to the vector of $r$th moments.
Every remaining term contains only lower-order moments and bounded rewards.
Induction on $r$ bounds their sum by $C_r\rho(\mathbf M_{\mathfrak C})^{rn}$.
Iteration of the linear term gives the asserted bound, since $\rho(\mathbf M_{\mathfrak C})^r>\rho(\mathbf M_{\mathfrak C})$ for $r>1$.

Let $I_{n,s}$ count the size-$s$ clusters of $\Xi^n$ avoiding its terminals.
Such a cluster either lies inside one child cell and avoids its endpoints, or belongs to $\mathcal J_n$.
Consequently,
\[
 \mathbb E_p I_{n,s}=m\mathbb E_p I_{n-1,s}
 +\mathbb E_p\#\{C\in\mathcal J_n:|C|=s\}.
\]
Iterate this equality, divide by
$|V(\Xi^n)|=2+(|V(R)|-2)(m^n-1)/(m-1)$, and take the volume limit.
At most two additional clusters touch the outer terminals, so they do not affect the cluster density.
This proves~\eqref{eq:birth-cluster-series}; its nonnegative terms also allow summation against any nonnegative function of $s$.
At criticality the expected total mass touching the outer terminals is $O(\rho(\mathbf M_{\mathfrak C})^n)=o(m^n)$.
The same decomposition, now counting vertices, therefore gives $\sum_{s\ge1}s n_s(p_c)=1$, as required.
\end{proof}

The next lemma supplies a local estimate, rather than a cumulative tail estimate, for the mass exponent.

\begin{lemma}[Also see \cite{alves2026percolation}]\label{lem:conditional-mass-local-limit}
For each live state $\sigma$, there is a smooth probability density $w_\sigma$ on $\mathbb R$, vanishing on $(-\infty,0)$, such that
\begin{equation}\label{eq:conditional-mass-llt}
 \lim_{n\to\infty}\sup_{s\in\mathbb Z}
 \left|\rho(\mathbf M_{\mathfrak C})^n
 \mathbb P(X_{\sigma,n}=s)
 -w_\sigma\!\left(\frac{s}{\rho(\mathbf M_{\mathfrak C})^n}\right)\right|=0.
\end{equation}
The density $w_u$ is strictly positive on $(0,\infty)$.
For every fixed integer $r\ge1$,
\begin{equation}\label{eq:conditional-mass-point-bound}
 \mathbb P(X_{\sigma,n}=s)\le
 C_r\rho(\mathbf M_{\mathfrak C})^{-n}
 \left(1+\frac{s}{\rho(\mathbf M_{\mathfrak C})^n}\right)^{-r}
 \qquad(s\ge0).
\end{equation}
\end{lemma}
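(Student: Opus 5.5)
We prove the lemma by treating $X_{\sigma,n}$ as a multitype Galton--Watson-type process with rewards, governed by the recursion \eqref{eq:conditional-mass-recursion}. The plan has three stages: a martingale limit, a regularity estimate for its characteristic function, and Fourier inversion.

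\textbf{Step 1: martingale limit.} Let $\rho=\rho(\mathbf M_{\mathfrak C})$ and let $v$ be the strictly positive right Perron vector of $\mathbf M_{\mathfrak C}$ from Proposition~\ref{prop:live-matrix-primitive}. Iterating \eqref{eq:conditional-mass-recursion} writes $X_{\sigma,n}$ as a sum of bounded rewards $J$ over all live cells of depth $<n$. Let $Z_k$ be the vector of live-cell counts at depth $k$. Then $Z_k v/\rho^k$ is a nonnegative martingale. By the moment bounds of Lemma~\ref{lem:conditional-mass-moments} it is bounded in $L^r$ for every $r$, so the Kesten--Stigum theorem holds in its strongest form. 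It gives $\rho^{-n}X_{\sigma,n}\to W_\sigma$ almost surely and in every $L^r$, and $\mathbb E W_\sigma>0$. The limits satisfy the distributional fixed-point equations
\[
W_\sigma\overset{d}{=}\rho^{-1}\sum_{e\ \mathrm{live}}W^{(e)}_{\tau(e)},
\]
with the children's limits independent given the first-level configuration.

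\textbf{Step 2: regularity of the limit law.} This is the step I expect to be the main obstacle. Write $\phi_\sigma(t)=\mathbb E e^{itW_\sigma}$. The aim is polynomial decay $|\phi_\sigma(t)|\le C_r|t|^{-r}$ for every $r$, which makes $w_\sigma$ smooth. The fixed-point equations give
\[
|\phi_\sigma(\rho t)|\le\mathbb E\prod_{e\ \mathrm{live}}|\phi_{\tau(e)}(t)|.
\]
Every parent has at least $d_R\ge2$ live children, and the all-closed configuration has positive probability; for a state-$u$ parent it produces exactly $d_R$ children of state $u$. Moreover, every live cell has at least two children, so the extinction-type generating function vanishes at zero with order at least $2$. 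Combining these facts, the standard argument for Galton--Watson limits (Dubuc; Bingham--Doney) gives $\sup_{|t|\ge1}|\phi|<1$. Iterating the inequality then yields decay of order $|t|^{-a}$ with $a$ as large as required, because the minimal offspring number is at least $2$ and the relevant exponent grows under iteration. Positivity of $w_u$ on $(0,\infty)$ should follow from the same fixed-point equation: the support of a sum of independent copies with absolutely continuous laws whose supports contain neighbourhoods of $0^+$ (the all-closed branch) propagates to all of $(0,\infty)$. Smoothness of $w_\sigma$ and $w_\sigma=0$ on $(-\infty,0)$ follow because $W_\sigma\ge0$.

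\textbf{Step 3: local limit and point bound.} For the local statement \eqref{eq:conditional-mass-llt}, condition on the tree down to depth $n-k$ for a fixed large $k$. The contributions $X^{(e)}_{\tau,k}$ from the many depth-$(n-k)$ cells are independent integer variables. They are aperiodic because the single-edge refinement can add either $0$ or $1$ internal vertex with positive probability, which gives a uniform non-lattice bound on $|\mathbb E e^{i\theta X_{\tau,k}}|$ for $\theta$ away from $0$. Apply Fourier inversion on $[-\pi,\pi]$, splitting the range into three parts:
\begin{itemize}
\item $|\theta|\lesssim\rho^{-n}A$, where the characteristic function converges to $\phi_\sigma(\rho^n\theta)$ by Step 1;
\item intermediate $\theta$, controlled by the decay from Step 2 transferred to finite $n$ through the same recursive inequality;
\item $|\theta|$ of order one, where the product over the $\asymp\rho^{n-k}$ independent cells is exponentially small.
\end{itemize}
This gives uniform convergence. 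For \eqref{eq:conditional-mass-point-bound}, there are two regimes. For $s\le\rho^n$ the bound $\mathbb P(X=s)\le C\rho^{-n}$ follows from the same inversion estimate. For $s\ge\rho^n$, use a local Markov-type bound: since $\mathbb P(X=s)\le \mathbb P(X\ge s/2)$ alone is too weak, apply the inversion to $\mathbb E[X^r\mathbf 1]$. Equivalently, differentiate the characteristic function $r$ times, using the $L^r$ bounds of Lemma~\ref{lem:conditional-mass-moments}, and integrate by parts in $\theta$. This yields the factor $(s/\rho^n)^{-r}$.
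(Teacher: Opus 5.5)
Your overall architecture is the same as the paper's: the martingale limit $W_\sigma$ and its smoothing equation, the inequality $\max_\sigma|\psi_{\sigma,j+1}(\theta)|\le(\max_\sigma|\psi_{\sigma,j}(\theta)|)^{d_R}$ with $\psi_{\sigma,j}(\theta)=\mathbb E e^{i\theta X_{\sigma,j}}$, the resulting decay $q^{d_R^k}$ that beats every power of $\rho^k$ (writing $\rho=\rho(\mathbf M_{\mathfrak C})$), and Fourier inversion over three frequency ranges. The local limit \eqref{eq:conditional-mass-llt} goes through along these lines. One correction there: for state $c$ the mass is at least $\ell-1\ge1$, so the span-one property comes from a shortest open terminal path versus that path plus one adjacent vertex, not from masses $0,1$.

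\textbf{Gap in the pointwise bound.} Your argument for \eqref{eq:conditional-mass-point-bound} does not work. Integrating by parts $r$ times gives
\[
 \mathbb P(X_{\sigma,n}=s)\le\frac{1}{2\pi s^{r}}\int_{-\pi}^{\pi}\bigl|\psi_{\sigma,n}^{(r)}(\theta)\bigr|\,d\theta .
\]
Lemma~\ref{lem:conditional-mass-moments} only gives the pointwise bound $|\psi_{\sigma,n}^{(r)}|\le\mathbb E X_{\sigma,n}^{r}\le C_r\rho^{rn}$. So you obtain $C_r(\rho^n/s)^r$, which is just Markov's inequality, and the local factor $\rho^{-n}$ is lost. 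Combining with the sup bound $C\rho^{-n}$ does not recover it. At $s=n\rho^n$ the two bounds give $C\rho^{-n}$ and $C_{r'}n^{-r'}$, both far above the target $\rho^{-n}n^{-r}$.

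What you need is $\int_{-\pi}^{\pi}|\psi^{(r)}_{\sigma,n}|\le C_r\rho^{(r-1)n}$, i.e. decay of the differentiated characteristic function, and that requires a decoupling step. The paper does this directly:
\begin{itemize}
\item condition on the first generation and split the live children into two nonempty groups with independent masses;
\item each group has maximal point probability at most $C\rho^{-n}$;
\item if the total equals $s$, one group carries at least $(s-J_\sigma)/2$, so apply the $r$th-moment bound to that group and the sup bound to the other.
\end{itemize}
Your Fourier route can also be repaired. Condition on a fixed depth $j$ with $d_R^j>r$. Then every monomial in the expansion of $X_{\sigma,n}^r$ leaves some depth-$j$ cell undifferentiated, and its factor $|\psi_{\tau,n-j}(\theta)|$ supplies the missing $\rho^{-n}$ after integration.

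\textbf{Gap in the positivity of $w_u$.} Your argument that $w_u>0$ on $(0,\infty)$ also fails as stated, for three reasons.
\begin{itemize}
\item The all-closed branch gives $w_u(x)\ge c\rho\,w_u^{*d_R}(\rho x)$. Since $d_R<\rho$, this only carries positivity on $(0,\varepsilon)$ to the smaller interval $(0,d_R\varepsilon/\rho)$, so it shrinks rather than propagates.
\item Iterating that branch shows only that $0$ lies in the support of $W_u$, not that some interval $(0,\varepsilon)$ does.
\item Even full support of a continuous density gives positivity only on an open dense set, not everywhere.
\end{itemize}
The paper first proves $\operatorname{supp}W_u=[0,\infty)$ by a depth-$n$ construction. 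It grows an open spanning tree from the active terminal one vertex at a time. The conditional mean of $W_u$ then moves from order $(d_R/\rho)^n$ to order $(m/\rho)^n$ in steps $O((d_R/\rho)^n)$, while the conditional variance stays $O(\rho^{-n})$. Next it uses that an open dense subset of $(0,x)$ meets its reflection about $x/2$, which gives $w_u*w_u>0$ on $(0,\infty)$. Only then does it apply the all-closed inequality to conclude $w_u>0$.
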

\begin{proof}
\emph{The branching limit.}
The live cells form a finite-type Galton--Watson process with bounded offspring and primitive mean matrix $\mathbf M_{\mathfrak C}$; see~\cite{athreya1972branching}*{Chapter~V} for the standard multitype branching framework.
Weighting the generation-$n$ population by a positive right Perron vector and dividing by $\rho(\mathbf M_{\mathfrak C})^n$ gives a martingale.
The expectations of its successive conditional variances are bounded by $C\rho(\mathbf M_{\mathfrak C})^{-n}$, so it converges in $L^2$.
The usual vector version follows by decomposing the mean matrix into its Perron projection and its complementary invariant subspace: the latter has strictly smaller spectral radius, and the centred offspring increments have second moment $O(\rho(\mathbf M_{\mathfrak C})^n)$.
Thus the normalised type populations converge in $L^2$ to a scalar limit times a fixed positive left Perron vector.

Sum the rewards in~\eqref{eq:conditional-mass-recursion} by generation.
Their conditional means are linear functions of the type populations, while their centred sums are martingale differences whose variances are $O(\rho(\mathbf M_{\mathfrak C})^n)$ in generation $n$.
After division by $\rho(\mathbf M_{\mathfrak C})^n$, the centred total tends to zero in $L^2$.
For the conditional means, the last finitely many generations converge by the population limit, and the earlier generations are bounded by a geometric tail in $L^2$.
It follows that $\rho(\mathbf M_{\mathfrak C})^{-n}X_{\sigma,n}$ has an $L^2$ limit $W_\sigma$ of positive mean.
These limits satisfy
\begin{equation}\label{eq:mass-smoothing}
 W_\sigma\overset{\mathrm d}=
 \frac1{\rho(\mathbf M_{\mathfrak C})}
 \sum_{e\text{ live}}W_{\tau(e)}^{(e)},
\end{equation}
with conditional independence on the right.
The variable $W_u$ is nonconstant: otherwise the all-closed state-$u$ configuration would force $\rho(\mathbf M_{\mathfrak C})=d_R$.
Primitivity then makes every $W_\sigma$ nonconstant, since each type can produce a type-$u$ descendant with positive probability.

\emph{Fourier inversion.}
If the characteristic function of $W_u$ had modulus one at a nonzero frequency $t$, the all-closed configuration in~\eqref{eq:mass-smoothing} would force the same at $t/\rho(\mathbf M_{\mathfrak C})$, and then at arbitrarily small nonzero frequencies.
This contradicts its finite, strictly positive variance.
For the other types, a finite offspring configuration leading to type $u$ gives the same conclusion.
The integer variables $X_{\sigma,1}$ have span one.
For $u,b$, opening no edges or just one edge incident to a terminal gives masses $0,1$.
For $c$, open a shortest terminal path and then add one adjacent vertex outside that path.
Such a vertex exists: a shortest path containing every vertex would leave no possible extra edge in a simple graph without shortening the path, contrary to terminal edge-connectivity at least two.
These two configurations again have consecutive masses.

Put $\psi_{\sigma,n}(t):=\mathbb E\exp(itX_{\sigma,n})$.
Conditional independence and at least $d_R$ live children give
\[
 \max_\sigma|\psi_{\sigma,n+1}(t)|
 \le\bigl(\max_\sigma|\psi_{\sigma,n}(t)|\bigr)^{d_R}.
\]
The $L^2$ limit gives uniform convergence of the normalised characteristic functions on compact frequency intervals.
On a fixed annulus $a\le |t|\le a\rho(\mathbf M_{\mathfrak C})$, their moduli are therefore bounded by a constant strictly below one for all sufficiently large depths.
For a normalised frequency with $|t|\le\pi\rho(\mathbf M_{\mathfrak C})^n$, choose the intermediate depth at which it enters this annulus, and iterate the preceding inequality through the remaining generations.
If this intermediate depth is bounded, the span-one property bounds the unnormalised characteristic functions strictly below one on the remaining compact subset of $[-\pi,\pi]\setminus\{0\}$.
Combining these two frequency ranges gives a constant $q\in(0,1)$ such that, for all sufficiently large $n$ and every $1\le |t|\le\pi\rho(\mathbf M_{\mathfrak C})^n$, there is an integer $k\ge0$ with
\[
 \left|\psi_{\sigma,n}\!\left(\frac{t}{\rho(\mathbf M_{\mathfrak C})^n}\right)\right|
 \le q^{d_R^k},
 \qquad |t|\le\pi\rho(\mathbf M_{\mathfrak C})^{k+1}.
\]
For every integer $r\ge1$, the sequence
$q^{d_R^k}\rho(\mathbf M_{\mathfrak C})^{r(k+1)}$ is bounded, since $d_R\ge2$.
Together with the bound $|\psi_{\sigma,n}|\le1$, this yields
\[
 \left|\psi_{\sigma,n}\!\left(\frac{t}{\rho(\mathbf M_{\mathfrak C})^n}\right)\right|
 \le C_r(1+|t|)^{-r},
 \qquad |t|\le\pi\rho(\mathbf M_{\mathfrak C})^n.
\]
Increasing $C_r$ covers the finitely many smaller depths as well.
Choosing $r$ arbitrarily large gives an integrable bound after multiplication by any fixed power of $|t|$.
Fourier inversion and dominated convergence prove smoothness of the limiting densities and~\eqref{eq:conditional-mass-llt}, uniformly in the lattice point.
They also give $\sup_s\mathbb P(X_{\sigma,n}=s)\le C\rho(\mathbf M_{\mathfrak C})^{-n}$.

\emph{Positivity and the pointwise bound.}
We first show that every $x>0$ belongs to the support of $W_u$.
In $\Xi^n$, delete the inactive terminal; the remaining graph is connected by canonicality.
Starting at the active terminal, grow a connected vertex set along a spanning tree of this graph, one vertex at a time.
Open its tree edges and close all other edges.
Each resulting state-$u$ configuration has positive probability and specifies the live types at depth $n$ in~\eqref{eq:mass-smoothing}.
The conditional mean of $W_u$ initially has order $(d_R/\rho(\mathbf M_{\mathfrak C}))^n$ and finally has order $(m/\rho(\mathbf M_{\mathfrak C}))^n$.
The maximum degree of $\Xi^n$ is at most $C_Rd_R^n$; adding one vertex therefore changes that conditional mean by at most $C_R(d_R/\rho(\mathbf M_{\mathfrak C}))^n$.
At the first crossing of $x$, the mean tends to $x$.
There are then $O(\rho(\mathbf M_{\mathfrak C})^n)$ live descendants, because their mean masses are bounded below by a positive constant.
Conditional independence makes the variance at most $C\rho(\mathbf M_{\mathfrak C})^{-n}$.
Every neighbourhood of $x$ consequently has positive probability.

Since $w_u$ is continuous and has full support, its positivity set is open and dense in $(0,\infty)$.
For every $x>0$, this set and its reflection about $x/2$ intersect inside $(0,x)$, so $(w_u*w_u)(x)>0$.
All higher convolution powers are positive there as well.
The all-closed state-$u$ configuration in~\eqref{eq:mass-smoothing} implies
\[
 w_u(x)\ge c\rho(\mathbf M_{\mathfrak C})
 w_u^{*d_R}\bigl(\rho(\mathbf M_{\mathfrak C})x\bigr)>0.
\]
Finally, condition on one generation in~\eqref{eq:conditional-mass-recursion} and split its live children into two nonempty groups.
Their mass sums are independent, and each has maximum point probability at most $C\rho(\mathbf M_{\mathfrak C})^{-n}$.
If their sum plus the bounded reward equals $s$, one group has mass at least half of $s-J_\sigma$.
Apply the $r$th-moment bound of Lemma~\ref{lem:conditional-mass-moments} to that group and the maximum point-probability bound to the other.
For $s$ below a fixed multiple of $\rho(\mathbf M_{\mathfrak C})^n$, use the latter bound alone.
This proves~\eqref{eq:conditional-mass-point-bound}.
\end{proof}

\subsubsection*{Four exponents valid for every rule}
\begin{definition}\label{def:four-exponents}
The order-parameter, crossing-length, cluster-mass and averaged-connectivity exponents are defined, when the limits exist, by
\begin{align*}
 \beta&:=\lim_{p\downarrow p_c}\frac{\log\theta(p)}{\log(p-p_c)},&
 \nu&:=-\lim_{p\uparrow p_c}\frac{\log\xi(p)}{\log(p_c-p)},\\
 \frac1\delta&:=-1-\lim_{s\to\infty}
 \frac{\log\mathbb P_{p_c}^{\mathrm{ann}}(|C(o)|=s)}{\log s},&
 \eta_{\mathrm{av}}&:=2-\dimhaus(\Xi)
 -\lim_{n\to\infty}\frac{\log G_n}{n\log\ell}.
\end{align*}
\end{definition}
The normalisation of $\eta_{\mathrm{av}}$ corresponds to the conventional connectivity law
\[
 G_n=(\ell^n)^{2-\dimhaus(\Xi)-\eta_{\mathrm{av}}+o(1)}.
\]
Logarithmic definitions permit bounded multiplicative oscillations associated with discrete changes of scale; a limiting prefactor is not required.

\begin{theorem}[Also see \cite{alves2026percolation}]\label{thm:critical-exponents-dimensions}
For every classical EIGS hierarchical lattice, the exponents $\nu$ and $\beta$ in Definition~\ref{def:four-exponents} exist and satisfy
\[
 \nu=\frac1{\dim_{\mathrm P}(\mathfrak C)},
 \qquad
 \beta=\frac{\dimhaus(\Xi)-\dimmass(\mathfrak C)}{\dim_{\mathrm P}(\mathfrak C)}.
\]
\end{theorem}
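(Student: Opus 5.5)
Both formulas will come out of linearising the renormalisation map $\varphi_R$ at $p_c$, where the multiplier is $\lambda:=\varphi_R'(p_c)=\ell^{\dim_{\mathrm P}(\mathfrak C)}$ by Theorem~\ref{thm:pivotal-dimension}. Write $\varepsilon=|p-p_c|$ and let $N(\varepsilon)$ be the number of iterations until $\varphi_R^{\circ n}(p)$ leaves a fixed small neighbourhood of $p_c$. Since $\varphi_R$ is a polynomial with $\varphi_R'(p_c)>1$, a standard conjugation (Koenigs linearisation of the inverse branch near the repelling fixed point) gives $|\varphi_R^{\circ n}(p)-p_c|\asymp\lambda^n\varepsilon$ as long as this stays small. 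Hence $N(\varepsilon)=\log(1/\varepsilon)/\log\lambda+O(1)$, and $\ell^{N(\varepsilon)}=\varepsilon^{-1/\dim_{\mathrm P}(\mathfrak C)+o(1)}$.

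\textbf{The exponent $\nu$.} For $p<p_c$, after $N$ steps the effective probability lies at a fixed distance below $p_c$. Corollary~\ref{cor:global-renormalisation-dichotomy} then shows the remaining iterates tend to $0$. Near $0$ we have $\varphi_R(q)\asymp q^{\ell}$ (exactly $\ell$ edges on a shortest path, a union bound above and one path below), so $-\log\varphi_R^{\circ k}(q)\asymp \ell^k$ with constants depending continuously on the starting point in a compact interval below $p_c$. Writing $\ell^{-n}\log\varphi_R^{\circ n}(p)=\ell^{-N}\cdot\ell^{-(n-N)}\log\varphi_R^{\circ (n-N)}(q)$ with $q=\varphi_R^{\circ N}(p)$ gives $\xi(p)^{-1}\asymp \ell^{-N(\varepsilon)}$. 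So $\xi(p)=(p_c-p)^{-1/\dim_{\mathrm P}(\mathfrak C)+o(1)}$, which is $\nu=1/\dim_{\mathrm P}(\mathfrak C)$. Existence of the limit defining $\xi$ comes from $\log\varphi_R(q)=\ell\log q+O(1)$ as $q\to0$, which makes $\ell^{-k}\log\varphi_R^{\circ k}$ converge geometrically.

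\textbf{The exponent $\beta$.} I will use the volume-sampled decomposition of Lemma~\ref{lem:conditional-mass-moments}, replacing the critical value by $p>p_c$. By \eqref{eq:birth-cluster-series}, $1-\theta(p)=\sum_s s n_s(p)$ is the density of vertices in finite clusters. Equivalently, $\theta(p)=\lim_n|V(\Xi^n)|^{-1}\mathbb E_p|C_n(\text{terminals})|$, where the volume limit is taken over the hierarchy in which the terminal clusters of higher cells are the ones that become infinite. The plan is to decompose a uniformly sampled vertex by the depth-$N$ cell containing it.

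\emph{Upper bound.} A vertex in an infinite cluster must be connected to the boundary of its depth-$N(\varepsilon)$ cell. Up to depth $N$ the law stays within a bounded factor of critical, because the effective probabilities $\varphi_R^{\circ k}(p)$ remain in a neighbourhood of $p_c$ and the Radon--Nikodym factors accumulate geometrically, summing to $O(1)$. So the expected boundary mass of a depth-$N$ cell is $\asymp\rho(\mathbf M_{\mathfrak C})^N$, by Lemma~\ref{lem:conditional-mass-moments}, out of $\asymp m^N$ vertices.

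\emph{Lower bound.} Above depth $N$ the effective probability is bounded away from $p_c$ and tends to $1$. Hence a positive fraction of depth-$N$ boundary clusters join the infinite cluster, by Corollary~\ref{cor:global-renormalisation-dichotomy} and a positive-probability argument for terminals connecting through higher levels.

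Combining the two bounds, $\theta(p)\asymp(\rho(\mathbf M_{\mathfrak C})/m)^{N(\varepsilon)}=\varepsilon^{(\dimhaus(\Xi)-\dimmass(\mathfrak C))/\dim_{\mathrm P}(\mathfrak C)+o(1)}$, using $m=\ell^{\dimhaus(\Xi)}$ and Theorem~\ref{thm:critical-mass-dimension}.

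\textbf{Main obstacle.} The hardest step is the comparison of the off-critical multitype mass process with the critical one up to depth $N(\varepsilon)$. The live matrix becomes level dependent, $\mathbf M_{\mathfrak C}(\varphi_R^{\circ k}(p))$. I would handle it by continuity of $\mathbf M_{\mathfrak C}(\cdot)$ and primitivity from Proposition~\ref{prop:live-matrix-primitive}, which give $\prod_{k<N}\mathbf M_{\mathfrak C}(\varphi_R^{\circ(N-k)}(p))\asymp\rho(\mathbf M_{\mathfrak C})^N$. This works because $\sum_k|\varphi_R^{\circ k}(p)-p_c|\lesssim\sum_k\lambda^{k-N}=O(1)$, so the perturbation of the product is a bounded multiplicative factor. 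A second, more delicate point is the interchange of the volume limit with the conditioning on joining the infinite cluster; there I would again use the nonnegative series \eqref{eq:birth-cluster-series} to justify monotone summation.
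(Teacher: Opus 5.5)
Your treatment of $\nu$ follows the paper. It uses the escape time $N(p)=-\log|p-p_c|/\log\varphi_R'(p_c)+O(1)$; the paper gets this from Taylor's formula with summable distortion rather than from Koenigs linearisation. It then uses the telescoped limit defining $\xi$, the relation $\xi(\varphi_R(p))=\xi(p)/\ell$, and compactness of the exit interval.

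For $\beta$ your route is genuinely different. The paper does not prove the order-parameter asymptotics itself. It imports the annealed edge-rooted theorem of Alves, Baldasso, Moreira and Teixeira, and identifies their boundary growth factor with $\rho(\mathbf M_{\mathfrak C})$ through $(p_c,1-p_c,0)\mathbf M_{\mathfrak C}^{n}(1,1,1)^{\mathsf T}$. It then passes from the edge-rooted $\Theta_E$ to the vertex-rooted $\theta$ by Harris' inequality and degree bounds. You stay inside the paper's own framework instead. The nonnegative birth series gives $\theta(p)=\lim_n|V(\Xi^n)|^{-1}\mathbb E_p|C_n(\text{terminals})|$, which is the same decomposition the paper uses to show $\theta(p_c)=0$. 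Coarse-graining at depth $N(p)$ then yields $\theta(p)\asymp(\rho(\mathbf M_{\mathfrak C})/m)^{N(p)}$ directly for the vertex-sampled $\theta$. This step needs uniform level-dependent mass bounds, which the paper develops only later, for Proposition~\ref{prop:annealed-moments}. Your version is self-contained and needs no edge/vertex transfer or identification of an external growth constant. The paper's version is shorter but rests on the cited theorem.

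Two steps should be stated more carefully when you write this out.

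First, the law of a depth-$N$ cell at $p$ is not within a bounded factor of the critical law. Its likelihood ratio involves $m^N$ edges, and $m^N|p-p_c|\to\infty$ because $m>\varphi_R'(p_c)$. What stays bounded is an entrywise comparison along the orbit: $e^{-C|q-p_c|}\mathbf M_{\mathfrak C}\le\mathbf M_{\mathfrak C}(q)\le e^{C|q-p_c|}\mathbf M_{\mathfrak C}$, using that the zero pattern is fixed on $(0,1)$. Combined with $\sum_k|\varphi_R^{\circ k}(p)-p_c|=O(1)$, this bounds the level-dependent mean-matrix products, and that is what your final paragraph actually uses.

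Second, in the lower bound, $\varphi_R^{\circ j}(q)\to1$ from Corollary~\ref{cor:global-renormalisation-dichotomy} does not by itself give a positive coarse density. Apply Harris' inequality to the events that all children of each ancestor cell of a coarse vertex cross. This gives probability at least $\prod_{j\ge0}\varphi_R^{\circ j}(q)^m$ that the vertex joins the terminal cluster. The product is positive uniformly on the exit interval, because $1-\varphi_R(q)=O((1-q)^2)$ makes $\sum_j(1-\varphi_R^{\circ j}(q))$ finite. Also state explicitly that, given the coarse configuration, a cell's interior depends only on its own crossing indicator. Hence the one-terminal mass attached at an endpoint in the coarse cluster has conditional mean $\gtrsim\rho(\mathbf M_{\mathfrak C})^{N}$ whether or not that cell crosses.
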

\begin{proof}
Fix a sufficiently small $\varepsilon>0$ and let $N(p)$ be the first iteration at which $\varphi_R^{\circ N(p)}(p)$ leaves $(p_c-\varepsilon,p_c+\varepsilon)$.
Since $\varphi_R'(p_c)>1$, Taylor's formula on this interval gives
\[
 \log\frac{|\varphi_R(q)-p_c|}{|q-p_c|}
 =\log\varphi_R'(p_c)+O(|q-p_c|).
\]
The deviations from $p_c$ grow at least geometrically until exit, so their sum is bounded uniformly in $p$.
Summing the displayed identity yields
\begin{equation}\label{eq:critical-escape-time}
 N(p)=\frac{-\log|p-p_c|}{\log\varphi_R'(p_c)}+O(1).
\end{equation}

The shortest terminal paths have length $\ell$, so $\varphi_R(q)/q^\ell$ extends to a strictly positive continuous function on $[0,p_c]$.
Telescoping gives
\[
 -\ell^{-n}\log\varphi_R^{\circ n}(p)
 =-\log p-\sum_{j=0}^{n-1}\ell^{-j-1}
 \log\frac{\varphi_R(\varphi_R^{\circ j}(p))}
 {(\varphi_R^{\circ j}(p))^\ell}.
\]
The series converges uniformly on compact subsets of $(0,p_c)$, with a uniformly bounded correction to $-\log p$.
Its limit is positive for small $p$ and satisfies $\xi(\varphi_R(p))=\xi(p)/\ell$.
Every subcritical orbit eventually reaches the small-$p$ region, so the limit is positive throughout $(0,p_c)$.
The exit points $\varphi_R^{\circ N(p)}(p)$ lie in a fixed compact subinterval of $(0,p_c)$.
It follows that $\xi(p)\asymp\ell^{N(p)}$, proving the formula for $\nu$.

For $\beta$, we use the annealed edge-rooted order-parameter theorem of Alves, Baldasso, Moreira and Teixeira~\cite{alves2026percolation}*{Theorems~7.1 and~8.1}.
Their critical boundary-mass growth factor is $\rho(\mathbf M_{\mathfrak C})$: the expected number of edges incident to the terminal clusters is
$(p_c,1-p_c,0)\mathbf M_{\mathfrak C}^{n}(1,1,1)^{\mathsf T}$,
and the Perron--Frobenius theorem identifies its exponential rate.
Their edge-rooted infinite-cluster probability thus satisfies
\[
 \Theta_E(p)\asymp
 (p-p_c)^{\log(m/\rho(\mathbf M_{\mathfrak C}))/\log\varphi_R'(p_c)}.
\]
We verify that uniform vertex sampling has the same exponent.
Let $D_R$ be the largest degree of an internal vertex of $R$.
Every final-level edge of $\Xi^n$ has a newly introduced endpoint of degree at most $D_R$.
For a connected open cluster with $s$ vertices, its number of open edges is between $s-1$ and $D_Rs$.
For each edge, the event that an endpoint belongs to a cluster of size at least $s$ is increasing.
Harris' inequality~\cite{grimmett1989percolation} compares its probability with the probability that the edge is also open by factors between $p$ and $1$.
Sum over edges and vertices, take the volume limit first and then let $s$ tend to infinity.
Since $|V(\Xi^n)|/|E(\Xi^n)|$ tends to $(|V(R)|-2)/(m-1)$, this gives
\[
 \frac{|V(R)|-2}{m-1}\theta(p)
 \le\Theta_E(p)
 \le\frac{D_R(|V(R)|-2)}{p(m-1)}\theta(p).
\]
The comparison constants stay bounded above and away from zero near $p_c$.
Thus $\theta$ has the same logarithmic exponent as $\Theta_E$, which is the stated value of $\beta$.
\end{proof}

\begin{theorem}\label{thm:delta-eta-dimensions}
For every classical EIGS hierarchical lattice, the exponents $\delta$ and $\eta_{\mathrm{av}}$ in Definition~\ref{def:four-exponents} exist and satisfy
\[
 \delta=\frac{\dimmass(\mathfrak C)}{\dimhaus(\Xi)-\dimmass(\mathfrak C)},
 \qquad
 \eta_{\mathrm{av}}=2+\dimhaus(\Xi)-2\dimmass(\mathfrak C).
\]
\end{theorem}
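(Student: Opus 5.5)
The two formulas come from two separate arguments. Both rest on the cluster-birth decomposition~\eqref{eq:birth-cluster-series} and on the boundary-mass estimates of Lemmas~\ref{lem:conditional-mass-moments} and~\ref{lem:conditional-mass-local-limit}. Throughout I write $\rho:=\rho(\mathbf M_{\mathfrak C})$, so that $\dimhaus(\Xi)=\log m/\log\ell$ and $\dimmass(\mathfrak C)=\log\rho/\log\ell$.

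\textbf{The exponent $\delta$.} I start from $\mathbb P^{\mathrm{ann}}_{p_c}(|C(o)|=s)=s\,n_s(p_c)$ and~\eqref{eq:birth-cluster-series}. Fix a level $n$ and a cluster $C\in\mathcal J_n$. I condition on the configuration of the top-level copy of $R$, meaning the crossing indicators of its $m$ child cells. Then $|C|$ is a bounded number of internal vertices of $R$ plus a sum of boundary masses $X^{(e)}_{\tau(e),n-1}$ over the live child cells attached to $C$, and these summands are conditionally independent. There are finitely many top-level configurations, and each top level contains at most $|V(R)|-2$ born clusters.
\begin{itemize}
\item \emph{Upper bound.} Apply~\eqref{eq:conditional-mass-point-bound} to one summand and sum out the others by conditioning. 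This gives $\mathbb E\#\{C\in\mathcal J_n:|C|=s\}\le C_r\rho^{-n}(1+s\rho^{-n})^{-r}$.
\item \emph{Lower bound.} Use the configuration in which one internal vertex of $R$ has all its incident edges closed at the top level, so that it carries only state-$u$ children. Then~\eqref{eq:conditional-mass-llt} and the strict positivity of $w_u$ (and of its convolution powers) give a matching bound $\ge c\rho^{-n}$ whenever $s\asymp\rho^n$.
\end{itemize}
Now sum $s\,m^{-n}\rho^{-n}(1+s/\rho^n)^{-r}$ over $n$. For $\rho^n\le s$ the terms decay geometrically in $n$ once $r>\log m/\log\rho$. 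For $\rho^n\ge s$ the factor $m^{-n}\rho^{-n}$ is itself geometric. So the sum is comparable to its terms with $\rho^n\asymp s$, which gives
\[
\mathbb P^{\mathrm{ann}}_{p_c}(|C(o)|=s)=s^{-\log m/\log\rho+o(1)}.
\]
Hence $1/\delta=-1+\dimhaus(\Xi)/\dimmass(\mathfrak C)$, which is the claimed formula.

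\textbf{The exponent $\eta_{\mathrm{av}}$.} First I show that $|\mathcal P_n|\asymp m^{2n}$. For the upper bound this is trivial since $|V(\Xi^n)|\asymp m^n$. For the lower bound, take two first-level cells at ambient distance $\asymp\ell$ whose subcells realise pair distances in $[\ell^n/4,\ell^n/2]$; the metric is self-similar with ratio $\ell^{-1}$ by the proof of Theorem~\ref{thm:eigs-graph-directed}. This reduces the claim to showing $\sum_{(x,y)\in\mathcal P_n}\mathbb P_{p_c}(x\leftrightarrow y)\asymp\rho^{2n}$.
\begin{itemize}
\item \emph{Upper bound.} Choose a fixed depth $k_0$ with $\ell^{n-k_0}\operatorname{diam}(R)<\ell^n/4$. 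Any connected pair in $\mathcal P_n$ lies in distinct depth-$k_0$ cells, so their common cluster contains one of the finitely many vertices $v\in V(\Xi^{k_0})$. The sum is therefore at most $\sum_v\mathbb E|C(v)\cap V(\Xi^n)|^2$. Each $C(v)$ is a union of boundary clusters of the $O(1)$ depth-$k_0$ cells incident to $v$. Each of these is a mixture over states $c,b,u$ of the variables $X_{\sigma,n-k_0}$, and the $r=2$ moment bound of Lemma~\ref{lem:conditional-mass-moments} gives $O(\rho^{2n})$.
\item \emph{Lower bound.} With probability bounded below the terminals are connected, which has probability $p_c$. In that event, two first-level cells at the separated positions chosen above each carry an active boundary mass $\ge c\rho^n$ in subcells whose vertices are pairwise at distance in $[\ell^n/4,\ell^n/2]$. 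Paley--Zygmund, via the first and second moments $\mathbb E X_{\sigma,n}\asymp\rho^n$ and $\mathbb E X_{\sigma,n}^2\le C\rho^{2n}$, gives this with positive probability. That produces $\gtrsim\rho^{2n}$ connected pairs in $\mathcal P_n$.
\end{itemize}
Then $G_n\asymp(\rho/m)^{2n}$, so $\lim_n\log G_n/(n\log\ell)=2\dimmass(\mathfrak C)-2\dimhaus(\Xi)$. Substituting into Definition~\ref{def:four-exponents} gives $\eta_{\mathrm{av}}=2+\dimhaus(\Xi)-2\dimmass(\mathfrak C)$.

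\textbf{Main obstacle.} I expect the lower bound for $\eta_{\mathrm{av}}$ to be the hardest step. It needs a concrete geometric choice of two subcells that are connected through the active cluster and whose vertices are all at distance in $[\ell^n/4,\ell^n/2]$. It also needs the mass in each subcell to be large at the same time, which conditional independence given the top-level states should provide. If such subcells are awkward to exhibit for a given $R$, I would first prove the result for an arbitrary window $[a\ell^n,b\ell^n]$, which the text after Definition~\ref{def:four-exponents} allows, and only then fix $a,b$.
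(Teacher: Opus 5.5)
Your proposal is correct and follows the paper's proof essentially step for step. For $\delta$ you both use the birth series~\eqref{eq:birth-cluster-series}, the pointwise bound~\eqref{eq:conditional-mass-point-bound}, and the positive local-limit density of state-$u$ masses. For $\eta_{\mathrm{av}}$ you both use fixed-depth coarse cells, a second-moment upper bound, and two well-separated coarse cells for the lower bound on $G_n$. The paper's lower bound is simpler than yours: it requires all $m^k$ coarse cells to cross, which has probability $p_c^{m^k}$. The two chosen state-$c$ masses are then independent, the expected number of cross pairs is a product of first moments, and Paley--Zygmund is not needed.

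A few small slips need fixing.
\begin{enumerate}
\item In the summation for $\delta$, the terms with $\rho^n\le s$ are dominated by the top one only when $r>1+\log m/\log\rho$, not $r>\log m/\log\rho$.
\item When you average the pointwise bound over the remaining summands, you must control the event that they carry at least $s/2$. Use Markov's inequality with the $r$th-moment bound of Lemma~\ref{lem:conditional-mass-moments}.
\item In the $\eta_{\mathrm{av}}$ upper bound, $C(v)$ can spread through crossing coarse cells, so it is not confined to the cells incident to $v$. Bound it by the boundary masses of all $m^{k_0}$ coarse cells. This is still a fixed number of cells, as in the paper, so your $O(\rho^{2n})$ estimate survives.
\end{enumerate}
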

\begin{proof}
Condition on the top-level configuration of a cluster in $\mathcal J_n$.
Its mass is a bounded integer plus the sum of a bounded number of independent conditional child masses.
There are at least two live children, because every internal vertex of a canonical rule has degree at least two.
The proof of~\eqref{eq:conditional-mass-point-bound} therefore applies to each such cluster.
Since $|\mathcal J_n|\le |V(R)|-2$, equation~\eqref{eq:birth-cluster-series} gives, for any fixed integer $r>1+\log m/\log\rho(\mathbf M_{\mathfrak C})$,
\[
 n_s(p_c)\le C_r\sum_{n\ge1}
 m^{-n}\rho(\mathbf M_{\mathfrak C})^{-n}
 \left(1+\frac{s}{\rho(\mathbf M_{\mathfrak C})^n}\right)^{-r}.
\]
Split this geometric sum where $\rho(\mathbf M_{\mathfrak C})^n$ first exceeds $s$.
This gives
\[
 n_s(p_c)\le Cs^{-1-\log m/\log\rho(\mathbf M_{\mathfrak C})}.
\]
For the reverse bound, close every top-level coarse edge and fix an internal vertex of $R$.
Its cluster has mass one plus the sum of $\deg_R(v)$ independent state-$u$ child masses.
This coarse event has probability independent of $n$ and strictly positive.
Lemma~\ref{lem:conditional-mass-local-limit} and convolution give a local limit density strictly positive on $(0,\infty)$.
Choose $n$ so that $s/\rho(\mathbf M_{\mathfrak C})^{n-1}$ lies in $[1,\rho(\mathbf M_{\mathfrak C})]$.
The corresponding term in~\eqref{eq:birth-cluster-series} is at least $cm^{-n}\rho(\mathbf M_{\mathfrak C})^{-n}$, uniformly for large $s$.
Consequently,
\begin{equation}\label{eq:annealed-mass-point-law}
 \mathbb P_{p_c}^{\mathrm{ann}}(|C(o)|=s)
 =s n_s(p_c)\asymp s^{-\log m/\log\rho(\mathbf M_{\mathfrak C})}.
\end{equation}
Taking logarithms proves the formula for $\delta$.

We next estimate $G_n$, allowing any fixed distance window $[a\ell^n,b\ell^n]$ with $0<a<b<1$.
The deterministic cell has diameter at most $C_R\ell^n$.
Indeed, its maximum distance to either specified terminal is at most that at depth $n-1$ plus $\operatorname{diam}(R)\ell^{n-1}$; summing this recurrence proves the claim.
Distances between coarse vertices are multiplied by $\ell$ at each substitution.
A cell is also isometrically embedded: any excursion outside it between its two terminals is no shorter than an internal terminal geodesic and can be replaced by one.

Choose a fixed $k$ sufficiently large that depth-$(n-k)$ cells have diameter less than $a\ell^n$.
A connected pair in the distance window must lie in different such cells, and each vertex must connect inside its own cell to a boundary vertex.
Coarse vertices themselves may be assigned to any incident cell.
The number of connected pairs is therefore at most the square of the sum of the boundary-cluster masses of these $m^k$ cells, including their endpoints.
Lemma~\ref{lem:conditional-mass-moments} and Cauchy--Schwarz bound its expectation by $C_k\rho(\mathbf M_{\mathfrak C})^{2n}$.

For a lower bound, increase $k$ if necessary and choose two coarse edges near two points of a terminal geodesic separated by $(a+b)\ell^n/2$.
The diameter bound ensures that every pair of internal vertices in their depth-$(n-k)$ cells lies in the prescribed distance window.
This also proves $|\mathcal P_n|\asymp m^{2n}$.
Require all $m^k$ coarse cells to connect; this has probability $p_c^{m^k}>0$.
The two chosen conditional connected clusters are independent, are joined through the coarse graph, and each has expected internal mass at least $c\rho(\mathbf M_{\mathfrak C})^{n-k}$.
Their cross pairs give the reverse numerator bound.
Thus
\[
 G_n\asymp\left(\frac{\rho(\mathbf M_{\mathfrak C})}{m}\right)^{2n},
\]
which gives the stated $\eta_{\mathrm{av}}$ and proves independence of the fixed window.
\end{proof}

\begin{example}\label{ex:dhl-four-exponents}
For the diamond data in Example~\ref{ex:dhl-dimensions}, the formulas give
\[
 \beta=\frac{\log(4/\rho(\mathbf M_{\mathfrak C}))}{\log(6-2\sqrt5)}\approx0.1647,
 \qquad
 \nu=\frac{\log2}{\log(6-2\sqrt5)}\approx1.6353,
\]
\[
 \delta=\frac{\log\rho(\mathbf M_{\mathfrak C})}{\log(4/\rho(\mathbf M_{\mathfrak C}))}\approx18.8585,
 \qquad
 \eta_{\mathrm{av}}=4-\frac{2\log\rho(\mathbf M_{\mathfrak C})}{\log2}\approx0.2014.
\]
\end{example}

\subsubsection*{Universality classes}
The preceding discussion gives the four exponents $\beta,\nu,\delta,\eta_{\mathrm{av}}$ a common interpretation throughout the model class, under the observation conventions specified above.
The other four, treated in Appendix~\ref{app:other-critical-exponents}, require additional qualifications concerning existence, the side of the transition or the observable being measured.
We therefore use the former four to define the universality class; this choice does not assert that all eight existence conditions are determined by three dimensions.

\begin{definition}\label{def:general-class}
Two classical EIGS hierarchical lattices belong to the same \emph{critical-exponent universality class} if their four exponents $(\beta,\nu,\delta,\eta_{\mathrm{av}})$ agree.
\end{definition}

\begin{theorem}\label{thm:exponent-class-dimensions}\label{def:static-class}
Two classical EIGS hierarchical lattices belong to the same critical-exponent universality class if and only if their ambient, critical mass and pivotal dimensions agree.
\end{theorem}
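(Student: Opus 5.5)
The proof is an explicit change of coordinates between the exponent vector $(\beta,\nu,\delta,\eta_{\mathrm{av}})$ and the dimension vector $(D_H,D_M,D_P):=(\dimhaus(\Xi),\dimmass(\mathfrak C),\dim_{\mathrm P}(\mathfrak C))$. By Theorems~\ref{thm:critical-exponents-dimensions} and~\ref{thm:delta-eta-dimensions}, all four exponents exist for every classical EIGS hierarchical lattice and are given by
\[
 \nu=\frac1{D_P},\qquad \beta=\frac{D_H-D_M}{D_P},\qquad
 \delta=\frac{D_M}{D_H-D_M},\qquad \eta_{\mathrm{av}}=2+D_H-2D_M .
\]
The \emph{if} direction is then immediate: equal triples $(D_H,D_M,D_P)$ give equal right-hand sides, hence equal exponents.

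For the \emph{only if} direction I would show that this map is injective on the admissible range of dimensions. To do so, I first record that the denominators are nonzero. By Theorem~\ref{thm:pivotal-dimension}, $D_P>0$. By Lemma~\ref{lem:conditional-mass-moments}, $1<d_R<\rho(\mathbf M_{\mathfrak C})<m$, so $0<D_M<D_H$ and in particular $D_H-D_M>0$.

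I would then invert the formulas in order. First, $\nu$ determines $D_P=1/\nu$. Next, $\beta\nu=D_H-D_M$ determines the difference $\Delta:=D_H-D_M>0$. Then $\delta$ gives $D_M=\delta\,\Delta$, and $D_H=D_M+\Delta$ follows. Thus $(\beta,\nu,\delta)$ alone already recovers the triple, and $\eta_{\mathrm{av}}$ is a redundant consistency check, since $\eta_{\mathrm{av}}=2+\Delta-D_M$. Consequently two lattices with equal exponent quadruples have equal $D_P$, $\Delta$ and $D_M$, and hence equal $D_H$.

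There is no real obstacle here. The only point needing care is the nondegeneracy $D_H\neq D_M$, which makes $\delta$ finite and the inversion legitimate, and that is supplied by the strict inequality $\rho(\mathbf M_{\mathfrak C})<m$ of Lemma~\ref{lem:conditional-mass-moments}. Also note that $\ell$ is not assumed equal between the two lattices; the argument compares dimensions directly, so this does not matter.
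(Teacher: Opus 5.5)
Your argument is essentially the paper's proof: the forward direction is read off from Theorems~\ref{thm:critical-exponents-dimensions} and~\ref{thm:delta-eta-dimensions}, and the converse inverts the formulas, which the paper writes as $\dim_{\mathrm P}(\mathfrak C)=1/\nu$, $\dimmass(\mathfrak C)=\beta\delta/\nu$, $\dimhaus(\Xi)=\beta(\delta+1)/\nu$, likewise using only $\beta,\nu,\delta$. There is one algebraic slip: since $\nu=1/D_P$, the difference is $D_H-D_M=\beta/\nu$, not $\beta\nu$ (which equals $(D_H-D_M)/D_P^2$), but this does not affect your injectivity argument, because $\beta/\nu$ is still determined by the exponents.
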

\begin{proof}
The forward formulas are given by Theorems~\ref{thm:critical-exponents-dimensions} and~\ref{thm:delta-eta-dimensions}.
Conversely,
\[
 \dim_{\mathrm P}(\mathfrak C)=\frac1\nu,\qquad
 \dimmass(\mathfrak C)=\frac{\beta\delta}{\nu},\qquad
 \dimhaus(\Xi)=\frac{\beta(\delta+1)}{\nu}.
\]
Thus equality of the four exponents recovers all three dimensions.
\end{proof}

\subsection{The noncommutative obstruction}\label{sec:noncommuting}\label{sec:forward}

This subsection contains one of the central arguments of the paper: following the universality question of Hauser and Saxena, we ask whether substitution-multiplicative dimensions suffice to determine the critical-exponent universality class.

To answer this question, we return to the Tie and Gem rules from the introduction and examine why their distinct graphs belong to the same universality class.
Their agreement comes from reversing two edge substitutions, as Figure~\ref{fig:terminal-symmetric-rules-detail} shows.
The question is how much freedom we have to rearrange substitutions without changing the class.

\begin{example}\label{ex:tie-gem-similarity}
Let $\twoedgepath$ be the two-edge path and let $\triangle$ be a triangle with two designated terminals.
Write $A\circ B$ for the rule obtained by replacing every edge of $B$ by a copy of $A$; compositions are read from right to left.
The tie consists of two triangles in series, while the gem consists of two parallel paths of lengths two and four:
\[
 R_{\mathrm{tie}}=\triangle\circ \twoedgepath,\qquad
 R_{\mathrm{gem}}=\twoedgepath\circ\triangle.
\]
In the figure, $S_1$ replaces an edge by $\twoedgepath$ and $S_2$ replaces an edge by $\triangle$.

\begin{figure}[tbp]
\centering
\resizebox{0.96\textwidth}{!}{%
\begin{tikzpicture}[
    x=1cm,y=1cm,
    vertex/.style={circle,fill=black,inner sep=1.55pt},
    terminal/.style={circle,fill=black,inner sep=2.3pt},
    factorvertex/.style={circle,fill=red!45!black,inner sep=1.45pt},
    factorline/.style={draw=red!45!black,line width=0.75pt,line cap=round},
    line/.style={line width=0.7pt,line cap=round},
    repl/.style={-{Stealth[length=2.2mm,width=1.8mm]},line width=0.7pt},
    ilabel/.style={font=\scriptsize,inner sep=1pt},
    rulelabel/.style={font=\normalsize}
]
\begin{scope}[xshift=0cm,yshift=0cm]
  \node[rulelabel,anchor=east] at (-0.10,0) {Tie:};
  \draw[line] (0.20,0) -- (0.92,0);
  \node[vertex] at (0.20,0) {};
  \node[vertex] at (0.92,0) {};
  \draw[repl] (1.24,0) -- (1.88,0);

  \coordinate (bp) at (2.45,0);
  \coordinate (mid) at (3.70,0);
  \coordinate (bm) at (4.95,0);
  \coordinate (up) at (3.08,0.72);
  \coordinate (down) at (4.32,-0.72);

  \draw[line] (bp)--(up)--(mid)--(bp);
  \draw[line] (mid)--(down)--(bm)--(mid);

  \foreach \p in {bp,mid,bm,up,down}
    \node[vertex] at (\p) {};

  \node[ilabel,anchor=north east] at (bp) {$\beta^+$};
  \node[ilabel,anchor=north west] at (bm) {$\beta^-$};
  \node[terminal] at (bp) {};
  \node[terminal] at (bm) {};

  \draw[factorline] (2.25,-2.15)--(2.70,-1.55)--(3.15,-2.15)--cycle;
  \foreach \x/\y in {2.25/-2.15,2.70/-1.55,3.15/-2.15}
    \node[factorvertex] at (\x,\y) {};
  \node[text=red!45!black] at (2.70,-2.48) {$S_2$};
  \node[text=red!45!black] at (3.65,-1.95) {$\circ$};
  \draw[factorline] (4.10,-1.95)--(5.30,-1.95);
  \foreach \x in {4.10,4.70,5.30}
    \node[factorvertex] at (\x,-1.95) {};
  \node[text=red!45!black] at (4.70,-2.48) {$S_1$};
  \draw[repl,draw=red!45!black]
    (5.58,-1.78) .. controls (5.72,-1.25) and (5.42,-0.87) .. (5.08,-0.67);

\end{scope}

\begin{scope}[xshift=7.00cm,yshift=0cm]
  \node[rulelabel,anchor=east] at (-0.10,0) {Gem:};
  \draw[line] (0.20,0) -- (0.92,0);
  \node[vertex] at (0.20,0) {};
  \node[vertex] at (0.92,0) {};
  \draw[repl] (1.24,0) -- (1.88,0);

  \coordinate (bp) at (2.35,0);
  \coordinate (bm) at (5.05,0);
  \coordinate (top) at (3.70,0.82);
  \coordinate (b1) at (3.02,-0.72);
  \coordinate (b2) at (3.70,-0.72);
  \coordinate (b3) at (4.38,-0.72);

  \draw[line] (bp)--(top)--(bm);
  \draw[line] (bp)--(b1)--(b2)--(b3)--(bm);

  \foreach \p in {bp,bm,top,b1,b2,b3}
    \node[vertex] at (\p) {};

  \node[ilabel,anchor=north east] at (bp) {$\beta^+$};
  \node[ilabel,anchor=north west] at (bm) {$\beta^-$};
  \node[terminal] at (bp) {};
  \node[terminal] at (bm) {};

  \draw[factorline] (2.25,-1.95)--(3.45,-1.95);
  \foreach \x in {2.25,2.85,3.45}
    \node[factorvertex] at (\x,-1.95) {};
  \node[text=red!45!black] at (2.85,-2.48) {$S_1$};
  \node[text=red!45!black] at (3.95,-1.95) {$\circ$};
  \draw[factorline] (4.45,-2.15)--(4.90,-1.55)--(5.35,-2.15)--cycle;
  \foreach \x/\y in {4.45/-2.15,4.90/-1.55,5.35/-2.15}
    \node[factorvertex] at (\x,\y) {};
  \node[text=red!45!black] at (4.90,-2.48) {$S_2$};
  \draw[repl,draw=red!45!black]
    (5.66,-1.60) .. controls (5.80,-1.18) and (5.55,-0.87) .. (5.23,-0.66);

\end{scope}
\end{tikzpicture}}

\caption{The substitution orders $S_2\circ S_1$ (tie) and $S_1\circ S_2$ (gem).}
\label{fig:terminal-symmetric-rules-detail}
\end{figure}

Both rules have six edges and terminal distance two.
Their unit-edge effective resistances also agree, with value $4/3$.
Their reliability functions are
\[
 \varphi_{\mathrm{tie}}(p)=(p+p^2-p^3)^2,\qquad
 \varphi_{\mathrm{gem}}(p)=p^2+p^4-p^6.
\]
The critical points differ, but satisfy
\[
 p_{c,\mathrm{tie}}=p_{c,\mathrm{gem}}^2,\qquad
 \varphi_{\triangle}(p_{c,\mathrm{tie}})=p_{c,\mathrm{gem}}.
\]
Table~\ref{tab:intro-tie-gem} records the resulting dimension values.
Their exact agreement is the two-factor case of Lemma~\ref{lem:transposition-invariance} below.

\end{example}

For terminal-symmetric two-terminal factors, we use $\mathbf M_A$ for the conditional mass matrix associated with $A$.
Factors may individually have terminal distance or minimum cut one, provided their interior conditional laws are defined; this includes $\twoedgepath$ and $\triangle$.
For these factors,
\begin{align}
 m_{A\circ B}&=m_A m_B,&\ell_{A\circ B}&=\ell_A\ell_B,\notag\\
 \varphi_{A\circ B}(p)&=\varphi_B(\varphi_A(p)),&
 \mathbf M_{A\circ B}(p)&=\mathbf M_B(\varphi_A(p))\mathbf M_A(p).\label{eq:substitution-responses}
\end{align}
Each coarse edge contributes $m_A$ fine edges.
A terminal path projects to a coarse terminal walk, and every traversal of a fine cell between its terminals requires at least $\ell_A$ edges.
A shortest coarse path refined by shortest fine paths attains the lower bound $\ell_A\ell_B$.
The fine-cell crossing events are independent with probability $\varphi_A(p)$, giving the reliability composition.
Conditioning on these events and the coarse live states gives the conditional laws of the fine cells; a black cell contributes no live descendants.
The tower property for the expected numbers of child states then gives the displayed matrix product.

\begin{lemma}\label{lem:transposition-invariance}
Let $k\ge2$ and let $A_1,\ldots,A_k$ be terminal-symmetric two-terminal rule graphs whose cyclic composites are admissible.
All cyclic rotations of $A_1\circ A_2\circ\cdots\circ A_k$ have the same three critical dimensions and similar critical three-state matrices over $\mathbb R$.
\end{lemma}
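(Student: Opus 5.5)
It suffices to treat a single rotation step, since every cyclic rotation is an iterate of the basic one. Write $X:=A_1$ and $Y:=A_2\circ\cdots\circ A_k$, so that the composite is $R=X\circ Y$ and its rotation is $R'=Y\circ X$; here $Y$ is itself a terminal-symmetric two-terminal factor, because substitution of terminal-symmetric factors is terminal symmetric (apply $\vartheta_B$ on the coarse graph and $\vartheta_A$ inside every reversed cell). The multiplicativity relations in \eqref{eq:substitution-responses} immediately give $m_R=m_{R'}=m_Xm_Y$ and $\ell_R=\ell_{R'}=\ell_X\ell_Y$. Hence the ambient dimension $\log m/\log\ell$ agrees and the scales coincide. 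It remains to match $\varphi'(p_c)$ and $\rho(\mathbf M_{\mathfrak C})$.

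For the reliability, note that $\varphi_R=\varphi_Y\circ\varphi_X$ and $\varphi_{R'}=\varphi_X\circ\varphi_Y$. If $p_c$ is the fixed point of $\varphi_R$, set $q:=\varphi_X(p_c)$. Then
\[
 \varphi_{R'}(q)=\varphi_X(\varphi_Y(\varphi_X(p_c)))=\varphi_X(p_c)=q,
\]
so $q$ is a fixed point of $\varphi_{R'}$ in $(0,1)$. By the uniqueness in Theorem~\ref{thm:unique-unstable-critical-point}, applied to the admissible $R'$, we get $q=p_c'$. The chain rule then gives
\[
 \varphi_R'(p_c)=\varphi_Y'(q)\varphi_X'(p_c)=\varphi_{R'}'(p_c'),
\]
so the pivotal dimensions agree by Theorem~\ref{thm:pivotal-dimension}. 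This recovers the Tie/Gem relation $p_{c,\mathrm{tie}}=\varphi$-conjugate of $p_{c,\mathrm{gem}}$.

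For the matrices, \eqref{eq:substitution-responses} gives
\[
 \mathbf M_R(p_c)=\mathbf M_Y(q)\,\mathbf M_X(p_c),\qquad
 \mathbf M_{R'}(q)=\mathbf M_X(\varphi_Y(q))\,\mathbf M_Y(q)=\mathbf M_X(p_c)\,\mathbf M_Y(q).
\]
These are the products $PQ$ and $QP$ of the same two real $3\times3$ matrices, so they have the same characteristic polynomial and hence the same spectral radius. Theorem~\ref{thm:critical-mass-dimension} then gives equal mass dimensions.

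The step I expect to need care is upgrading ``same characteristic polynomial'' to similarity over $\mathbb R$. In general $PQ$ and $QP$ can differ in the Jordan structure at eigenvalue $0$. Here, however, both $\mathbf M_R(p_c)$ and $\mathbf M_{R'}(p_c')$ are real diagonalizable by Theorem~\ref{thm:three-state-block}, since both composites are admissible. Two real diagonalizable matrices with the same characteristic polynomial are similar over $\mathbb R$, which settles it. A possible subtlety is that the factor matrices $\mathbf M_X,\mathbf M_Y$ of non-admissible factors (such as $\twoedgepath$ or $\triangle$) must be well defined at the relevant arguments. This needs $\varphi_X(p_c),\varphi_Y(q)\in(0,1)$, which holds because the reliabilities of nontrivial factors map $(0,1)$ into $(0,1)$. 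Induction over the $k-1$ rotations completes the proof.
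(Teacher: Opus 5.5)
Your proposal is correct and follows essentially the same route as the paper. You reduce to a single rotation $A_1\circ(A_2\circ\cdots\circ A_k)$ versus $(A_2\circ\cdots\circ A_k)\circ A_1$, transport the critical point via $q=\varphi_{A_1}(p_c)$, and match the multipliers by the chain rule. You then obtain equal spectra from $\det(tI-PQ)=\det(tI-QP)$ and upgrade to real similarity using the diagonalizability in Theorem~\ref{thm:three-state-block}. Your extra checks make explicit points the paper leaves implicit: uniqueness of the interior fixed point to identify $q$ as the critical point of the rotation, terminal symmetry of the composite factor, and well-definedness of the factor matrices at $p_c$ and $q$.
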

\begin{proof}
It suffices to compare $A\circ B$ and $B\circ A$, where either factor may itself be a composite.
Let $p_c$ be the critical point of $A\circ B$ and put $q_c:=\varphi_A(p_c)$.
Then $\varphi_B(q_c)=p_c$, so $q_c$ is the critical point of $B\circ A$.
The two multipliers are the same product $\varphi_B'(q_c)\varphi_A'(p_c)$.
Their edge counts and terminal distances are the same products by~\eqref{eq:substitution-responses}.
Their critical matrices are
\[
 \mathbf M_B(q_c)\mathbf M_A(p_c),\qquad
 \mathbf M_A(p_c)\mathbf M_B(q_c).
\]
The identity $\det(tI-XY)=\det(tI-YX)$ for square matrices gives equality of their characteristic polynomials, hence of their Perron roots.
This proves equality of the three critical dimensions.
Theorem~\ref{thm:three-state-block} shows that both full matrices are diagonalizable over $\mathbb R$, so their equal characteristic polynomials also give real similarity.
Apply this argument with $A=A_1$ and $B=A_2\circ\cdots\circ A_k$, then repeat to obtain every cyclic rotation.
\end{proof}

In particular, the two orders in the tie--gem example have exactly the same critical dimensions and hence belong to the same universality class.
For an explicit similarity, enumeration of the four configurations of $\twoedgepath$ gives
\[
 \mathbf M_{\twoedgepath}(p)=
 \begin{pmatrix}
 2&0&0\\
 \dfrac{2p}{1+p}&\dfrac{2p}{1+p}&\dfrac{2(1-p)}{1+p}\\[3pt]
 \dfrac{p}{1+p}&0&1
 \end{pmatrix},
 \qquad \det\mathbf M_{\twoedgepath}(p)=\frac{4p}{1+p}>0.
\]
The composition identities therefore yield
\[
 \mathbf M_{\mathrm{gem}}(p_{c,\mathrm{gem}})
 =\mathbf M_{\twoedgepath}(p_{c,\mathrm{gem}})^{-1}
 \mathbf M_{\mathrm{tie}}(p_{c,\mathrm{tie}})
 \mathbf M_{\twoedgepath}(p_{c,\mathrm{gem}}).
\]
Their common spectrum is approximately $5.7087,1.6239,1.4201$.

Cyclic invariance does not extend to arbitrary permutations.
The next construction preserves both the ambient and pivotal dimensions, while changing the critical mass dimension.

\begin{lemma}\label{lem:permutation-obstruction}
There exist admissible rules $A,B$ such that $B\circ A\circ B\circ A$ and $B\circ B\circ A\circ A$ have the same ambient and pivotal dimensions but different critical mass dimensions.
\end{lemma}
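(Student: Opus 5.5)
We will take the simplest candidates, $A=\twoedgepath$ and $B=\triangle$. These are the factors of the tie--gem example, and both are terminal-symmetric with explicitly computable conditional matrices. The two composites $R_1:=B\circ A\circ B\circ A$ and $R_2:=B\circ B\circ A\circ A$ have the same multiset of factors.

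\textbf{Ambient dimension.} By~\eqref{eq:substitution-responses}, both composites have $m=m_A^2m_B^2=36$ and $\ell=\ell_A^2\ell_B^2=4$. Hence $\dimhaus$ agrees.

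\textbf{Admissibility.} Both composites contain a series step, so the terminal distance is at least $2$. Both contain a parallel step, so the terminal edge-connectivity is at least $2$. The factors $A$ and $B$ each have such a symmetry, and composing terminal-symmetric factors produces a terminal-symmetric graph, so $R_1$ and $R_2$ are terminal-symmetric. Canonicality holds because every edge of each factor lies on a terminal path.

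\textbf{Pivotal dimension.} Let $p_1$ and $p_2$ be the critical points. The pivotal multiplier is the chain-rule product of the four factor derivatives along the critical orbit. For $R_1$ the orbit is $p_1\mapsto\varphi_A\mapsto\varphi_B\mapsto\varphi_A\mapsto\varphi_B=p_1$. For $R_2$ the orbit is $p_2\mapsto\varphi_A\mapsto\varphi_A\mapsto\varphi_B\mapsto\varphi_B=p_2$.

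These products need not coincide for a general pair $A,B$, which is a real concern. For this particular pair, however, there is an exact relation. We have $\varphi_A(p)=p^2$, and $\varphi_B(q)=q+q^2-q^3$, so $\varphi_B(q)=q(1+q-q^2)$ and $\varphi_B'(q)=1+2q-3q^2$.

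I would therefore first compute both pivotal multipliers numerically and test whether they agree. If they do not, I would modify the choice: take $A$ as a series--parallel factor and $B$ a second factor chosen so that $\log\varphi'$ along each cycle balances. Alternatively, I would use the freedom of choosing $A,B$ from a one-parameter family, such as paths of length $a$ and bundles, and solve for equality of the pivotal multipliers exactly.

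A cleaner fallback is to make $\varphi_A$ and $\varphi_B$ commute at the relevant points. If $A$ and $B$ are both series-type or both power maps on the orbit, the derivatives multiply identically. The natural choice is then $A$ a path (series, $\varphi=p^a$) and $B$ a parallel bundle. This is the step I expect to be the main obstacle: arranging exact equality of pivotal multipliers while leaving the mass spectra different.

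\textbf{Mass dimension.} The critical matrices are the ordered products
\[
\mathbf M_A(p)\,\mathbf M_B(\cdot)\,\mathbf M_A(\cdot)\,\mathbf M_B(\cdot)
\qquad\text{and}\qquad
\mathbf M_A\,\mathbf M_A\,\mathbf M_B\,\mathbf M_B,
\]
with the factor matrices evaluated along the respective critical orbits. By Theorem~\ref{thm:three-state-block}, each product splits into a $2\times2$ block plus the pivotal scalar. So it suffices to compare the traces and determinants of the two $\mathbf B$ blocks.

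The determinants are products of factor determinants, for example $\det\mathbf M_{\twoedgepath}(p)=4p/(1+p)$, and can be checked directly. The key computation is then to show that the two Perron roots differ. I would do this with exact algebraic arithmetic in $\mathbb Q(p_c)$, certified by interval bounds. The resulting strict inequality of $\rho$, together with the common $\ell=4$, gives different critical mass dimensions by Theorem~\ref{thm:critical-mass-dimension}.
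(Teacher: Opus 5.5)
Your proposal has a genuine gap, and it is exactly the step you flagged as the main obstacle. For your pair the pivotal dimensions are \emph{not} equal, and nothing in the proposal produces a pair for which they are.

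\textbf{Why your pair fails.} Take $A$ the two-edge path and $B=\triangle$. Then $B\circ A\circ B\circ A$ is the tie substituted into itself, so its multiplier is $\varphi_{\mathrm{tie}}'(p_{c,\mathrm{tie}})^2\approx1.6239^2\approx2.637$. The composite $B\circ B\circ A\circ A$ has reliability $p\mapsto(\varphi_\triangle(\varphi_\triangle(p)))^4$ and critical point $p\approx0.5903$. By the chain rule along the cycle $p\mapsto q_1\approx0.7331\mapsto q_2\approx0.8765\mapsto q_2^2\mapsto p$, its multiplier is $\varphi_\triangle'(p)\,\varphi_\triangle'(q_1)\cdot2q_2\cdot2q_2^2\approx2.611$. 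Both composites have $\ell=4$, so their pivotal dimensions differ and the pair does not witness the lemma. Moreover, the lemma asks for $A,B$ themselves to be admissible, and the path and the triangle are not.

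\textbf{Your fallbacks do not repair this.}
\begin{itemize}
\item Two paths give the same composite, a path, which is not admissible.
\item A path and a parallel bundle also do not commute: for the diamond's series and parallel factors the two orders give multipliers $\approx2.334$ and $\approx2.314$. A bundle is also not simple.
\item Tuning integer parameters until two algebraic multipliers happen to coincide is not a proof.
\end{itemize}
For the same reason, your remark that the determinants are products of factor determinants does not give equality. The factor matrices are evaluated along two different critical cycles.

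\textbf{The missing idea.} Choose admissible factors with a \emph{common} critical point $p_*$.
\begin{itemize}
\item $p_*$ is then an interior fixed point of every composite, hence its critical point by Theorem~\ref{thm:unique-unstable-critical-point}.
\item The critical orbit is constant, so both orders have multiplier $\varphi_A'(p_*)^2\varphi_B'(p_*)^2$.
\item The two critical matrices are ordered products of the same two matrices $\mathbf M_A(p_*)$ and $\mathbf M_B(p_*)$.
\item Restrict to the common invariant plane of Theorem~\ref{thm:three-state-block}, which is the kernel of the universal left eigenvector $(0,-\tfrac12,1)$. There the blocks $\mathbf B_A\mathbf B_B\mathbf B_A\mathbf B_B$ and $\mathbf B_A^2\mathbf B_B^2$ have equal nonzero determinants, so a trace difference forces different Perron roots.
\end{itemize}
Your closing step (compare traces and determinants) is the right one once this is in place. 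The paper realises it with self-dual rules at $p_*=1/2$: $A=W$, the Wheatstone bridge, and $B$ obtained from $W$ by replacing the opposite edges $su,vt$ with copies of $W$. Exact arithmetic gives the trace difference $-1521/4194304\ne0$.
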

\begin{proof}
Let $A:=W$ be the Wheatstone rule shown on the left in Figure~\ref{fig:wheatstone-permutation-rules}, with terminals $s,t$.
Obtain $B$ by replacing the two highlighted opposite edges $su,vt$ by fresh copies of $W$, as shown on the right.
\begin{figure}[tbp]
\centering
\input{figure/wheatstone_permutation_rules.tex}
\caption{The rules $A=W$ and $B$.}
\label{fig:wheatstone-permutation-rules}
\end{figure}
Both are simple, canonical and terminal symmetric, have terminal distance at least two and have two edge-disjoint terminal paths. Their critical points are $1/2$: the five independent child crossings in the latter rule are fair bits, just as for $W$. Their data are
\[
 (\ell_A,m_A,\varphi_A'(1/2))=(2,5,13/8),\qquad
 (\ell_B,m_B,\varphi_B'(1/2))=(3,13,67/32).
\]
For the last thermal value, the opposite outer pair containing $W$ has total derivative weight $3/4$, the other outer pair has weight $3/4$, and the centre has weight $1/8$. Thus it is $(3/4)(13/8)+3/4+1/8=67/32$.
Enumerating the thirty-two configurations of $A$ and the $8192$ configurations of $B$ under the conditional state rule gives
\[
 \mathbf B_A(1/2)=\frac1{16}\begin{pmatrix}53&48\\13&42\end{pmatrix},
 \qquad
 \mathbf B_B(1/2)=\frac1{256}\begin{pmatrix}1896&2292\\627&1380\end{pmatrix}.
\]

Compare the fourfold substitutions $B\circ A\circ B\circ A$ and $B\circ B\circ A\circ A$. Admissibility is preserved by substitution, and both have scale $2^2 3^2=36$. Both have $5^2 13^2=4225$ edges and critical point $1/2$.
Their critical derivatives are both $(\frac{13}{8}\frac{67}{32})^2$, so their ambient and pivotal dimensions agree.

At their common critical point, restrict~\eqref{eq:substitution-responses} to the common invariant plane from Theorem~\ref{thm:three-state-block}.
Their positive mass blocks are respectively $\mathbf B_A\mathbf B_B\mathbf B_A\mathbf B_B$ and $\mathbf B_A^2\mathbf B_B^2$. Their determinants agree. Exact multiplication gives
\[
 \operatorname{tr}(\mathbf B_A^2\mathbf B_B^2)
 -\operatorname{tr}(\mathbf B_A\mathbf B_B\mathbf B_A\mathbf B_B)
 =-\frac{1521}{4194304}\ne0.
\]
If their Perron roots agreed, equality of their nonzero determinants would force their other eigenvalues to agree as well, and hence their traces to agree, a contradiction. They therefore have different Perron roots at the same scale. Their critical mass dimensions therefore differ.
\end{proof}

For each of the two corresponding lattices $\Xi$,
\[
 \dimhaus(\Xi)=\frac{\log4225}{\log36},\qquad
 \dim_{\mathrm P}(\mathfrak C)=\frac{2\log(871/256)}{\log36}.
\]
Only their mass growth distinguishes the two classes.
Scalar multiplicative measurements cannot detect the rearrangement: their values depend on the factors and their multiplicities, but not on their order.
We use the term \emph{dimension-based critical-exponent universality class} when membership in a critical-exponent universality class is determined by a specified family of dimensions.
Here we ask whether substitution-multiplicative dimensions can provide such a classification.

\begin{definition}\label{def:multiplicative-dimension}
A positive functional $\Lambda$ on rule graphs is substitution-multiplicative if $\Lambda(A\circ B)=\Lambda(A)\Lambda(B)$ whenever the indicated rules and their composite are admissible. Its associated dimension is
\[
 \delta_\Lambda(R):=\frac{\log\Lambda(R)}{\log\ell_R}.
\]
\end{definition}
For example, the effective resistance of $A\circ B$, with unit resistances on its final edges, is the resistance of $B$ multiplied by that of $A$: each child cell can be electrically reduced to one resistor, and a uniform multiplication of the resistances multiplies the effective resistance by the same factor. Thus its logarithmic ratio belongs to this family, irrespective of whether it helps determine the three critical dimensions.

\begin{theorem}\label{thm:no-multiplicative-classification}
No finite family of substitution-multiplicative dimensions completely determines $\dimhaus(\Xi)$, $\dimmass(\mathfrak C)$ and $\dim_{\mathrm P}(\mathfrak C)$. In fact, there are two admissible rule graphs on which every such dimension agrees, while their live-mass dimensions differ.
\end{theorem}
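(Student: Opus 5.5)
The plan is to use the two fourfold substitutions from Lemma~\ref{lem:permutation-obstruction},
\[
 R_1:=B\circ A\circ B\circ A,\qquad R_2:=B\circ B\circ A\circ A,
\]
with $A=W$ the Wheatstone rule and $B$ its partially refined version. Lemma~\ref{lem:permutation-obstruction} already gives that both composites are admissible, that both have scale $36$, and that their critical mass dimensions differ. It therefore remains only to show that \emph{every} substitution-multiplicative dimension takes the same value on $R_1$ and $R_2$. Once this holds, the second sentence of the theorem follows at once. The first sentence follows too: any finite family of such dimensions has identical values on $R_1,R_2$, yet the triple $(\dimhaus,\dimmass,\dim_{\mathrm P})$ differs, so no function of the family can recover it.

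For the agreement, let $\Lambda$ be substitution-multiplicative in the sense of Definition~\ref{def:multiplicative-dimension}. Multiplicativity applies only when the factors and the composite are all admissible, so I will split each fourfold composite into binary steps whose pieces are admissible. All the pieces I need, namely $A$, $B$, $B\circ A$, $A\circ A$ and $B\circ B$, are simple, canonical and terminal-symmetric, with terminal distance and edge-connectivity at least two. The reason is that substitution multiplies terminal distances and preserves two edge-disjoint terminal paths, canonicality and the involution; this is the same preservation fact invoked in Lemma~\ref{lem:permutation-obstruction}. The expansions are then
\[
 \Lambda(R_1)=\Lambda(B\circ A)\,\Lambda(B\circ A)=\Lambda(B)^2\Lambda(A)^2,
\]
\[
 \Lambda(R_2)=\Lambda(B\circ B)\,\Lambda(A\circ A)=\Lambda(B)^2\Lambda(A)^2 .
\]
Both composites have $\ell=36$, so $\delta_\Lambda(R_1)=\delta_\Lambda(R_2)$. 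Since $\Lambda$ was arbitrary, this equality holds for all substitution-multiplicative dimensions at once, and in particular for $\dimhaus$, which comes from $\Lambda=m$.

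The main obstacle is bookkeeping. I must check that composition is associative, that is $(X\circ Y)\circ Z=X\circ(Y\circ Z)$ as two-terminal graphs up to terminal-preserving isomorphism, so that $R_1$ and $R_2$ can be regrouped as above. This follows because both sides replace each edge of $Z$ by a copy of $Y$ whose edges are replaced by copies of $X$, glued only at planting vertices. I also need the intermediate composites to lie in the admissible class where $\Lambda$ is defined. I will verify that class membership is stable under $\circ$ for these factors directly from Definition~\ref{def:standing}, using the identities $\ell_{X\circ Y}=\ell_X\ell_Y$ and $m_{X\circ Y}=m_Xm_Y$ from~\eqref{eq:substitution-responses}, and not by recomputing any percolation data. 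Finally, the differing live-mass dimensions come from the trace computation in Lemma~\ref{lem:permutation-obstruction} together with Theorem~\ref{thm:critical-mass-dimension}, since both lattices have the same $\log 36$ denominator.
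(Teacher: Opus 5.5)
Your proposal is correct and follows the paper's proof: you use the same pair of composites from Lemma~\ref{lem:permutation-obstruction}, show that every substitution-multiplicative $\Lambda$ gives both of them the value $\Lambda(A)^2\Lambda(B)^2$ at the common scale $36$, and take the differing mass dimensions from that lemma. Your regrouping via associativity, and your check that $B\circ A$, $A\circ A$ and $B\circ B$ are admissible so that the definition of multiplicativity applies, fill in details the paper leaves implicit.
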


\begin{proof}
Take the two composites from Lemma~\ref{lem:permutation-obstruction}.
For every positive substitution-multiplicative functional $\Lambda$, both values equal $\Lambda(A)^2\Lambda(B)^2$.
The terminal scales also agree, so every associated dimension agrees.
Their critical mass dimensions differ by the lemma, proving the assertion.
\end{proof}

The same pair defeats every substitution-multiplicative dimension simultaneously, so the obstruction also applies to an infinite family of these dimensions.
It does not contradict classification by the three critical dimensions: the mass Perron root is not generally multiplicative under substitution.
By Theorem~\ref{thm:exponent-class-dimensions}, the two rules belong to different critical-exponent universality classes despite agreement of every such multiplicative measurement.

\begin{corollary}\label{cor:ambient-does-not-classify}
The ambient dimension alone does not determine the critical-exponent universality class of a classical EIGS hierarchical lattice.
\end{corollary}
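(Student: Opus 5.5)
The plan is to exhibit two classical EIGS hierarchical lattices with the same ambient dimension but different critical-exponent classes. The pair from Lemma~\ref{lem:permutation-obstruction} does exactly this: $R_1:=B\circ A\circ B\circ A$ and $R_2:=B\circ B\circ A\circ A$, with $A$ the Wheatstone rule and $B$ its modification.

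First, I will confirm that both composites are classical EIGS rules in the sense of Definition~\ref{def:standing}. The proof of Lemma~\ref{lem:permutation-obstruction} already records the needed facts: substitution preserves simplicity, canonicality, terminal symmetry, terminal distance at least two and terminal edge-connectivity at least two. So both $R_1,R_2$ generate classical EIGS hierarchical lattices $\Xi_1,\Xi_2$.

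Next, I compare their data. Both have $m=4225$ edges and scale $\ell=36$, so by the formula $\dimhaus(\Xi)=\log m/\log\ell$ their ambient dimensions agree. Alternatively, $\dimhaus$ is a substitution-multiplicative dimension, so Theorem~\ref{thm:no-multiplicative-classification} gives the same conclusion. The lemma also shows that their critical mass dimensions differ.

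Finally, I apply Theorem~\ref{thm:exponent-class-dimensions}. Since two lattices lie in the same critical-exponent universality class only if all three critical dimensions agree, and here $\dimmass(\mathfrak C_1)\neq\dimmass(\mathfrak C_2)$, the lattices lie in different classes. The explicit formula $\eta_{\mathrm{av}}=2+\dimhaus(\Xi)-2\dimmass(\mathfrak C)$ from Theorem~\ref{thm:delta-eta-dimensions} shows directly that $\eta_{\mathrm{av}}$ differs while $\dimhaus$ is the same. Hence equality of ambient dimension does not force equality of class.

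The only step needing care is the admissibility of the composites, and that is already asserted in the preceding lemma. The corollary is therefore essentially immediate.
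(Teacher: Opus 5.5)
Your proposal is correct and matches the paper's proof: take the two composites from Lemma~\ref{lem:permutation-obstruction}, which share ambient dimension $\log4225/\log36$ but have different critical mass dimensions, and conclude with Theorem~\ref{thm:exponent-class-dimensions}. Your extra observation that $\eta_{\mathrm{av}}=2+\dimhaus(\Xi)-2\dimmass(\mathfrak C)$ already differs is a valid shortcut, since it uses only the forward formula of Theorem~\ref{thm:delta-eta-dimensions} and the definition of the class.
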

\begin{proof}
The two lattices constructed in Lemma~\ref{lem:permutation-obstruction} have the same ambient dimension but different critical mass dimensions, so Theorem~\ref{thm:exponent-class-dimensions} places them in different critical-exponent universality classes.
\end{proof}

This confirms Hauser and Saxena's 1986 conclusion~\cite{hauser1986geometrical} that ambient dimension alone cannot determine the percolation universality class on hierarchical lattices.

\section{The six exponentials theorem and common scales}\label{sec:scale}
Although the preceding section shows that scalar multiplicative dimensions cannot detect substitution order, we still know little about the structural relationships between lattices in the same critical-exponent universality class. Can the graph substitution structure provide a complete classification of critical-exponent universality classes on classical EIGS hierarchical lattices?

A natural direction comes from the physical notion of discrete scale invariance.
For a hierarchical lattice, each substitution has a fixed length-scaling factor, and iteration produces scales given by its integer powers; Sornette~\cite{sornette1998discrete} discusses this notion and its relationship to hierarchical systems.
We may therefore ask whether the discrete scales of two lattices in the same universality class must be compatible. More precisely, can finitely many levels on each side be grouped so that the resulting renormalisation length factors agree?

This naturally leads to the following conjecture.
\begin{conjecture}\label{conj:scale-commensurability}
If two classical EIGS hierarchical lattices belong to the same critical-exponent universality class, then their substitution length scales $\ell_1$ and $\ell_2$ are commensurate: there exist positive integers $a,b$ such that
\[
\ell_1^a=\ell_2^b.
\]
\end{conjecture}

The six exponentials theorem below connects this physical question about critical universality and discrete coarse-graining to a number-theoretic question about rational linear relations among the critical dimensions.
The growth multipliers associated with finite rule graphs are algebraic, while the dimensions are ratios of their logarithms to the logarithm of the length scale. This connection allows us to use transcendence theory to obtain sufficient criteria for scale commensurability.

\begin{theorem}[Six exponentials theorem, \cite{dasgupta2021transcendence}]\label{thm:six-exponentials}
Let $x_1,x_2$ be complex numbers linearly independent over $\mathbb Q$, and let $y_1,y_2,y_3$ also be complex numbers linearly independent over $\mathbb Q$.
Then at least one of the following six numbers is transcendental:
\[
 e^{x_i y_j},\qquad i\in\{1,2\},\quad j\in\{1,2,3\}.
\]
\end{theorem}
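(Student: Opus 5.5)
Since this is a classical transcendence result (Lang, Ramachandra), the paper needs only the citation. If I were to prove it, I would follow the derivative-free auxiliary-function method of Lang. Assume, for contradiction, that all six numbers $\alpha_{ij}:=e^{x_iy_j}$ are algebraic, and let $K$ be the number field they generate, with $D:=[K:\mathbb Q]$.

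For integers $a,b$ the value $e^{y_j(ax_1+bx_2)}=\alpha_{1j}^{a}\alpha_{2j}^{b}$ then lies in $K$. This motivates evaluating exponential polynomials in the frequencies $y_1,y_2,y_3$ at the lattice points $s=ax_1+bx_2$. Linear independence of $x_1,x_2$ over $\mathbb Q$ makes these points pairwise distinct. Linear independence of $y_1,y_2,y_3$ makes the frequencies $w_{\mathbf k}:=k_1y_1+k_2y_2+k_3y_3$, for $0\le k_i<L$, pairwise distinct.

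First, the auxiliary construction. Consider
\[
F(z)=\sum_{0\le k_1,k_2,k_3<L}c_{\mathbf k}\,e^{w_{\mathbf k}z},\qquad c_{\mathbf k}\in\mathbb Z\ \text{not all zero}.
\]
Require $F(ax_1+bx_2)=0$ for all $0\le a,b<S$. Each condition is a linear equation over $K$ in the $L^3$ unknowns. After clearing a common denominator, its coefficients are algebraic integers of house at most $e^{O(L S)}$. Siegel's lemma therefore gives a nonzero integer solution with $\log\max|c_{\mathbf k}|\ll LS$, provided $L^3\ge 2DS^2$. Since the $w_{\mathbf k}$ are distinct, $F$ is a nonzero entire function of exponential type $O(L)$.

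Second, the extrapolation. Let $\gamma=ax_1+bx_2$ with $0\le a,b<S'$, where $S'$ is somewhat larger than $S$. Apply the Schwarz lemma to $F$ on a disc of radius $R\asymp S'$. Since $F$ has $S^2$ known zeros there, this gives
\[
|F(\gamma)|\le e^{O(LS')}\,(S'/R)^{S^2}\cdot e^{-cS^2\log(\cdot)}.
\]
On the other hand, $F(\gamma)\in K$ has denominator and house bounded by $e^{O(LS')}$. The Liouville/product-formula inequality therefore forces $F(\gamma)=0$ as soon as the analytic decay beats $e^{O(DLS')}$. With $L\asymp S^{2/3}$ the exponents are arranged so that $S^2\log S\gg LS'$, and one iteration already yields vanishing on a box of side $S'\asymp S\log^{1/2}S$.

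Third, the zero estimate, which is the step I expect to be the main obstacle. Once $F$ vanishes at more than $L^3$ suitably chosen lattice points, we need a contradiction. The cleanest route is Lang's rank lemma: the matrix $\bigl(e^{w_{\mathbf k}\gamma}\bigr)_{\mathbf k,\gamma}$, with $\gamma$ ranging over a box in $\mathbb Zx_1+\mathbb Zx_2$ of cardinality at least $L^3$, has full rank $L^3$. I would prove this by viewing its entries as values of the characters $\gamma\mapsto\prod_j (e^{y_j\gamma})^{k_j}$ on the group $\mathbb Zx_1+\mathbb Zx_2$, and applying Artin's linear independence of characters, which holds because distinct $w_{\mathbf k}$ give distinct characters. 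The difficulty is quantitative. Independence of characters gives nonvanishing on the whole group, not on a prescribed finite box. Here I would first try to use the hypothesis with three $y$'s, which supplies the numerical slack $L^3\gg S^2$, and pass to a Vandermonde-type minor: order the points so that the determinant factors as a product of nonzero differences $e^{w_{\mathbf k}x_i}-e^{w_{\mathbf k'}x_i}$, checking nonvanishing using the independence hypotheses. Full rank then forces all $c_{\mathbf k}=0$, contradicting the construction.
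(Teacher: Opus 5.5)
Your first sentence matches the paper: it gives no proof of this statement, only the citation, and the introduction lists the six exponentials theorem as an external input of the Lean development. Your self-contained sketch, however, has a genuine gap at the zero estimate. Steps 1 and 2 are the standard architecture, and the numerology is fine. Siegel's lemma needs $L^3\ge 2DS^2$, and one extrapolation reaches any box side $S'$ with $DLS'\ll S^2$, i.e.\ $S'\ll S^{4/3}$. (Your Schwarz display needs the outer radius to exceed three times the radius containing the zeros and $\gamma$, but that is cosmetic.)

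The problem is Step 3. Artin's theorem is purely qualitative. It says the $L^3$ characters $(a,b)\mapsto e^{w_{\mathbf k}(ax_1+bx_2)}$ are independent on all of $\mathbb Z^2$, so a nonsingular $L^3\times L^3$ minor exists somewhere, but not necessarily with columns in $\{0\le a,b<S'\}$. Inside that box every line carries at most $S'\ll L^2<L^3$ lattice points, so no progression $n\gamma_0$, $0\le n<L^3$, is available. For a genuinely two-dimensional set of $L^3$ columns, $\det(e^{w_{\mathbf k}\gamma})$ is a bivariate generalised Vandermonde with no product formula. Grouping it into one-variable Vandermondes needs the values $e^{w_{\mathbf k}x_1}$ to fall into at most $S'$ classes of size at most $S'$, which fails generically, since they are usually all distinct.

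Moreover, the factors you propose to check can vanish. The choice $x_1=2\pi i$, $y_1=1$ is compatible with the hypotheses and gives $e^{w_{\mathbf k}x_1}=e^{w_{\mathbf k'}x_1}$ whenever $\mathbf k-\mathbf k'\in\mathbb Z(1,0,0)$. The independence hypotheses only make the \emph{pairs} $(e^{w_{\mathbf k}x_1},e^{w_{\mathbf k}x_2})$ distinct. Note also that the slack $L^3\gg S^2$ is what Siegel's lemma consumes; the zero estimate needs the reverse inequality $S'^2\gg L^3$.

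Either of two repairs closes the gap.

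(a) Use an analytic zero count for exponential polynomials, e.g.\ Tijdeman's: a nonzero $\sum_{j=1}^N c_je^{\omega_jz}$ with $|\omega_j|\le\Omega$ has at most $3(N-1)+4R\Omega$ zeros in a disc of radius $R$. The $S'^2$ points $ax_1+bx_2$ are distinct and lie in a disc of radius $O(S')$. Hence $F\equiv0$ once $S'^2>3L^3+O(LS')$, for instance with $S'=\lceil\sqrt{7D}\,S\rceil$, which lies inside the extrapolation range. Distinctness of the $w_{\mathbf k}$ then forces all $c_{\mathbf k}=0$.

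(b) Keep your Vandermonde, but iterate the extrapolation. Vanishing on a box of side $S_i$ gives vanishing on side $S_{i+1}\asymp S_i^2/(DL)$, and $S_1/S_0\asymp S^{1/3}D^{-4/3}$ grows because $2\cdot3>2+3$. (With only two $y$'s this ratio stays bounded, which is where the method fails for four exponentials.) So boundedly many steps reach side $8L^6$. For each $\delta=w_{\mathbf k}-w_{\mathbf k'}\ne0$, the $(p,q)\in\mathbb Z^2$ with $e^{\delta(px_1+qx_2)}=1$ form a subgroup of rank at most one, by $\mathbb Q$-independence of $x_1,x_2$. Counting then shows that some $0\le p,q\le 8L^3$ avoids all of them. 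The progression $n(px_1+qx_2)$, $0\le n<L^3$, lies in the box, and its Vandermonde in the distinct values $e^{w_{\mathbf k}(px_1+qx_2)}$ is nonzero.
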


Before applying this theorem, we establish the required algebraicity.
\begin{lemma}\label{lem:algebraicity}
For every admissible rule graph, $p_c$, $\varphi_R'(p_c)$, the entries of $\mathbf M_{\mathfrak C}(p_c)$ and its eigenvalues are algebraic.
\end{lemma}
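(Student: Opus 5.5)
The argument rests on the facts that the algebraic numbers form a field $\overline{\mathbb Q}$ and that roots of polynomials with algebraic coefficients are again algebraic.

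First, $\varphi_R$ has integer coefficients (Definition~\ref{def:bernoulli}). By Theorem~\ref{thm:unique-unstable-critical-point}, $p_c$ is a root of the integer polynomial $\varphi_R(p)-p$. This polynomial is not identically zero, since it changes sign on $(0,1)$ by Proposition~\ref{prop:nontrivial-transition}. Hence $p_c$ is algebraic. The derivative $\varphi_R'$ also has integer coefficients, so $\varphi_R'(p_c)\in\mathbb Q(p_c)\subseteq\overline{\mathbb Q}$.

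Next, the matrix entries. By Definition~\ref{def:counting-polynomials}, every counting polynomial $U^\sigma_\tau(p)$ is a polynomial with integer coefficients. Indeed, it is a finite sum, over configurations $\omega$ of $R$, of integer counts times $p^{|\omega|}(1-p)^{m-|\omega|}$. The entries of $\mathbf M_{\mathfrak C}(p_c)$ are the quotients $U^c_\tau(p_c)/\varphi_R(p_c)$ and $U^{b}_\tau(p_c)/(1-\varphi_R(p_c))$, $U^{u}_\tau(p_c)/(1-\varphi_R(p_c))$. Their denominators are $p_c$ and $1-p_c$, which are nonzero because $p_c\in(0,1)$. So every entry lies in $\mathbb Q(p_c)$ and is algebraic. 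This matches the remark in Theorem~\ref{thm:three-state-block} that $\mathbf B_R(p_c)$ has entries in $\mathbb Q(p_c)$.

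Finally, the eigenvalues are the roots of $\det(tI-\mathbf M_{\mathfrak C}(p_c))$. This is a monic cubic whose coefficients are polynomial expressions in the entries, so they lie in $\mathbb Q(p_c)$. A root of a monic polynomial over a number field is algebraic over $\mathbb Q$. Concretely, $\mathbb Q(p_c,\lambda)$ is a finite extension of $\mathbb Q(p_c)$, which is itself finite over $\mathbb Q$, and the tower law gives finiteness over $\mathbb Q$. Explicitly, the factorisation after Theorem~\ref{thm:three-state-block} shows the eigenvalues are $\varphi_R'(p_c)$ together with the two roots of a quadratic over $\mathbb Q(p_c)$. In particular $\rho(\mathbf M_{\mathfrak C})$ lies in an extension of degree at most $2$ over $\mathbb Q(p_c)$.

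No step here is a real obstacle. The only point needing care is that $\varphi_R(p)-p$ is a nonzero polynomial, so that $p_c$ is genuinely algebraic rather than a root of the zero polynomial; Proposition~\ref{prop:nontrivial-transition} supplies this.
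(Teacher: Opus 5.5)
Your proof is correct and follows the paper's argument step for step. As in the paper, $p_c$ is a root of the nonzero integer polynomial $\varphi_R(p)-p$, and both $\varphi_R'(p_c)$ and the matrix entries lie in $\mathbb Q(p_c)$ because the counting polynomials have integer coefficients and the denominators $\varphi_R(p_c)=p_c$ and $1-p_c$ are nonzero; the eigenvalues are then roots of a characteristic polynomial over this number field, and your appeals to the sign change and the tower law only make explicit what the paper leaves implicit.
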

\begin{proof}
The critical point is a root of the nonzero polynomial $\varphi_R(p)-p$ with integer coefficients, and $\varphi_R'(p_c)\in\mathbb Q(p_c)$.
Since $p_c$ and $1-p_c$ are nonzero, the counting-polynomial formulas show that the matrix entries also belong to $\mathbb Q(p_c)$, so its eigenvalues are algebraic.
\end{proof}

By Theorem~\ref{thm:exponent-class-dimensions}, two lattices $\Xi_1,\Xi_2$ in the same critical-exponent universality class, with critical clusters $\mathfrak C_1,\mathfrak C_2$, satisfy $\dimhaus(\Xi_1)=\dimhaus(\Xi_2)$, $\dimmass(\mathfrak C_1)=\dimmass(\mathfrak C_2)$ and $\dim_{\mathrm P}(\mathfrak C_1)=\dim_{\mathrm P}(\mathfrak C_2)$.
Table~\ref{tab:six-exponentials-dimensions} in the introduction arranges these three dimensions against the logarithms of the two scales, displaying the connection between the six exponentials theorem and commensurability.

Multiplying each scale logarithm by the common critical dimensions and exponentiating gives the corresponding volume, critical mass and pivotal growth multipliers.
All these multipliers are algebraic.
Thus, if the two scales are incommensurate and the three common critical dimensions are linearly independent over $\mathbb Q$, the table exhibits precisely what the six exponentials theorem excludes: two linearly independent numbers and three linearly independent numbers whose six pairwise products all have algebraic exponentials.

In fact, the scales themselves are algebraic, so we may also include $1$ in the set of dimensions under consideration.
The same argument applies whenever three of $1$, $\dimhaus(\Xi_1)$, $\dimmass(\mathfrak C_1)$ and $\dim_{\mathrm P}(\mathfrak C_1)$ are linearly independent over $\mathbb Q$.
We first define scale commensurability and then state this observation as a theorem.

\begin{definition}\label{def:commensurate}
Two scales $\ell_1,\ell_2>1$ are \emph{commensurate} if there exist positive integers $a,b$ such that $\ell_1^a=\ell_2^b$; equivalently, $\log\ell_1/\log\ell_2\in\mathbb Q$.
\end{definition}

\begin{theorem}\label{thm:dimension-rank}\label{prop:arithmetic-alternatives}
If a lattice $\Xi$ in a critical-exponent universality class satisfies
\[
 \dim_{\mathbb Q}\operatorname{span}_{\mathbb Q}
 \{1,\dim_{\mathrm H}(\Xi),\dim_{\mathrm M}(\mathfrak C),\dim_{\mathrm P}(\mathfrak C)\}\ge3,
\]
then the scales of all classical EIGS hierarchical lattices in that class are pairwise commensurate.
In fact, there exists an integer $q\ge2$ that is not a proper power of another integer such that every rule graph in the class has scale equal to a positive integer power of $q$.
Consequently, if a class contains incommensurate scales, the rational span above has dimension at most two.
Furthermore, if at least one of the three common critical dimensions is irrational, the $\mathbb Q$-vector space spanned by the logarithms of all scales in the class has dimension at most two.
\end{theorem}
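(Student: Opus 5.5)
The plan is to apply the six exponentials theorem (Theorem~\ref{thm:six-exponentials}) with the two scale logarithms as one triple's counterpart and the common dimensions as the other, using the algebraicity from Lemma~\ref{lem:algebraicity}. Write $D_0:=1$, $D_1:=\dimhaus$, $D_2:=\dimmass$, $D_3:=\dim_{\mathrm P}$. By Theorem~\ref{thm:exponent-class-dimensions} these values are common to every lattice in the class. For any member with scale $\ell$ and rule $R$, the four numbers $e^{D_j\log\ell}$ are
\[
\ell,\qquad m,\qquad \rho(\mathbf M_{\mathfrak C}),\qquad \varphi_R'(p_c),
\]
by the formulas $\dimhaus=\log m/\log\ell$ and Theorems~\ref{thm:critical-mass-dimension} and~\ref{thm:pivotal-dimension}. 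The first two are integers, and the last two are algebraic by Lemma~\ref{lem:algebraicity}.

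\emph{Commensurability.} By hypothesis, I would choose three of $D_0,\dots,D_3$ that are linearly independent over $\mathbb Q$, and call them $y_1,y_2,y_3$. Take two lattices in the class with scales $\ell_1,\ell_2\ge2$. If $\log\ell_1$ and $\log\ell_2$ were $\mathbb Q$-independent, I would set $x_i=\log\ell_i$. Then all six numbers $e^{x_iy_j}$ appear in the list above and are algebraic, which contradicts Theorem~\ref{thm:six-exponentials}. Hence $\log\ell_1/\log\ell_2\in\mathbb Q$, and the scales are commensurate in the sense of Definition~\ref{def:commensurate}.

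\emph{Common primitive base.} Every scale is a graph distance, so it is an integer $\ge2$. For an integer $\ell\ge2$ with prime exponent vector $v(\ell)$, set $g=\gcd$ of the entries and $q_\ell:=\ell^{1/g}$. Then $q_\ell$ is an integer, it is not a proper power, and $\ell=q_\ell^{g}$. The relation $\ell_1^a=\ell_2^b$ gives $a\,v(\ell_1)=b\,v(\ell_2)$, so the two vectors are positively proportional. They therefore have the same primitive integer direction, and $q_{\ell_1}=q_{\ell_2}$ by unique factorisation. Fixing one member of the class and taking its $q$ then serves every member. The stated consequence for classes containing incommensurate scales is simply the contrapositive.

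\emph{Logarithms of scales.} For the last assertion, suppose one of $D_1,D_2,D_3$, say $D$, is irrational. Then $\{1,D\}$ is $\mathbb Q$-independent. If three scales in the class had $\mathbb Q$-independent logarithms $\log\ell_1,\log\ell_2,\log\ell_3$, I would swap the roles: take $x_1=1$, $x_2=D$, and $y_j=\log\ell_j$. The six exponentials are then $\ell_j$ and $e^{D\log\ell_j}$, and both kinds are algebraic by the list above. This again contradicts Theorem~\ref{thm:six-exponentials}, so the span of the logarithms has dimension at most two.

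The only delicate point is bookkeeping: checking that each exponential really equals the corresponding algebraic multiplier, which is where the explicit dimension formulas and the class-wide equality of dimensions enter. The number-theoretic step producing $q$ is elementary.
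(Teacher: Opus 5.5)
Your proof is correct and follows the paper's argument. You apply the six exponentials theorem to the two scale logarithms against three $\mathbb Q$-independent numbers among $1,\dimhaus(\Xi),\dimmass(\mathfrak C),\dim_{\mathrm P}(\mathfrak C)$, swap the roles with $\{1,D\}$ for the last assertion, and obtain the common base $q$ from uniqueness of the primitive integer vector on a positive ray, which is just a rephrasing of the paper's B\'ezout step.
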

\begin{proof}
Suppose that two rule graphs in the class have incommensurate scales. Their scale logarithms are then linearly independent over $\mathbb Q$.
Choose three linearly independent numbers from $1$, $\dim_{\mathrm H}(\Xi)$, $\dim_{\mathrm M}(\mathfrak C)$ and $\dim_{\mathrm P}(\mathfrak C)$. Multiplying them by the two scale logarithms and exponentiating gives six numbers forming three columns of the following matrix:
\[
 \begin{pmatrix}
 \ell_1&m_1&\rho(\mathbf M_{\mathfrak C_1})&\varphi_{R_1}'(p_{c,1})\\
 \ell_2&m_2&\rho(\mathbf M_{\mathfrak C_2})&\varphi_{R_2}'(p_{c,2})
 \end{pmatrix}.
\]
By Lemma~\ref{lem:algebraicity}, these numbers are algebraic, contradicting Theorem~\ref{thm:six-exponentials}. Hence all scales are pairwise commensurate.

Fix one scale, write its prime factorisation as $\prod_p p^{a_p}$, and divide all nonzero exponents by their greatest common divisor.
Let $q$ be the integer formed from these reduced exponents; then $q$ is not a proper power of another integer.
For any other scale, commensurability implies that its prime-exponent vector is a positive rational multiple of this reduced vector.
Both vectors have integer coordinates, and the coordinates of the reduced vector have greatest common divisor one, so B\'ezout's identity implies that this multiple is an integer.
Thus the scale is a positive integer power of $q$.

The dimension bound in the incommensurate case is the contrapositive of the first assertion.
Finally, suppose that a common critical dimension $D$ is irrational and that the class contains three scale logarithms linearly independent over $\mathbb Q$. Multiplying these logarithms by the linearly independent numbers $1,D$ and exponentiating gives the three integer scales and their corresponding algebraic growth multipliers.
Interchanging the roles of the two groups in the six exponentials theorem again gives a contradiction.
\end{proof}

This result connects scale commensurability to rational linear relations among the three critical dimensions.
The next question is which lattices satisfy this criterion, and what possibilities remain when it fails.

\begin{example}\label{ex:wheatstone-arithmetic}
For the Wheatstone rule $W$ defined in Section~\ref{sec:noncommuting}, $\ell_W=2$, $m_W=5$ and $\varphi_W'(1/2)=13/8$. If integers $a,b,c$ satisfy
\[
 a+b\frac{\log5}{\log2}+c\frac{\log(13/8)}{\log2}=0,
\]
then $2^{a-3c}5^b13^c=1$. Valuations at $13$, $5$ and $2$, in that order, give $c=b=a=0$. Thus, for the lattice $\Xi$ generated by $W$, the numbers $1,\dim_{\mathrm H}(\Xi),\dim_{\mathrm P}(\mathfrak C)$ are independent, and every lattice in its critical-exponent universality class has scale $2^k$ for a positive integer $k$. This conclusion uses no calculation of its mass dimension.
\end{example}

The conclusion concerns the whole universality class, including rules outside the family from which a particular example was chosen.
To explain its geometric meaning, we record what happens when several substitution steps are grouped into one.

For a positive integer $a$, let $R^{\circ a}$ denote the rule obtained by $a$ successive substitutions of $R$.
\begin{proposition}\label{prop:iterated-response}\label{lem:aligned-class}
The iterated rule $R^{\circ a}$ has the same interior critical point as $R$, and
\[
 \ell_{R^{\circ a}}=\ell_R^a,\qquad m_{R^{\circ a}}=m_R^a,\qquad
 \varphi_{R^{\circ a}}'(p_c)=\varphi_R'(p_c)^a,\qquad
 \mathbf M_{R^{\circ a}}(p_c)=\mathbf M_{\mathfrak C}(p_c)^a.
\]
Consequently the ambient, critical mass and pivotal dimensions are unchanged.

Moreover, if two rules have scales satisfying $\ell_1^a=\ell_2^b$ for positive integers $a,b$, the two lattices have the same ambient, critical mass and pivotal dimensions exactly when
\[
 m_1^a=m_2^b,\qquad
 \rho(\mathbf M_{\mathfrak C_1})^{\,a}=\rho(\mathbf M_{\mathfrak C_2})^{\,b},\qquad
 \varphi_{R_1}'(p_{c,1})^a=\varphi_{R_2}'(p_{c,2})^b.
\]
\end{proposition}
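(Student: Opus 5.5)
The plan is to treat $R^{\circ a}$ as the $a$-fold composite $R\circ R\circ\cdots\circ R$ and apply the composition identities~\eqref{eq:substitution-responses}, inducting on $a$.

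\textbf{Admissibility and critical point.} First check that $R^{\circ a}$ is again a classical rule graph. Simplicity, canonicality and terminal symmetry pass through substitution. The terminal distance is $\ell_R^a\ge2$. For the edge-connectivity, each coarse edge is replaced by a cell containing two edge-disjoint terminal paths, so $\lambda\ge2$ persists.

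Lemma~\ref{lem:renormalisation} gives $\varphi_{R^{\circ a}}=\varphi_R^{\circ a}$. Since $\varphi_R(p_c)=p_c$, the point $p_c$ is an interior fixed point of $\varphi_R^{\circ a}$. Theorem~\ref{thm:unique-unstable-critical-point}, applied to the admissible rule $R^{\circ a}$, says this fixed point is unique, so it is the critical point of $R^{\circ a}$.

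\textbf{Scale, edge count and pivotal multiplier.} The identities $\ell_{R^{\circ a}}=\ell_R^a$ and $m_{R^{\circ a}}=m_R^a$ follow from the multiplicativity in~\eqref{eq:substitution-responses}. Alternatively, $\ell$ follows from the isometric scaling $d_{\Xi^{n+1}}=\ell\,d_{\Xi^n}$ in the proof of Theorem~\ref{thm:eigs-graph-directed}. The chain rule at the fixed point gives $(\varphi_R^{\circ a})'(p_c)=\varphi_R'(p_c)^a$.

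\textbf{Mass matrix.} The matrix identity comes from the product rule $\mathbf M_{A\circ B}(p)=\mathbf M_B(\varphi_A(p))\mathbf M_A(p)$. Every intermediate occupation equals $\varphi_R^{\circ j}(p_c)=p_c$, so all factors are $\mathbf M_{\mathfrak C}(p_c)$ and the product is $\mathbf M_{\mathfrak C}(p_c)^a$. Equivalently, this is Lemma~\ref{thm:mass-transfer-power} read one level of $R^{\circ a}$ at a time.

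The only delicate point is that the live states of $R^{\circ a}$, defined via its own active clusters, match those tracked by the iterated conditional construction. This is exactly the coarsening argument in Theorem~\ref{thm:percolation-random-eigs}, and I would cite it rather than redo it.

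\textbf{Invariance of the dimensions.} The spectral radius satisfies $\rho(\mathbf M^a)=\rho(\mathbf M)^a$. Hence each of the three formulas $\log(\cdot)/\log\ell$ is unchanged, since the factor $a$ cancels from numerator and denominator.

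\textbf{The commensurate case.} Suppose $\ell_1^a=\ell_2^b$ and set $L:=\log\ell_1^a=\log\ell_2^b>0$. Each dimension of $\Xi_1$ equals $\log(X_1^a)/L$, where $X_1$ is the corresponding multiplier $m_1$, $\rho(\mathbf M_{\mathfrak C_1})$ or $\varphi_{R_1}'(p_{c,1})$. Likewise each dimension of $\Xi_2$ equals $\log(X_2^b)/L$. Therefore the dimensions agree if and only if $X_1^a=X_2^b$, because $\log$ is injective on positive reals and all three multipliers are positive (the latter two exceed $1$).

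I expect no real obstacle here. The step needing the most care is confirming that $R^{\circ a}$ is admissible, so that its critical point and mass matrix are the ones defined in the paper.
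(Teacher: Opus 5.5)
Your proof is correct and follows essentially the same route as the paper: the composition identities~\eqref{eq:substitution-responses} give $\ell$, $m$, the reliability and the mass matrix of $R^{\circ a}$; the chain rule at the fixed point gives the pivotal multiplier; the factor $a$ cancels in each dimension; and the commensurate case follows by multiplying by $a\log\ell_1=b\log\ell_2$. The only difference is how you get uniqueness of the interior fixed point of $R^{\circ a}$. You check that $R^{\circ a}$ is admissible and then apply Theorem~\ref{thm:unique-unstable-critical-point} to it. The paper instead notes that the strict inequalities of Corollary~\ref{cor:global-renormalisation-dichotomy} for $R$ carry over to the iterate, so that step needs no admissibility check (although admissibility is still implicitly needed for $\mathbf M_{R^{\circ a}}$ to be the paper's matrix).
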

\begin{proof}
The reliability is $\varphi_R^{\circ a}$ by the composition identities~\eqref{eq:substitution-responses}.
The strict inequalities on either side of $p_c$ in Corollary~\ref{cor:global-renormalisation-dichotomy} show that this iterate has no further fixed point in $(0,1)$.
The remaining formulas follow by repeated composition and the chain rule at $p_c$.
Both numerator and denominator of each logarithmic dimension are multiplied by $a$.

Multiply each equality of dimensions by the common positive number $a\log\ell_1=b\log\ell_2$ and exponentiate.
The same operations in reverse prove the converse.
\end{proof}

Thus, once the scales are commensurate, equality of the class becomes equality of three algebraic growth multipliers at a common scale.
This comparison does not require equality of effective resistances or of the remaining eigenvalue of the mass matrix, and it does not assert a conjugacy of the graph substitutions.
We now return to the rule graph to verify the independence required by Theorem~\ref{thm:dimension-rank}.
Two algebraic objects have already appeared: the fixed-point polynomial of the reliability map and the quadratic mass block from Theorem~\ref{thm:three-state-block}.
Irreducibility of these two objects will force commensurate scales throughout the universality class.

Removing the two trivial fixed points gives the polynomial
\begin{equation}\label{eq:nontrivial-fixed-point-polynomial}
 \frac{\varphi_R(p)-p}{p(1-p)}\in\mathbb Z[p].
\end{equation}
Here irreducibility will always refer to this complete quotient, not merely to the minimal polynomial of $p_c$.
Values at $p=0$ and $p=1$ refer to the polynomial obtained after cancellation.
We first record the consequences of canonicality and terminal symmetry that the argument needs.

\begin{lemma}\label{lem:fixed-point-polynomial-structure}
For every classical rule graph with $m$ edges,
\[
 \deg\left(\frac{\varphi_R(p)-p}{p(1-p)}\right)=m-2.
\]
This polynomial takes the values $-1$ and $1$ at $p=0$ and $p=1$, respectively, and its reduction modulo every prime is nonconstant.
\end{lemma}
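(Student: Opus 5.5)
The plan is to read off all three assertions from the two-sided expansions of $\varphi_R$ at $p=0$ and $p=1$, together with a top-degree coefficient computation in the monomial basis. Write $f(p):=(\varphi_R(p)-p)/(p(1-p))$. This lies in $\mathbb Z[p]$ because $\varphi_R(0)=0$ and $\varphi_R(1)=1$.

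\emph{Values at the endpoints.} Since $\ell\ge2$, the union bound of Proposition~\ref{prop:nontrivial-transition} gives $\varphi_R(p)=O(p^2)$. Hence $\varphi_R(p)-p=-p+O(p^2)$, and dividing by $p(1-p)$ gives $f(0)=-1$. Symmetrically, $\lambda_R(\beta^+,\beta^-)\ge2$ gives $1-\varphi_R(p)=O((1-p)^2)$. Hence $\varphi_R(p)-p=(1-p)+O((1-p)^2)$, and so $f(1)=1$.

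\emph{Degree.} A priori $\deg\varphi_R\le m$, so $\deg f\le m-2$. Since $m\ge4$, the subtracted term $p$ cannot cancel the top coefficient. It therefore suffices to show that the coefficient of $p^m$ in $\varphi_R$ is nonzero. Expanding the Bernstein sum gives this coefficient as
\[
 c_m=\sum_{\omega\ \text{connecting}}(-1)^{m-|\omega|}.
\]
Up to sign, this is the reduced Euler characteristic of the complex of non-connecting sets. Equivalently, it is the Möbius-type coefficient of the two-terminal connectivity function, which by Brylawski's identification equals $\pm$ the Crapo beta invariant of the graph $R+e_0$, where $e_0$ is an added edge joining $\beta^+$ and $\beta^-$. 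The beta invariant is nonzero exactly when this matroid is connected, i.e.\ when $R+e_0$ is loopless and $2$-connected. Canonicality says that every edge of $R$ lies on a simple terminal path, so every edge lies on a cycle through $e_0$; this is precisely connectivity of the cycle matroid of $R+e_0$. The absence of loops is part of Definition~\ref{def:standing}. Hence $c_m\ne0$ and $\deg f=m-2$. If I prefer to avoid citing Brylawski, I would prove $c_m\ne0$ directly by deletion–contraction on a non-terminal edge, which preserves canonicality in at least one branch.

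\emph{Reduction modulo primes.} For an odd prime $q$ we have $f(0)\equiv-1\not\equiv1\equiv f(1)\pmod q$, so $f\bmod q$ is nonconstant. The case $q=2$ is the main obstacle, since $\pm1$ coincide there. Suppose $f\equiv1\pmod 2$. Then $\varphi_R(p)\equiv p+(p-p^2)\equiv p^2\pmod 2$ coefficientwise. Comparing lowest terms forces $\ell=2$ with an odd number of shortest terminal paths. Comparing top terms forces $c_m$ to be even, because $m\ge4>2$. The key point is that $c_m\equiv\#\{\text{connecting }\omega\}\pmod2$.

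My first attempt at a contradiction uses the involution $\vartheta$ of Definition~\ref{def:standing}. It acts freely on edges outside the fixed set of the involution, so the parities of the monomial coefficients of $\varphi_R$ reduce to those of the $\vartheta$-invariant configurations, i.e.\ to a smaller quotient graph. I would then try to show that $\varphi_R\equiv p^2$ is impossible for that quotient unless $m=2$. If this symmetry reduction fails, I would fall back on comparing the coefficient of $p^3$. From $f(p)(p-p^2)=\varphi_R-p$, this coefficient counts length-three terminal paths minus contributions of pairs of length-two paths. I would show that this coefficient must be odd under the constraints already obtained, using $\lambda_R\ge2$ and canonicality.
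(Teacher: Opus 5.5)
Your endpoint values, your degree computation through the Crapo beta invariant of $R+e_0$, and your odd-prime case via $f(0)=-1\not\equiv 1=f(1)$ are correct and essentially match the paper. For the degree, the paper proves the identity between $[p^m]\varphi_R$ and $\beta(M(R+e_*))$ directly, by pairing subsets with and without $e_*$, rather than citing Brylawski.

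The gap is the case $q=2$. You correctly identify it as the crux, but you do not prove it. The symmetry reduction to a ``quotient graph'' is only announced, with no argument. The fallback cannot work as stated. The $p^3$ coefficient of $\varphi_R$ is simply the number of three-edge terminal paths; no correction from pairs of two-edge paths occurs at that order. That number need not be odd under your constraints. For the gem, $\varphi_{\mathrm{gem}}(p)=p^2+p^4-p^6$, with one two-edge path and no three-edge paths. More generally, a two-edge path in parallel with a path of length $L$ is a classical rule with $\varphi_R(p)=p^2+p^L-p^{L+2}$. All its coefficients of orders $3,\ldots,L-1$ vanish, so comparing any fixed low-order coefficient at $p=0$ cannot give the contradiction.

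The missing idea is to use the expansion at the other endpoint as well. Suppose $\varphi_R\equiv p^2\pmod 2$. Substituting $p\mapsto1-p$ and using $(1-p)^2\equiv1+p^2$ gives $1-\varphi_R(1-p)\equiv p^2\pmod 2$. Since $\lambda_R(\beta^+,\beta^-)\ge2$, the $p^2$ coefficient of $1-\varphi_R(1-p)$ is the number of two-edge terminal cuts. So both the number of two-edge terminal paths and the number of two-edge terminal cuts would be odd.

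Terminal symmetry forbids this:
\begin{itemize}
\item The involution $\vartheta$ permutes the two-edge paths through their middle vertices, so an odd number of them forces a $\vartheta$-fixed vertex.
\item Deleting a two-edge terminal cut leaves exactly two components, one containing each terminal. By minimality each cut edge joins them, and connectivity excludes further components.
\item A $\vartheta$-invariant cut would force $\vartheta$ to exchange these two components, so it would fix no vertex.
\end{itemize}
Hence the two-edge cuts fall into $\vartheta$-orbits of size two, and their number is even, which is a contradiction. With this step, your argument is complete.
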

\begin{proof}
The endpoint derivatives vanish because the terminal distance and the minimum terminal cut both have size at least two.
This gives the two endpoint values, which also show that the quotient polynomial is primitive.

For the degree, join the terminals by a new edge $e_*$.
Every original edge lies on a simple terminal path, hence on a cycle containing $e_*$.
The cycle matroid $M(R+e_*)$ is therefore connected, so its Crapo beta invariant is positive; see~\cite{oxley1982crapo}.
Pairing subsets with and without $e_*$ in the defining rank sum gives
\begin{align*}
 \beta\bigl(M(R+e_*)\bigr)
 &=(-1)^{|V(R)|-1}
   \sum_{A\subseteq E(R)}(-1)^{|A|}
   \bigl(\operatorname{rank}(A)-\operatorname{rank}(A\cup\{e_*\})\bigr)\\
 &=(-1)^{|V(R)|-1}
   \sum_{\substack{A\subseteq E(R)\\\beta^+\leftrightarrow\beta^-\text{ in }A}}(-1)^{|A|}
 =(-1)^{m+|V(R)|-1}[p^m]\varphi_R(p).
\end{align*}
Indeed, adding $e_*$ increases the rank exactly when the terminals are disconnected, and the alternating sum over all subsets of $E(R)$ is zero.
The last equality follows directly from the Bernoulli expansion of $\varphi_R$.
Thus $\deg\varphi_R=m$, so the quotient polynomial has degree $m-2$.

To prove the assertion modulo primes, consider the two-step terminal paths and the terminal cuts of size two.
If the number of two-step paths is odd, the terminal-exchanging involution fixes one of their middle vertices.
Deleting a terminal cut of size two leaves exactly two connected components: minimality of the cut forces each deleted edge to join the components containing the terminals, and connectedness excludes any further component.
An invariant cut of size two would make the involution exchange these two components, leaving no fixed vertex.
Consequently, when the number of two-step paths is odd, the cuts of size two occur in pairs and their number is even.

These two counts are the coefficients of $p^2$ in $\varphi_R(p)$ and in $1-\varphi_R(1-p)$, respectively.
They therefore exclude $\varphi_R(p)\equiv p^2\pmod2$.
The sum of the positive-degree coefficients of $\frac{\varphi_R(p)-p}{p(1-p)}$ is the difference between its endpoint values, namely two, so their greatest common divisor divides two.
If all these coefficients were even, the quotient would be congruent to one modulo two, giving $\varphi_R(p)\equiv p^2\pmod2$, a contradiction.
Their greatest common divisor is therefore one, which proves that the quotient remains nonconstant modulo every prime.
\end{proof}

The following criterion applies to the full class of classical EIGS hierarchical lattices.
Its two hypotheses can be checked on a single representative of a universality class.
By Theorem~\ref{thm:three-state-block}, the second hypothesis is equivalent to irreducibility of the characteristic polynomial of $\mathbf B_R(p_c)$ over $\mathbb Q(p_c)$.

\begin{theorem}\label{thm:irreducible-commensurability}
Suppose a classical EIGS hierarchical lattice satisfies
\[
 \frac{\varphi_R(p)-p}{p(1-p)}\text{ is irreducible over }\mathbb Q,
 \qquad
 \rho(\mathbf M_{\mathfrak C})\notin\mathbb Q(p_c).
\]
Then the scales of all classical EIGS hierarchical lattices in its critical-exponent universality class are pairwise commensurate.
\end{theorem}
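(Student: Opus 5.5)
The plan is to reduce the statement to Theorem~\ref{thm:dimension-rank}. It suffices to show that the three numbers
\[
 1,\qquad \dimmass(\mathfrak C)=\frac{\log\rho}{\log\ell},\qquad \dim_{\mathrm P}(\mathfrak C)=\frac{\log\varphi_R'(p_c)}{\log\ell}
\]
are linearly independent over $\mathbb Q$, where $\rho:=\rho(\mathbf M_{\mathfrak C})$. Given this, the rank hypothesis of Theorem~\ref{thm:dimension-rank} holds and pairwise commensurability of all scales in the class follows.

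Suppose instead that $a+b\,\dimmass(\mathfrak C)+c\,\dim_{\mathrm P}(\mathfrak C)=0$ with integers $a,b,c$ not all zero. Multiplying by $\log\ell$ and exponentiating gives the multiplicative relation
\[
 \ell^{a}\rho^{b}\varphi_R'(p_c)^{c}=1 .
\]

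\textbf{Case $b\neq0$.} Then $\rho^{b}=\ell^{-a}\varphi_R'(p_c)^{-c}$ lies in $\mathbb Q(p_c)$. By Theorem~\ref{thm:three-state-block}, $\rho$ is a root of the quadratic characteristic polynomial of $\mathbf B_R(p_c)$ over $\mathbb Q(p_c)$. The hypothesis $\rho\notin\mathbb Q(p_c)$ makes this quadratic irreducible, with conjugate root $\eta_R$. Applying the nontrivial automorphism of $\mathbb Q(p_c,\rho)/\mathbb Q(p_c)$ to $\rho^b\in\mathbb Q(p_c)$ gives $\eta_R^{\,b}=\rho^{b}$, hence $|\eta_R|=\rho$. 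This contradicts the strict inequality $\rho>|\operatorname{tr}\mathbf B_R(p_c)-\rho|=|\eta_R|$ from Theorem~\ref{thm:three-state-block}.

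\textbf{Case $b=0$.} The relation becomes $\varphi_R'(p_c)^{c}=\ell^{-a}$, with $c\neq0$: if $c=0$ then $\ell^a=1$ forces $a=0$ since $\ell\ge2$. So $\varphi_R'(p_c)^c$ is a rational number. This is the step I expect to be the main obstacle, and it is where irreducibility of $f(p):=(\varphi_R(p)-p)/(p(1-p))$ enters. Because $f$ is irreducible of degree $m-2$ (Lemma~\ref{lem:fixed-point-polynomial-structure}), it is the minimal polynomial of $p_c$. Every conjugate $p'$ of $p_c$ is then a root of $\varphi_R(p)=p$ other than $0,1$, and the embedding $p_c\mapsto p'$ sends $\varphi_R'(p_c)^c$ to $\varphi_R'(p')^c$. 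Since $\varphi_R'(p_c)^c$ is rational, all these values agree.

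It remains to contradict this equality of derivatives across conjugate fixed points. First I would use Theorem~\ref{thm:unique-unstable-critical-point}, which says $p_c$ is the only fixed point in $(0,1)$. Hence $m-2\ge2$ forces other conjugates outside $(0,1)$ or non-real; when $m-2=1$, $p_c$ is rational and the argument must be handled directly. I would then take the norm:
\[
 \varphi_R'(p_c)^{c(m-2)}=N_{\mathbb Q(p_c)/\mathbb Q}\bigl(\varphi_R'(p_c)\bigr)^{c}.
\]
The norm can be computed as a resultant of $f$ and $\varphi_R'$. For the contradiction I would use the endpoint data $f(0)=-1$, $f(1)=1$ and nonconstancy of $f$ modulo every prime, together with the identity $\varphi_R'(p)=1+(1-2p)f(p)+p(1-p)f'(p)$ evaluated at roots of $f$. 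This gives $\varphi_R'(p')=p'(1-p')f'(p')$. Taking the product over the conjugates $p'$ expresses the norm through $\operatorname{disc}(f)$, the leading coefficient of $f$, and $N(p_c)N(1-p_c)=\pm 1/\mathrm{lc}(f)^2$. A rational-power identity would then force all $|\varphi_R'(p')|$ to be equal.

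I expect the cleanest contradiction to come from comparing absolute values: $|\varphi_R'(p_c)|>1$, while for real conjugates outside $[0,1]$ the signs and magnitudes differ. If this route stalls, I would fall back on a $p$-adic argument using the nonconstancy modulo primes, since $\varphi_R'(p')^c=\ell^{-a}$ must then be a unit at primes not dividing $\ell$. Once both cases give a contradiction, the independence is established and Theorem~\ref{thm:dimension-rank} concludes the proof.
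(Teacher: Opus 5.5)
Your reduction to Theorem~\ref{thm:dimension-rank} through the triple $1,\dimmass(\mathfrak C),\dim_{\mathrm P}(\mathfrak C)$ is legitimate. Your case $b\neq0$ is exactly the paper's Galois-conjugation argument.

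\textbf{The gap is the case $b=0$.} This is the heart of the theorem. You must show that $\varphi_R'(p_c)^{n}=N$ is impossible for integers $n\ge1$, $N\ge2$ (here $N=\ell^{|a|}$), and the proposal only lists possible routes without completing any.

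What you do establish is $\varphi_R'(p')=\zeta\,\varphi_R'(p_c)$ with $\zeta^c=1$, for every conjugate $p'$. This cannot give a contradiction by comparing absolute values, nor by the fixed-point index formula alone. For a hypothetical $m=4$, nontrivial multipliers $\sqrt2$ and $-\sqrt2$ have equal moduli. They satisfy $\frac1{1-\sqrt2}+\frac1{1+\sqrt2}=-2$, which is exactly the residue identity forced by the superattracting fixed points $0,1$. They also satisfy $(\sqrt2)^2<m$, and their square is an integer. So one must first prove that $\varphi_R'(p_c)$ itself is an integer, not merely that some power of it is.

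Your norm computation also rests on a slip. At a root of $f$ the identity gives $\varphi_R'(p')=1+p'(1-p')f'(p')$, not $p'(1-p')f'(p')$. So the norm is not controlled by $\operatorname{disc}(f)$.

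\textbf{The missing ingredient is a Hensel argument at primes dividing $N$.} You instead suggest looking at primes not dividing $\ell$, which points the wrong way. The paper's argument runs as follows.
\begin{enumerate}
\item Irreducibility and Gauss's lemma give $f\mid(\varphi_R')^n-N$ in $\mathbb Z[p]$.
\item For a prime $q\mid N$, Lemma~\ref{lem:fixed-point-polynomial-structure} makes $f \bmod q$ nonconstant, so it has a root $\alpha$ in a finite field of characteristic $q$. The reduced $\varphi_R'$ vanishes at $\alpha$.
\item The identity above then gives $\alpha(1-\alpha)f'(\alpha)=-1$, so $\alpha$ is a simple root.
\item Hence $\alpha$ lifts to a root in an unramified extension of $\mathbb Q_q$. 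Integrality of valuations there forces $n\mid v_q(N)$.
\item Therefore $N$ is a perfect $n$th power and $\varphi_R'(p_c)\in\mathbb Z$.
\end{enumerate}
A second step finishes the contradiction. Since $\varphi_R'(p_c)\in\mathbb Z$, you get $f\mid\varphi_R'-\varphi_R'(p_c)$, so all $m-2$ nontrivial fixed points share the same real multiplier. The residue sum then gives $2+(m-2)/(1-\varphi_R'(p_c))=0$, that is, $\varphi_R'(p_c)=m/2$. This contradicts $[\varphi_R'(p_c)]^2<m$ from Lemma~\ref{lem:pivotal-response-bound}, since $m\ge4$.

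Neither the Hensel step nor the residue-plus-pivotal-bound step appears in your plan. As written, the $b=0$ case is unproved.
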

\begin{proof}
We show that the three critical dimensions of this representative are linearly independent over $\mathbb Q$ and then apply Theorem~\ref{thm:dimension-rank}.

\textbf{The pivotal multiplier has no positive integral power in $\mathbb Z$.}
Suppose, to the contrary, that
\[
 [\varphi_R'(p_c)]^n=N
 \qquad\text{for some integers }n\ge1,\ N\ge2.
\]
Irreducibility makes $\frac{\varphi_R(p)-p}{p(1-p)}$ a rational multiple of the minimal polynomial of $p_c$.
Since this polynomial is primitive, Gauss's lemma gives the divisibility
\[
 \frac{\varphi_R(p)-p}{p(1-p)}\mid [\varphi_R'(p)]^n-N
 \qquad\text{in }\mathbb Z[p].
\]
Fix a prime $q\mid N$.
Lemma~\ref{lem:fixed-point-polynomial-structure} ensures that the reduction of $\frac{\varphi_R(p)-p}{p(1-p)}$ has a root $\alpha$ in a finite extension of $\mathbb F_q$.
At that root, the divisibility gives $\varphi_R'(\alpha)=0$ modulo $q$.
Differentiating the factorisation of $\varphi_R(p)-p$ into $p(1-p)$ and the quotient polynomial then yields
\[
 \alpha(1-\alpha)
 \left.\frac{d}{dp}\left(\frac{\varphi_R(p)-p}{p(1-p)}\right)\right|_{p=\alpha}=-1.
\]
Thus $\alpha$ is a simple root of the reduced polynomial.
Hensel's lemma lifts it to a root $\widetilde\alpha$ of $\frac{\varphi_R(p)-p}{p(1-p)}$ in the ring of integers of a finite unramified extension of $\mathbb Q_q$; see~\cite{milne2020ant}.
The lifted root satisfies $[\varphi_R'(\widetilde\alpha)]^n=N$.
The valuation on this unramified extension, normalised by $v_q(q)=1$, takes integer values on nonzero elements, and hence
\[
 n\,v_q\bigl(\varphi_R'(\widetilde\alpha)\bigr)=v_q(N),
 \qquad n\mid v_q(N).
\]
This holds for every prime divisor of $N$.
Therefore $N$ is a perfect $n$th power in $\mathbb Z$, and positivity implies $\varphi_R'(p_c)\in\mathbb Z$.
Only the existence of one such unramified lift at each prime was used.

It follows that $\frac{\varphi_R(p)-p}{p(1-p)}$ divides $\varphi_R'(p)-\varphi_R'(p_c)$ over $\mathbb Q$.
All $m-2$ roots of the quotient are consequently fixed points with the same multiplier $\varphi_R'(p_c)>1$.
Together with $0$ and $1$, whose multipliers are zero, these are all the fixed points, since $\deg\varphi_R=m$.
They are simple, and the residue of $1/(p-\varphi_R(p))$ at a fixed point is the reciprocal of one minus its multiplier.
The residue at infinity vanishes because $m\ge4$.
Summing the finite residues gives
\[
 2+\frac{m-2}{1-\varphi_R'(p_c)}=0,
 \qquad \varphi_R'(p_c)=\frac m2.
\]
On the other hand, Lemma~\ref{lem:pivotal-response-bound} gives
\[
 [\varphi_R'(p_c)]^2<m.
\]
There are two edge-disjoint terminal paths of length at least two, so $m\ge4$ and $m^2/4\ge m$.
This contradicts $\varphi_R'(p_c)=m/2$ and proves the claim.
In particular, $m$ and $\varphi_R'(p_c)$ are multiplicatively independent: a nontrivial relation between these two positive numbers would put a positive integral power of $\varphi_R'(p_c)$ in $\mathbb Z$.

\textbf{The mass multiplier supplies the third independent direction.}
Theorem~\ref{thm:three-state-block} places $\rho(\mathbf M_{\mathfrak C})$ in a quadratic extension of $\mathbb Q(p_c)$.
Under the second hypothesis, the nonidentity automorphism of this extension sends $\rho(\mathbf M_{\mathfrak C})$ to the other eigenvalue $\eta_R$ of $\mathbf B_R(p_c)$.
Irreducibility of the quadratic factor gives $\eta_R\ne0$, and strict Perron dominance gives $|\eta_R|<\rho(\mathbf M_{\mathfrak C})$.
Suppose that integers $a,b,c$ satisfy
\[
 m^a\rho(\mathbf M_{\mathfrak C})^b[\varphi_R'(p_c)]^c=1.
\]
The automorphism fixes $m$ and $\varphi_R'(p_c)$.
Applying it and dividing by the original equality yields
\[
 \left(\frac{\eta_R}{\rho(\mathbf M_{\mathfrak C})}\right)^b=1.
\]
The strict inequality of moduli forces $b=0$.
The multiplicative independence just proved then gives $a=c=0$.
Dividing the logarithms by $\log\ell$ shows that $\dimhaus(\Xi)$, $\dimmass(\mathfrak C)$ and $\dim_{\mathrm P}(\mathfrak C)$ are linearly independent over $\mathbb Q$.
Theorem~\ref{thm:dimension-rank} now applies to the entire universality class.
\end{proof}

The first part of the proof also gives a conclusion under the first hypothesis alone: if $\frac{\varphi_R(p)-p}{p(1-p)}$ is irreducible, then $\dim_{\mathrm P}(\mathfrak C)$ is transcendental.
Indeed, a rational value $a/b>0$ would give $[\varphi_R'(p_c)]^b=\ell^a\in\mathbb Z$, which was excluded above.
The Gelfond--Schneider theorem then excludes an algebraic irrational value, since both $\ell$ and $\varphi_R'(p_c)$ are algebraic; see~\cite{dasgupta2021transcendence}.

\begin{corollary}\label{cor:diamond-commensurability}\label{ex:diamond-irreducibility}
Every classical EIGS hierarchical lattice in the same critical-exponent universality class as the DHL has scale commensurate with $2$.
\end{corollary}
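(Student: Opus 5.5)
The plan is to apply Theorem~\ref{thm:irreducible-commensurability} to the diamond rule itself, with $\ell=2$. Its conclusion then says that every classical EIGS hierarchical lattice in the DHL class has scale commensurate with $2$. So there are exactly two hypotheses to check for $R$ the diamond cell, with $\varphi_R(p)=2p^2-p^4$ as in Example~\ref{ex:dhl-reliability}.

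\textbf{First hypothesis.} Compute the fixed-point quotient explicitly:
\[
 \varphi_R(p)-p=-p^4+2p^2-p=-p(p-1)(p^2+p-1),
\]
so that
\[
 \frac{\varphi_R(p)-p}{p(1-p)}=p^2+p-1 .
\]
This is consistent with Lemma~\ref{lem:fixed-point-polynomial-structure}: it has degree $m-2=2$ and takes the values $-1,1$ at $p=0,1$. Its discriminant is $5$, which is not a square, so it is irreducible over $\mathbb Q$. Hence $\mathbb Q(p_c)=\mathbb Q(\sqrt5)$.

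\textbf{Second hypothesis.} By Example~\ref{ex:dhl-dimensions}, $\rho(\mathbf M_{\mathfrak C})=7-2\sqrt5+\sqrt{73-32\sqrt5}$. It therefore suffices to show that $\sqrt{73-32\sqrt5}\notin\mathbb Q(\sqrt5)$, which is where I expect the only real work to be. Equivalently, by the remark preceding Theorem~\ref{thm:irreducible-commensurability}, the characteristic polynomial of $\mathbf B_R(p_c)$ is irreducible over $\mathbb Q(\sqrt5)$.

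Suppose instead that $(x+y\sqrt5)^2=73-32\sqrt5$ with $x,y\in\mathbb Q$. Taking norms from $\mathbb Q(\sqrt5)$ to $\mathbb Q$ would give
\[
 73^2-5\cdot32^2=5329-5120=209
\]
as the square of the rational number $x^2-5y^2$. But $209=11\cdot19$ is not a square, so this is a contradiction. The same norm argument shows that the other eigenvalue $\eta_R$ also lies outside $\mathbb Q(\sqrt5)$, which is consistent with the hypothesis.

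With both hypotheses verified, Theorem~\ref{thm:irreducible-commensurability} gives pairwise commensurability of all scales in the class, and in particular commensurability with the DHL scale $2$. Theorem~\ref{thm:dimension-rank} then also shows that each such scale is a positive integer power of $2$, since $2$ is not a proper power.
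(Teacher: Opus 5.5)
Your proposal is correct and follows essentially the same argument as the paper: it checks that $p^2+p-1$ is irreducible, shows $\rho(\mathbf M_{\mathfrak C})\notin\mathbb Q(\sqrt5)$ because the norm $73^2-5\cdot32^2=209$ is not a rational square, and then applies Theorem~\ref{thm:irreducible-commensurability}. Your closing remark that every scale in the class is a power of $2$ is a harmless strengthening; it also follows directly from $\ell^a=2^b$ by unique factorisation.
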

\begin{proof}
For the DHL, $\frac{\varphi_R(p)-p}{p(1-p)}=p^2+p-1$ is irreducible over $\mathbb Q$.
By Example~\ref{ex:dhl-response},
\[
 \mathbb Q(p_c)=\mathbb Q(\sqrt5),\qquad
 \rho(\mathbf M_{\mathfrak C})=7-2\sqrt5+\sqrt{73-32\sqrt5}.
\]
The norm $73^2-5\cdot32^2=209$ is not a rational square, so $73-32\sqrt5$ is not a square in $\mathbb Q(\sqrt5)$.
Hence $\rho(\mathbf M_{\mathfrak C})\notin\mathbb Q(p_c)$, and Theorem~\ref{thm:irreducible-commensurability} applies.
\end{proof}

Theorem~\ref{thm:irreducible-commensurability} obtains common scales from two irreducibility conditions on one rule graph, without restricting the other members of its class to any particular graph family.
It leaves open the cases where the complete fixed-point quotient is reducible or the mass quadratic splits over the critical field; failure of either hypothesis by itself does not imply incommensurability.
In particular, a rule with rational pivotal dimension must have reducible $\frac{\varphi_R(p)-p}{p(1-p)}$.

Although its hypotheses are strong, Theorem~\ref{thm:irreducible-commensurability} strengthens our confidence in the necessity of scale commensurability.
Are the arithmetic cases not covered by these conditions merely gaps in the proof, or can they be realised by actual hierarchical lattices?

\section{How commensurability fails}\label{sec:counterexample}

Although Section~\ref{sec:scale} strengthens our confidence in the necessity of scale commensurability, we construct two rule graphs with $19^{232}$ and $661^{232}$ edges, respectively, showing that commensurate scales are not necessary for membership in the same critical-exponent universality class.
This section examines the structural limits of universality classes through this construction and its variants.

Subsection~\ref{sec:incommensurate-family} gives a critical-exponent universality class containing infinitely many pairwise incommensurate scales.
As a complement to this construction, we also give an example in which all three critical dimensions are transcendental but rationally dependent.
Thus transcendence of all three dimensions alone does not suffice to meet the commensurability criterion based on the six exponentials theorem in Section~\ref{sec:scale}.

\subsection{Infinitely many scales in one class}\label{sec:incommensurate-family}\label{sec:transcendental-dependence}
We allow different finite graphs in the three slots of the Wheatstone construction from Lemma~\ref{lem:permutation-obstruction}. Each resulting graph is then iterated as one fixed rule. The auxiliary tree below specifies the interior of that finite rule; it does not vary the rule between levels of the hierarchical lattice. The explicit graphs are large, so their allocations are supplied as finite integer certificates.

\begin{definition}\label{def:wheatstone}
The two-terminal Wheatstone skeleton has vertices $s,t,u,v$, terminals $s,t$, and edges $su,ut,sv,vt,uv$. Given terminal-symmetric two-terminal graphs $A,B,C$, let $W(A,B,C)$ replace $su,vt$ by fresh copies of $A$, $ut,sv$ by fresh copies of $B$, and $uv$ by a fresh copy of $C$. All interiors are disjoint. Write $E$ for a single edge and $W:=W(E,E,E)$.
\end{definition}

\begin{figure}[tbp]
\centering
\begin{tikzpicture}[scale=1.2,vertex/.style={circle,fill=black,inner sep=1.5pt}]
\coordinate (s) at (0,0);\coordinate (t) at (3,0);
\coordinate (u) at (1.5,1);\coordinate (v) at (1.5,-1);
\draw (s)--node[above left]{$A$}(u)--node[above right]{$B$}(t);
\draw (s)--node[below left]{$B$}(v)--node[below right]{$A$}(t);
\draw (u)--node[right]{$C$}(v);
\foreach \p in {s,t,u,v}{\node[vertex] at (\p) {};}
\node[left] at (s){$s$};\node[right] at (t){$t$};
\node[above] at (u){$u$};\node[below] at (v){$v$};
\end{tikzpicture}
\caption{The Wheatstone substitution $W(A,B,C)$.}
\label{fig:wheatstone}
\end{figure}

All the graphs used here are built from $E$ by finitely many applications of $W$.

\begin{lemma}\label{lem:wheatstone-responses}
Every such graph other than $E$ is admissible and has critical point $1/2$. The operation satisfies
\begin{align*}
 m_{W(A,B,C)}&=2m_A+2m_B+m_C,\\
 \ell_{W(A,B,C)}&=\min\{\ell_A+\ell_B,2\ell_A+\ell_C,2\ell_B+\ell_C\},\\
 \varphi_{W(A,B,C)}'(1/2)
 &=\tfrac34\varphi_A'(1/2)+\tfrac34\varphi_B'(1/2)+\tfrac18\varphi_C'(1/2).
\end{align*}
Here $\varphi_E(p):=p$.
\end{lemma}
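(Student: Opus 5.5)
The proof will be a structural induction on the number of applications of $W$, with $E$ as the base case. The three formulas and the admissibility conditions will be carried along together in the induction. Note that $E$ itself is terminal-symmetric with $\varphi_E(p)=p$, $m_E=1$ and $\ell_E=1$.

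\textbf{Reliability.} Condition on the crossing indicators of the five fresh sub-cells. They are independent because the interiors are disjoint, and they have probabilities $\varphi_A(p),\varphi_B(p),\varphi_A(p),\varphi_B(p),\varphi_C(p)$. The terminal crossing of $W(A,B,C)$ is exactly the Wheatstone crossing event evaluated on these indicators, as in the proof of Lemma~\ref{lem:renormalisation}. So $\varphi_{W(A,B,C)}$ is the five-variable Wheatstone reliability polynomial evaluated at these values.

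\textbf{Critical point.} By induction, each of $A,B,C$ either is $E$ or has $\varphi(1/2)=1/2$, so all five inputs at $p=1/2$ are fair bits. The map $\omega\mapsto\omega^{c}$ composed with the reflection of the skeleton that exchanges $u$ and $v$ is a bijection. It exchanges crossing configurations of the Wheatstone skeleton with non-crossing ones (the bridge is self-dual), so the crossing probability is $1/2$. Hence $1/2$ is an interior fixed point. Once admissibility is established, Theorem~\ref{thm:unique-unstable-critical-point} gives that it is the unique one, $p_c=1/2$.

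\textbf{Derivative.} By the chain rule, $\varphi_{W(A,B,C)}'(1/2)=\sum_{e}\partial_e\varphi_W(1/2,\dots,1/2)\,\varphi_{X_e}'(1/2)$, summed over the five skeleton edges $e$. Each partial derivative is the probability that $e$ is pivotal in the fair Wheatstone. Enumerating gives $3/8$ for each of the four outer edges and $1/8$ for the bridge. The two $A$-slots together give $3/4$, the two $B$-slots together give $3/4$, and the bridge gives $1/8$.

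\textbf{Counts and distance.} The edge count is immediate. For $\ell$, every terminal path traverses sub-cells between their terminals, so the argument of \eqref{eq:substitution-responses} reduces the distance to a shortest weighted path in the skeleton with weights $\ell_A,\ell_B,\ell_A,\ell_B,\ell_C$. The only routes are $s\,u\,t$, $s\,v\,t$, $s\,u\,v\,t$ and $s\,v\,u\,t$. These give $\ell_A+\ell_B$ (twice), $2\ell_A+\ell_C$ and $2\ell_B+\ell_C$.

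\textbf{Admissibility.} This is the main obstacle: checking that the conditions of Definition~\ref{def:standing} propagate.
\begin{itemize}
\item \emph{Simplicity.} Since the interiors are disjoint and $u\ne v$, a parallel edge can only arise when two slots are both $E$ on the same vertex pair, which the skeleton excludes.
\item \emph{Canonicality.} An edge in a sub-cell lies on a terminal path of that sub-cell by induction (trivially for $E$), and every skeleton edge lies on a simple $s$--$t$ path. Concatenating these gives a simple path in the composite.
\item \emph{Terminal symmetry.} Combine the skeleton involution $s\leftrightarrow t$, $u\leftrightarrow v$, which maps $su\leftrightarrow vt$ (the $A$-slots), $ut\leftrightarrow sv$ (the $B$-slots) and $uv$ to itself, with the sub-cell involutions $\vartheta_A,\vartheta_B,\vartheta_C$. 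The bridge copy of $C$ is mapped to itself by $\vartheta_C$ because its ends are swapped.
\item \emph{Distance.} $\ell\ge2$ follows from the distance formula, since every $\ell_X\ge1$.
\item \emph{Edge-connectivity.} The two edge-disjoint paths $s\,u\,t$ and $s\,v\,t$ of the skeleton lift to edge-disjoint terminal paths, since each sub-cell is connected. This gives $\lambda\ge2$.
\end{itemize}
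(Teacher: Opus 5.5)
Your overall route is the paper's: induction from $E$, conditioning on the five independent child crossings, the pivotal counts $6/16$ and $2/16$ combined with the chain rule, refinement of the four skeleton routes for $\ell$, concatenation for canonicality, and the involution $(s\ t)(u\ v)$ completed by the child involutions. However, one step fails as written. It is the bijection you use to show that exactly half of the fair skeleton configurations cross.

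Complementation followed by the reflection $u\leftrightarrow v$ does not exchange crossing and non-crossing configurations. Take $\omega=\{su,ut\}$, which crosses. Its complement is $\{sv,vt,uv\}$, and the image under $u\leftrightarrow v$ is $\{su,ut,uv\}$, which again crosses. In fact none of the four skeleton automorphisms preserving $\{s,t\}$ works when composed with complementation; all of them send this $\omega$ to a crossing set. The reason is that the planar self-duality of the bridge is not induced by any vertex map.

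The fix is easy, and either of two replacements works:
\begin{itemize}
\item Count directly, as the paper does: exactly $16$ of the $32$ configurations cross. Equivalently, $\varphi_W(p)=2p^2+2p^3-5p^4+2p^5$ gives $\varphi_W(1/2)=1/2$.
\item Use the correct duality correspondence. Let $\tau$ be the edge permutation fixing $su$, $vt$, $uv$ and exchanging $ut$ with $sv$. It maps the four minimal terminal cuts $\{su,sv\}$, $\{ut,vt\}$, $\{su,uv,vt\}$, $\{sv,uv,ut\}$ bijectively onto the four minimal terminal paths. Hence $\omega\mapsto\tau(\omega^{c})$ is a bijection exchanging crossing and non-crossing configurations.
\end{itemize}
With this replacement the rest of your argument goes through unchanged.
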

\begin{proof}
The simple terminal paths of the Wheatstone skeleton are $sut$, $svt$, $suvt$ and $svut$. Refining their edges proves the distance formula, and counting the five slots proves the edge formula. Every edge of a child graph lies on a simple terminal path of that child, and each skeleton edge lies on a simple terminal path of the skeleton. Concatenation proves canonicality. The two paths through $u$ and through $v$ give two edge-disjoint terminal connections, and every terminal path has at least two edges. No loops or parallel edges arise when fresh interiors are used. The involution $(s\ t)(u\ v)$ exchanges the paired copies of $A$ and of $B$ and reverses the copy of $C$; child terminal symmetry completes this involution.

At occupation $1/2$, induction makes the five child crossings independent fair bits. Exactly sixteen of the thirty-two skeleton configurations cross. Hence the parent crosses with probability $1/2$, which is its unique interior fixed point. Among the sixteen assignments of the other four bits, a fixed outer edge is pivotal in six and the central edge in two. Differentiating with respect to the five child crossing probabilities, then applying the chain rule, gives the thermal formula.
\end{proof}

The corresponding conditional mass kernels, obtained from the same thirty-two configurations, are
\begin{equation}\label{eq:wheatstone-kernels}
 \mathbf K_3:=\frac1{16}
 \begin{pmatrix}22&8&2\\10&14&8\\5&1&16\end{pmatrix},
 \qquad
 \mathbf J_3:=\frac1{16}
 \begin{pmatrix}9&5&2\\6&4&4\\3&1&4\end{pmatrix}.
\end{equation}
The first kernel counts a pair of opposite outer slots, and the second counts the central slot. Conditional independence inside the child graphs gives
\begin{equation}\label{eq:wheatstone-mass}
 \mathbf M_{W(A,B,C)}(1/2)
 =\mathbf K_3\mathbf M_A(1/2)
  +\mathbf K_3\mathbf M_B(1/2)
  +\mathbf J_3\mathbf M_C(1/2).
\end{equation}
Indeed, conditioning the skeleton fixes which child graphs cross, and therefore selects precisely the conditional child laws defining their matrix rows. The copies within a paired slot are already counted in $\mathbf K_3$. There is no additional factor of two in this formula.

To build the seed rules, start with a complete ternary tree of Wheatstone operations of a fixed depth and replace selected leaf edges by further copies of $W$.
The choice of these leaves is the allocation.
The edge and pivotal responses depend on how many outer slots each decorated address traverses; the distance depends on decorations along all-outer addresses.
The mass response also records the order of the kernels along these addresses.
Since $\mathbf K_3\mathbf J_3\ne\mathbf J_3\mathbf K_3$, redistributing decorations can change the mass response while preserving the scalar responses.
The construction uses this freedom to meet the four targets
\begin{equation}\label{eq:seed-targets}
\begin{gathered}
 \ell_R=b^{100},\qquad m_R=b^{232},\qquad
 \varphi_R'(1/2)=b^{70},\\
 \mathbf M_{\mathfrak C}(1/2)\begin{pmatrix}55\\46\\23\end{pmatrix}
 =b^{219}\begin{pmatrix}55\\46\\23\end{pmatrix}.
\end{gathered}
\end{equation}
The last target uses the same positive vector for every seed.
It both identifies the Perron root and makes the mass multipliers multiplicative under compositions of these seeds.

The complete leaf encoding, response formulas, signed corrections and capacity bounds are retained with the exact integer verifier in the
\href{https://github.com/Nero-17/LEAN-Formalisation-IGS-Bernoulli-percolation-universality-class}{construction repository}.
Its README links each part of the construction to the corresponding code and Lean proofs.
The finite feasibility check is the following lemma.

\begin{lemma}\label{lem:integer-certificates}
For $(b,n)=(19,424),(661,936),(739,952)$, the depth-$n$ leaf allocations in the construction repository define finite rules satisfying~\eqref{eq:seed-targets}.
\end{lemma}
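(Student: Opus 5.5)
This lemma is a finite computation, so the plan is to reduce each of the four targets in~\eqref{eq:seed-targets} to exact integer arithmetic on the leaf allocation. The arithmetic is then discharged by the verifier and its Lean certificate. I will not search for an allocation by hand. Instead I will show that the recursions of Lemma~\ref{lem:wheatstone-responses} and~\eqref{eq:wheatstone-mass} determine every target from the allocation data, and that a checker can evaluate them exactly.

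\textbf{Admissibility and the scalar targets.} Each rule is built from $E$ by finitely many applications of $W$. Lemma~\ref{lem:wheatstone-responses}, applied by induction over the tree, therefore gives admissibility and $p_c=1/2$ at once. The edge count, the terminal distance and the pivotal multiplier are then computed bottom-up along the depth-$n$ complete ternary tree, with the recursions
\[
 m=2m_A+2m_B+m_C,\qquad
 \ell=\min\{\ell_A+\ell_B,2\ell_A+\ell_C,2\ell_B+\ell_C\},\qquad
 \varphi'=\tfrac34\varphi'_A+\tfrac34\varphi'_B+\tfrac18\varphi'_C,
\]
starting from $E$ (with values $1,1,1$) and from a decorated leaf $W$ (with values $5,2,13/8$).

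A direct evaluation would be far too large, so the values are aggregated by address type. The edge and pivotal responses are linear in the leaf data. Each one is a sum over decorated addresses of a weight depending only on the number of outer and central slots traversed, namely $2^{j}$ and $(3/4)^{j}(1/8)^{n-j}$. The checker therefore only needs the counts of decorations per slot profile. These counts are binomially bounded, and the "capacity bounds" confirm that the allocation fits within them. The distance is controlled by the decorations along all-outer addresses, because central routes are strictly longer once $n$ is large. I will verify this dominance explicitly, so that $\ell_R$ reduces to a count along the outer branch.

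\textbf{The mass target.} By~\eqref{eq:wheatstone-mass} and induction, $\mathbf M_{\mathfrak C}(1/2)$ equals a sum over leaves of an ordered product of kernels $\mathbf K_3,\mathbf J_3$ read along the address. Each term is multiplied by $\mathbf M_E$, or by $\mathbf M_W(1/2)$ at a decorated leaf. Applying this sum to the fixed vector $(55,46,23)^{\mathsf T}$ gives a rational vector, and the target asks that it equal $b^{219}(55,46,23)^{\mathsf T}$.

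To evaluate it, I will group leaves by their kernel word. The allocation is described as an undecorated baseline plus signed corrections, so the mass is the baseline $(2\mathbf K_3+\mathbf J_3)^n$-type sum plus a finite list of correction words. The verifier computes these with exact rationals, or as integers after scaling by $16^{n}$.

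Once the eigenvector identity holds, Perron--Frobenius identifies the Perron root as $b^{219}$. This uses primitivity from Proposition~\ref{prop:live-matrix-primitive}, and it is enough because the vector is strictly positive.

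\textbf{Main obstacle.} The obstacle is the mass eigenvector condition. It needs three simultaneous exact equalities involving huge powers of $b$, achieved by noncommuting redistributions of decorations. Preserving the scalar targets while doing so rests on $\mathbf K_3\mathbf J_3\ne\mathbf J_3\mathbf K_3$. I would rely on the repository's encoding, and my proof step is to justify that the encoded signed corrections are realisable, that is, that all decoration counts are nonnegative and within the available leaves. The integer identities themselves are then kernel-checked in Lean.
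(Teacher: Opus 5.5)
Your proposal follows the paper's computer-assisted proof. The four targets are reduced, via Lemma~\ref{lem:wheatstone-responses} and the ordered kernel expansion of~\eqref{eq:wheatstone-mass} (with $\mathbf M_E=I$), to exact arithmetic checked by the verifier, and capacity bounds guarantee that the encoded allocations are genuine finite graphs.

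Two small inaccuracies do not affect the argument. First, central routes are never shorter than $\ell_A+\ell_B$, but this holds at every node and not only ``once $n$ is large'', and ties can occur (e.g.\ $W(E,W,E)$). The reason is that leaf distances lie in $\{1,2\}$, so each depth-$k$ subtree has distance in $[2^k,2^{k+1}]$. Second, decorated leaves contribute an extra factor $\mathbf M_W(1/2)=2\mathbf K_3+\mathbf J_3$, so clearing denominators requires $16^{n+1}$ rather than $16^n$.
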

\begin{proof}[Computer-assisted proof]
The verifier reconstructs the mass kernels and pivotal weights by enumerating the thirty-two Wheatstone configurations.
It then checks the four targets by exact integer arithmetic, summing ordered matrix products through finite recursions.
Bounds on disjoint correction supports, together with explicit checks of the exceptional addresses, verify every allocation capacity and hence the existence of the encoded finite graphs.
The original three-state eigenvector is checked directly.
The repository contains the inputs, verifier, output hashes and Lean proofs of these certificate identities and their graph interpretation.
\end{proof}

\begin{theorem}\label{thm:incommensurate-class}
There exist admissible rule graphs $R_{19},R_{661},R_{739}$ with critical point $1/2$ such that, for $b\in\{19,661,739\}$,
\[
 (\ell_{R_b},m_{R_b},\rho(\mathbf M_{\mathfrak C_{R_b}}),\varphi_{R_b}'(1/2))
 =(b^{100},b^{232},b^{219},b^{70}).
\]
Their critical mass matrices share the positive eigenvector $(55,46,23)^T$, and each generated lattice $\Xi$ satisfies
\[
 (\dim_{\mathrm H}(\Xi),\dim_{\mathrm M}(\mathfrak C),\dim_{\mathrm P}(\mathfrak C))
 =\left(\frac{58}{25},\frac{219}{100},\frac7{10}\right),
\]
while their scale logarithms are linearly independent over $\mathbb Q$.
\end{theorem}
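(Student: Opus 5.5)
The plan is to take the three rules $R_{19},R_{661},R_{739}$ produced by Lemma~\ref{lem:integer-certificates} and read off each claim from the targets~\eqref{eq:seed-targets}, using the general results of Section~\ref{sec:bernoulli}.

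\textbf{Admissibility and the critical point.} Each $R_b$ is built from $E$ by finitely many applications of $W$. Lemma~\ref{lem:wheatstone-responses} therefore shows that it is admissible, so it generates a classical EIGS hierarchical lattice, and that its critical point is $1/2$. The scalar targets $\ell_{R_b}=b^{100}$, $m_{R_b}=b^{232}$ and $\varphi_{R_b}'(1/2)=b^{70}$ are then exactly three of the four certified identities in~\eqref{eq:seed-targets}.

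\textbf{The Perron root.} The fourth target says that $(55,46,23)^{\mathsf T}$ is a strictly positive right eigenvector of $\mathbf M_{\mathfrak C}(1/2)$ with eigenvalue $b^{219}$. By Proposition~\ref{prop:live-matrix-primitive}, the matrix is primitive. Perron--Frobenius theory says that a nonnegative irreducible matrix has, up to scaling, only one strictly positive eigenvector, and its eigenvalue is the spectral radius. Hence $\rho(\mathbf M_{\mathfrak C_{R_b}})=b^{219}$, and the eigenvector is shared by all three matrices because it does not depend on $b$.

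\textbf{The dimensions.} Theorem~\ref{thm:critical-mass-dimension}, Theorem~\ref{thm:pivotal-dimension} and the ambient formula $\log m/\log\ell$ give
\[
 \dimhaus(\Xi)=\frac{232\log b}{100\log b}=\frac{58}{25},\qquad
 \dimmass(\mathfrak C)=\frac{219}{100},\qquad
 \dim_{\mathrm P}(\mathfrak C)=\frac{70}{100}=\frac7{10}.
\]
These values do not depend on $b$.

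\textbf{Independence of the scale logarithms.} Suppose integers $a_1,a_2,a_3$ satisfy
\[
 a_1\cdot100\log19+a_2\cdot100\log661+a_3\cdot100\log739=0.
\]
Exponentiating gives $19^{100a_1}\,661^{100a_2}\,739^{100a_3}=1$. Since $19$, $661$ and $739$ are distinct primes, unique factorisation forces $a_1=a_2=a_3=0$. So the three scale logarithms are linearly independent over $\mathbb Q$.

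\textbf{Main obstacle.} All the real work is in the existence of the allocations, that is, Lemma~\ref{lem:integer-certificates}. The only delicate step left here is using that lemma correctly. Its certificate must verify the eigenvector identity for the full three-state matrix $\mathbf M_{\mathfrak C}(1/2)$, built through~\eqref{eq:wheatstone-mass}, and not merely for the block $\mathbf B_R$. The positive-eigenvector argument above requires the full matrix. The lemma's remark that the original three-state eigenvector is checked directly covers this.
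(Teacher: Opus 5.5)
Your proof is correct and follows the paper's argument essentially step by step. You take the rules from Lemma~\ref{lem:integer-certificates}, get admissibility and $p_c=1/2$ from Lemma~\ref{lem:wheatstone-responses}, use primitivity and the common positive eigenvector to identify $b^{219}$ as the Perron root, and deduce independence of the scale logarithms by unique factorisation of the bases $19,661,739$, exactly as the paper does.
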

\begin{proof}
Take the rules supplied by Lemma~\ref{lem:integer-certificates}. They are admissible and have critical point $1/2$ by Lemma~\ref{lem:wheatstone-responses}. The first three targets in~\eqref{eq:seed-targets} give their distance, volume and pivotal multiplier. The positive eigenvector in the fourth target identifies $b^{219}$ as the Perron root. The three logarithmic ratios give the displayed dimensions.

The integers $19,661,739$ are greater than one and pairwise coprime. An integer relation between their logarithms would give a product of integer powers equal to one. Taking a valuation at a prime dividing each base in turn forces each exponent to vanish. Multiplication of the logarithms by $100$ preserves this independence.
\end{proof}

These lattices belong to the same critical-exponent universality class by Theorem~\ref{thm:exponent-class-dimensions}, so their incommensurate scales disprove Conjecture~\ref{conj:scale-commensurability}.

\begin{example}
For $R_{19}$, direct substitution gives
\[
 \frac{\log(19^{232})}{\log(19^{100})}=\frac{58}{25},\qquad
 \frac{\log(19^{219})}{\log(19^{100})}=\frac{219}{100},\qquad
 \frac{\log(19^{70})}{\log(19^{100})}=\frac7{10}.
\]
Theorems~\ref{thm:critical-exponents-dimensions} and~\ref{thm:delta-eta-dimensions} give the four defining exponents
\[
 \beta=\frac{13}{70},\qquad
 \nu=\frac{10}{7},\qquad
 \delta=\frac{219}{13},\qquad
 \eta_{\mathrm{av}}=-\frac3{50}.
\]
The same calculation applies to $661$ and $739$. Despite this agreement, no positive numbers of substitution steps on any two of these lattices give equal length scales.
\end{example}

The common positive eigenvector makes the mass Perron roots multiplicative under these particular compositions. It is not a general property of substitution, as the noncommuting example in Section~\ref{sec:noncommuting} shows.
\begin{theorem}\label{cor:infinite-scales}\label{thm:infinite-incommensurate-scales}
For every integer $k\ge0$, there is a classical EIGS hierarchical lattice $\Xi$ with scale $(19\cdot661^k)^{100}$ and critical dimensions
\[
 (\dim_{\mathrm H}(\Xi),\dim_{\mathrm M}(\mathfrak C),\dim_{\mathrm P}(\mathfrak C))
 =\left(\frac{58}{25},\frac{219}{100},\frac7{10}\right).
\]
These scales are pairwise incommensurate.
\end{theorem}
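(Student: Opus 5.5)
The plan is to build the lattice for index $k$ by composing the seed rules of Theorem~\ref{thm:incommensurate-class}: take
\[
 R^{(k)}:=R_{19}\circ R_{661}^{\circ k}=R_{19}\circ\underbrace{R_{661}\circ\cdots\circ R_{661}}_{k},
\]
with $R^{(0)}=R_{19}$. All seeds have critical point $1/2$, so the composite has critical point $1/2$. Its admissibility (simplicity, canonicality, terminal symmetry, $\ell\ge2$, $\lambda\ge2$) holds because admissibility is preserved by substitution, as already used in Lemma~\ref{lem:permutation-obstruction}. Alternatively, the composite is again a graph built from $E$ by finitely many applications of $W$, and Lemma~\ref{lem:wheatstone-responses} applies.

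The scalar data follow from the substitution identities~\eqref{eq:substitution-responses}. Edge counts and terminal distances multiply, so $\ell=19^{100}661^{100k}=(19\cdot661^k)^{100}$ and $m=(19\cdot661^k)^{232}$. Since $\varphi_{A\circ B}=\varphi_B\circ\varphi_A$ and every factor fixes $1/2$, the chain rule gives $\varphi'(1/2)=19^{70}661^{70k}$. For the mass, iterate~\eqref{eq:substitution-responses}: at $p=1/2$ every intermediate occupation $\varphi_A(1/2)$ equals $1/2$, so the critical matrix is the product $\mathbf M_{R_{661}}(1/2)^k\mathbf M_{R_{19}}(1/2)$, in the order dictated by the formula. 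Each factor has the common positive right eigenvector $v=(55,46,23)^T$, with eigenvalues $661^{219}$ and $19^{219}$ respectively. Hence the product has $v$ as a positive eigenvector with eigenvalue $(19\cdot661^k)^{219}$. The product is primitive by Proposition~\ref{prop:live-matrix-primitive} applied to the admissible composite, so by Perron--Frobenius a positive eigenvector identifies the Perron root. The three dimension formulas (ambient, Theorem~\ref{thm:critical-mass-dimension}, Theorem~\ref{thm:pivotal-dimension}) then give $232/100=58/25$, $219/100$ and $70/100=7/10$.

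For pairwise incommensurability, suppose $(19\cdot661^{k})^{100a}=(19\cdot661^{k'})^{100b}$ with $k\ne k'$ and $a,b\ge1$. Comparing $19$-adic valuations gives $a=b$, and comparing $661$-adic valuations then gives $ka=k'a$, so $k=k'$, a contradiction.

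The main obstacle is the mass step. The Perron roots are not multiplicative under substitution in general (Lemma~\ref{lem:permutation-obstruction}), so the argument must rest on the common eigenvector certified by the fourth target in~\eqref{eq:seed-targets}. I would carefully check that the matrix-product formula is applied at the right occupations, which is trivial here because every factor fixes $1/2$. I would also confirm that the shared vector is a right eigenvector in the convention of~\eqref{eq:substitution-responses}, so that it passes through the product. A secondary check is that admissibility of composites of admissible rules is genuinely established, which I would verify directly for canonicality, terminal symmetry and the cut bound.
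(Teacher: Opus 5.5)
Your proposal is correct and follows essentially the paper's own proof. You compose $R_{19}$ with $k$ copies of $R_{661}$, multiply edge counts, distances and (by the chain rule at the common fixed point $1/2$) the pivotal multipliers, use the shared right eigenvector $(55,46,23)^T$ to identify the Perron root of the product of critical mass matrices, and obtain incommensurability from the $19$- and $661$-adic valuations. Your two additions are valid refinements of steps the paper states only briefly: primitivity of the composite's matrix pins down the Perron root, and the identity $A\circ W(B_1,B_2,B_3)=W(A\circ B_1,A\circ B_2,A\circ B_3)$ means the composite is again built from $E$ by $W$, so Lemma~\ref{lem:wheatstone-responses} gives admissibility directly.
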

\begin{proof}
Compose $R_{19}$ with $k$ copies of $R_{661}$ by uniform edge substitution. Composition preserves admissibility: terminal paths and the involution refine through the child graphs, and two edge-disjoint paths in each factor give two in the composite. All factors have critical point $1/2$, so the chain rule multiplies their thermal multipliers. Edge counts and distances multiply by the composition identities~\eqref{eq:substitution-responses}.

Their critical mass matrices have the common positive eigenvector $(55,46,23)^T$. Their product therefore has that eigenvector with eigenvalue $19^{219}661^{219k}$, which is its Perron root. Thus the composite has the same three logarithmic dimensions and scale $(19\cdot661^k)^{100}$.

If two such scales with indices $k,l$ had equal positive integer powers, their $19$-adic valuations would make the two powers equal. Their $661$-adic valuations would then give $k=l$. The scales also grow without bound.
\end{proof}

As a further application, a variant of the same construction gives three transcendental critical dimensions that remain rationally proportional.

\begin{theorem}\label{thm:transcendental-dependent-dimensions}
There is a classical EIGS hierarchical lattice $\Xi$ whose three critical dimensions are all transcendental but span a one-dimensional vector space over $\mathbb Q$. More precisely, its admissible rule $R$ with $p_c=1/2$ satisfies
\[
 (\ell_R,m_R,\rho(\mathbf M_{\mathfrak C_R}),\varphi_R'(1/2))
 =(19^{100}+480,19^{232},19^{219},19^{70}),
\]
and hence
\[
 (\dim_{\mathrm H}(\Xi),\dim_{\mathrm M}(\mathfrak C),\dim_{\mathrm P}(\mathfrak C))
 =\frac{\log19}{\log(19^{100}+480)}(232,219,70).
\]
\end{theorem}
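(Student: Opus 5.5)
The plan has two parts: first build a rule $R$ with the displayed invariants by modifying the seed construction of Lemma~\ref{lem:integer-certificates} for $b=19$, and then derive the transcendence and dependence claims by arithmetic.

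\textbf{Construction.} I would keep the four targets of \eqref{eq:seed-targets} for $b=19$, except that the distance target $\ell_R=19^{100}$ becomes $\ell_R=19^{100}+480$. Lemma~\ref{lem:wheatstone-responses} says the distance depends only on decorations along all-outer addresses. The edge count, the pivotal multiplier and the mass response depend on how decorations are distributed, through the kernels $\mathbf K_3,\mathbf J_3$ in \eqref{eq:wheatstone-mass}. So I would lengthen the shortest terminal paths by $480$ while compensating elsewhere in the allocation. The aim is to keep $m_R=19^{232}$, $\varphi_R'(1/2)=19^{70}$ and the eigenvector identity $\mathbf M_{\mathfrak C}(1/2)(55,46,23)^{T}=19^{219}(55,46,23)^{T}$ unchanged.

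Admissibility and $p_c=1/2$ then follow from Lemma~\ref{lem:wheatstone-responses}. The positive eigenvector identifies $19^{219}$ as the Perron root, exactly as in Theorem~\ref{thm:incommensurate-class}. The dimension formula follows from Theorems~\ref{thm:critical-mass-dimension} and~\ref{thm:pivotal-dimension} together with $\dimhaus(\Xi)=\log m/\log\ell$. I expect this step to be the main obstacle, since it is a finite feasibility problem that can only be settled by a certificate. The adjusted allocation would be certified by the same exact integer verifier, checking both the capacity bounds and the modified distance along all-outer addresses.

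\textbf{Rational dependence.} This is immediate: all three dimensions are rational multiples of $\theta:=\log19/\log L$, where $L:=19^{100}+480$. They span $\mathbb Q\theta$, which is one-dimensional because $\theta\neq0$.

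\textbf{Transcendence.} It suffices to show that $\theta$ is transcendental, since nonzero rational multiples of a transcendental number are transcendental.

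First, $\theta$ is irrational. If $\theta=a/b$ with positive integers $a,b$, then $19^{b}=L^{a}$. This forces $L$ to be a power of $19$. But $L\equiv480\not\equiv0\pmod{19}$ (because $480=25\cdot19+5$), and $L>1$, so this is impossible.

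Second, $\theta$ is not algebraic irrational. If it were, the Gelfond--Schneider theorem would make $L^{\theta}=19$ transcendental, because $L$ is algebraic and $L\neq0,1$. That contradicts $19$ being algebraic.

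Hence $\theta$ is transcendental, and therefore each of $232\theta$, $219\theta$ and $70\theta$ is transcendental.
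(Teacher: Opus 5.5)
Your proposal is correct and follows the paper's proof essentially step for step. The rule comes from a modified Wheatstone allocation (the paper uses the depth-$424$ one) with the length target changed to $\ell_R=19^{100}+480$, verified by an exact integer certificate; transcendence then follows from irrationality (your observation $19\nmid 19^{100}+480$ is the paper's coprimality argument) together with Gelfond--Schneider applied to $(19^{100}+480)^{\theta}=19$. As in the paper, the only non-arithmetic content is the finite feasibility of the adjusted allocation, which you rightly identify as something only the certificate can settle.
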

\begin{proof}
Use the modified depth-$424$ allocation in the construction repository.
It uses the same Wheatstone operation, but changes the length target to $\ell_R=19^{100}+480$ while retaining the other three targets in~\eqref{eq:seed-targets} with $b=19$.
The separate certificate verifies all four identities and every allocation capacity by the method of Lemma~\ref{lem:integer-certificates}.
It therefore gives an admissible rule with the displayed multipliers.

Since $\gcd(19,19^{100}+480)=1$, a rational equality
$\log19/\log(19^{100}+480)=a/b$, with positive integers $a,b$, would imply $19^b=(19^{100}+480)^a$, which is impossible by prime factorisation. This logarithmic ratio is therefore irrational. If it were algebraic, the Gelfond--Schneider theorem would make
$(19^{100}+480)^{\log19/\log(19^{100}+480)}=19$
transcendental, a contradiction. The ratio is transcendental, as are its three nonzero rational multiples. Their rational span has dimension one.
\end{proof}

In this example,
\begin{gather*}
 \dim_{\mathrm M}(\mathfrak C)=\frac{219}{232}\dim_{\mathrm H}(\Xi),\qquad
 \dim_{\mathrm P}(\mathfrak C)=\frac{35}{116}\dim_{\mathrm H}(\Xi),\\
 \dim_{\mathbb Q}\operatorname{span}_{\mathbb Q}
 \{1,\dim_{\mathrm H}(\Xi),\dim_{\mathrm M}(\mathfrak C),\dim_{\mathrm P}(\mathfrak C)\}=2.
\end{gather*}
Thus even three transcendental dimensions need not satisfy the rank condition in Theorem~\ref{thm:dimension-rank}. This example does not provide two incommensurate scales in a nonrational class: whether nonrationality of the critical dimensions forces commensurability remains a separate question.

\subsection{Discussion}\label{sec:discussion}

Physically, the three critical dimensions describe the growth of ambient volume, critical cluster mass and perturbations of the occupation probability relative to length scale.
Under our observation conventions, these three growth rates determine exactly the four critical exponents used to define the universality class.
They measure responses relative to changes in length, whereas the length factor associated with each substitution carries separate information.
Our construction shows that different models can have exactly the same normalised growth rates while their scales cannot be synchronised by any finite grouping of levels.
Equality of critical exponents therefore does not require compatible discrete renormalisation steps.

The final example also shows that the arithmetic nature of an individual critical dimension cannot replace independence among the growth rates: even when all three dimensions are transcendental, they may satisfy exact rational proportionality relations.
What matters in Section~\ref{sec:scale} is therefore the arithmetic relationship among these dimensions, rather than merely whether each is rational or transcendental.

This leads us to the following open problems:
\begin{enumerate}[leftmargin=*,itemsep=0.4em]
\item Can one construct two classical EIGS hierarchical lattices in the same critical-exponent universality class whose substitution scales are incommensurate and at least one of whose common critical dimensions is transcendental?
Any such example would disprove the four exponentials conjecture~\cite{dasgupta2021transcendence}.
Indeed, let $D$ be a common transcendental critical dimension and let $\ell_1,\ell_2$ be the two substitution scales.
Then $\log\ell_1,\log\ell_2$ are linearly independent over $\mathbb Q$, and $1,D$ are also linearly independent over $\mathbb Q$.
On the other hand, $\ell_i^D$ is the corresponding edge count, critical mass Perron root or pivotal growth multiplier of the $i$th rule. Hence Lemma~\ref{lem:algebraicity} implies that all four of the following numbers are algebraic:
\[
 \begin{pmatrix}
 e^{\log\ell_1}&e^{D\log\ell_1}\\
 e^{\log\ell_2}&e^{D\log\ell_2}
 \end{pmatrix}
 =\begin{pmatrix}
 \ell_1&\ell_1^D\\
 \ell_2&\ell_2^D
 \end{pmatrix}.
\]
The four exponentials conjecture asserts that, under these two linear independence assumptions, at least one of the exponential values must be transcendental.
Consequently, if the four exponentials conjecture holds, the scales in any class whose three critical dimensions are not all rational must be pairwise commensurate. Here an irrational dimension is necessarily transcendental by the Gelfond--Schneider theorem.
Theorem~\ref{thm:transcendental-dependent-dimensions} gives only a single lattice with transcendental critical dimensions, not such a pair of incommensurate models.
\item For the incommensurate models in the same class constructed here, do suitably normalised cluster mass distributions and averaged connectivity functions admit a common scaling description?
Beyond the four critical exponents, which finer critical quantities retain information about the underlying discrete scales?
\item Can structural conditions on graph substitutions or their induced random EIGS characterise the critical-exponent universality class and explain why different discrete scales can produce the same critical exponents?
Such a description must account both for the effect of substitution order on critical behaviour and for incommensurate scales within a single class.
\end{enumerate}

Although the structure of universality classes is more complex than we initially anticipated, we still believe that a beautiful and deep structural classification exists, one that can help us understand percolation universality anew.

\section*{Acknowledgement}
The author would like to thank Xuanyu Yu and Xiaole Zou for helpful conversations on percolation.
The author also thanks Ben Hambly for valuable discussions.

This work was supported by the Additional Funding Programme for Mathematical Sciences, delivered by EPSRC (EP/V521917/1) and the Heilbronn Institute for Mathematical Research,
and also by the EPSRC Centre for Doctoral Training in Mathematics of Random Systems: Analysis, Modelling and Simulation (EP/S023925/1).

\bibliographystyle{amsref}
\bibliography{references}

\appendix
\ctexset{
  appendix/name={Appendix\space,.},
  section/tocline={Appendix\space\thesection.\space #2}
}
\section{The remaining critical exponents}\label{app:other-critical-exponents}

We retain the annealed law and notation of Subsection~\ref{sec:static-class}.
This appendix gives the definitions and results for $\gamma,\Delta,\alpha,\rho$ summarised in Table~\ref{tab:eight-exponents}.

The finite-cluster moments and cluster-number density are
\[
 \chi_k^{\mathrm f}(p):=\mathbb E_p^{\mathrm{ann}}
 [|C(o)|^k\mathbf1_{\{|C(o)|<\infty\}}]
 =\sum_{s\ge1}s^{k+1}n_s(p),\qquad
 \kappa(p):=\sum_{s\ge1}n_s(p),
\]
where $k\ge1$, and the integrand defining $\chi_k^{\mathrm f}$ is assigned value zero on infinite clusters.
Write $\operatorname{rad}C(o):=\sup\{\dist_G(o,x):x\in C(o)\}$ for the radius in the ambient graph distance.

\begin{definition}\label{def:other-exponents}
Whenever the observables are finite in the relevant punctured neighbourhood and the limits exist, define
\begin{align*}
 \gamma_\pm&:=-\lim_{p\to p_c^\pm}
 \frac{\log\chi_1^{\mathrm f}(p)}{\log|p-p_c|},\\
 \Delta_{k,\pm}&:=-\lim_{p\to p_c^\pm}
 \frac{\log(\chi_{k+1}^{\mathrm f}(p)/\chi_k^{\mathrm f}(p))}{\log|p-p_c|},\\
 \alpha_\pm&:=-1-\lim_{p\to p_c^\pm}
 \frac{\log|\kappa'''(p)|}{\log|p-p_c|},\\
 \frac1\rho&:=-1-\lim_{r\to\infty}
 \frac{\log\mathbb P_{p_c}^{\mathrm{ann}}(\operatorname{rad}C(o)=r)}{\log r}.
\end{align*}
The last limit is along positive integers.
A common gap exponent $\Delta_\pm$ means that $\Delta_{k,\pm}$ has the same value for every $k\ge1$.
For $\alpha_\pm$ we use the full cluster-number density, with no analytic background subtracted, and require $\kappa'''$ to be nonzero sufficiently close to $p_c$ on the chosen side.
\end{definition}

\subsection*{Susceptibility and moment ratios}
The following estimate determines both susceptibility and the moment ratios.

\begin{proposition}\label{prop:annealed-moments}
For $k\ge1$, the moment $\chi_k^{\mathrm f}(p)$ is finite for every $p>p_c$.
For $0<p<p_c$, it is finite if and only if $d_R^{k+1}<m$.
On each side on which it is finite, as $p$ approaches $p_c$,
\begin{equation}\label{eq:near-critical-moment-sum}
 \chi_k^{\mathrm f}(p)\asymp
 \sum_{n=1}^{N(p)}
 \left(\frac{\rho(\mathbf M_{\mathfrak C})^{k+1}}m\right)^n,
\end{equation}
where $N(p)$ is the exit time in~\eqref{eq:critical-escape-time}.
If $\rho(\mathbf M_{\mathfrak C})^{k+1}<m$, the moment converges to its finite critical value on each such side.
\end{proposition}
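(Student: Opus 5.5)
The plan is to start from the birth-cluster series \eqref{eq:birth-cluster-series} of Lemma~\ref{lem:conditional-mass-moments}, which can be summed against the nonnegative function $s^{k+1}$. For every $p\in(0,1)$ this gives
\[
 \chi_k^{\mathrm f}(p)=\frac{m-1}{|V(R)|-2}\sum_{n\ge1}m^{-n}\,
 \mathbb E_p\sum_{C\in\mathcal J_n}|C|^{k+1}.
\]
Every cluster of the finite graph $\Xi^n$ is finite. Infinite clusters therefore never enter this series, and the $\mathbf 1_{\{|C(o)|<\infty\}}$ convention is automatic once the volume limit is taken first. The task reduces to estimating the $(k+1)$st moment of a cluster born at depth $n$.

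Such a cluster is an internal vertex of the top copy of $R$ together with the boundary clusters that it reaches in the incident depth-$(n-1)$ cells. By Lemma~\ref{lem:renormalisation}, the coarse edges seen at height $j$ above the bottom are open independently with probability $\varphi_R^{\circ j}(p)$. Conditioning as in Theorem~\ref{thm:percolation-random-eigs} therefore gives the recursion \eqref{eq:conditional-mass-recursion}, but with the level-dependent matrices $\mathbf M_{\mathfrak C}(\varphi_R^{\circ j}(p))$ in place of a single critical matrix.

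I will split the depths at the exit time $N(p)$ from \eqref{eq:critical-escape-time}.

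\emph{Depths $j\le N(p)$.} The deviations $|\varphi_R^{\circ j}(p)-p_c|$ grow geometrically up to exit, so their sum is bounded uniformly in $p$. Since $q\mapsto\mathbf M_{\mathfrak C}(q)$ is smooth near $p_c$, the ordered product of these matrices is comparable entrywise to $\mathbf M_{\mathfrak C}^{\,j}$, with constants independent of $p$. This uses primitivity (Proposition~\ref{prop:live-matrix-primitive}) and a standard perturbation argument for products of positive matrices. The moment induction in the proof of Lemma~\ref{lem:conditional-mass-moments} then goes through verbatim and gives
\[
 \mathbb E X_{\sigma,j}^{r}\asymp\rho(\mathbf M_{\mathfrak C})^{rj}\qquad(j\le N(p)),
\]
with constants uniform in $p$ near $p_c$. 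Both bounds are needed: the upper bound for the terms with $n\le N(p)$, and the lower bound for the single term $n\approx N(p)$. For the lower bound I will close all top-level coarse edges, exactly as in the proof of Theorem~\ref{thm:delta-eta-dimensions}. This produces the two-sided estimate \eqref{eq:near-critical-moment-sum} restricted to $n\le N(p)$.

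\emph{Depths $n>N(p)$, subcritical side.} Here the coarse occupation $\varphi_R^{\circ j}(p)$ decays doubly exponentially to $0$ once $j$ exceeds $N(p)$. A state-$u$ cell at such a level is all-closed with probability tending to one, and then it has exactly $d_R$ live $u$-children. The one-arm mass of a depth-$n$ cell is thus comparable to $d_R^{\,n-N(p)}\rho(\mathbf M_{\mathfrak C})^{N(p)}$. The errors come from the rare open coarse edges, and their probabilities are summable, so they only change constants.

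The contribution of depth $n$ is then of order
\[
 m^{-n}\,d_R^{(k+1)(n-N)}\rho(\mathbf M_{\mathfrak C})^{(k+1)N}.
\]
The series converges if and only if $d_R^{k+1}<m$. The lower bound uses the all-closed event, which has probability bounded below, so it gives divergence when $d_R^{k+1}\ge m$. When the series converges, its tail is geometric and dominated by the $n=N(p)$ term, which is already included in \eqref{eq:near-critical-moment-sum}.

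\emph{Depths $n>N(p)$, supercritical side.} For a cluster born at depth $n$ to be finite, every coarse edge leaving its top cell must be closed. Since $1-\varphi_R(q)=O((1-q)^2)$, the closing probabilities $1-\varphi_R^{\circ j}(p)$ decay doubly exponentially once $j>N(p)$. This beats the at most exponential growth $m^{(k+1)n}$ of the masses, so moments of every order are finite and the tail is again negligible.

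Finally, if $\rho(\mathbf M_{\mathfrak C})^{k+1}<m$, the sum in \eqref{eq:near-critical-moment-sum} is bounded. Each birth term converges as $p\to p_c$, because only finitely many polynomial coefficients are involved at a fixed depth. Dominated convergence, using the uniform bounds above, then gives convergence of the moment to its critical value.

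I expect the main obstacle to be the uniform moment control along the non-stationary sequence $\varphi_R^{\circ j}(p)$. In particular the comparison of the inhomogeneous matrix products with $\rho(\mathbf M_{\mathfrak C})^{j}$ must hold uniformly up to the exit time, including the lower bound. My first approach is a Birkhoff contraction (Hilbert projective metric) argument for positive matrices, combined with the bounded total deviation from $p_c$.
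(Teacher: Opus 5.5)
Your architecture is the paper's: the birth-moment series from \eqref{eq:birth-cluster-series}, a split at the exit time $N(p)$, entrywise comparison of the level-dependent matrix products with powers of $\mathbf M_{\mathfrak C}$, the all-top-cells-fail event for lower bounds, and coarse-graining beyond $N(p)$. Your subcritical tail $m^{-n}d_R^{(k+1)(n-N(p))}\rho(\mathbf M_{\mathfrak C})^{(k+1)N(p)}$ is exactly the paper's factorisation.

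\textbf{The gap is the supercritical tail near $p_c$.} There you bound the mass of a cluster born at depth $n$ by the crude quantity $m^{n}$, and its existence probability by $m(1-\varphi_R^{\circ(n-1)}(p))$. This proves finiteness for fixed $p>p_c$. It does not make the tail negligible, because the doubly exponential decay only begins after height $N(p)$. Since $\varphi_R^{\circ N(p)}(p)$ lies in a fixed compact subinterval of $(p_c,1)$, the failure probability at birth depth $N(p)+1$ is of order one. The tail you obtain is therefore of order $m^{kN(p)}$. Because $\rho(\mathbf M_{\mathfrak C})<m$ (Lemma~\ref{lem:conditional-mass-moments}), we have $\rho(\mathbf M_{\mathfrak C})^{k+1}/m<m^k$. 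So this bound is exponentially larger than the right-hand side of \eqref{eq:near-critical-moment-sum}. Consequently, above $p_c$ you have established neither the upper half of \eqref{eq:near-critical-moment-sum} nor the uniform domination you invoke for the critical limit.

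The missing step is the factorisation you already used below $p_c$; the paper uses it on both sides.
\begin{itemize}
\item Coarsen depth-$N(p)$ cells into edges with parameter $\varphi_R^{\circ N(p)}(p)$.
\item Bound the lifted mass of a coarse cluster $C$ by its coarse vertex count plus at most $A(C)$ conditional depth-$N(p)$ boundary masses, where $A(C)$ counts the coarse edges incident to $C$.
\item With $r=k+1$, the inequality $(\sum_{i\le a}x_i)^r\le a^{r-1}\sum_{i\le a}x_i^r$ and the uniform pre-exit moment bounds give $C_r\rho(\mathbf M_{\mathfrak C})^{rN(p)}(1+A(C))^r$.
\item Your fixed-parameter argument, applied at the coarse level and uniformly over the exit interval, bounds the expected sum of $(1+A(C))^r$ over coarse clusters born at level $j$ by $C_rm^{rj}\bigl(1-\varphi_R^{\circ(j-1)}(\varphi_R^{\circ N(p)}(p))\bigr)$.
\item Summing against $m^{-N(p)-j}$ gives a tail of $C_r(\rho(\mathbf M_{\mathfrak C})^r/m)^{N(p)}$.
\end{itemize}

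Two smaller points. First, on the subcritical side, ``rare open coarse edges with summable probabilities'' must become a moment recursion. The roughly $d_R^{\,n-j}$ one-arm cells at height $j$ are not simultaneously all-closed with high probability. The paper's Minkowski bound $\|A_{n+1}\|_r\le d_R\|A_n\|_r+m^{n+1}(m\varphi_R^{\circ n}(p))^{1/r}$ handles this, with the doubly exponential decay beating the factor $(m/d_R)^n$. Similarly, the event with probability bounded below is only the top-level failure event. On that event, the paper obtains the conditional first-moment lower bound from the open edges at one internal vertex, whose degree in $\Xi^n$ is $\deg_R(v)d_R^{n-1}$, and then applies Jensen. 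Second, Birkhoff contraction is unnecessary. The zero pattern of $\mathbf M_{\mathfrak C}(q)$ is constant on $(0,1)$, so its nonzero entries lie within factors $\exp(\pm C|q-p_c|)$ of their critical values. The bounded total deviation along the orbit then gives the entrywise product comparison directly.
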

\begin{proof}
Summing~\eqref{eq:birth-cluster-series} against $s^{k+1}$ gives the nonnegative identity
\begin{equation}\label{eq:birth-moment-series}
 \chi_k^{\mathrm f}(p)=\frac{m-1}{|V(R)|-2}
 \sum_{n\ge1}m^{-n}\mathbb E_p
 \sum_{C\in\mathcal J_n}|C|^{k+1}.
\end{equation}
In this proof, $r\ge2$ denotes the moment order of a cluster counted once, so that $r=k+1$.

\emph{Fixed parameters.}
For $p>p_c$, the probability $1-\varphi_R^{\circ n}(p)$ decreases faster than any exponential in $n$, because a failure to cross requires at least two failed child crossings near $p=1$.
If every top-level cell crosses, $\mathcal J_n$ is empty.
Since all cluster masses are at most $C_Rm^n$,
\[
 \mathbb E_p\sum_{C\in\mathcal J_n}|C|^r
 \le C_rm^{rn}\bigl(1-\varphi_R^{\circ(n-1)}(p)\bigr).
\]
This is summable after multiplication by $m^{-n}$.

For $p<p_c$, let $A_n$ count all edges of $\Xi^n$ incident to the open cluster of a specified terminal, including closed edges.
When every top-level cell fails to cross, only the $d_R$ cells incident to that terminal contribute.
On the complementary event, use $A_{n+1}\le m^{n+1}$ and a union bound.
Minkowski's inequality gives
\[
 \|A_{n+1}\|_r\le d_R\|A_n\|_r
 +m^{n+1}\bigl(m\varphi_R^{\circ n}(p)\bigr)^{1/r}.
\]
The last term decreases faster than any exponential, since $\ell\ge2$ and $\varphi_R^{\circ n}(p)$ tends to zero.
Iteration yields $\mathbb E_p A_n^r\le C_rd_R^{rn}$.
Every cluster in $\mathcal J_n$ contains a top-level internal vertex, and its mass is bounded by one plus the number of incident edges of that vertex's cluster.
Applying the same bound in the finitely many cells incident to those vertices gives
\[
 \mathbb E_p\sum_{C\in\mathcal J_n}|C|^r\le C_rd_R^{rn}.
\]
These estimates are uniform when $p$ ranges in a compact subinterval of $(0,p_c)$.

For a matching lower bound, fix an internal vertex $v$ of the top-level rule.
Its degree in $\Xi^n$ is $\deg_R(v)d_R^{n-1}$.
Let $D_{v,n}$ count its open incident edges, and let $E_n$ be the event that every top-level cell fails to cross.
The graph is simple, so $D_{v,n}$ counts distinct neighbours in the cluster of $v$.
Moreover,
\[
 \mathbb E_p[D_{v,n}\mathbf1_{E_n}]
 \ge\deg_R(v)d_R^{n-1}
 \bigl(p-m\varphi_R^{\circ(n-1)}(p)\bigr)
 \ge c_pd_R^n
\]
for all sufficiently large $n$.
On $E_n$ this cluster belongs to $\mathcal J_n$.
Jensen's inequality gives the reverse $r$th-moment bound $c_{p,r}d_R^{rn}$.
The series~\eqref{eq:birth-moment-series} is therefore finite exactly when $d_R^r<m$, including divergence in the equality case.

\emph{Uniform estimates up to the exit time.}
Write $p_j:=\varphi_R^{\circ j}(p)$.
Every nonzero entry of $\mathbf M_{\mathfrak C}(q)$ is strictly positive and analytic near $p_c$, and its zero pattern is fixed there.
Consequently its entries lie between the corresponding entries of $\mathbf M_{\mathfrak C}$ multiplied by $\exp(-C|q-p_c|)$ and by $\exp(C|q-p_c|)$.
As in the proof of~\eqref{eq:critical-escape-time},
$\sum_{j<N(p)}|p_j-p_c|\le C\varepsilon$.
Every product of consecutive matrices along this part of the orbit is thus bounded above and below, entrywise, by constant multiples of the corresponding power of $\mathbf M_{\mathfrak C}$.
The reward recursion~\eqref{eq:conditional-mass-recursion}, now with its level-dependent conditional laws, gives uniformly for $n\le N(p)$
\[
 \mathbb E_p X_{\sigma,n}\asymp\rho(\mathbf M_{\mathfrak C})^n,
 \qquad
 \mathbb E_p X_{\sigma,n}^r\le C_r\rho(\mathbf M_{\mathfrak C})^{rn}.
\]
Here $X_{\sigma,n}$ denotes the same conditional mass at parameter $p$.
For the higher moments, expand one generation as in Lemma~\ref{lem:conditional-mass-moments} and iterate the linear term using the matrix-product bound; induction bounds all mixed terms by the asserted lower-order moment products.
The bound also applies to any consecutive segment of the orbit.
Conditioning all top-level cells to fail, and using the mass attached to one internal vertex, gives a matching lower bound by Jensen's inequality.
The probability of this finite configuration stays bounded away from zero.
Hence
\begin{equation}\label{eq:pre-exit-birth-moment}
 \mathbb E_p\sum_{C\in\mathcal J_n}|C|^r
 \asymp\rho(\mathbf M_{\mathfrak C})^{rn}
 \qquad(1\le n\le N(p)).
\end{equation}

It remains to control all later birth levels.
Coarsen depth-$N(p)$ cells to edges with parameter $p_{N(p)}$, which lies in a fixed compact interval strictly on the chosen side of $p_c$.
For a coarse cluster $C$, let $A(C)$ be the number of coarse edges incident to it.
Its lifted mass is at most its coarse vertex count plus a sum of at most $A(C)$ conditional depth-$N(p)$ boundary masses.
Conditioning on the coarse configuration and using
$(\sum_{i=1}^a x_i)^r\le a^{r-1}\sum_{i=1}^a x_i^r$ bounds the expected $r$th power by
$C_r\rho(\mathbf M_{\mathfrak C})^{rN(p)}(1+A(C))^r$.
For coarse clusters born at level $j$, the preceding fixed-parameter arguments bound the expected sum of $(1+A(C))^r$ by
\[
 \begin{cases}
 C_rd_R^{rj},&p<p_c,\\
 C_rm^{rj}\bigl(1-\varphi_R^{\circ(j-1)}(p_{N(p)})\bigr),&p>p_c.
 \end{cases}
\]
The constants are uniform over the exit interval.
Their sums against $m^{-j}$ are finite in precisely the regimes already specified.
It follows that the part of~\eqref{eq:birth-moment-series} after $N(p)$ is at most
\[
 C_r\left(\frac{\rho(\mathbf M_{\mathfrak C})^r}{m}\right)^{N(p)}.
\]
Together with~\eqref{eq:pre-exit-birth-moment}, this proves~\eqref{eq:near-critical-moment-sum}.
When $\rho(\mathbf M_{\mathfrak C})^r<m$, the same estimates make the tail after any fixed birth level uniformly small as $p$ approaches $p_c$.
Each finite initial sum is continuous in $p$.
The moment therefore converges to its finite, positive critical value.
\end{proof}

\paragraph{Susceptibility.}
Proposition~\ref{prop:annealed-moments} with $k=1$ shows that $\chi_1^{\mathrm f}(p)$ is finite for every fixed $p>p_c$.
For $0<p<p_c$, it is finite if and only if $d_R^2<m$.
Evaluating the geometric sum in~\eqref{eq:near-critical-moment-sum} and using~\eqref{eq:critical-escape-time} gives three regimes on either side where it is finite:
\[
 \begin{array}{c|c}
 2\dimmass(\mathfrak C)>\dimhaus(\Xi)
 &\chi_1^{\mathrm f}(p)\asymp
 |p-p_c|^{-(2\dimmass(\mathfrak C)-\dimhaus(\Xi))/\dim_{\mathrm P}(\mathfrak C)}\\[2pt]
 2\dimmass(\mathfrak C)=\dimhaus(\Xi)
 &\chi_1^{\mathrm f}(p)\asymp\log(1/|p-p_c|)\\[2pt]
 2\dimmass(\mathfrak C)<\dimhaus(\Xi)
 &\displaystyle\lim_{p\to p_c^\pm}\chi_1^{\mathrm f}(p)=\chi_1^{\mathrm f}(p_c)\in(0,\infty).
 \end{array}
\]
In particular, $\gamma_+$ always has the logarithmic value
$\max\{2\dimmass(\mathfrak C)-\dimhaus(\Xi),0\}/\dim_{\mathrm P}(\mathfrak C)$.
The value zero includes both a logarithmic divergence and a finite limit; it does not assert a positive power-law singularity.
The ordinary diamond has $d_R^2=m=4$, so its annealed subcritical susceptibility is infinite throughout $(0,p_c)$, although $\gamma_+\approx2.9412$.

\paragraph{Moment ratios.}
More generally, $\chi_k^{\mathrm f}(p)$ is finite for every $p>p_c$, whereas for $0<p<p_c$ it is finite exactly when $d_R^{k+1}<m$.
Taking logarithms in~\eqref{eq:near-critical-moment-sum} and subtracting successive moment orders gives, with $(x)_+:=\max\{x,0\}$,
\[
 \Delta_{k,+}=
 \frac{((k+2)\dimmass(\mathfrak C)-\dimhaus(\Xi))_+
 -((k+1)\dimmass(\mathfrak C)-\dimhaus(\Xi))_+}
 {\dim_{\mathrm P}(\mathfrak C)}.
\]
All sufficiently high orders have the common value $\dimmass(\mathfrak C)/\dim_{\mathrm P}(\mathfrak C)$.
The same value holds for every $k\ge1$ precisely when $2\dimmass(\mathfrak C)\ge\dimhaus(\Xi)$.
On the subcritical side, the $k$th ratio requires $d_R^{k+2}<m$, and then has the same formula.
Since $d_R\ge2$, the original ratios cannot all be finite on that side.

\subsection*{Cluster-number response}
Let $h_R(p)$ be the expected number of clusters in the rule graph touching neither terminal.
This is a polynomial obtained by summing over the finitely many open-edge configurations of $R$.

\begin{proposition}\label{prop:cluster-number-response}
The cluster-number density is the unique bounded solution on $[0,1]$ of
\begin{equation}\label{eq:cluster-number-functional}
 m\kappa(p)-\kappa(\varphi_R(p))
 =\frac{m-1}{|V(R)|-2}h_R(p).
\end{equation}
It is analytic away from $p_c$, and is $C^j$ at $p_c$ whenever $j$ is a nonnegative integer with $\varphi_R'(p_c)^j<m$.
In particular it is always $C^2$.
If $j\ge3$, $\varphi_R'(p_c)^j<m$, the derivatives of orders $3,\ldots,j-1$ vanish at $p_c$, and $\kappa^{(j)}(p_c)\ne0$, then
\[
 \alpha_- =\alpha_+=2-j.
\]
\end{proposition}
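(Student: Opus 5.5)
The plan is to derive \eqref{eq:cluster-number-functional} from the cluster decomposition behind \eqref{eq:birth-cluster-series}. Then we solve it as a contraction fixed point and read off regularity from the iterated series.

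\textbf{Step 1: the functional equation.} Sum \eqref{eq:birth-cluster-series} over $s\ge1$. This gives $\kappa(p)=\frac{m-1}{|V(R)|-2}\sum_{n\ge1}m^{-n}\mathbb E_p|\mathcal J_n|$. A cluster in $\mathcal J_n$ avoids the two terminals of $\Xi^n$ and meets an internal vertex of the top copy of $R$. After coarsening the depth-$(n-1)$ cells to edges open independently with probability $\varphi_R^{\circ(n-1)}(p)$, these clusters correspond bijectively to the coarse clusters of $R$ touching neither terminal. Cells that fail to cross split their boundary clusters, but every internal cluster is still attached to a coarse vertex, so no new count appears. Hence $\mathbb E_p|\mathcal J_n|=h_R(\varphi_R^{\circ(n-1)}(p))$, and
\[
 \kappa(p)=\frac{m-1}{|V(R)|-2}\sum_{n\ge1}m^{-n}h_R\bigl(\varphi_R^{\circ(n-1)}(p)\bigr).
\]
This series satisfies \eqref{eq:cluster-number-functional} by shifting the index. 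For uniqueness, the difference $g$ of two bounded solutions obeys $g=m^{-1}g\circ\varphi_R$, so $\|g\|_\infty\le m^{-1}\|g\|_\infty$ and $g=0$.

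\textbf{Step 2: analyticity away from $p_c$.} By Corollary~\ref{cor:global-renormalisation-dichotomy}, on a compact subset of $[0,p_c)$ or $(p_c,1]$ the orbit converges to $0$ or $1$. These fixed points are superattracting, since $\varphi_R'(0)=\varphi_R'(1)=0$. So on a small complex neighbourhood the iterates $\varphi_R^{\circ n}$ stay in a fixed compact neighbourhood where $h_R$ is bounded, and the series converges uniformly to an analytic function.

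\textbf{Step 3: regularity at $p_c$.} Differentiate the series termwise $j$ times. By Faà di Bruno, $(h_R\circ\varphi_R^{\circ n})^{(j)}$ is dominated by $C_j\,\sup|(\varphi_R^{\circ n})'|^{j}$ plus lower-order products. I need a uniform bound $|(\varphi_R^{\circ n})^{(i)}(q)|\le C_i\lambda^{in}$ for every $\lambda>\varphi_R'(p_c)$ and all $q$ near $p_c$. That would let me choose $\lambda$ with $\lambda^j<m$, so the differentiated series converges uniformly near $p_c$.

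This derivative bound is the main obstacle. Along an orbit that lingers near $p_c$ the multiplier is close to $\varphi_R'(p_c)$. After exit the orbit reaches the superattracting basin, and there derivatives are eventually multiplied by factors tending to zero. Between these two regimes the time spent in the transit region is bounded, so the accumulated multiplier is at most $C\varphi_R'(p_c)^{n}$. I would try to get the higher derivatives by induction through the recursion $(f\circ g)^{(i)}$, using the same bounded-distortion estimate that gave \eqref{eq:critical-escape-time}.

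Lemma~\ref{lem:pivotal-response-bound} gives $\varphi_R'(p_c)^2<m$, so $\kappa$ is always $C^2$.

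\textbf{Step 4: the exponent $\alpha_\pm$.} Write $\kappa=\kappa_{\mathrm{reg}}+$ remainder, where the $j$-th Taylor polynomial at $p_c$ is subtracted. Since $\kappa^{(j)}(p_c)\ne0$ and $\kappa$ is $C^j$ there, $\kappa'''$ is continuous at $p_c$ whenever $j\ge4$. Then:
\begin{itemize}
\item if $j>3$, $\kappa'''(p)\sim\kappa^{(j)}(p_c)(p-p_c)^{j-3}/(j-3)!$, because the intermediate derivatives of orders $3,\dots,j-1$ vanish;
\item if $j=3$, $\kappa'''(p)\to\kappa'''(p_c)\neq0$.
\end{itemize}
In both cases $\log|\kappa'''(p)|/\log|p-p_c|\to j-3$ from either side. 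This gives $\alpha_\pm=-1-(j-3)=2-j$.
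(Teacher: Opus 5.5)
Your argument is correct. Steps 2 and 4 match the paper, but you reach the series and, more substantially, the $C^j$ regularity by different routes.

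For the series, the paper decomposes at the bottom level, $\mathbb E K_n(p)=m^{n-1}h_R(p)+\mathbb E K_{n-1}(\varphi_R(p))$, where $K_n$ is the total cluster count of $\Xi^n$, and then normalises by volume. You instead sum \eqref{eq:birth-cluster-series} over $s$ and identify $\mathbb E_p|\mathcal J_n|=h_R(\varphi_R^{\circ(n-1)}(p))$ by top-level coarsening. This reuses an identity already proved and avoids interchanging the volume limit with the sum over $s$.

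For regularity, the paper never estimates derivatives of iterates. It inducts on $j$ and writes the differentiated equation as $\kappa^{(j)}(p)=m^{-1}\varphi_R'(p)^j\kappa^{(j)}(\varphi_R(p))+\mathcal R_j(p)$, with $\mathcal R_j$ continuous at $p_c$ by the induction hypothesis. It iterates this up to the exit time, using only $|\varphi_R'|^j/m\le q<1$ near $p_c$ and boundedness of $\kappa^{(j)}$ on the compact exit set. It then identifies the limit $c_j=\mathcal R_j(p_c)/(1-\varphi_R'(p_c)^j/m)$ and concludes with the mean value theorem. That route directly produces the recursive formulas for $\kappa'(p_c),\kappa''(p_c),\kappa'''(p_c)$ used in the later examples.

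Your termwise differentiation gives $C^j$ on a whole neighbourhood at once, with every derivative up to order $j$ as an explicit convergent series. The same formulas then follow by differentiating \eqref{eq:cluster-number-functional} at $p_c$. The price is the iterate bound $|(\varphi_R^{\circ n})^{(i)}|\le C_i\lambda^{in}$ that you flagged.

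That bound holds, and your sketch closes as follows. Before the exit time, Fa\`a di Bruno and induction on $i$ give $|(\varphi_R^{\circ(n+1)})^{(i)}(q)|\le\lambda|(\varphi_R^{\circ n})^{(i)}(q)|+C_i\lambda^{in}$. This sums to $O(\lambda^{in})$ because $\lambda^{i-1}>1$ for $i\ge2$. After the exit time, the complex-neighbourhood argument of your Step~2, together with Cauchy estimates, bounds all derivatives of $\varphi_R^{\circ k}$ on the compact exit set uniformly in $k$. One further application of Fa\`a di Bruno joins the two regimes.

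One small slip: $\kappa'''$ is continuous at $p_c$ as soon as $j\ge3$, not only for $j\ge4$, and your case $j=3$ relies on exactly this.
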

\begin{proof}
Let $K_n(p)$ count all clusters in $\Xi^n$.
Each last-level cell contributes its internal clusters; the remaining clusters correspond exactly to those of the coarse graph with parameter $\varphi_R(p)$.
Thus $\mathbb E K_n(p)=m^{n-1}h_R(p)+\mathbb E K_{n-1}(\varphi_R(p))$.
Iteration and volume normalisation yield the uniformly convergent series
\begin{equation}\label{eq:cluster-number-series}
 \kappa(p)=\frac{m-1}{|V(R)|-2}
 \sum_{n\ge0}m^{-n-1}h_R(\varphi_R^{\circ n}(p)),
\end{equation}
and hence~\eqref{eq:cluster-number-functional}.
The difference of two bounded solutions is at most $Cm^{-n}$ after $n$ iterations, proving uniqueness.
For a real point other than $p_c$, a complex neighbourhood is mapped after finitely many iterations into an attracting neighbourhood of $0$ or $1$.
The series converges locally uniformly there, proving analyticity, also at the two endpoints.

We prove the regularity assertion by induction on $j$.
Continuity follows from the uniformly convergent series.
Suppose $j\ge1$, $\varphi_R'(p_c)^j<m$, and $\kappa$ is already $C^{j-1}$ near $p_c$.
Differentiating~\eqref{eq:cluster-number-functional} $j$ times away from $p_c$ gives
\[
 \kappa^{(j)}(p)
 =\frac{\varphi_R'(p)^j}{m}\kappa^{(j)}(\varphi_R(p))
   +\mathcal R_j(p),
\]
where $\mathcal R_j$ collects the derivatives of the forcing polynomial and the chain-rule terms involving derivatives of $\kappa$ of order at most $j-1$.
The induction hypothesis makes $\mathcal R_j$ continuous at $p_c$.
The only possible limiting value of the $j$th derivative is therefore
\[
 c_j:=\frac{\mathcal R_j(p_c)}{1-\varphi_R'(p_c)^j/m}.
\]
Choose a sufficiently small neighbourhood of $p_c$ on which
$|\varphi_R'(p)^j/m|\le q<1$, and stop each noncritical forward orbit when it first leaves this neighbourhood.
The exit time tends to infinity as $p$ approaches $p_c$, and the exit points lie in a compact set away from $p_c$, where $\kappa^{(j)}$ is bounded by analyticity.
Subtracting $c_j$ from the recursion leaves a continuous error tending to zero at $p_c$.
Iteration makes the bounded exit contribution tend to zero geometrically and bounds the error contributions by a geometrically weighted sum.
For any fixed number of initial iterates these errors tend to zero as $p$ approaches $p_c$; the remaining terms are uniformly bounded by the tail of a geometric series.
Consequently $\displaystyle\lim_{p\to p_c,\,p\ne p_c}\kappa^{(j)}(p)=c_j$.
The mean value theorem, applied to the continuous function $\kappa^{(j-1)}$, now shows that its derivative at $p_c$ exists and equals $c_j$.
Thus $\kappa$ is $C^j$ near $p_c$, completing the induction.
Lemma~\ref{lem:pivotal-response-bound} gives $[\varphi_R'(p_c)]^2<m$, so $\kappa$ is always $C^2$ at $p_c$.

Under the last hypotheses of the proposition, Taylor's formula gives
\[
 \kappa'''(p)=\frac{\kappa^{(j)}(p_c)}{(j-3)!}(p-p_c)^{j-3}
 +o(|p-p_c|^{j-3}).
\]
Its leading coefficient is nonzero, so the required punctured neighbourhoods contain no zeros and the two logarithmic exponents equal $2-j$.
\end{proof}

The regularity condition is equivalently $j<\dimhaus(\Xi)/\dim_{\mathrm P}(\mathfrak C)$.
It does not determine the first nonzero derivative of order at least three, so it does not by itself identify the full-response exponent with $2-\dimhaus(\Xi)/\dim_{\mathrm P}(\mathfrak C)$.
The derivatives in Proposition~\ref{prop:cluster-number-response} are determined by finite differentiation of~\eqref{eq:cluster-number-functional}.
For example, whenever $\varphi_R'(p_c)^3<m$,
\begin{align*}
 (m-\varphi_R'(p_c))\kappa'(p_c)
 &=\frac{m-1}{|V(R)|-2}h_R'(p_c),\\
 (m-\varphi_R'(p_c)^2)\kappa''(p_c)
 &=\frac{m-1}{|V(R)|-2}h_R''(p_c)
   +\kappa'(p_c)\varphi_R''(p_c),\\
 (m-\varphi_R'(p_c)^3)\kappa'''(p_c)
 &=\frac{m-1}{|V(R)|-2}h_R'''(p_c)
   +\kappa'(p_c)\varphi_R'''(p_c)\\
 &\quad+3\kappa''(p_c)\varphi_R'(p_c)\varphi_R''(p_c).
\end{align*}

\begin{example}\label{ex:cluster-number-responses}
The ordinary diamond has $\alpha_\pm=-1$; Wheatstone has
$\alpha_\pm=2-\log5/\log(13/8)\approx-1.3150$; replacing the central edge of Wheatstone by another Wheatstone gives $\alpha_\pm=-2$.
\end{example}
\begin{proof}
For the diamond, $h_R(p)=2(1-p)^2$, $m=4$ and $\varphi_R(p)=2p^2-p^4$.
Here $p_c=(\sqrt5-1)/2$ and $\varphi_R'(p_c)=6-2\sqrt5$, whose cube is less than $4$.
The derivative identities give $\kappa'''(p_c)=(171+99p_c)/31>0$.
Proposition~\ref{prop:cluster-number-response} with $j=3$ proves the first assertion.

For Wheatstone, direct enumeration of its $32$ configurations gives
\begin{align*}
 \varphi_R(p)&=2p^2+2p^3-5p^4+2p^5,\\
 2h_R(p)&=4-10p+4p^2+8p^3-8p^4+2p^5.
\end{align*}
Here $p_c=1/2$, $m=5$ and $\varphi_R'(p_c)=13/8$.
The identities
\[
 \varphi_R(1-p)=1-\varphi_R(p),\qquad
 2h_R(p)-2h_R(1-p)=4-10p+2\varphi_R(p)
\]
and uniqueness in~\eqref{eq:cluster-number-functional} imply
$\kappa(p)-\kappa(1-p)=1-2p$.
We also need convexity: on any finite graph the cluster count is the vertex count minus the graphic-matroid rank, so its second mixed edge differences are nonnegative by rank submodularity.
Its Bernoulli expectation is convex, and volume limits preserve convexity.
The rule has no bridge, hence $h_R(p)=O((1-p)^2)$, and~\eqref{eq:cluster-number-functional} gives $\kappa'(1)=0$.
Together with reflection this yields
$-1\le\kappa'(p)\le0$ and $\kappa''(p)\ge0$ for $1/2\le p\le1$.

For $p=1/2+q$, $0\le q\le1/2$, the explicit polynomials give
\[
 \varphi_R''(p)=-2q(9-20q^2)\le0,\qquad
 (2h_R)'''(p)\le-18,\qquad
 (2h_R-\varphi_R)'''(p)=-72q.
\]
Differentiating~\eqref{eq:cluster-number-functional} three times gives
\begin{align}\label{eq:wheatstone-third-response}
 5\kappa'''(p)
 &=\varphi_R'(p)^3\kappa'''(\varphi_R(p))
   +3\kappa''(\varphi_R(p))\varphi_R'(p)\varphi_R''(p)\notag\\
 &\quad+\kappa'(\varphi_R(p))\varphi_R'''(p)+2h_R'''(p).
\end{align}
The last three terms have a strictly negative sum for $p>1/2$.
If $\varphi_R'''(p)\ge0$, use $\kappa'\le0$ and $(2h_R)'''\le-18$; if $\varphi_R'''(p)<0$, use $\kappa'\ge-1$ and $(2h_R-\varphi_R)'''=-72q$.
Their sum is bounded above by $-36q$.
It has absolute value at most $Cq$ near $q=0$: it vanishes at $p_c$ by the reflection identity, $\kappa$ is $C^2$, and $\varphi_R''(p_c)=0$.
Iterating~\eqref{eq:wheatstone-third-response} towards $1$ shows $\kappa'''(p)<0$ for $p>1/2$; the remaining derivative term tends to zero since $\varphi_R'(1)=0$ and $\kappa$ is analytic there.

Stop instead at the exit time $N(1/2+q)$ of a fixed small neighbourhood of $1/2$.
The orbit deviations are comparable to $(13/8)^j q$ up to that time, and the derivative products are comparable to $(13/8)^j$ by the summable-distortion argument used in~\eqref{eq:critical-escape-time}.
On the compact exit interval $-\kappa'''$ has a positive minimum.
Iteration of~\eqref{eq:wheatstone-third-response} therefore gives
\[
 -\kappa'''(1/2+q)\asymp
 \left(\frac{(13/8)^3}{5}\right)^{N(1/2+q)}
 +q\sum_{j<N(1/2+q)}\left(\frac{(13/8)^4}{5}\right)^j
 \asymp q^{\log5/\log(13/8)-3}.
\]
Here $(13/8)^3<5<(13/8)^4$.
Reflection gives the same absolute-value estimate below $p_c$, proving the second assertion.

For the last rule, take terminals $0,1$ and edges
$02,21,03,31,24,43,25,53,45$.
Enumeration of its $512$ configurations gives
\begin{align*}
 \varphi_R(p)&=2p^2+3p^4-4p^5-14p^6+28p^7-18p^8+4p^9,\\
 2h_R(p)&=8-18p+4p^2+4p^3+18p^4-2p^5\\
 &\quad-52p^6+60p^7-26p^8+4p^9.
\end{align*}
The rule has $m=9$, $p_c=1/2$ and $\varphi_R'(p_c)=109/64$.
Since $(109/64)^4<9$, fourth differentiation of~\eqref{eq:cluster-number-functional} is legitimate.
It gives
\[
 \kappa'''(1/2)=0,\qquad
 \kappa''''(1/2)=-\frac{72900157636608}{35107478527}<0.
\]
Proposition~\ref{prop:cluster-number-response} with $j=4$ completes the proof.
\end{proof}

These criteria and examples establish the stated exponents without giving necessary and sufficient conditions for the existence of the full-response exponent for every classical rule.

\subsection*{Cluster radius}
The cumulative radius distribution has a uniform law, but its point probabilities need not have a logarithmic exponent.

\begin{proposition}\label{prop:annealed-radius-tail}
For every classical rule,
\[
 \mathbb P_{p_c}^{\mathrm{ann}}(\operatorname{rad}C(o)\ge r)
 \asymp r^{-(\dimhaus(\Xi)-\dimmass(\mathfrak C))}.
\]
If the finite point-probability exponent $\rho$ in Definition~\ref{def:other-exponents} exists, then
$\rho=1/(\dimhaus(\Xi)-\dimmass(\mathfrak C))$.
For the ordinary diamond this point-probability exponent does not exist.
\end{proposition}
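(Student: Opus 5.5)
The plan is to reduce everything to the birth-level decomposition~\eqref{eq:birth-cluster-series} and compare radius with the birth level of the cluster. Summing~\eqref{eq:birth-cluster-series} against $s\mathbf 1\{\operatorname{rad}\ge r\}$ gives
\[
 \mathbb P_{p_c}^{\mathrm{ann}}(\operatorname{rad}C(o)\ge r)
 =\frac{m-1}{|V(R)|-2}\sum_{k\ge1}m^{-k}\,
 \mathbb E_{p_c}\sum_{C\in\mathcal J_k}|C|\,\mathbf 1\{\text{$o$ uniform in $C$ has radius}\ge r\},
\]
up to the volume-limit bookkeeping already done in Lemma~\ref{lem:conditional-mass-moments}. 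Fix $n$ with $\ell^{n}\le r<\ell^{n+1}$. I will then bound this series from above and from below.

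\textbf{Upper bound.} A cluster in $\mathcal J_k$ is contained in the top-level depth-$k$ cell together with the boundary clusters of the adjacent cells, so its ambient diameter is at most $C_R\ell^k$. This uses the cell-diameter bound and the isometric embedding of cells from the proof of Theorem~\ref{thm:delta-eta-dimensions}. Hence only birth levels $k\ge n-k_0$, for a fixed $k_0$, contribute. By Lemma~\ref{lem:conditional-mass-moments}, $\mathbb E\sum_{C\in\mathcal J_k}|C|\le C\rho(\mathbf M_{\mathfrak C})^k$. The tail $\sum_{k\ge n-k_0}(\rho(\mathbf M_{\mathfrak C})/m)^k$ is geometric, since $\rho(\mathbf M_{\mathfrak C})<m$, and is therefore $\asymp(\rho(\mathbf M_{\mathfrak C})/m)^n=r^{-(\dimhaus(\Xi)-\dimmass(\mathfrak C))}$.

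\textbf{Lower bound.} I take a single birth level $k=n+k_1$, with $k_1$ fixed, and a fixed top-level configuration of positive probability. In this configuration an internal vertex $v$ of $R$ is joined by one open top-level coarse edge (a crossing child cell, of state $c$) to a neighbour, and all other coarse edges are closed. The resulting cluster lies in $\mathcal J_k$. It contains the two terminals of a crossed depth-$(k-1)$ cell, which are at ambient distance $\ell^{k-1}$. By the triangle inequality, every vertex of the cluster lies at distance at least $\ell^{k-1}/2\ge r$ from one of them, so every root in it has radius at least $r$. The expected mass of this cluster is at least $c\rho(\mathbf M_{\mathfrak C})^k$, by the first-moment estimate $\mathbb E X_{\sigma,k-1}\asymp\rho(\mathbf M_{\mathfrak C})^{k-1}$ for the attached state-$c$ and state-$u$ children. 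This term alone gives $c(\rho(\mathbf M_{\mathfrak C})/m)^n$.

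\textbf{The statements about $\rho$.} For the conditional statement, suppose $\mathbb P(\operatorname{rad}=r)=r^{-1-1/\rho+o(1)}$. Summing over $r'\ge r$ gives the cumulative exponent $1/\rho$ whenever $1/\rho>0$; the case $1/\rho\le0$ contradicts the cumulative decay just proved. Comparing with the cumulative law gives $\rho=1/(\dimhaus(\Xi)-\dimmass(\mathfrak C))$.

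The main obstacle is the nonexistence of the point exponent for the diamond. Here I would exhibit two integer subsequences of radii with different logarithmic rates. First, along $r=2^n$ the point mass should be of order $(\rho/4)^n/r$: the lower-bound configuration can be arranged so that the radius equals the cell distance exactly, with positive conditional probability, using the exact distance doubling $d_{\Xi^{n+1}}=2d_{\Xi^n}$. Second, I would find radii that can only be realised by clusters of much larger birth level, or with an extra costly constraint, so that $\log\mathbb P(\operatorname{rad}=r)/\log r$ has a different $\liminf$. I would first try to show that the radius seen from a uniform root concentrates on a sparse lattice of values determined by the diamond's distance structure. If that fails, I would fall back on an explicit recursion for the radius law on the DHL and show that its log-periodic oscillations do not damp out.
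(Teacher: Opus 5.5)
Your cumulative upper bound and your derivation of $\rho=1/(\dimhaus(\Xi)-\dimmass(\mathfrak C))$ from a putative point law match the paper and are fine. Your lower-bound configuration, however, does not exist for every classical rule. For the cluster to lie in $\mathcal J_k$, the crossed child cell must join $v$ to another \emph{internal} vertex. In the diamond rule every edge is incident to a terminal, so your construction fails on exactly the lattice the last assertion is about. The repair, which the paper uses, is to work at a fixed coarse depth. In $R^{\circ 2}$, an internal vertex of $R$ is adjacent to first-level internal vertices of the copies of $R$ planted on its edges. So you can require the cell of one such internal--internal coarse edge to cross and all other coarse cells to fail. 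Since $\varphi_R(p_c)=p_c$, this event has probability bounded below uniformly in $n$.

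The serious gap is the diamond. Your one concrete claim is that along $r=2^n$ the point mass is of order $(\rho(\mathbf M_{\mathfrak C})/4)^n/r=r^{-1-(\dimhaus(\Xi)-\dimmass(\mathfrak C))}$. That is exactly the rate a genuine point exponent would give, so it produces no contradiction, and the search for ``radii with a different $\liminf$'' is left open. (The mechanism you describe would in fact give the larger order below.) The missing idea is that the mass at $r=2^n$ is a spike of the same order as the cumulative tail: $\mathbb P_{p_c}^{\mathrm{ann}}(\operatorname{rad}C(o)=2^n)\ge c(\rho(\mathbf M_{\mathfrak C})/4)^n$, with no factor $1/r$. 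The second subsequence then comes for free from your cumulative upper bound. Among the $L$ integers in $[L,2L)$, some $r_L$ must satisfy $\mathbb P(\operatorname{rad}C(o)=r_L)\le CL^{-1-(\dimhaus(\Xi)-\dimmass(\mathfrak C))}$, so the two logarithmic slopes differ by at least one.

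Proving the spike needs more than distance doubling. Every root of a cluster of expected mass at least $c\rho(\mathbf M_{\mathfrak C})^n$ must have radius exactly $L=2^n$: some cluster vertex at distance exactly $L$, and none farther. The paper obtains this from the following ingredients:
\begin{itemize}
\item the diamond's layer coordinate $x$, with terminal distances $x$ and $L-x$;
\item the cross-branch distance $\min\{x+y,2L-x-y\}$;
\item the facts that a diamond cell has diameter $L$ and every crossing contains an open terminal geodesic, so a root at coordinate $x$ sees the vertex at coordinate $L-x$ on the other branch at distance exactly $L$;
\item a two-level coarse event: all $16$ grandchildren of a main cell cross, all other cells fail, and the grandchildren of cells touching the main terminals also fail, which confines outside attachments to within $L/4$ of those terminals.
\end{itemize}
Taking roots in a grandchild with layer interval $[L/4,L/2]$ then gives $\max\{x,L-x\}+L/4\le L$.
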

\begin{proof}
Every depth-$n$ cell has diameter at most $C_R\ell^n$ and is isometrically embedded, as established in the proof of Theorem~\ref{thm:delta-eta-dimensions}.
A finite cluster first counted in such a cell has no larger ambient radius.
The birth decomposition counts its vertices with weight proportional to $m^{-n}$, and Lemma~\ref{lem:conditional-mass-moments} gives
$\mathbb E_{p_c}\sum_{C\in\mathcal J_n}|C|\le C\rho(\mathbf M_{\mathfrak C})^n$.
Since $\theta(p_c)=0$, all root mass is accounted for.
Thus
\[
 \mathbb P_{p_c}^{\mathrm{ann}}(\operatorname{rad}C(o)\ge r)
 \le C\sum_{C_R\ell^n\ge r}
 \left(\frac{\rho(\mathbf M_{\mathfrak C})}{m}\right)^n
 \le Cr^{-(\dimhaus(\Xi)-\dimmass(\mathfrak C))}.
\]
For a lower bound, choose a fixed coarse depth $k$ containing an edge whose two endpoints are internal to the coarse cell.
Require this edge's depth-$n$ cell to cross and every other edge's depth-$n$ cell to fail to cross.
This event has a fixed positive probability, independent of $n$.
The resulting cluster avoids the enclosing cell's terminals, contains two vertices at distance $\ell^n$, and has expected mass at least $c\rho(\mathbf M_{\mathfrak C})^n$.
Every root in that cluster has radius at least $\ell^n/2$ by the triangle inequality.
Tile large substitution networks by depth-$(n+k)$ cells and count their internal roots.
After volume normalisation this yields a lower bound $c(\rho(\mathbf M_{\mathfrak C})/m)^n$ at radius $\ell^n/2$.
Choosing successive powers of $\ell$ proves the cumulative claim for all large $r$.
If the point-probability logarithmic limit exists, compare its probabilities with $r^{-1-1/\rho\pm\varepsilon}$ and sum them from $r$ to infinity.
The cumulative upper bound first forces $1+1/\rho>1$; these comparisons then identify $1/\rho=\dimhaus(\Xi)-\dimmass(\mathfrak C)$.

For the diamond, a depth-$n$ cell has terminal distance and diameter $L:=2^n$.
Every vertex has a layer coordinate $x$ with terminal distances $x,L-x$.
For vertices in different top-level parallel branches, their distance is
$\min\{x+y,2L-x-y\}$.
These facts follow by induction from the two parallel length-two paths; in particular, every crossing contains an open terminal geodesic.

Inside a depth-$(n+2)$ cell, choose a depth-$n$ subcell whose two terminals are internal, and call it the main cell.
Require all its $16$ depth-$(n-2)$ grandchildren to cross.
Require every other depth-$n$ cell of the enclosing two-level coarse graph to fail to cross.
For each such cell incident to a main-cell terminal, require all its $16$ depth-$(n-2)$ grandchildren to fail to cross as well.
These requirements are compatible and have probability bounded below independently of $n$, because $p_c$ is the crossing fixed point.
The main cluster is isolated from the enclosing terminals, and every attachment outside the main cell stays within distance $L/4$ of its attachment terminal.

Choose a main-cell grandchild lying in one parallel branch with layer interval $[L/4,L/2]$.
Its connected cluster has expected internal mass at least $c\rho(\mathbf M_{\mathfrak C})^{n-2}$.
For each of its vertices with coordinate $x$, the other parallel branch contains an open geodesic and hence a cluster vertex of coordinate $L-x$, at distance exactly $L$.
All main-cell vertices are at distance at most $L$, while each outside attachment is at distance at most
$\max\{x,L-x\}+L/4\le L$.
Every one of these roots therefore has radius exactly $L$.
Tiling by the enclosing cells and passing to the volume limit gives
\begin{equation}\label{eq:diamond-radius-spikes}
 \mathbb P_{p_c}^{\mathrm{ann}}(\operatorname{rad}C(o)=2^n)
 \ge c\left(\frac{\rho(\mathbf M_{\mathfrak C})}{4}\right)^n.
\end{equation}
On the other hand, the cumulative upper bound implies that among the $L$ integers in $[L,2L)$ there is some $r_L$ with
\[
 \mathbb P_{p_c}^{\mathrm{ann}}(\operatorname{rad}C(o)=r_L)
 \le CL^{-1-(\dimhaus(\Xi)-\dimmass(\mathfrak C))}.
\]
The logarithmic slopes along $L=2^n$ and along these $r_L$ differ by at least one in the limit; a zero point probability only strengthens the failure.
Thus the point-probability exponent does not exist, despite the cumulative law.
\end{proof}

\end{document}

%% file: figure/two_edge_path_symbol.tex
\newcommand{\twoedgepathdrawing}{%
  \tikz[x=1em,y=1em]{%
    \draw[red!45!black,line width=.55pt,line cap=round] (0,0)--(1.8,0);
    \foreach \x in {0,.9,1.8}
      \fill[red!45!black] (\x,0) circle[radius=.10];
  }%
}
\newcommand{\twoedgepath}{%
  \mathord{\mathchoice
    {\vcenter{\hbox{\twoedgepathdrawing}}}
    {\vcenter{\hbox{\twoedgepathdrawing}}}
    {\vcenter{\hbox{\scalebox{.72}{\twoedgepathdrawing}}}}
    {\vcenter{\hbox{\scalebox{.55}{\twoedgepathdrawing}}}}}%
}

%% file: figure/random_DHL_rules.tex
\begingroup
\definecolor{eigsred}{RGB}{205,46,53}
\definecolor{eigsblue}{RGB}{38,100,190}
\definecolor{eigsgreen}{RGB}{29,139,76}
\begin{tikzpicture}[x=1cm,y=1cm,line cap=round,line join=round,
 every node/.style={font=\small},
 bond/.style={line width=.85pt,-{Stealth[length=2.5pt,width=2.3pt]}},
 map/.style={black!45,line width=.65pt,-{Stealth[length=4pt]}}]
\path[use as bounding box] (0,-8.7) rectangle (16,.75);
\node[anchor=west,font=\small\bfseries] at (0,.62) {Rules};
\draw[bond,eigsred] (0.1500,-0.3500) -- (1.1000,-0.3500);
\fill (0.1500,-0.3500) circle (1.15pt);
\fill (1.1000,-0.3500) circle (1.15pt);
\draw[map] (1.40,-0.35) -- (2.05,-0.35);
\node at (4.25,0.18000000000000005) {$\frac{p^4}{2p^2-p^4}$};
\draw[bond,eigsred] (3.8700,-0.3500) -- (4.2500,-0.1100);
\draw[bond,eigsred] (4.2500,-0.1100) -- (4.6300,-0.3500);
\draw[bond,eigsred] (3.8700,-0.3500) -- (4.2500,-0.5900);
\draw[bond,eigsred] (4.2500,-0.5900) -- (4.6300,-0.3500);
\fill (3.8700,-0.3500) circle (1.0pt);
\fill (4.2500,-0.1100) circle (1.0pt);
\fill (4.2500,-0.5900) circle (1.0pt);
\fill (4.6300,-0.3500) circle (1.0pt);
\node at (8.85,0.18000000000000005) {$\frac{p^3(1-p)}{2p^2-p^4}$};
\draw[bond,eigsred] (6.9700,-0.3500) -- (7.3500,-0.1100);
\draw[bond,eigsred] (7.3500,-0.1100) -- (7.7300,-0.3500);
\draw[bond,eigsred] (6.9700,-0.3500) -- (7.3500,-0.5900);
\draw[bond,eigsblue] (7.3500,-0.5900) -- (7.7300,-0.3500);
\fill (6.9700,-0.3500) circle (1.0pt);
\fill (7.3500,-0.1100) circle (1.0pt);
\fill (7.3500,-0.5900) circle (1.0pt);
\fill (7.7300,-0.3500) circle (1.0pt);
\draw[bond,eigsred] (7.9700,-0.3500) -- (8.3500,-0.1100);
\draw[bond,eigsred] (8.3500,-0.1100) -- (8.7300,-0.3500);
\draw[bond,eigsblue] (7.9700,-0.3500) -- (8.3500,-0.5900);
\draw[bond,eigsred] (8.3500,-0.5900) -- (8.7300,-0.3500);
\fill (7.9700,-0.3500) circle (1.0pt);
\fill (8.3500,-0.1100) circle (1.0pt);
\fill (8.3500,-0.5900) circle (1.0pt);
\fill (8.7300,-0.3500) circle (1.0pt);
\draw[bond,eigsred] (8.9700,-0.3500) -- (9.3500,-0.1100);
\draw[bond,eigsblue] (9.3500,-0.1100) -- (9.7300,-0.3500);
\draw[bond,eigsred] (8.9700,-0.3500) -- (9.3500,-0.5900);
\draw[bond,eigsred] (9.3500,-0.5900) -- (9.7300,-0.3500);
\fill (8.9700,-0.3500) circle (1.0pt);
\fill (9.3500,-0.1100) circle (1.0pt);
\fill (9.3500,-0.5900) circle (1.0pt);
\fill (9.7300,-0.3500) circle (1.0pt);
\draw[bond,eigsblue] (9.9700,-0.3500) -- (10.3500,-0.1100);
\draw[bond,eigsred] (10.3500,-0.1100) -- (10.7300,-0.3500);
\draw[bond,eigsred] (9.9700,-0.3500) -- (10.3500,-0.5900);
\draw[bond,eigsred] (10.3500,-0.5900) -- (10.7300,-0.3500);
\fill (9.9700,-0.3500) circle (1.0pt);
\fill (10.3500,-0.1100) circle (1.0pt);
\fill (10.3500,-0.5900) circle (1.0pt);
\fill (10.7300,-0.3500) circle (1.0pt);
\node at (13.45,0.18000000000000005) {$\frac{p^2(1-p)^2}{2p^2-p^4}$};
\draw[bond,eigsred] (12.5700,-0.3500) -- (12.9500,-0.1100);
\draw[bond,eigsred] (12.9500,-0.1100) -- (13.3300,-0.3500);
\draw[bond,eigsgreen] (12.5700,-0.3500) -- (12.9500,-0.5900);
\draw[bond,eigsgreen] (13.3300,-0.3500) -- (12.9500,-0.5900);
\fill (12.5700,-0.3500) circle (1.0pt);
\fill (12.9500,-0.1100) circle (1.0pt);
\fill (12.9500,-0.5900) circle (1.0pt);
\fill (13.3300,-0.3500) circle (1.0pt);
\draw[bond,eigsgreen] (13.5700,-0.3500) -- (13.9500,-0.1100);
\draw[bond,eigsgreen] (14.3300,-0.3500) -- (13.9500,-0.1100);
\draw[bond,eigsred] (13.5700,-0.3500) -- (13.9500,-0.5900);
\draw[bond,eigsred] (13.9500,-0.5900) -- (14.3300,-0.3500);
\fill (13.5700,-0.3500) circle (1.0pt);
\fill (13.9500,-0.1100) circle (1.0pt);
\fill (13.9500,-0.5900) circle (1.0pt);
\fill (14.3300,-0.3500) circle (1.0pt);
\draw[black!12,line width=.35pt] (0,-0.9299999999999999) -- (16,-0.9299999999999999);
\draw[bond,eigsblue] (0.1500,-1.7000) -- (1.1000,-1.7000);
\fill (0.1500,-1.7000) circle (1.15pt);
\fill (1.1000,-1.7000) circle (1.15pt);
\draw[map] (1.40,-1.7000000000000002) -- (2.05,-1.7000000000000002);
\node at (4.25,-1.1700000000000002) {$\frac{p(1-p)^3}{(1-p^2)^2}$};
\draw[bond,eigsred] (2.3700,-1.7000) -- (2.7500,-1.4600);
\draw[bond,eigsblue] (2.7500,-1.4600) -- (3.1300,-1.7000);
\draw[bond,eigsgreen] (2.3700,-1.7000) -- (2.7500,-1.9400);
\draw[bond,eigsgreen] (3.1300,-1.7000) -- (2.7500,-1.9400);
\fill (2.3700,-1.7000) circle (1.0pt);
\fill (2.7500,-1.4600) circle (1.0pt);
\fill (2.7500,-1.9400) circle (1.0pt);
\fill (3.1300,-1.7000) circle (1.0pt);
\draw[bond,eigsblue] (3.3700,-1.7000) -- (3.7500,-1.4600);
\draw[bond,eigsred] (3.7500,-1.4600) -- (4.1300,-1.7000);
\draw[bond,eigsgreen] (3.3700,-1.7000) -- (3.7500,-1.9400);
\draw[bond,eigsgreen] (4.1300,-1.7000) -- (3.7500,-1.9400);
\fill (3.3700,-1.7000) circle (1.0pt);
\fill (3.7500,-1.4600) circle (1.0pt);
\fill (3.7500,-1.9400) circle (1.0pt);
\fill (4.1300,-1.7000) circle (1.0pt);
\draw[bond,eigsgreen] (4.3700,-1.7000) -- (4.7500,-1.4600);
\draw[bond,eigsgreen] (5.1300,-1.7000) -- (4.7500,-1.4600);
\draw[bond,eigsred] (4.3700,-1.7000) -- (4.7500,-1.9400);
\draw[bond,eigsblue] (4.7500,-1.9400) -- (5.1300,-1.7000);
\fill (4.3700,-1.7000) circle (1.0pt);
\fill (4.7500,-1.4600) circle (1.0pt);
\fill (4.7500,-1.9400) circle (1.0pt);
\fill (5.1300,-1.7000) circle (1.0pt);
\draw[bond,eigsgreen] (5.3700,-1.7000) -- (5.7500,-1.4600);
\draw[bond,eigsgreen] (6.1300,-1.7000) -- (5.7500,-1.4600);
\draw[bond,eigsblue] (5.3700,-1.7000) -- (5.7500,-1.9400);
\draw[bond,eigsred] (5.7500,-1.9400) -- (6.1300,-1.7000);
\fill (5.3700,-1.7000) circle (1.0pt);
\fill (5.7500,-1.4600) circle (1.0pt);
\fill (5.7500,-1.9400) circle (1.0pt);
\fill (6.1300,-1.7000) circle (1.0pt);
\node at (8.85,-1.1700000000000002) {$\frac{p^2(1-p)^2}{(1-p^2)^2}$};
\draw[bond,eigsred] (6.9700,-1.7000) -- (7.3500,-1.4600);
\draw[bond,eigsblue] (7.3500,-1.4600) -- (7.7300,-1.7000);
\draw[bond,eigsred] (6.9700,-1.7000) -- (7.3500,-1.9400);
\draw[bond,eigsblue] (7.3500,-1.9400) -- (7.7300,-1.7000);
\fill (6.9700,-1.7000) circle (1.0pt);
\fill (7.3500,-1.4600) circle (1.0pt);
\fill (7.3500,-1.9400) circle (1.0pt);
\fill (7.7300,-1.7000) circle (1.0pt);
\draw[bond,eigsblue] (7.9700,-1.7000) -- (8.3500,-1.4600);
\draw[bond,eigsred] (8.3500,-1.4600) -- (8.7300,-1.7000);
\draw[bond,eigsred] (7.9700,-1.7000) -- (8.3500,-1.9400);
\draw[bond,eigsblue] (8.3500,-1.9400) -- (8.7300,-1.7000);
\fill (7.9700,-1.7000) circle (1.0pt);
\fill (8.3500,-1.4600) circle (1.0pt);
\fill (8.3500,-1.9400) circle (1.0pt);
\fill (8.7300,-1.7000) circle (1.0pt);
\draw[bond,eigsred] (8.9700,-1.7000) -- (9.3500,-1.4600);
\draw[bond,eigsblue] (9.3500,-1.4600) -- (9.7300,-1.7000);
\draw[bond,eigsblue] (8.9700,-1.7000) -- (9.3500,-1.9400);
\draw[bond,eigsred] (9.3500,-1.9400) -- (9.7300,-1.7000);
\fill (8.9700,-1.7000) circle (1.0pt);
\fill (9.3500,-1.4600) circle (1.0pt);
\fill (9.3500,-1.9400) circle (1.0pt);
\fill (9.7300,-1.7000) circle (1.0pt);
\draw[bond,eigsblue] (9.9700,-1.7000) -- (10.3500,-1.4600);
\draw[bond,eigsred] (10.3500,-1.4600) -- (10.7300,-1.7000);
\draw[bond,eigsblue] (9.9700,-1.7000) -- (10.3500,-1.9400);
\draw[bond,eigsred] (10.3500,-1.9400) -- (10.7300,-1.7000);
\fill (9.9700,-1.7000) circle (1.0pt);
\fill (10.3500,-1.4600) circle (1.0pt);
\fill (10.3500,-1.9400) circle (1.0pt);
\fill (10.7300,-1.7000) circle (1.0pt);
\node at (13.45,-1.1700000000000002) {$\frac{(1-p)^4}{(1-p^2)^2}$};
\draw[bond,eigsgreen] (13.0700,-1.7000) -- (13.4500,-1.4600);
\draw[bond,eigsgreen] (13.8300,-1.7000) -- (13.4500,-1.4600);
\draw[bond,eigsgreen] (13.0700,-1.7000) -- (13.4500,-1.9400);
\draw[bond,eigsgreen] (13.8300,-1.7000) -- (13.4500,-1.9400);
\fill (13.0700,-1.7000) circle (1.0pt);
\fill (13.4500,-1.4600) circle (1.0pt);
\fill (13.4500,-1.9400) circle (1.0pt);
\fill (13.8300,-1.7000) circle (1.0pt);
\draw[black!12,line width=.35pt] (0,-2.2800000000000002) -- (16,-2.2800000000000002);
\draw[bond,eigsgreen] (0.1500,-3.0500) -- (1.1000,-3.0500);
\fill (0.1500,-3.0500) circle (1.15pt);
\fill (1.1000,-3.0500) circle (1.15pt);
\draw[map] (1.40,-3.0500000000000003) -- (2.05,-3.0500000000000003);
\node at (4.25,-2.5200000000000005) {$\frac{p(1-p)^2}{(1-p^2)^2}$};
\draw[bond,eigsred] (3.3700,-3.0500) -- (3.7500,-2.8100);
\draw[bond,eigsgreen] (3.7500,-2.8100) -- (4.1300,-3.0500);
\draw[bond,eigsgreen] (3.3700,-3.0500) -- (3.7500,-3.2900);
\draw[bond,black!65] (3.7500,-3.2900) -- (4.1300,-3.0500);
\fill (3.3700,-3.0500) circle (1.0pt);
\fill (3.7500,-2.8100) circle (1.0pt);
\fill (3.7500,-3.2900) circle (1.0pt);
\fill (4.1300,-3.0500) circle (1.0pt);
\draw[bond,eigsgreen] (4.3700,-3.0500) -- (4.7500,-2.8100);
\draw[bond,black!65] (4.7500,-2.8100) -- (5.1300,-3.0500);
\draw[bond,eigsred] (4.3700,-3.0500) -- (4.7500,-3.2900);
\draw[bond,eigsgreen] (4.7500,-3.2900) -- (5.1300,-3.0500);
\fill (4.3700,-3.0500) circle (1.0pt);
\fill (4.7500,-2.8100) circle (1.0pt);
\fill (4.7500,-3.2900) circle (1.0pt);
\fill (5.1300,-3.0500) circle (1.0pt);
\node at (8.85,-2.5200000000000005) {$\frac{p^2(1-p)^2}{(1-p^2)^2}$};
\draw[bond,eigsred] (8.4700,-3.0500) -- (8.8500,-2.8100);
\draw[bond,eigsgreen] (8.8500,-2.8100) -- (9.2300,-3.0500);
\draw[bond,eigsred] (8.4700,-3.0500) -- (8.8500,-3.2900);
\draw[bond,eigsgreen] (8.8500,-3.2900) -- (9.2300,-3.0500);
\fill (8.4700,-3.0500) circle (1.0pt);
\fill (8.8500,-2.8100) circle (1.0pt);
\fill (8.8500,-3.2900) circle (1.0pt);
\fill (9.2300,-3.0500) circle (1.0pt);
\node at (13.45,-2.5200000000000005) {$\frac{(1-p)^2}{(1-p^2)^2}$};
\draw[bond,eigsgreen] (13.0700,-3.0500) -- (13.4500,-2.8100);
\draw[bond,black!65] (13.4500,-2.8100) -- (13.8300,-3.0500);
\draw[bond,eigsgreen] (13.0700,-3.0500) -- (13.4500,-3.2900);
\draw[bond,black!65] (13.4500,-3.2900) -- (13.8300,-3.0500);
\fill (13.0700,-3.0500) circle (1.0pt);
\fill (13.4500,-2.8100) circle (1.0pt);
\fill (13.4500,-3.2900) circle (1.0pt);
\fill (13.8300,-3.0500) circle (1.0pt);
\draw[black!12,line width=.35pt] (0,-3.6300000000000003) -- (16,-3.6300000000000003);
\draw[bond,black!65] (0.1500,-4.4000) -- (1.1000,-4.4000);
\fill (0.1500,-4.4000) circle (1.15pt);
\fill (1.1000,-4.4000) circle (1.15pt);
\draw[map] (1.40,-4.4) -- (2.05,-4.4);
\node at (4.25,-3.87) {$1$};
\draw[bond,black!65] (3.8700,-4.4000) -- (4.2500,-4.1600);
\draw[bond,black!65] (4.2500,-4.1600) -- (4.6300,-4.4000);
\draw[bond,black!65] (3.8700,-4.4000) -- (4.2500,-4.6400);
\draw[bond,black!65] (4.2500,-4.6400) -- (4.6300,-4.4000);
\fill (3.8700,-4.4000) circle (1.0pt);
\fill (4.2500,-4.1600) circle (1.0pt);
\fill (4.2500,-4.6400) circle (1.0pt);
\fill (4.6300,-4.4000) circle (1.0pt);
\draw[densely dashed,black!45,line width=.6pt] (0,-5.0) -- (16,-5.0);
\node[anchor=west,font=\small\bfseries] at (0,-5.38) {One realisation};
\draw[bond,eigsred,line width=1pt] (0.2000,-6.6500) -- (1.0000,-6.6500);
\fill (0.2000,-6.6500) circle (1.15pt);
\fill (1.0000,-6.6500) circle (1.15pt);
\node at (.60,-8.45) {$\Xi^0$};
\draw[bond,eigsred] (1.8500,-6.6500) -- (2.6000,-6.3350);
\draw[bond,eigsred] (2.6000,-6.3350) -- (3.3500,-6.6500);
\draw[bond,eigsred] (1.8500,-6.6500) -- (2.6000,-6.9650);
\draw[bond,eigsblue] (2.6000,-6.9650) -- (3.3500,-6.6500);
\fill (1.8500,-6.6500) circle (1.0pt);
\fill (2.6000,-6.3350) circle (1.0pt);
\fill (2.6000,-6.9650) circle (1.0pt);
\fill (3.3500,-6.6500) circle (1.0pt);
\node at (2.6,-8.45) {$\Xi^1$};
\draw[bond,eigsred] (4.5000,-6.6500) -- (5.2328,-6.1540);
\draw[bond,eigsred] (5.2328,-6.1540) -- (6.1000,-5.9780);
\draw[bond,eigsgreen] (4.5000,-6.6500) -- (5.3672,-6.4740);
\draw[bond,eigsgreen] (6.1000,-5.9780) -- (5.3672,-6.4740);
\draw[bond,eigsred] (6.1000,-5.9780) -- (6.9672,-6.1540);
\draw[bond,eigsred] (6.9672,-6.1540) -- (7.7000,-6.6500);
\draw[bond,eigsred] (6.1000,-5.9780) -- (6.8328,-6.4740);
\draw[bond,eigsred] (6.8328,-6.4740) -- (7.7000,-6.6500);
\draw[bond,eigsgreen] (4.5000,-6.6500) -- (5.3672,-6.8260);
\draw[bond,eigsgreen] (6.1000,-7.3220) -- (5.3672,-6.8260);
\draw[bond,eigsred] (4.5000,-6.6500) -- (5.2328,-7.1460);
\draw[bond,eigsred] (5.2328,-7.1460) -- (6.1000,-7.3220);
\draw[bond,eigsred] (6.1000,-7.3220) -- (6.8328,-6.8260);
\draw[bond,eigsblue] (6.8328,-6.8260) -- (7.7000,-6.6500);
\draw[bond,eigsgreen] (6.1000,-7.3220) -- (6.9672,-7.1460);
\draw[bond,eigsgreen] (7.7000,-6.6500) -- (6.9672,-7.1460);
\fill (4.5000,-6.6500) circle (1.0pt);
\fill (6.1000,-5.9780) circle (1.0pt);
\fill (6.1000,-7.3220) circle (1.0pt);
\fill (7.7000,-6.6500) circle (1.0pt);
\fill (5.2328,-6.1540) circle (1.0pt);
\fill (5.3672,-6.4740) circle (1.0pt);
\fill (6.9672,-6.1540) circle (1.0pt);
\fill (6.8328,-6.4740) circle (1.0pt);
\fill (5.3672,-6.8260) circle (1.0pt);
\fill (5.2328,-7.1460) circle (1.0pt);
\fill (6.8328,-6.8260) circle (1.0pt);
\fill (6.9672,-7.1460) circle (1.0pt);
\node at (6.1,-8.45) {$\Xi^2$};
\draw[bond,eigsred,line width=.75pt] (8.9000,-6.6500) -- (9.4761,-6.1198);
\draw[bond,eigsred,line width=.75pt] (9.4761,-6.1198) -- (10.1824,-5.7820);
\draw[bond,eigsred,line width=.75pt] (8.9000,-6.6500) -- (9.6063,-6.3122);
\draw[bond,eigsblue,line width=.75pt] (9.6063,-6.3122) -- (10.1824,-5.7820);
\draw[bond,eigsred,line width=.75pt] (10.1824,-5.7820) -- (10.9181,-5.5142);
\draw[bond,eigsred,line width=.75pt] (10.9181,-5.5142) -- (11.7000,-5.4740);
\draw[bond,eigsgreen,line width=.75pt] (10.1824,-5.7820) -- (10.9643,-5.7418);
\draw[bond,eigsgreen,line width=.75pt] (11.7000,-5.4740) -- (10.9643,-5.7418);
\draw[bond,eigsred,line width=.75pt] (8.9000,-6.6500) -- (9.6357,-6.3822);
\draw[bond,eigsgreen,line width=.75pt] (9.6357,-6.3822) -- (10.4176,-6.3420);
\draw[bond,eigsred,line width=.75pt] (8.9000,-6.6500) -- (9.6819,-6.6098);
\draw[bond,eigsgreen,line width=.75pt] (9.6819,-6.6098) -- (10.4176,-6.3420);
\draw[bond,eigsgreen,line width=.75pt] (11.7000,-5.4740) -- (11.1239,-6.0042);
\draw[bond,black!65,line width=.75pt] (11.1239,-6.0042) -- (10.4176,-6.3420);
\draw[bond,eigsgreen,line width=.75pt] (11.7000,-5.4740) -- (10.9937,-5.8118);
\draw[bond,black!65,line width=.75pt] (10.9937,-5.8118) -- (10.4176,-6.3420);
\draw[bond,eigsred,line width=.75pt] (11.7000,-5.4740) -- (12.4819,-5.5142);
\draw[bond,eigsred,line width=.75pt] (12.4819,-5.5142) -- (13.2176,-5.7820);
\draw[bond,eigsred,line width=.75pt] (11.7000,-5.4740) -- (12.4357,-5.7418);
\draw[bond,eigsblue,line width=.75pt] (12.4357,-5.7418) -- (13.2176,-5.7820);
\draw[bond,eigsred,line width=.75pt] (13.2176,-5.7820) -- (13.9239,-6.1198);
\draw[bond,eigsred,line width=.75pt] (13.9239,-6.1198) -- (14.5000,-6.6500);
\draw[bond,eigsgreen,line width=.75pt] (13.2176,-5.7820) -- (13.7937,-6.3122);
\draw[bond,eigsgreen,line width=.75pt] (14.5000,-6.6500) -- (13.7937,-6.3122);
\draw[bond,eigsred,line width=.75pt] (11.7000,-5.4740) -- (12.4063,-5.8118);
\draw[bond,eigsred,line width=.75pt] (12.4063,-5.8118) -- (12.9824,-6.3420);
\draw[bond,eigsred,line width=.75pt] (11.7000,-5.4740) -- (12.2761,-6.0042);
\draw[bond,eigsred,line width=.75pt] (12.2761,-6.0042) -- (12.9824,-6.3420);
\draw[bond,eigsgreen,line width=.75pt] (12.9824,-6.3420) -- (13.7643,-6.3822);
\draw[bond,eigsgreen,line width=.75pt] (14.5000,-6.6500) -- (13.7643,-6.3822);
\draw[bond,eigsred,line width=.75pt] (12.9824,-6.3420) -- (13.7181,-6.6098);
\draw[bond,eigsred,line width=.75pt] (13.7181,-6.6098) -- (14.5000,-6.6500);
\draw[bond,eigsred,line width=.75pt] (8.9000,-6.6500) -- (9.6819,-6.6902);
\draw[bond,eigsgreen,line width=.75pt] (9.6819,-6.6902) -- (10.4176,-6.9580);
\draw[bond,eigsgreen,line width=.75pt] (8.9000,-6.6500) -- (9.6357,-6.9178);
\draw[bond,black!65,line width=.75pt] (9.6357,-6.9178) -- (10.4176,-6.9580);
\draw[bond,eigsgreen,line width=.75pt] (11.7000,-7.8260) -- (10.9937,-7.4882);
\draw[bond,black!65,line width=.75pt] (10.9937,-7.4882) -- (10.4176,-6.9580);
\draw[bond,eigsred,line width=.75pt] (11.7000,-7.8260) -- (11.1239,-7.2958);
\draw[bond,eigsgreen,line width=.75pt] (11.1239,-7.2958) -- (10.4176,-6.9580);
\draw[bond,eigsred,line width=.75pt] (8.9000,-6.6500) -- (9.6063,-6.9878);
\draw[bond,eigsred,line width=.75pt] (9.6063,-6.9878) -- (10.1824,-7.5180);
\draw[bond,eigsred,line width=.75pt] (8.9000,-6.6500) -- (9.4761,-7.1802);
\draw[bond,eigsred,line width=.75pt] (9.4761,-7.1802) -- (10.1824,-7.5180);
\draw[bond,eigsgreen,line width=.75pt] (10.1824,-7.5180) -- (10.9643,-7.5582);
\draw[bond,eigsgreen,line width=.75pt] (11.7000,-7.8260) -- (10.9643,-7.5582);
\draw[bond,eigsred,line width=.75pt] (10.1824,-7.5180) -- (10.9181,-7.7858);
\draw[bond,eigsred,line width=.75pt] (10.9181,-7.7858) -- (11.7000,-7.8260);
\draw[bond,eigsred,line width=.75pt] (11.7000,-7.8260) -- (12.2761,-7.2958);
\draw[bond,eigsred,line width=.75pt] (12.2761,-7.2958) -- (12.9824,-6.9580);
\draw[bond,eigsred,line width=.75pt] (11.7000,-7.8260) -- (12.4063,-7.4882);
\draw[bond,eigsblue,line width=.75pt] (12.4063,-7.4882) -- (12.9824,-6.9580);
\draw[bond,eigsred,line width=.75pt] (12.9824,-6.9580) -- (13.7181,-6.6902);
\draw[bond,eigsblue,line width=.75pt] (13.7181,-6.6902) -- (14.5000,-6.6500);
\draw[bond,eigsred,line width=.75pt] (12.9824,-6.9580) -- (13.7643,-6.9178);
\draw[bond,eigsblue,line width=.75pt] (13.7643,-6.9178) -- (14.5000,-6.6500);
\draw[bond,eigsred,line width=.75pt] (11.7000,-7.8260) -- (12.4357,-7.5582);
\draw[bond,eigsgreen,line width=.75pt] (12.4357,-7.5582) -- (13.2176,-7.5180);
\draw[bond,eigsred,line width=.75pt] (11.7000,-7.8260) -- (12.4819,-7.7858);
\draw[bond,eigsgreen,line width=.75pt] (12.4819,-7.7858) -- (13.2176,-7.5180);
\draw[bond,eigsgreen,line width=.75pt] (14.5000,-6.6500) -- (13.9239,-7.1802);
\draw[bond,black!65,line width=.75pt] (13.9239,-7.1802) -- (13.2176,-7.5180);
\draw[bond,eigsgreen,line width=.75pt] (14.5000,-6.6500) -- (13.7937,-6.9878);
\draw[bond,black!65,line width=.75pt] (13.7937,-6.9878) -- (13.2176,-7.5180);
\fill (8.9000,-6.6500) circle (0.75pt);
\fill (11.7000,-5.4740) circle (0.75pt);
\fill (11.7000,-7.8260) circle (0.75pt);
\fill (14.5000,-6.6500) circle (0.75pt);
\fill (10.1824,-5.7820) circle (0.75pt);
\fill (10.4176,-6.3420) circle (0.75pt);
\fill (13.2176,-5.7820) circle (0.75pt);
\fill (12.9824,-6.3420) circle (0.75pt);
\fill (10.4176,-6.9580) circle (0.75pt);
\fill (10.1824,-7.5180) circle (0.75pt);
\fill (12.9824,-6.9580) circle (0.75pt);
\fill (13.2176,-7.5180) circle (0.75pt);
\fill (9.4761,-6.1198) circle (0.75pt);
\fill (9.6063,-6.3122) circle (0.75pt);
\fill (10.9181,-5.5142) circle (0.75pt);
\fill (10.9643,-5.7418) circle (0.75pt);
\fill (9.6357,-6.3822) circle (0.75pt);
\fill (9.6819,-6.6098) circle (0.75pt);
\fill (11.1239,-6.0042) circle (0.75pt);
\fill (10.9937,-5.8118) circle (0.75pt);
\fill (12.4819,-5.5142) circle (0.75pt);
\fill (12.4357,-5.7418) circle (0.75pt);
\fill (13.9239,-6.1198) circle (0.75pt);
\fill (13.7937,-6.3122) circle (0.75pt);
\fill (12.4063,-5.8118) circle (0.75pt);
\fill (12.2761,-6.0042) circle (0.75pt);
\fill (13.7643,-6.3822) circle (0.75pt);
\fill (13.7181,-6.6098) circle (0.75pt);
\fill (9.6819,-6.6902) circle (0.75pt);
\fill (9.6357,-6.9178) circle (0.75pt);
\fill (10.9937,-7.4882) circle (0.75pt);
\fill (11.1239,-7.2958) circle (0.75pt);
\fill (9.6063,-6.9878) circle (0.75pt);
\fill (9.4761,-7.1802) circle (0.75pt);
\fill (10.9643,-7.5582) circle (0.75pt);
\fill (10.9181,-7.7858) circle (0.75pt);
\fill (12.2761,-7.2958) circle (0.75pt);
\fill (12.4063,-7.4882) circle (0.75pt);
\fill (13.7181,-6.6902) circle (0.75pt);
\fill (13.7643,-6.9178) circle (0.75pt);
\fill (12.4357,-7.5582) circle (0.75pt);
\fill (12.4819,-7.7858) circle (0.75pt);
\fill (13.9239,-7.1802) circle (0.75pt);
\fill (13.7937,-6.9878) circle (0.75pt);
\node at (11.7,-8.45) {$\Xi^3$};
\draw[map] (1.2,-6.65) -- (1.65,-6.65);
\draw[map] (3.6,-6.65) -- (4.1,-6.65);
\draw[map] (8.0,-6.65) -- (8.55,-6.65);
\draw[map] (14.8,-6.65) -- (15.35,-6.65);
\node at (15.75,-6.65) {$\cdots$};
\end{tikzpicture}
\endgroup

%% file: figure/wheatstone_permutation_rules.tex
\begin{tikzpicture}[
  x=1cm,y=1cm,line cap=round,line join=round,
  edge/.style={line width=.75pt},
  replaced/.style={draw=red!45!black,line width=.9pt},
  vertex/.style={circle,fill=black,inner sep=1.65pt},
  terminal/.style={circle,fill=black,inner sep=2.5pt},
  newvertex/.style={circle,fill=red!45!black,inner sep=1.65pt},
  every node/.style={font=\small}
]
\begin{scope}
  \coordinate (s) at (-1.8,0);
  \coordinate (u) at (0,1.4);
  \coordinate (t) at (1.8,0);
  \coordinate (v) at (0,-1.4);
  \draw[edge] (u)--(t) (s)--(v) (u)--(v);
  \draw[replaced] (s)--(u) (v)--(t);
  \node[terminal,label=left:$s$] at (s) {};
  \node[terminal,label=right:$t$] at (t) {};
  \node[vertex,label=above:$u$] at (u) {};
  \node[vertex,label=below:$v$] at (v) {};
  \node at (0,-2.05) {$A=W$};
\end{scope}
\draw[-{Stealth[length=2mm,width=1.6mm]},black!55,line width=.65pt]
  (2.45,0)--node[above=5pt,text=black]{$su,vt\mapsto W$}(4.35,0);
\begin{scope}[xshift=6.8cm]
  \coordinate (s) at (-1.8,0);
  \coordinate (u) at (0,1.4);
  \coordinate (t) at (1.8,0);
  \coordinate (v) at (0,-1.4);
  \coordinate (su1) at (-1.25,1.15);
  \coordinate (su2) at (-.55,.25);
  \coordinate (vt1) at (.55,-.25);
  \coordinate (vt2) at (1.25,-1.15);
  \draw[edge] (u)--(t) (s)--(v) (u)--(v);
  \draw[replaced] (s)--(su1)--(u)--(su2)--(s) (su1)--(su2);
  \draw[replaced] (v)--(vt1)--(t)--(vt2)--(v) (vt1)--(vt2);
  \foreach \p in {su1,su2,vt1,vt2}
    \node[newvertex] at (\p) {};
  \node[terminal,label=left:$s$] at (s) {};
  \node[terminal,label=right:$t$] at (t) {};
  \node[vertex,label=above:$u$] at (u) {};
  \node[vertex,label=below:$v$] at (v) {};
  \node at (0,-2.05) {$B$};
\end{scope}
\end{tikzpicture}